\documentclass[11pt]{article}

\usepackage[utf8]{inputenc}

\usepackage[margin=2.5cm]{geometry}

\usepackage{amsmath,amssymb,amsthm}
\usepackage{mathtools}

\usepackage[linesnumbered,ruled,vlined]{algorithm2e}

\usepackage[svgnames,table]{xcolor}
\usepackage{graphicx}
\usepackage{tikz}
\usepackage{tikz-qtree}
\usepackage{forest}
\usepackage{tcolorbox}

\usetikzlibrary{decorations.pathreplacing,calc}

\tcbset{
    boxrule=0.8pt
}

\forestset{
    solid nodes/.style={
        for tree={circle,draw,inner sep=1,fill=green}
    },
    dir/.style={
        for tree={grow=#1}
    },
    leaf/.style={
        label=$#1$
    },
    mytree/.style={
        solid nodes,
        for tree={grow=south,s sep=1cm}
    }
}

\usepackage{array}
\usepackage{booktabs}
\usepackage{float}
\usepackage{hhline}
\usepackage{multirow}

\usepackage[round]{natbib}

\usepackage{nicefrac}
\usepackage{textcomp}
\usepackage{enumerate}
\usepackage{microtype}
\usepackage{tabto}

\usepackage[colorlinks=true,linkcolor=DarkBlue,citecolor=DarkBlue]{hyperref}

\usepackage{appendix}

\newtheorem{theorem}{Theorem}[section]
\newtheorem{claim}[theorem]{Claim}
\newtheorem{corollary}[theorem]{Corollary}
\newtheorem{lemma}[theorem]{Lemma}

\theoremstyle{definition}
\newtheorem{definition}[theorem]{Definition}

\theoremstyle{remark}
\newtheorem{remark}[theorem]{Remark}

\newcommand{\E}{\mathbb{E}}

\newcommand{\one}{\mathbf{1}}
\newcommand{\Prob}[1]{\Pr\big(#1\big)}
\newcommand{\cProb}[2]{\textrm{Pr}\left[\, #1 \, \middle| \, #2 \,\right]}
\newcommand{\ind}[1]{\one{\{#1\}}}

\newcommand{\bbG}{\mathbb{G}}

\newcommand{\calC}{\mathcal{C}}
\newcommand{\calD}{\mathcal{D}}
\newcommand{\calE}{\mathcal{E}}

\newcommand{\calG}{\mathcal{G}}

\newcommand{\calI}{\mathcal{I}}

\definecolor{Blue}{HTML}{2D2F92}

\newcommand{\uprightsf}[1]{\textup{\textsf{#1}}}
\DeclareMathOperator{\val}{val}

\DeclareMathOperator{\opt}{opt}

\DeclareMathOperator{\lca}{LCA}

\DeclareMathOperator{\clr}{color}

\DeclareMathOperator{\nil}{null}

\DeclareMathOperator{\sign}{sign}

\DeclareMathOperator{\tr}{tr}

\newcommand*{\I}{\ensuremath{\mathcal{I}}}
\newcommand*{\C}{\ensuremath{\mathcal{C}}}

\newcommand*{\eps}{\ensuremath{\epsilon}}

\newcommand{\phy}{\uprightsf{phy}}

\newcommand{\vnote}[1]{{\color{red}$\ll$Vaggos: #1$\gg$}}

\newcommand{\notelist}{}

\NewDocumentCommand{\addnote}{mm}{
  \expandafter\gdef\expandafter\notelist\expandafter{%
    \notelist
    \noindent \hyperlink{#2}{#1}\par
  }
  \hypertarget{#2}{#1}%
}

\title{Strong Refutation of Ordering, Phylogenetic, and Ordinary CSPs, and New Satisfiability and Refutation Thresholds for Triplet and Quartet Reconstruction}

\author{Dionysis Arvanitakis\\Northwestern University \and Vaggos Chatziafratis\\UC Santa Cruz \and  Yiyuan Luo\\UC Santa Cruz\and Konstantin Makarychev\\Northwestern University}
\date{}

\begin{document}
\maketitle
\thispagestyle{empty}

\renewenvironment{abstract}
 {\small
  \begin{center}
  \bfseries \abstractname\vspace{-.5em}\vspace{0pt}
  \end{center}
  \list{}{
    \setlength{\leftmargin}{.5cm}%
    \setlength{\rightmargin}{\leftmargin}%
  }%
  \item\relax}
 {\endlist}
 
\begin{abstract}
Understanding the behavior of Constraint Satisfaction Problems (CSPs) on random instances is a fundamental question at the intersection of mathematics, statistical physics, and computer science. In this work, we study phase transitions and efficient algorithms for (strong) refutation of CSPs arising in \textit{hierarchical clustering} (and \textit{ranking}), extensively studied in theoretical computer science and biology as well as ordinary (standard) CSPs without negations.  
In such CSPs, $n$ variables (often representing species, documents or images) are
assigned to leaves of a tree, so as to satisfy $m$ given constraints, specifying evolutionary or proximity relationships. Two canonical $\mathsf{NP}$-hard optimization problems with widespread applications are \textsc{Triplet Reconstruction} and \textsc{Quartet Reconstruction}, where the input consists of triplets ${x,y}\mid {z}$ or quartets ${x,y}\mid {z,w}$ describing relationships on small subsets of $3$ or $4$ variables (respectively), and the goal is to find a tree $T^*$ whose topology maximizes agreement with the constraints. We present three main results as constraint density $\lambda=m/n$ increases:

\begin{enumerate}
    \item We first show the existence and precisely locate the \textit{sharp} threshold $\lambda^*_{\mathrm{triplets}}\approx1.2277$ for \textsc{Triplet Reconstruction} (we give exact closed-form solution). Our proof relies on a variant of \textsc{Build} algorithm~\citep*{aho1981inferring} and the connectivity thresholds for the emergence of giant components in random graphs. To the best of our knowledge, this is the first sharp threshold for the broad family of Phylogenetic CSPs. Moreover, we give a lower and upper bound for \textsc{Quartet Reconstruction}, proving that if $\lambda<0.908$ an instance is satisfiable w.h.p., and if $\lambda>2.88$, it is unsatisfiable w.h.p.
    \item For both problems, we then provide \textit{strong refutation} algorithms that certify that $\val(T^*)\le5/9 + \epsilon$, where $\val(T^*)$ is the fraction of constraints satisfied by the (unknown) optimal tree. For triplets, the refutation algorithm succeeds w.h.p if $m > C'_{\epsilon}n$, and for quartets if  $m > C''_{\epsilon}n^{3/2}$ for some  constants $C'_{\epsilon}$ and $C''_{\epsilon}$ depending only on $\epsilon$.
    \item We also obtain the strongest possible refutations at slightly larger densities (for triplets $m=O(n^{3/2}\log ^3n)$, for quartets $m=O(n^2)$): we certify that $T^*$ is no better than the \textit{trivial random assignment} that ignores the input constraints, i.e., $\val(T^*)\le 1/3+\epsilon$ (notice that for both predicates, a random assignment achieves a $1/3$-approximation). 
    \item  We also obtain strongest possible refutations for ordinary (standard) CSPs without negations over finite domains, extending the work of~\cite{allen2015refute}.
\end{enumerate}
Prior to our work no refutations for triplets and quartets were known. In fact, our refutation results above are instantiations of our general theorem that applies more broadly to Phylogenetic and Ordering CSPs (and other CSPs failing to support $t$-wise independence), and generalizes the current algorithmic frontier on refuting random CSPs over finite alphabets by~\cite*{allen2015refute}. A crucial difference here is that such CSPs, unlike Boolean CSPs, do not have negated variables, so prior works relying on negations---an extra source of randomness---no longer apply.

\end{abstract}

\newpage
\tableofcontents
\thispagestyle{empty}
\newpage
\setcounter{page}{1}
\section{Introduction}

Random Constraint Satisfaction Problems (CSPs) over Boolean or finite alphabets ${0,1,\ldots, q-1}$, and questions concerning their satisfiability, unsatisfiability, and refutation (i.e., the task of efficiently finding certificates of unsatisfiability), have been studied for several decades, yielding deep connections between computer science, mathematics, and statistical physics.
In the random CSP literature, for a given predicate, such as $3$-SAT, a random instance $\mathcal{I}$ is generated by sampling $m$ clauses uniformly at random over a ground set $V$ of $n$ variables. A fundamental question in this area is the following: \textit{As we vary the density $m/n$ of the random instance, is it likely to be satisfiable or far from satisfiable? Can we certify these properties with an efficient algorithm?}

Historically, such average-case analyses of Boolean CSPs not only allowed researchers to bypass pessimistic worst-case hardness results, but also helped drive algorithmic developments for SAT solvers~\citep{davis1960computing,davis1962machine,goldberg1979complexity,franco1983probabilistic}, led to a better understanding of “typical” instances (see the celebrated results of~\cite{chvatal1988many,chvatal1992mick}), and expanded the algorithmic toolbox~\citep{ding2014satisfiability,ding2015proof,raghavendra2017strongly,deshpande2019threshold,mitchell1992hard,kirkpatrick1994critical,selman1996generating}.

In this work, we focus on CSPs studied in the context of hierarchical clustering (and ranking), with two canonical examples being \textsc{Triplet Reconstruction} and \textsc{Quartet Reconstruction}. Since most of the existing literature focuses on random Boolean CSPs or CSPs with fixed alphabet~$\Sigma$, $|\Sigma| = q$ ($q$ is independent of $n$), it does not capture \textsc{Triplet} and \textsc{Quartet Reconstruction}, which are important examples of \textit{infinite-domain} CSPs, as domain size grows with instance size $n$~\citep{bodirsky2010complexity,bodirsky2012complexity}. Even though we focus on triplets and quartets, many of our results extend to \textit{phylogenetic}\footnote{Phylogenetic CSPs arise in a wide range of applications across theoretical computer science, databases, algebra, logic, and computational biology~\citep{aho1981inferring,bandelt1986reconstructing,felsenstein2004inferring,steel1992complexity,jiang1998orchestrating,jiang2001polynomial,henzinger1999constructing,brodal2013efficient,alon2014compatibility,dudek2019computing}.} CSPs~\citep{bodirsky2010complexity,chatziafratis2023triplet}, that is, constraint satisfaction problems on trees --- which include \textit{ordering} CSPs~\citep{GHMRC11} as a special case.



\subsection{Triplet and Quartet Reconstruction}
We start by defining the two most popular CSPs over trees, \textsc{Triplet Reconstruction} (also known as triplet consistency) and \textsc{Quartet Reconstruction} (also known as quartet compatibility):

\begin{definition}[\textsc{Triplet Reconstruction}~\citep{aho1981inferring}]
Given a set $V$ of $n$ items and a collection of $m$ triplets of the form $ab \mid c$, with $a,b,c \in V$, the \textsc{Triplet Reconstruction} problem asks for a rooted binary tree $T$ with $n$ leaves labeled by $V$ such that, for every given triplet $ab \mid c$, the least common ancestor of $a$ and $c$ is equal to the least common ancestor of $b$ and $c$. In this case, we say that the triplet $ab \mid c$ is satisfied.
The corresponding optimization problem, which seeks to maximize the number of satisfied triplets, is denoted \textsc{MaxTripletRec}.
\end{definition}

\begin{definition}[\textsc{Quartet Reconstruction}~\citep{steel1992complexity,jiang1998orchestrating}]\label{def:quartets}
Given a set $V$ of $n$ items and a collection of $m$ quartets of the form $ab \mid cd$, with $a,b,c,d \in V$, the \textsc{Quartet Reconstruction} problem asks for a rooted binary tree $T$ with $n$ leaves labeled by $V$ such that, for every given quartet $ab \mid cd$, the unique path from $a$ to $b$ is vertex-disjoint from the unique path from $c$ to $d$ in $T$ (in which case we say that the quartet $ab \mid cd$ is satisfied). The corresponding optimization problem, which seeks to maximize the number of satisfied quartets, is denoted \textsc{MaxQuartetRec}.
\end{definition}

Both optimization problems are $\mathsf{NP}$-hard and have been extensively studied in the literature (see, e.g.,~\cite{aho1981inferring,jansson2006algorithms,byrka2010new,emamjomeh2018adaptive,chatziafratis2023triplet}) with various applications in computational biology, hierarchical clustering, and databases.  
For example, in computational biology, quartet-based reconstruction is a key component of many tree reconstruction methods, whose aim is to infer an evolutionary tree on a set of species from partial or noisy information about their relationships. They are used both in theory and in practice~\citep{chor1998quartets,semple2003phylogenetics,felsenstein2004inferring} (see also widely used quartet-puzzling methods~\citep{strimmer1996quartet,yang2013quartet,reaz2014accurate}). An important advantage of such triplet and quartet relations compared to distance-based methods is that they do not depend on numerical values that may be highly susceptible to noise. As such, quartets are also widely-used in statistics~\citep{mossel2004phase,daskalakis2006optimal,daskalakis2011evolutionary} where ``quartet tests'' on sampled sequences of characters at the leaves have been used for optimal reconstruction bounds under mutations. 
Quartets are also used to compare trees via the notion of quartet distance~\citep{brodal2013efficient,dudek2019computing}.

In terms of computational complexity, we note that the \textit{decision} version of \textsc{Triplet Reconstruction} 
is solvable in polynomial time by the \textsc{Build} algorithm in~\citep{aho1981inferring}. However, the optimization problem \textsc{MaxTripletRec} remains NP-hard and is approximation resistant under the Unique Games Conjecture; that is, even for almost satisfiable instances of \textsc{MaxTripletRec}, it is NP-hard to find a solution of value greater than the trivial baseline of $1/3+\epsilon$ achieved by a random assignment~\citep{haastad2001some}. \textsc{Quartet Reconstruction} is an NP-complete problem~\citep{steel1992complexity}. In terms of approximation, \cite{jiang1998orchestrating,snir2011linear} gave polynomial-time approximation schemes (PTAS) for dense instances of \textsc{MaxQuartetRec}. 
\cite{snir2011linear} showed how to approximately reconstruct a tree using $m=\widetilde\Theta(n^2)$ sampled quartets. Their result was recently improved by
\cite{arvanitakis2026optimal}, who showed how to PAC-learn a tree using $m=\Theta(n)$ randomly sampled quartets. For general instances, \textsc{MaxQuartetRec} admits a trivial $\nicefrac13$-approximation, and assuming the Unique Games Conjecture~\citep{khot2002power}, this approximation ratio is best possible~\citep{chatziafratis2023triplet} i.e., \textsc{MaxQuartetRec}, and more generally, every phylogenetic CSP, is approximation resistant.




\subsection{Our Results: Satisfiability, Unsatisfiability, and Refutation}

We present several new results on the satisfiability, unsatisfiability, and strong refutation of random instances of triplets, quartets (and other phylogenetic and ordering CSPs) as we vary their density $\lambda=\nicefrac{m}{n}$. Importantly, we identify the exact density where triplets have a \textit{sharp} threshold, we give upper and lower bounds for quartets transitioning from satisfiability to unsatisfiability, and provide efficient strong refutation algorithms. Our main result in refutation generalizes the current algorithmic frontier for refutation of CSPs over finite alphabets~\citep*{allen2015refute}. Prior works on random instances, studied local-vs-global compatibility for \textsc{Quartet Reconstruction}~\citep*{alon2014compatibility}.

As mentioned earlier, for ease of presentation we focus on the two most important phylogenetic CSPs, \textsc{Triplet Reconstruction} and  \textsc{Quartet Reconstruction} to instantiate our results.
Nevertheless, most of our results apply to general phylogenetic CSPs. Before presenting our results, we introduce the standard random model for the instances we study (see also Definition~\ref{def:random-instance}). 

\paragraph{Random Instances.} To generate a random instance $\mathcal{I}(n,m)$ of triplets (or quartets), we independently sample three (four) distinct variables, say $a,b,c$ (or $a,b,c,d$), and add the triplet  ${a,b}\mid{c}$ (quartet ${a,b}\mid{c,d}$). We repeat this procedure $m$ times. This generates a uniformly random instance, analogous to standard approaches for generating uniformly random Boolean formulas (and was also used in the case of quartets by~\cite{alon2014compatibility}). 

\paragraph{Satisfiability Phase Transition for Triplets and Quartets.}

Our first main result is a \textit{sharp threshold} for \textsc{Triplet Reconstruction} at $\lambda_{crit}\approx1.2277$. We give a closed form solution for the critical threshold at density value $\lambda_{crit}= -\frac{w^*}{(2 + \nicefrac{1}{w^*})}\approx 1.2277$, where 
$w^* = W_{-1}(- \nicefrac{1}{2\sqrt{e}})\approx -1.7564$ and $W_{-1}$ is the $-1$ branch of the Lambert $W$ function:\footnote{For $x\in [-1/e,0)$, $W_{-1}(x)$ is the solution to the equation $y e^y = x$ with $y \leq -1$, i.e., $W_{-1}(x)e^{W_{-1}(x)}=x$ (see~\cite*{corless1996lambert}).} 
\begin{tcolorbox}
\begin{theorem}\label{th:sharp}
A random instance of \textsc{Triplet Reconstruction} (also known as Rooted Triplets Consistency) with density $\lambda<\lambda  _{crit}$ is satisfiable w.h.p., i.e., there exists a tree satisfying all $m$ triplets, whereas a random instance with $\lambda>\lambda  _{crit}$ is not satisfiable w.h.p., i.e., no tree satisfies all triplets.
\end{theorem}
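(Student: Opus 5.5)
The plan is to analyze the \textsc{Build} algorithm of \citep{aho1981inferring} on the random instance. On a current vertex set $S$, \textsc{Build} forms the graph $G_S$ with an edge $\{a,b\}$ for every triplet $ab\mid c$ that has $a,b,c\in S$. If $G_S$ is connected and $|S|\ge 2$, it reports infeasibility. Otherwise it recurses on the connected components. The instance is therefore unsatisfiable exactly when there is a set $S$, $|S|\ge 2$, for which $G_S$ is connected; call such an $S$ a \emph{blocking set}. Unions of blocking sets that share a vertex are again blocking, so the maximal blocking sets are well defined.

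Heuristically, \textsc{Build} behaves like a peeling process, analogous to computing the $k$-core of a random hypergraph. Suppose a vertex survives in the limit with probability $x$. A vertex $v$ appears as a pair member in $\mathrm{Poisson}(2\lambda)$ triplets. Each such triplet keeps $v$ attached only if both the partner and the witness $c$ also survive. Thus $v$ has $\mathrm{Poisson}(2\lambda x^2)$ surviving attachments, and by the usual branching-process heuristic $x$ should be the largest solution of
\[
x \;=\; 1-e^{-2\lambda x^2}.
\]
This equation has a positive root exactly when $\lambda$ exceeds the tangency value. Tangency means $1=4\lambda x(1-x)$ together with the fixed-point equation. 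Eliminating $x$ gives an equation of the form $ye^{y}=-\tfrac{1}{2\sqrt e}$, which is where $w^*=W_{-1}(-\nicefrac{1}{2\sqrt e})$ and the closed form $\lambda_{crit}=-w^*/(2+\nicefrac1{w^*})$ come from. A quick numerical check of $x\approx 0.7$ and $2\lambda\approx 2.4$ is consistent with $\lambda_{crit}\approx 1.2277$. I would carry out this algebra first to make sure the constant matches.

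\textbf{Unsatisfiability ($\lambda>\lambda_{crit}$).} I would make the peeling rigorous by running a deletion process. Starting from $S=V$, repeatedly delete the vertices that lie outside the giant component of $G_S$. Alternatively, and more conveniently, run a slower vertex-by-vertex version with Wormald's differential-equation method, or use exploration of the local weak limit, which is the Poisson Galton--Watson "triplet tree". The goal is to show the process stabilizes at a set of size $(x^*+o(1))n>0$. Its triplet graph must then be connected, hence blocking. The connectivity of the limiting set is shown by the supercritical giant-component argument: the surviving triplet graph has average degree $2\lambda x^{*2}>1$, and the remaining isolated pieces get peeled off.

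\textbf{Satisfiability ($\lambda<\lambda_{crit}$).} Here the fixed-point equation has only the root $0$, so the same process shrinks every set below linear size. I would show separately that w.h.p.\ no blocking set of size $2\le s\le \delta n$ exists, using a first-moment bound. A blocking set of size $s$ needs at least $s-1$ triplets entirely inside it that span a tree. The expected number of such configurations is at most
\[
\binom{n}{s}\, s^{s-2}\,\binom{m}{s-1}\Big(\frac{3s}{n}\Big)^{3(s-1)}\!,
\]
and this sum over $s$ is $o(1)$, because each triplet contributes a factor of order $(s/n)^3$ while the number of vertices grows only linearly in $s$.

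\textbf{Main obstacle.} The hardest step is the rigorous coupling between the recursive \textsc{Build} process and the fixed-point equation. After conditioning on the components, the surviving triplets are no longer uniform. I would handle this with the slow one-vertex-at-a-time peeling and Wormald's method, or with the standard $k$-core-style configuration-model argument, rather than analyzing giant components level by level.
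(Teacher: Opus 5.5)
\textbf{Verdict: genuine gap.} Your heuristic and your reduction are both sound. With your $\lambda=m/n$, the paper's Section~2 parameter is $2\lambda$. Your fixed point $x=1-e^{-2\lambda x^2}$ with its tangency condition gives exactly $\min_{s\in(0,1)}-\ln(1-s)/s^2=\lambda_1^*=2\lambda_{crit}$. Unsatisfiability is indeed equivalent to the existence of a set $S$, $|S|\ge 2$, with $G_S$ connected.

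However, the proposal proves neither direction. What you call the main obstacle is the entire content of the theorem, and the tools you list do not resolve it as stated.

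\begin{itemize}
\item Connectivity of $G_S$ for an adaptively chosen $S$ is a global property. Neither a local weak limit nor Wormald's method controls it directly; Wormald's method needs a local update rule whose drift depends on a few state variables, and ``delete everything outside the giant'' is not such a rule.
\item Your stated reason for stabilization at $(x^*+o(1))n$ is that the surviving graph has average degree $2\lambda x^{*2}>1$. This cannot be the mechanism. For $\tfrac12<\lambda<\lambda_{crit}$ the initial graph $G_V$ also has average degree $2\lambda>1$, yet the cascade must collapse.
\item In the satisfiable regime, ``the same process shrinks every set below linear size'' is exactly the unproved claim. No first-moment bound can reach blocking sets of size comparable to $x^*n$.
\item Your small-set first-moment bound is also miscounted. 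You multiply by $s^{s-2}$ trees but use only the probability $(3s/n)^3$ that a triplet falls inside $S$, not onto a prescribed tree edge. The expression is then of order $(Cs^2/n)^s$, which diverges once $s\gg\sqrt n$. You must either drop the tree factor or use the per-edge probability $\approx 2s/n^3$. The latter gives roughly $(2e\lambda s/n)^s$, which is fine for $s\le\delta n$.
\end{itemize}

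The missing idea is how the paper removes the conditioning problem entirely: \textsc{Gradual Build}. In round $i$, every current component $C$ reveals exactly one new vertex outside $C$, and a constraint $ab\mid c$ is discarded once $c$ has been revealed. After round $i$ every component has revealed exactly $i$ vertices, so the witness of each surviving edge is uniform on the $n-2-i$ unrevealed vertices. Hence every edge copy is deleted in the next round with the same probability $1/(n-2-i)$, independently and regardless of the component structure.

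In the Poissonized model, $G(i)$ therefore has exactly the law of the Erd\H{o}s--R\'enyi multigraph process $\calG_{n,p}(n-2-i)$ run backwards. The instance is unsatisfiable if and only if the largest component of $\calG_{n,p}(t)$ reaches size $t+2$ at some $t$. Both directions then follow from the classical giant-component fraction $\rho(c)$, where $1-\rho=e^{-c\rho}$, evaluated at finitely many time points, together with monotonicity in $t$. No peeling analysis is needed. A ``vertex-by-vertex'' version of your process becomes tractable only if it is designed to have this uniform-thinning property, and that design is the step your proposal lacks.
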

\end{tcolorbox}
This sharp threshold adds to the short list of exact thresholds known for random CSPs~\citep{achlioptas2009random,ding2014satisfiability}, and to the best of our knowledge, is the first sharp phase transition established for a natural CSP over tree structures. We note here that even though \textsc{Triplet Reconstruction} is NP-hard to optimize, the decision version of the problem for checking whether there exists a tree satisfying all constraints is actually in $\mathsf{P}$ (for quartets and other Phylogenetic CSPs this is not true as they are $\mathsf{NP}$-complete ~\citep{steel1992complexity,bodirsky2017complexity}).~\citet*{aho1981inferring}, motivated by an application in databases, gave a polytime algorithm called \textsc{Build} that returns a tree compatible with all $m$ triplets (if such a tree exists, otherwise it halts).  Our proof relies on a lazy variant of \textsc{Build} ---we call it \textsc{Gradual-Build}--- that uses connections to connectivity thresholds and the emergence of the giant component in the literature on random graphs; however, in our case, constraints are on triples $a,b,c$ of vertices, so the presence or absence of an edge in the associated constraint graph depends on whether or not \textit{all} vertices $a,b,c$ are part of the same induced connected component (as we progressively construct smaller and smaller clusters by separating some vertices from others so to construct the final hierarchy), and this makes the arguments more intricate. For details, see Section~\ref{sec:sharp-threshold}.

Next, we present upper and lower bounds for satisfiability of quartets:

\begin{tcolorbox}
\begin{theorem}\label{th:quartets_thresholds}
A random instance of \textsc{Quartet Reconstruction} with density $\lambda<0.908$ is satisfiable w.h.p., i.e., there exists a tree satisfying all $m$ quartets, whereas a random instance with $\lambda>2.89$ is not satisfiable w.h.p., i.e., no tree satisfies all quartets.
\end{theorem}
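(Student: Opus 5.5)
The upper bound $\lambda>2.89$ will come from a first-moment (union bound) argument over tree topologies, and the lower bound $\lambda<0.908$ from an explicit satisfying construction that reduces to a giant-component-type analysis, in the spirit of the \textsc{Gradual-Build} proof of Theorem~\ref{th:sharp}.

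\emph{Unsatisfiability.} Fix a rooted binary tree $T$ on $V$. A uniformly random quartet is satisfied by $T$ with probability $p(T)$: the probability that four random leaves induce the unrooted topology $ab\mid cd$. I will bound $\max_T p(T)$ and combine it with the number of labelled rooted binary trees, $(2n-3)!!\approx (2n/e)^n$, which is $e^{n\log n(1+o(1))}$. This naive bound fails, because $\Pr[T\text{ satisfies all}]=p(T)^m$ is only $e^{-\Theta(n)}$.

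I will therefore replace individual trees with coarse structures. Say $T$ satisfies all $m$ quartets. Then $T$ has an edge splitting $V$ into sides $A,B$, with $|A|\in[n/3,2n/3]$, say $|A|=\alpha n$. Any quartet with a pair straddling the split in the wrong way is violated. Concretely, whenever $\{a,b\}$ or $\{c,d\}$ is split between $A$ and $B$, the remaining structure is constrained. The simplest case is this: a quartet with $a,c\in A$ and $b,d\in B$ is always violated. The union bound is then over only $\binom{n}{\alpha n}\le 2^{n}$ partitions, and I need $\lambda\cdot(-\log q(\alpha))>H(\alpha)$, where $q(\alpha)$ is the probability that a random quartet is not forced to be violated by a bipartition of that balance. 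Computing $q$, or refining to a tripartition or several nested splits to tighten the constant, and optimizing over $\alpha$ should give a threshold near $2.89$.

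The main obstacle is choosing the coarsening so that the exponent beats the entropy at exactly $2.89$. I would first try a single balanced edge split with $q(\alpha)=1-4\alpha^2(1-\alpha)^2\cdot 2$, then a three-way split at a node, and optimize numerically.

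\emph{Satisfiability.} I will use a recursive top-down construction. Form the graph $G$ on $V$ whose edges are the pairs $\{a,b\}$ and $\{c,d\}$ of every quartet. If $G$ restricted to the current cluster has no giant component, I split the cluster into its components (grouped into two sides). Quartets with $\{a,b\}$ on one side and $\{c,d\}$ on the other are then satisfied; a quartet with all four leaves on one side is deferred to the recursion. For $2\lambda<1$, i.e.\ average degree $4\lambda<2$, \dots{} but this only gives $\lambda<1/2$. To reach $0.908$ I will instead treat the pairs adaptively: a quartet becomes irrelevant once it is separated correctly, and quartets whose pairs span two subclusters impose a constraint only at a lower level. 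I will then analyze the process via a differential-equation/branching-process computation, analogous to \textsc{Gradual-Build}, tracking the fraction of remaining active constraints. The constant $0.908$ should come out as the point where the associated criticality equation (a Lambert-$W$-type fixed point, as in Theorem~\ref{th:sharp}) ceases to have a subcritical solution.
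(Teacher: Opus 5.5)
There is a genuine gap in the satisfiability half. The unsatisfiability half follows the paper's route but gets wrong the one probability that determines the constant.

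\textbf{Satisfiability ($\lambda<0.908$).} The missing idea is a reduction, not a new process analysis. A quartet $ab\mid cd$ is implied by the generalized triplet constraint $ab\mid(c,d)$ with $r=2$ negative examples. If both $c$ and $d$ leave the cluster containing $\{a,b\}$ before $a$ and $b$ are separated, then the $a$--$b$ path lies inside that cluster's subtree and the $c$--$d$ path avoids it. Hence a random quartet instance is satisfiable whenever the associated $r=2$ instance is. Theorem~\ref{thm:threshold-main}, already proved for every fixed $r$ via \textsc{Gradual Build}, then gives satisfiability for $m<\lambda_2^*n/2\approx0.908n$ at once; this is the paper's entire proof of this half.

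Your graph instead contains both $\{a,b\}$ and $\{c,d\}$, and keeps each pair together until the two pairs separate. That is the conjunction of $ab\mid(c,d)$ and $cd\mid(a,b)$, a strictly stronger requirement: it forbids, e.g., separating $d$ from $\{a,b,c\}$ before $c$ leaves $\{a,b\}$. So it cannot succeed beyond $0.908$, and the ``adaptive'' repair meant to recover $0.908$ is never specified.

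The criterion you expect to produce the constant is also the wrong one. By Lemma~\ref{lem:prob-equal}, failure is equivalent to the backward-time process $\calG_{n,\Lambda_r}(t)$ having a component of size at least $t+2$ at some time $t$. The constant $\lambda_r^*=\min_{s}\frac{-\ln(1-s)}{s(1-(1-s)^r)}$ comes from comparing the \emph{supercritical} giant-component fraction with the elapsed time, not from the loss of a subcritical regime. For $r=1$, the initial graph at threshold has average degree about $2.46$. Your degree count is also off: $2\lambda n$ edges give average degree $4\lambda$, which is subcritical only for $\lambda<1/4$.

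\textbf{Unsatisfiability ($\lambda>2.89$).} The single balanced-edge split with a union bound over $\binom{n}{k}$ bipartitions is exactly the paper's argument; no tripartitions or nested splits are needed. But your $q(\alpha)$ is wrong. The probability that exactly one of $a,b$ and exactly one of $c,d$ lie in $A$ is $(2\alpha(1-\alpha))^2=4\alpha^2(1-\alpha)^2$: four placements, each of probability $\alpha^2(1-\alpha)^2$. It is not $8\alpha^2(1-\alpha)^2$. The extra factor overstates the probability that a quartet is forced to be violated, so the union bound is no longer valid; it would claim unsatisfiability near $\lambda\approx1.27$.

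Also, the exponent must be negative for \emph{every} $\alpha\in[1/3,1/2]$, so you need $\lambda>\max_{\alpha}H(\alpha)/\bigl(-\ln(1-4\alpha^2(1-\alpha)^2)\bigr)$. This is a worst case over $\alpha$, not an optimization in your favour. The step you still owe is showing this ratio is decreasing on $[1/3,1/2]$. The paper does this by checking $H>L$ and $L'>H'$ there, with $L=-\ln(1-g)$. The maximum is therefore at $\alpha=1/3$, where it equals $(\ln 3-\tfrac23\ln 2)/\ln(81/65)\approx2.89$.
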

\end{tcolorbox}

For the precise formulas for the values of the quartet thresholds (as well as efficient satisfiability algorithms in the low-density regime), please refer to the Theorem~\ref{th:quartet_upper_lower} (and Theorem~\ref{thm:threshold-main}). Briefly, to obtain the lower bound we still run \textsc{Build}, where we need to appropriately modify it for quartets (adding two random edges per $ab|cd$ quartet) and analyze the probability of it successfully outputting a hierarchy, whereas for the upper bound, we analyze the probability that the top-split in a given tree does no violation on the quartets, then taking a union bound over possible top-splits/tree topologies.

\vspace{-12pt}

\paragraph{Strong Refutation.} 
We first present our strong refutation results for the specific Triplet and Quartet predicates, two important special cases of Phylogenetic CSPs. As we will see, our strong refutation algorithms apply to Ordering and Phylogenetic CSPs, and, in fact, more broadly to CSPs without negation, thereby extending the influential work of~\cite{allen2015refute}. Concurrently and independently of our work,~\cite*{chan2026strongly} also presented a strong refutation algorithm for CSPs without negation using different techniques.

For \textsc{Triplet Reconstruction} and \textsc{Quartet Reconstruction}, the random instance is satisfiable with high probability when the number of constraints is at most $cn$, and is at most $1/3+\epsilon$ satisfiable with high probability when the number of constraints is at least $C_{\epsilon}n$, for suitable constants $c$ and $C_{\epsilon}$, the upper bound here follows by a simple argument from the recent results on the Natarajan dimension of quartets (\cite{arvanitakis2026optimal}) and triplets (\cite{avdiukhin2023tree}), respectively,  that show that for both problems the dimension is $O(n)$.
Note that $1/3$ is the approximation factor achieved by the random assignment algorithm for both triplets and quartets.
We present general refutation results for Phylogenetic and Ordering CSP predicates that fail to support a $t$-wise $\mu$-independent distribution, at density $m=\tilde\Theta(n^{\nicefrac{t}{2}})$. 
Our results generalize and extend the work of~\cite*{allen2015refute}, which studied strong refutation for Boolean and finite-domain CSPs (see Section~\ref{sec:refutation} for definitions). Recall that an efficient algorithm which, given a random instance, certifies that $\opt(\mathcal{I}) < 1$ is called a (weak) \emph{refutation} algorithm. The refutation task becomes more challenging if the algorithm is required to certify that $\opt(\mathcal{I}) < 1-\delta$ for a constant $\delta$. The strongest possible form of refutation certifies that $\opt(\mathcal{I}) \le \alpha^*+\epsilon$, where $\alpha^*$ is a trivial baseline, specifically, the expected value achieved by the best (possibly biased) random assignment algorithm. For Triplet and Quartet predicates, $\alpha^* =1/3$.
We begin with the special case of quartets and triplets. 
\begin{tcolorbox}
\begin{theorem}[Strong Refutation for Quartets]\label{th:refutation-quartets}
For every $\epsilon>0$, there exists a polynomial-time refutation algorithm (see Definition~\ref{def:certify}) that certifies $\opt(\calI) \le \tfrac13+\epsilon$ with high probability for a random instance $\mathcal{I}(n,m)$ of \textsc{MaxQuartetRec} with
 $m\ge  \Omega_{\epsilon} (n^{2})$.

\medskip

There exists a polynomial-time refutation algorithm that certifies $\opt(\calI) \le \tfrac59+\epsilon$ with high probability for a random instance $\mathcal{I}(n,m)$ of \textsc{MaxQuartetRec} with
 $m\ge  \widetilde\Omega_{\epsilon} (n^{1.5})$.
\end{theorem}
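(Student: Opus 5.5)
The plan is to reduce both parts to refuting a low-degree polynomial inequality over a ``local'' relaxation of trees. In that relaxation each variable is assigned to one of a constant number $q=q(\epsilon)$ of clusters, and the structure used is the top splits of the tree.

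\textbf{Discretization.} Fix a tree $T$. Cut $T$ at a constant depth determined by $\epsilon$, obtaining a partition of the leaves into $O_\epsilon(1)$ parts together with the coarse hierarchy above them. A quartet $ab\mid cd$ whose four endpoints lie in four distinct parts is satisfied or violated according to the coarse hierarchy alone. For quartets with two or more endpoints in the same part, the fraction is small: a random instance has at most $\epsilon m$ such constraints once each part has at most $\epsilon' n$ leaves. To handle large parts we use the standard ``centroid/balanced separator'' decomposition, so that every part is small or is resolved at lower levels; the argument is then applied recursively.

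In this way $\val(T)$ is written, up to $\epsilon$, as $\sum_{\text{patterns}} P(\sigma)\cdot(\#\text{constraints whose endpoints have colors }\sigma)/m$. Here $\sigma$ ranges over $[q]^4$ and $P(\sigma)\in\{0,1\}$ is the quartet predicate evaluated on the coarse tree. This is an ordinary CSP over the finite alphabet $[q]$ without negations. The key point is that for a random instance, the count $\#\{\text{constraints with colors }\sigma\}$ is close to $m\prod_i \mu_{\sigma_i}$, where $\mu$ is the empirical color distribution. The objective then becomes $\E_{\sigma\sim\mu^{\otimes4}}P(\sigma)$, and we must show this is at most $1/3+\epsilon$ for every coarse tree. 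That bound is a direct calculation: for independent uniformly random leaves drawn from any leaf distribution, each of the three quartet topologies has symmetric probability, so the value is at most $1/3$ plus the probability of a collision, which is at most $\epsilon$.

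\textbf{Certification for $m\ge\Omega_\epsilon(n^2)$.} We must certify, simultaneously for all colorings $x:V\to[q]$, that $\sum_{(a,b,c,d)\in\calI}\one[x=\sigma \text{ on } abcd]-m\prod\mu_{\sigma_i}$ is at most $\epsilon m$. Write this as a $4$-tensor deviation $\sum_{e}(A_e-\E A_e)\,y_{a}y_by_cy_d$ with $y\in\{0,1\}^n$ the color indicators. Flatten it into an $n^2\times n^2$ matrix and bound its spectral norm by matrix Bernstein, giving $\tilde O(\sqrt{m}\,n)$ plus diagonal terms. Against $\|y\otimes y\|^2\le n^2$, this gives deviation $O(\sqrt m\, n\cdot 1)\ll \epsilon m \cdot(\text{normalization})$ once $m\gg n^2$. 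Taking a union over $q^4$ patterns and over the finitely many coarse trees yields a polynomial-time certificate consisting of eigenvalue computations.

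\textbf{Certification for the $5/9$ bound at $m\ge\tilde\Omega(n^{1.5})$.} Here only a pairwise-marginal certificate is available, via the odd-arity/Allen--O'Donnell--Witmer trick. Bound the tensor by Cauchy--Schwarz, pairing constraints that share two variables into an $n^2\times n^2$ matrix. Its norm concentrates once $m\ge\tilde\Omega(n^{1.5})$. This certifies that the $2$-wise marginals of the color distribution on constraints are close to product marginals. We then show that any distribution on quartet topologies whose pairwise marginals agree with a product distribution satisfies the predicate with probability at most $5/9$. This is a small LP over the coarse tree patterns, and I expect $5/9$ to emerge as its optimum. The main obstacle is making the discretization rigorous uniformly over all trees (handling the large clusters). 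Verifying the $5/9$ LP value is the second hardest step.
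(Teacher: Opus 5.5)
\textbf{Part 2 has a genuine gap.} The combinatorial fact you rely on is false. You assume that a distribution on four leaves whose pairwise marginals are product satisfies $ab|cd$ with probability at most $5/9$. Here is a pairwise uniform distribution with value $3/5$, so controlling only pairwise marginals cannot certify anything below $3/5$.

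Work on the complete binary tree, and let $G\sim\mathrm{Geom}(1/2)$ on $\{0,1,\dots\}$ be drawn fresh each time it appears. At the root:
\begin{itemize}
\item with probability $1/6$, all four leaves enter the same child and the construction recurses there;
\item with probability $1/6$, the root separates $\{a,b\}$ from $\{c,d\}$, and both pairs have LCA at depth $1$;
\item with probability $1/6$ each, the root separates $\{a,c\}$ from $\{b,d\}$ (resp.\ $\{a,d\}$ from $\{b,c\}$), and both pairs have LCA depth $2+G$;
\item with probability $1/12$ each, one leaf is alone at the root, the other three have their LCA at depth $1$, and the two of them forming $\{a,b\}$ or $\{c,d\}$ are the cherry, with LCA depth $2+G$.
\end{itemize}

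Each of the six pairs then has LCA depth distributed as for two independent uniform leaves, namely $\Pr[\text{depth}=j]=2^{-(j+1)}$. For example, for $\{c,d\}$: depth $0$ with probability $1/2$; depth $1$ with probability $\tfrac16+\tfrac16\cdot\tfrac12=\tfrac14$; depth $j\ge2$ with probability $\tfrac16 2^{-(j-1)}+\tfrac16 2^{-j}=2^{-(j+1)}$. The other pairs are analogous. So after composing with a uniformly random automorphism of the tree, the distribution is exactly pairwise uniform. On the depth-$h$ truncation, groups still together at depth $h$ share a leaf. Yet $ab|cd$ holds in the $\{a,b\}|\{c,d\}$ branch and in all four $3$--$1$ branches. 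Its probability $Q$ therefore satisfies $Q=\tfrac16+\tfrac13+\tfrac16Q$, i.e.\ $Q=3/5>5/9$.

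You also have the density backwards. Pairwise certificates are already available at $m=\Omega(n)$; that is what the paper uses for triplets.

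\textbf{What the density $\widetilde\Omega(n^{3/2})$ actually buys} is a degree-$3$ certificate. The odd-arity Allen--O'Donnell--Witmer argument certifies, uniformly over colorings $\xi$, that all $3$-wise marginals of $(\xi(X_1),\dots,\xi(X_4))$ are close to product. The Fourier-truncation step then gives closeness to a $3$-wise independent distribution (Theorem~\ref{thm:t-wise-phy-CSP-refute} with $t=3$).

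The missing combinatorial input is the $3$-wise bound of Lemma~\ref{lem:3wise-unirofm-quartet-bound}:
\begin{itemize}
\item Condition on $a$, so that $(b,c,d)$ is pairwise uniform.
\item Delete $a$ and re-root at its parent, which turns $ab|cd$ into the triplet $cd|b$.
\item Apply the pairwise triplet bound. Take an edge whose removal leaves a side of mass $x\in[1/3,2/3]$; the triplet probability is at most $x^2+(1-x)^2\le 5/9$.
\end{itemize}
This bound must hold for trees with $q(\epsilon)$ leaves, so it is a structural argument, not a small LP.

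\textbf{Part 1 follows the paper's route}, and your exchangeability argument for $\alpha^*=1/3$ is fine. Two caveats:
\begin{itemize}
\item Cutting at constant depth fails for caterpillar-like trees. This is why the colored coarse solutions of Lemma~\ref{lem:cs-good} are needed.
\item Matrix Bernstein loses a logarithm, giving $m\ge\Omega_\epsilon(n^2\log n)$. The paper reaches $\Omega_\epsilon(n^2)$ via the Grothendieck/SDP bound of Theorem~\ref{thm:even-allen2015refute}.
\end{itemize}
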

\end{tcolorbox}

For triplets, we obtain a similar result.

\begin{tcolorbox}
\begin{theorem}[Strong Refutation for Triplets]\label{th:refutation-triplets}
For every $\epsilon>0$, there exists a polynomial-time refutation algorithm (see Definition~\ref{def:certify}) that certifies $\opt(\calI) \le \tfrac13+\epsilon$ with high probability for a random instance $\mathcal{I}(n,m)$ of \textsc{MaxTripletRec} with
$
m \ge \widetilde{\Omega}_{\epsilon}(n^{3/2}) .
$

\medskip

Moreover, there exists a polynomial-time refutation algorithm that certifies $\opt(\mathcal{I}) \le \tfrac59+\epsilon$ with high probability for a random instance $\mathcal{I}(n,m)$ of \textsc{MaxTripletRec} with
$
m \ge {\Omega}_{\epsilon}(n) .
$
\end{theorem}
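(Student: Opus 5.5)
Both parts go through a \emph{local-to-global} reduction: fix the unknown optimal tree $T^*$, read off from it an assignment to a finite-domain CSP, and refute that CSP with spectral methods in the style of \cite{allen2015refute}. The certificate has to bound every tree, so every step must hold simultaneously over all $T$, with the only randomness coming from the instance.

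\textbf{The $\tfrac13+\epsilon$ bound at $m=\widetilde\Omega_\epsilon(n^{3/2})$.}
Encode a rooted tree $T$ by the 3-variable relation $R_T(a,b,c)\in\{ab|c,\ ac|b,\ bc|a\}$ on distinct leaves. For each triple exactly one of the three relations holds. Hence a triplet $ab|c$ is satisfied by $T$ exactly when
\[
\one[R_T(a,b,c)=ab|c] \;=\; \tfrac13 + f_T(a,b,c),
\]
where $f_T$ is the centered indicator. Write $\val_\calI(T)$ for the fraction of triplets of $\calI$ that $T$ satisfies. Then
\[
\val_\calI(T) - \tfrac13 \;=\; \tfrac1m\sum_{i} f_T(a_i,b_i,c_i).
\]

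The plan is to bound this sum uniformly over $T$ by writing each $f_T$ as a bounded combination of products of $\pm1$ ``order-type'' variables attached to \emph{pairs} of leaves. For example, let $x_{ab}\in[q]$ record, with depth discretized into $q=q(\epsilon)$ levels, the depth of $\lca(a,b)$. Then $ab|c$ holds iff $\lca(a,b)$ is strictly deeper than $\lca(a,c)=\lca(b,c)$, which is a function of $(x_{ab},x_{ac},x_{bc})$.

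The important consequence is that the resulting constraint on pair-variables has no linear (degree-1) term that survives averaging over a random $c$, because the marginal of each pair is balanced by symmetry. The remaining nonconstant terms are therefore degree-2 or degree-3 in pair-variables, and these are exactly the terms \cite{allen2015refute} refute at $m=\widetilde\Omega(n^{3/2})$ via the SoS/spectral bound on random 3-tensors.

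One concrete route: fix the vertex $c$ and view $\sum_i f_T$ as a quadratic form $\sum_c y_c^\top M_c z_c$, where the $M_c$ are random sparse matrices. The Cauchy--Schwarz trick of \cite{allen2015refute} then gives a bound $\epsilon m$ whenever $m\gg n^{3/2}\log^3 n$, using matrix Bernstein for $\sum_c M_c\otimes M_c$ minus its diagonal.

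\textbf{The $\tfrac59+\epsilon$ bound at $m=\Omega_\epsilon(n)$.}
Use a 2-local argument instead. For any tree $T$, take the top split $(S,\bar S)$ and compute for every triple the probability that $ab|c$ is consistent with it. Iterating gives a weaker predicate that depends only on a 2-coloring: for $ab|c$ to be satisfied, either $a,b,c$ lie on the same side, or $a,b$ are together and $c$ is on the other side.

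Relaxing $T$ to a recursive sequence of cuts reduces the problem to bounding how many triplets are consistent with a cut. That quantity is a degree-2 polynomial in $\pm1$ cut variables, which at $m=\Omega_\epsilon(n)$ can be controlled by the spectral norm of a random sparse graph (after deleting high-degree vertices, as in Feige--Ofek). The constant $5/9$ should arise as the best value achievable by any distribution on trees whose pairwise statistics look uniform, i.e., the pairwise-independent optimum for the triplet predicate.

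\textbf{Main obstacle.}
The hard step is making the encoding of trees into finite-domain pair or cut variables lossless up to $\epsilon$, uniformly over all $n^{\Theta(n)}$ trees. Trees have unbounded depth, so I would first show that every tree can be $\epsilon$-approximated, in the sense of the fraction of triples whose relation changes, by a tree of depth $O_\epsilon(1)$ over ``heavy'' clusters; a regularity-type lemma would supply this. After that, the refutation is a finite-alphabet CSP without negations, handled by the spectral arguments above.
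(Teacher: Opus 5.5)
In the first part, the pair-variable encoding is not a usable relaxation. Once the $\binom n2$ depths $x_{ab}$ are free variables, the ultrametric consistency that makes them come from a tree is lost. At $m\approx n^{3/2}$, two random constraints share a pair-variable only about $9m^2/n^2=o(m)$ times, so almost every constraint can be satisfied by setting its own three variables, and no certificate can exist. If you keep the consistency, you no longer have a CSP. The encoding that works is over the $n$ leaf variables. By Lemma~\ref{lem:cs-good} (imported from \citet{chatziafratis2023triplet}), every tree is dominated, in the optimistic value $\val^+$, by a coarse map $\xi:[n]\to[q]$ onto a tree with $q=O_\epsilon(1)$ leaves whose color classes have size at most $16n/q$. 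Your ``main obstacle'' paragraph points in this direction. The one-sided $\val^+$ is what spares you a separate argument that instance triples do not pile up inside clusters.

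Even with this encoding, the remaining terms are not ``exactly those \cite{allen2015refute} refute.'' Their Theorem~4.1 needs independent mean-zero coefficients, which in their setting come from random negations. Here the coefficients are $0/1$ indicators with mean $p$. The missing step is centering. The paper applies the theorem to the difference between $\calI$ and an independent copy, certifying $\val^+(\xi,\calI)\le \E_{\calI''}[\val^+(\xi,\calI'')]+\epsilon$ for all $\xi$ (Lemma~\ref{lem:random-vs-random}), and then bounds the expectation separately. Your symmetry remark handles that last step correctly: every tree satisfies exactly $1/3$ of ordered triples, so $\E[\val^+]\le 1/3+O(1/q)$. But it is a statement about the expected polynomial and does not replace centering the random part.

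In the second part, the cut argument does not give $5/9$. Consistency with a split only requires $a,b$ to be on the same side. The top split can isolate a single leaf (a caterpillar), and recursing over all splits reintroduces errors summed over up to $n$ nodes. More fundamentally, statistics of the pair $(a,b)$ alone cannot certify anything below $1$: on a caterpillar, with $(a,b)$ uniform and $c$ placed outside $T_{\lca(a,b)}$, the value tends to $1$. The missing ingredient is the bound $\beta_2^*(f_{\uprightsf{tri}})\le 5/9$, which you only say ``should arise.''

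The paper proves this bound (Lemma~\ref{lem:pairwise-unirofm-triplets-bound}) using all three pairwise marginals:
\begin{itemize}
\item First, $\Pr[ab|c,\ \lca(a,b)=v]\le 2\min\bigl(L(v)R(v),L(v)E(v),R(v)E(v)\bigr)$.
\item Then pick a balanced edge $e^*$, not the top split, with $|A_{e^*}|/q\in[1/3,2/3]$.
\item Use $E(v)R(v)$ on the path from $e^*$ to the root and $L(v)R(v)$ elsewhere.
\item The resulting sum is exactly the probability that two independent uniform leaves lie on the same side of $e^*$, which is at most $(1/3)^2+(2/3)^2=5/9$.
\end{itemize}
This is combined with certifying, at $m=O_\epsilon(n)$, that every coarse assignment induces a distribution on $[q]^3$ close to pairwise independent. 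That is the $t=2$ case of Theorem~\ref{thm:t-wise-phy-CSP-refute}, again via centering, with Theorem~\ref{thm:t-wise-mu-approx} rounding to an exactly pairwise-independent distribution.
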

\end{tcolorbox}

\medskip

Theorems~\ref{th:refutation-quartets} and~\ref{th:refutation-triplets} follow from Theorems~\ref{thm:main-refute} and~\ref{thm:t-wise-phy-CSP-refute}, which apply to arbitrary phylogenetic and ordering CSPs. 
To obtain the specific bound of $5/9$ in the second parts of the above theorems, we use the results from Section~\ref{sec:beta-triplets} and~\ref{sec:beta-quartets}, showing how the triplet and quartet predicates fail to support certain $t$-wise independent distributions for different values of the parameter $t$.
In Section ~\ref{sec:bounds-uniform-distr}, we also discuss applications of our general results to several popular ordering CSPs: \textsc{Betweenness} and \textsc{Cyclic Ordering}. 
\medskip

\paragraph{Organization.} The next section proves the sharp satisfiability threshold for triplet reconstruction. In Section \ref{sec:ub-quartets} we provide upper and lower bounds on the satisfiability threshold for quartets.  In Section \ref{sec:bounds-uniform-distr} we bound the probability of pairwise uniform and three-wise uniform distributions satisfying the triplet and quartet predicates, this coupled together with the theory developed in Appendix ~\ref{sec:refutation} for general ordering and phylogenetic CSPs gives us strong refutation in asymptotically lower density than $m\approx n^{k/2}$ (arity $k=3$ for triplets, $k=4$ for quartets). In Appendix \ref{sec:prelim}, we provide extended preliminaries on ordering and phylogenetic CSPs (\cite{chatziafratis2018hierarchical}, \cite{makarychev2026approximation}, \cite{guruswami2008beating}). We then develop the general refutation theory for all ordering and phylogenetic CSPs. Omitted details and several examples are also provided in the Appendix.

\section{Satisfiability Threshold for Triplet Reconstruction}
\label{sec:sharp-threshold}

We prove that a random Triplet Reconstruction instance with $m=\lambda n/2$ constraints is satisfiable w.h.p. if $\lambda<\lambda_1^*$ and not satisfiable w.h.p. if $\lambda>\lambda_1^*$; equivalently, the threshold in the fixed-size model is $m=\lambda_1^* n/2\approx 1.2277n$.

In fact, we consider a slightly more general setting in which each constraint has $r$ negative examples
$\mathbf{c}^{(1)},\dots,\mathbf{c}^{(r)}$.
Such a constraint is satisfied if each triplet $ab|\mathbf{c}^{(j)}$ is satisfied or, equivalently, if each $\mathbf{c}^{(j)}$ is separated from the pair $a,b$ before $a$ and $b$ are separated.
In this more general case, the satisfiability threshold is
\begin{equation}\label{eq:def:lambda-r}
\lambda_r^*=\min_{s\in(0,1)}\frac{-\ln(1-s)}{s\bigl(1-(1-s)^r\bigr)}.
\end{equation}
For $r=2$, $\lambda_2^*\approx 1.81604$ and $m\approx 0.908n$.
Although our proof works for every fixed $r\geq 1$, we recommend that the reader first focus on the most important case of  $r=1$.

In this section, instead of fixing the number of constraints $m$ in a random instance of Triplet Reconstruction, we fix the probability $\approx\lambda/n^{r+1}$ that a constraint $ab|\mathbf{c}$ is included in the set of constraints of a random instance. We prove the main result for this model of random instances and then provide an easy reduction to the model with a fixed number of constraints. We identify constraints $ab|\mathbf{c}$ and $ba|\mathbf{c}$ since they are equivalent.
Also, for technical reasons, we allow a random instance to contain multiple copies of the same constraint $ab|\mathbf{c}$. Specifically, we assume that a random instance $\calI_r(n,\lambda)$ has $X_{ab|\mathbf{c}}$ copies of the constraint $ab|\mathbf{c}$, where $X_{ab|\mathbf{c}}$ is a Poisson random variable with parameter
$$
p=\frac{\lambda}{(n-1)(n-2)^r}.
$$
That is,
$$
\Pr[X_{ab|\mathbf{c}}=k]=e^{-p}p^k/k!.
$$
All random variables $X_{ab|\mathbf{c}}$ are mutually independent. Note that the probability that at least one copy of the constraint $ab|\mathbf{c}$ is present in the instance equals
$$
\Pr[X_{ab|\mathbf{c}}\ge1]
=
1-\Pr[X_{ab|\mathbf{c}}=0]
=
1-e^{-p}
=
p+O(p^2)
=
\frac{\lambda}{(n-1)(n-2)^r}
\Bigl(1+O_\lambda\!\left(\frac1{n^r}\right)\Bigr).
$$
The number of distinct constraints $ab|\mathbf{c}$ is $n(n-1)(n-2)^r/2$ (since we identify constraints $ab|\mathbf{c}$ and $ba|\mathbf{c}$ and do not allow $\mathbf{c}^{(j)}$ to be equal to $a$ or $b$). Thus, the expected number of distinct constraints in the random instance $\calI_r(n,\lambda)$ is $\lambda n/2 + O(1/n)$.

\begin{theorem}\label{thm:threshold-main}
Let $\lambda^*_r$ be as above. For every positive $\lambda<\lambda^*_r$, a random instance $\calI_r(n,\lambda)$ is satisfiable w.h.p. For every positive $\lambda>\lambda^*_r$, a random instance $\calI_r(n,\lambda)$ is not satisfiable w.h.p.
\end{theorem}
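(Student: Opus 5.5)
The plan is to reduce satisfiability to a purely graph-theoretic property and then locate that property's threshold with random-graph tools.

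\textbf{Step 1: characterization.} For a set $K\subseteq V$, let $G[K]$ be the graph on $K$ with an edge $ab$ for every constraint $ab|\mathbf{c}$ such that $a,b\in K$ and at least one $\mathbf{c}^{(j)}\in K$. (Constraints whose negatives all lie outside $K$ are already satisfied once $K$ is split off, so they impose nothing on $K$.) Call $K$ \emph{stuck} if $|K|\ge 2$ and $G[K]$ is connected.

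I will show that $\calI_r(n,\lambda)$ is satisfiable iff no stuck set exists. If $K$ is stuck, then in any tree consider the top split of the subtree spanned by $K$. Some edge $ab$ of $G[K]$ must cross this split, and then its negative $\mathbf{c}^{(j)}\in K$ is not separated from $\{a,b\}$ before $a$ and $b$ are separated, so the constraint is violated. Conversely, run (\textsc{Gradual-})\textsc{Build}: at a cluster $S$, split $S$ into the connected components of $G[S]$. This fails only when some $G[S]$ is connected with $|S|\ge 2$, and that $S$ is a stuck set. When it never fails, it produces a tree satisfying everything.

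\textbf{Step 2: where the threshold comes from.} Fix a set of $sn$ vertices. A vertex $v$ has, in expectation, about $\lambda\, s\,(1-(1-s)^r)$ constraints $vb|\mathbf{c}$ with $b$ in the set and some $\mathbf{c}^{(j)}$ in the set, and this count is approximately Poisson. So a "self-sustaining" set of density $s$ (a giant-component-like core) must satisfy the fixed-point equation $1-s=\exp\bigl(-\lambda s(1-(1-s)^r)\bigr)$. This equation has a root $s\in(0,1)$ iff $\lambda\ge \min_s \frac{-\ln(1-s)}{s(1-(1-s)^r)}=\lambda_r^*$.

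\textbf{Step 3: unsatisfiable side ($\lambda>\lambda^*_r$).} I would exhibit a stuck set by iterating $K_0=V$, $K_{i+1}=$ the largest component of $G[K_i]$. Conditioned on $K_i$ having density $s_i$, the graph $G[K_i]$ is close to an Erd\H{o}s--R\'enyi graph with average degree $d_i=\lambda s_i(1-(1-s_i)^r)$. Hence $s_{i+1}\approx s_i\, x(d_i)$, where $x(d)$ is the giant fraction, i.e.\ the solution of $1-x=e^{-dx}$. Above threshold the deterministic recursion decreases to the largest fixed point $s^*>0$, so after $O(1)$ steps the density is within $\eta$ of $s^*$.

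It then remains to show the process stabilizes at a set of linear size. Rather than iterate indefinitely, I would switch to a peeling or core argument. Take the final set $K$ to be the maximal set such that $G[K]$ is connected, and compare it with the fixed point via a second-moment or exploration-process bound. Equivalently, show that the giant of $G[K]$ retains all but $o(n)$ vertices, and absorb the rest with a monotonicity argument: stuck sets are closed under union when they intersect, so the union of all stuck sets containing the giant is itself stuck.

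\textbf{Step 4: satisfiable side ($\lambda<\lambda^*_r$).} Split by the size $k=|K|$ of a candidate stuck set.

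For small sets, $2\le k\le\delta n$, use a first-moment bound. A stuck $K$ needs a spanning tree of $k-1$ constraints, each with $a,b$ and some negative inside $K$. For $r=1$ the expected number of such sets is at most
\[
\binom{n}{k}k^{k-2}\Bigl(\tfrac{C\lambda k}{n^{2}}\Bigr)^{k-1}\le n^{2}\Bigl(\tfrac{C'k}{n}\Bigr)^{k}.
\]
The case $k=2$ is impossible, since $\mathbf{c}$ must differ from $a,b$. Summing over $3\le k\le \delta n$ gives $o(1)$ for small $\delta$, and the case of general $r$ is analogous.

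For large sets, $k\ge\delta n$, the recursion of Step 3 below threshold has no positive fixed point, so after $O(1)$ iterations the density of $K_i$ drops below $\delta$. Since any stuck set is contained in a single component at every stage (by induction), it lies inside $K_i$ and is therefore small. This is then handled by the small-set bound.

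\textbf{Main obstacle.} The main obstacle is the concentration in Step 3/4. The edges of $G[K_i]$ depend on $K_i$ itself, which was determined by the same constraints, so the graphs are not independent Erd\H{o}s--R\'enyi samples. I would handle this with a deferred-decisions exploration: reveal constraints only as components are explored. I would also prove the giant-component estimates uniformly over all sets of a given density via a union bound with exponentially small error, because the number of relevant sets is only $e^{O(n)}$ while the deviation probabilities are $e^{-\Omega(\eta^2 n)}$.

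Finally, I would transfer the result to the fixed-$m$ model by Poissonization and monotonicity of satisfiability in the constraint set.
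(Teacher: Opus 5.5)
Your Step~1 characterization (unsatisfiable iff a stuck set exists), the first-moment bound for stuck sets of size at most $\delta n$, and the fixed-point equation of Step~2 are all fine. The gap is in Steps~3--4, which is where the threshold actually has to be proved.

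You treat $G[K_i]$ as an Erd\H{o}s--R\'enyi graph on $K_i$ with mean degree $d_i=\lambda s_i(1-(1-s_i)^r)$ and set $s_{i+1}\approx s_i\,x(d_i)$. But $x(d)<1$ for every finite $d$, so this recursion is strictly decreasing and its only fixed point is $0$, for every $\lambda$. Consequences:
\begin{itemize}
\item The claim that it ``decreases to the largest fixed point $s^*>0$'' above threshold is false.
\item The Step~4 argument (``no positive fixed point below threshold'') applies verbatim for all $\lambda$. It would therefore ``prove'' satisfiability at every density.
\item The recursion is inconsistent with your own Step~2 equation.
\end{itemize}
The error is exactly the dependence you flag: $K_i$ is the giant of a graph built from the same pairs, so its vertices have biased degrees and $G[K_i]$ is not a fresh random graph on $K_i$.

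What is true is this. The giant of $G[K_i]$ is (w.h.p.) the giant of the graph on \emph{all} of $V$ whose edges are the constraints with a negative in $K_i$. By deferred decisions, negatives are revealed only through ``hits $K_i$ or not''. So, given the past, this graph is an independent thinning of the previous one, with a retention probability depending (up to negligible endpoint effects) only on $|K_i|$. The correct recursion is therefore $s_{i+1}\approx\rho\bigl(\lambda(1-(1-s_i)^r)\bigr)$ with $1-\rho(c)=e^{-c\rho(c)}$. Its fixed points are exactly those of your Step~2 equation.

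Your proposed remedy for the dependence also fails. A union bound over the $\binom{n}{sn}=e^{(H(s)+o(1))n}$ sets of density $s$ cannot be paid for by deviation probabilities of order $e^{-c\eta^2 n}$ once $\eta$ is small, and small $\eta$ is what locating a sharp threshold requires. In addition, the endgame of Step~3 is left as an unspecified second-moment/exploration bound. That endgame must show the nested sets stop at a \emph{linear-size} stuck set instead of drifting down over $\omega(1)$ rounds.

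The missing idea is the paper's exact coupling. In \textsc{Gradual Build}, every current cluster $C$ explores exactly one new vertex outside $C$ per round. After $i$ rounds, every surviving constraint therefore sees an explored set of the same size $i$. Each edge copy is then deleted independently with probability $\frac{(i+1)^r-i^r}{(n-2)^r-i^r}$. Hence $G(i)$ has exactly the law of the time-reversed multigraph process $\mathcal{G}_{n,\Lambda_r}(n-2-i)$ on all $n$ vertices.

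Unsatisfiability then becomes the event that some component has size at least $t+2$ at some time $t$. This event is monotone in a single Erd\H{o}s--R\'enyi-type process. On the satisfiable side, the giant-component theorem at $O(1)$ checkpoints, plus monotonicity in $t$, settles it. On the unsatisfiable side, a single time $t=u(n-2)$ at which the giant exceeds $t+2$ suffices. There is no need to construct the stuck set or to control an iteration near its fixed point.

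Your corrected nested iteration is essentially a coarse-grained version of this coupling. Once you have the coupling, your endgame follows from the same monotonicity.
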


\begin{figure}[ht]
    \centering   \includegraphics[width=6in]{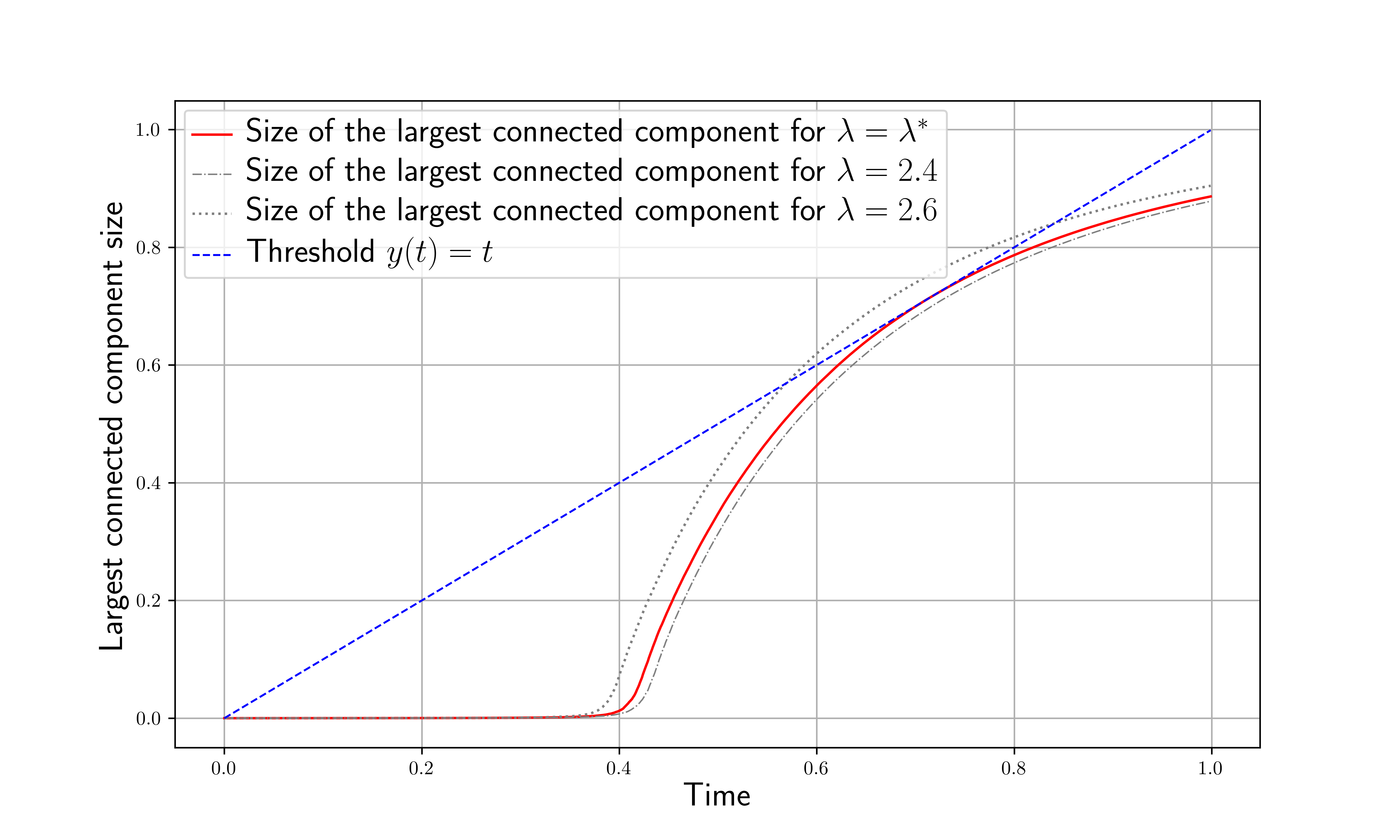}
    \caption{The size of the largest connected component in $\calG_{n,\Lambda_r}(t)$ for $\lambda=\lambda^*_1\approx 2.4554$, $p=\lambda/n^2$. The time and size of the largest connected component are scaled by $1/n$.}
    \label{fig:enter-label}
\end{figure}
Consider first the case $r=1$. Define the continuous random multigraph process as follows: $\calG_{n,p}(t)$ is a graph on $n$ vertices; the multiplicity of each edge $(u,v)$ is a Poisson process with parameter $pt$ (see~\cite{stepanov1970probability} and also~\cite{erdos1960evolution}, \cite*{janson2011random}). Note that $\calG_{n,p}(t)$ contains very few parallel edges for $t\le n$. If we replace parallel edges in $\calG_{n,p}(t)$ with a single edge, then, for a fixed $t$, we obtain an Erd\H{o}s--R\'enyi graph $\bbG(n,1-e^{-pt})$.

In the general case, for a fixed $r\geq 1$, we use a  non-homogeneous random multigraph process, which we denote by the same symbol $\calG_{n,\Lambda_r}(t)$. Each edge $(u,v)$ is generated by an independent non-homogeneous Poisson process with cumulative intensity function
\[
\Lambda_r(t)
=
p\Bigl((n-2)^r-\bigl((n-2)-t\bigr)^r\Bigr),
\]
where $t\in [0,n-2]$.
Equivalently, at time $t$ the multiplicity of $(u,v)$ has distribution $\operatorname{Pois}(\Lambda_r(t))$, increments over disjoint intervals are independent, and the instantaneous arrival rate is
\[
\Lambda_r'(t)
=
pr\bigl((n-2)-t\bigr)^{r-1}.
\]
Note that for $r=1$, $\Lambda'_r(t) = p$. At the final time each edge has multiplicity distributed as 
$
\mathrm{Pois}(\Lambda_r(n-2))=\mathrm{Pois}(\lambda/(n-1))$, exactly as in the construction of the random instance. 

We prove the following lemmas, which imply Theorem~\ref{thm:threshold-main}.

\begin{lemma}\label{lem:prob-equal}
The probability that a random instance $\calI_r(n,\lambda)$ is not satisfiable equals the probability that the continuous  random multigraph process $\calG_{n,\Lambda_r}(t)$ has a connected component of size at least $t+2$ at some time $t\in\{1,\dots, n-2\}$, where $p=\tfrac{\lambda}{(n-1)(n-2)^r}$.
\end{lemma}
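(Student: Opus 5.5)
The plan is to prove the lemma by a coupling. I will run a lazy, one-vertex-at-a-time version of \textsc{Build} (\textsc{Gradual-Build}) on $\calI_r(n,\lambda)$ and show that its failure event has the same law as the event ``some component of $\calG_{n,\Lambda_r}(t)$ has size $\ge t+2$ at some integer time $t$''. The coupling rests on two facts. First, constraints are independent Poisson counts indexed by $(\{a,b\},\mathbf c)$. Second, the number of $r$-tuples $\mathbf c$ over $n-2$ candidates that meet a fixed $t$-set is $(n-2)^r-((n-2)-t)^r$. Hence, if a set $R_t$ of $t$ vertices is exposed and we collect all constraints $uv|\mathbf c$ whose tuple $\mathbf c$ meets $R_t$, the multiplicity of the pair $uv$ is $\mathrm{Pois}(\Lambda_r(t))$, independently over pairs. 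This is exactly the marginal of $\calG_{n,\Lambda_r}(t)$. When one more vertex is exposed, the newly collected constraints are those whose tuple meets $R_{t+1}$ but avoids $R_t$. They are independent of everything collected so far and have Poisson parameter $\Lambda_r(t+1)-\Lambda_r(t)$, which matches the independent-increments property of the process.

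First I will set up the combinatorial reformulation of satisfiability. By the \textsc{Build} characterization of \citet{aho1981inferring}, a set of generalized triplets is satisfiable iff every cluster $S$ with $|S|\ge2$ that arises in the recursion is disconnected in the graph $H_S$, where $uv\in H_S$ iff some constraint $uv|\mathbf c$ has $u,v\in S$ and some $c^{(j)}\in S$. A generalized triplet must be separated with $\mathbf c$ leaving the pair; if some $c^{(j)}\notin S$ it is already satisfied. I will argue, by monotonicity of $H_S$ in $S$, that it suffices to always split off a single vertex outside the component of a chosen vertex. This gives a sequential procedure in which at step $t$ the already-separated set is $R_t$, $|R_t|=t$, and we ask whether the remaining vertices can be split further. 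In complementary form: the constraints that ``block'' a separation, namely the edges that survive, are precisely those whose $\mathbf c$ meets the exposed set. The procedure fails at step $t$ exactly when the constraint graph built from the exposed vertices has a connected set large enough, of size $\ge t+2$, that no further vertex can be released.

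Next I will fix the exposure order adaptively, depending only on the edges revealed so far, as in the standard BFS/exploration coupling. The increments revealed at step $t+1$ are then fresh independent Poisson variables whose laws do not depend on which vertex was chosen, by symmetry of the model. Induction on $t$ then shows that the revealed edge sets have the same joint law as $(\calG_{n,\Lambda_r}(s))_{s\le t}$. The failure event at step $t$ is a function of that graph, namely the existence of a component of size $\ge t+2$. Summing over the first failure time gives the equality of probabilities.

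The main obstacle is the first step. I must verify that the lazy procedure's failure event coincides exactly with ``some component of size $\ge t+2$ at time $t$''. This means checking that the $+2$ correctly accounts for the pair $a,b$ and that allowing $\mathbf c$ to contain repeated entries does not break the tuple count. I must also check that adaptivity does not bias the increments. I would first handle $r=1$ by hand, where $\Lambda_1'(t)=p$ is constant, and then lift to general $r$ via the tuple count above.
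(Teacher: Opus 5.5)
Your ingredients are the right ones: a lazy \textsc{Build}, Poisson splitting over $\mathbf c$, and the count $(n-2)^r-(n-2-t)^r$. But the coupling runs in the wrong direction, and the key identification is false as stated.

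If the cluster $S\ni u,v$ is already separated from $t$ vertices, the edge $uv$ still blocks iff some $c^{(j)}$ lies in $S$. That is, it blocks iff $\mathbf c$ meets the $n-2-t$ candidates \emph{not yet} separated from the pair. It is not the event that $\mathbf c$ meets the exposed set; for $r=1$, those are exactly the constraints already discharged. Consequently:
\begin{itemize}
\item The blocking graph after $i$ exploration steps has multiplicities $\mathrm{Pois}(\Lambda_r(n-2-i))$, so it corresponds to $\calG_{n,\Lambda_r}$ at time $n-2-i$.
\item Failure means a component $C$ with $C\cup F_C=V$, i.e.\ $|C|=(n-2-i)+2$. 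This is where the $+2$ comes from: the pair plus the $t$ unexplored candidates.
\item Your event, ``a component of size $\ge t+2$ with $t$ vertices exposed'', is a different one. With one vertex exposed, a three-vertex component does not make \textsc{Build} fail.
\end{itemize}
Since the blocking graph only shrinks as \textsc{Build} proceeds, the process must be read backward, $H(i)=\calG_{n,\Lambda_r}(n-2-i)$. One algorithmic step is an independent \emph{thinning} of existing edge copies, not the addition of fresh Poisson increments. A forward adaptive exploration cannot work: it would have to reveal $\calG_{n,\Lambda_r}(1),\calG_{n,\Lambda_r}(2),\dots$ (the end of the algorithm) before the exploration order, which depends on the component structure at large times, is determined.

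What is actually needed is the following.
\begin{itemize}
\item[(i)] By deferred decisions, a surviving copy of $ab|\mathbf c$ has $\mathbf c$ uniform on $(V\setminus\{a,b\})^r\setminus F_a(i)^r$ given the history. So when one vertex is added to $F_a$, the copy is deleted with probability $\frac{(i+1)^r-i^r}{(n-2)^r-i^r}$, independently of other copies and of which vertex was added.
\item[(ii)] On the process side, given the multiplicity at time $s$, the transformed arrival times are i.i.d.\ uniform on $[0,\Lambda_r(s)]$. So passing from $s=n-2-i$ to $s-1$ deletes each copy independently with that same probability.
\end{itemize}
Then $G(i)$ and $H(i)$ are Markov chains with the same initial law and kernel, and failure is the same functional of both paths.

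A single global exposed set $R_t$, or following only the component of one chosen vertex, does not suffice; nor can one literally split off a single non-isolated vertex. Every current component $C$ must keep its own list $F_C$ of vertices it is already separated from, and every list must grow by exactly one vertex per step. This is what makes every surviving copy in the whole graph thin at the same rate. It also ensures $F_C\cap C=\emptyset$, hence $F_C\cap\{a,b\}=\emptyset$, which your tuple count silently requires: for a pair meeting $R_t$ the count is different.

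Finally, a constraint is discharged in $S$ when \emph{every} $c^{(j)}\notin S$, not when some is.
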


\begin{lemma}\label{lem:max-comp-pois-process}
Consider the random graph process $\calG_{n,\Lambda_r}(t)$ with $p=\tfrac{\lambda}{(n-1)(n-2)^r}$. 
\begin{enumerate}
\item If $\lambda < \lambda^*_r$, then the size of the largest connected component in $\calG_{n,\Lambda_r}(t)$ is less than $t+2$ for all $t\in\{1,\dots, n-2\}$ w.h.p.  
\item If $\lambda > \lambda^*_r$, then $\calG_{n,\Lambda_r}(t)$ has a connected component of size at least $t+2$ for some $t\in\{1,\dots, n-2\}$ w.h.p.
\end{enumerate}
\end{lemma}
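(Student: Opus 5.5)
The plan is to compare the graph at each time $t$ with an Erd\H{o}s--R\'enyi graph, and then locate the threshold with the classical giant-component law. Replacing parallel edges by single ones, $\calG_{n,\Lambda_r}(t)$ is exactly $\bbG(n,q(t))$ with $q(t)=1-e^{-\Lambda_r(t)}$. Writing $t=sn$ we have
\[
nq(t)=\lambda\bigl(1-(1-s)^r\bigr)+O(1/n)=:c(s)+O(1/n),
\]
uniformly in $s\in[0,1]$. For $c>1$ the giant component of $\bbG(n,c/n)$ has size $(\rho(c)+o(1))n$ w.h.p., where $\rho(c)\in(0,1)$ solves $1-\rho=e^{-c\rho}$, i.e.\ $c=-\ln(1-\rho)/\rho$. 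All other components have size $O(\log n)$. The map $\rho\mapsto -\ln(1-\rho)/\rho$ is increasing on $(0,1)$, so $\rho(c(s))\ge s$ holds iff
\[
\lambda\ge g(s):=\frac{-\ln(1-s)}{s(1-(1-s)^r)}.
\]
By definition \eqref{eq:def:lambda-r}, $\lambda_r^*=\min_s g(s)$. The event in the lemma is ``largest component $\ge t+2$ at time $t$'', so both parts reduce to comparing $\rho(c(s))$ with $s$.

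\emph{Part 2 ($\lambda>\lambda_r^*$).} Let $s^*$ be a minimizer of $g$; it exists because $g\to\infty$ at both ends, using $1-(1-s)^r\sim rs$ as $s\to0$. Since $\lambda>g(s^*)$, there is $\delta>0$ with $\rho(c(s^*))\ge s^*+\delta$. Take the single time $t=\lfloor s^*n\rfloor$. Concentration of the giant component in $\bbG(n,q(t))$ gives a component of size at least $(s^*+\delta/2)n\ge t+2$ w.h.p.

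\emph{Part 1 ($\lambda<\lambda_r^*$).} Here I need a statement uniform in all $t\in\{1,\dots,n-2\}$. The key tool is monotonicity: the largest component size $L(t)$ is nondecreasing in $t$. I split the range into three parts.

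First, \emph{late times}. At $s=1$ we have $c(1)=\lambda$. Either $\lambda\le 1$ and all components are $o(n)$, or $\rho(\lambda)<1$. In both cases there is $\delta>0$ with $L(n-2)\le(1-2\delta)n$ w.h.p. Then $L(t)\le L(n-2)<t+2$ for all $t\ge(1-\delta)n$.

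Second, \emph{middle times} $t\in[\varepsilon n,(1-\delta)n]$. Since $g>\lambda$ on this compact interval, continuity gives $\eta>0$ with $\rho(c(s))\le s-\eta$ there; when $c(s)\le1$ the largest component is $o(n)$ anyway. I take a grid $s_0<s_1<\dots$ of mesh $\eta/2$, which has $O(1)$ points. At each grid time, w.h.p.\ $L(s_{j+1}n)\le(s_{j+1}-\eta+o(1))n<s_jn$. A union bound over the $O(1)$ grid points, together with monotonicity, gives $L(t)<t+2$ on each $[s_jn,s_{j+1}n]$.

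Third, \emph{early times} $t\le\varepsilon n$, which I expect to be the main obstacle. Here $c(s)\approx\lambda r s<1$, so the graph is subcritical, but the target size $t+2$ can be as small as $3$, so concentration alone is useless. Instead I use a first-moment bound on spanning trees. The expected number of connected sets of size $k=t+2$ at time $t$ is at most
\[
\binom nk k^{k-2}q(t)^{k-1},\qquad q(t)\le \frac{\lambda r t}{n^2}(1+o(1)).
\]
Using $\binom nk\le (en/k)^k$ and $t<k$, this is at most
\[
n\,k^{-2}\,e^{k}\Bigl(\frac{\lambda r t}{n}\Bigr)^{k-1}\le \frac{n}{k^2}\Bigl(\frac{e\lambda r t}{n}\Bigr)^{k-1}e.
\]
For $t\le\varepsilon n$ with $e\lambda r\varepsilon<1/2$, and $k-1=t+1\ge2$, this term is $O\bigl(n(t/n)^{t+1}\bigr)$ when $t$ is bounded, and geometrically small once $t$ is large. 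In particular the $t=1$ term is $O(1/n)$ and the $t=2$ term is $O(1/n^2)$. Summing over $t\le\varepsilon n$ gives $o(1)$.

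Finally, the parameter $\varepsilon$ is fixed before choosing $\eta$ and the grid, so that the three regimes cover $\{1,\dots,n-2\}$. A union bound over the regimes then finishes Part 1. The delicate points are the uniform $O(1/n)$ error in $nq(t)$ and the bookkeeping in the tree-counting sum for small $t$.
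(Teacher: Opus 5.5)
Your proposal is correct, and its skeleton matches the paper's. You identify the process at each time with an Erd\H{o}s--R\'enyi graph. You prove Part~2 at a single linear time via the giant-component law: the paper picks the time at which the edge density equals $c_r^*$, while you pick $t=\lfloor s^*n\rfloor$, where the density $c(s^*)$ is even larger, and both work. You prove Part~1 by checking $O(1)$ linear times and interpolating with monotonicity of the largest component in $t$.

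The genuine difference is the early regime. The paper never takes a union bound over small $t$; it uses two checkpoints and lets monotonicity cover every $t\le t_2$ at once:
\begin{itemize}
\item At $t_1=n^{1/3}$ the edge probability is $O(n^{-5/3})$, so the maximum degree is at most $1$ w.h.p.\ and every component has size at most $2<t_0+2$.
\item At $t_2=\beta(n-2)$ (with $\beta=1/3$ for $r=1$) the graph is subcritical, so all components have size $O(\log n)<t_1+2$.
\end{itemize}
Your first-moment count of spanning trees on $t+2$ vertices instead handles each $t\le\varepsilon n$ separately. It is self-contained, essentially re-deriving the subcritical bound, and the sum is indeed $o(1)$. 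In fact it is $O(\log n/n)$: each term with $t\le 3\log_2 n$ is $O(1/n)$, and each remaining term is $O(n^{-2})$. The cost is a summation over $\Theta(n)$ values of $t$, which the two-checkpoint trick sidesteps.

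In the linear regime, your gap $\eta=\min_{s\in[\varepsilon,1-\delta]}\bigl(s-\rho(c(s))\bigr)$ with a grid of mesh $\eta/2$ is essentially the paper's appendix argument for general $r$, which uses $\psi_r$ and compactness. For $r=1$ the paper instead uses the explicit bound $\rho(c)\le c/\lambda_1^*$ and an arithmetic grid ending at $n-2$.

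Your separate late regime is harmless but unnecessary. Since $s-\rho(c(s))\to 1-\rho(\lambda)>0$ as $s\to 1$, the compactness argument already covers $[\varepsilon,1]$.
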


We first prove Lemma~\ref{lem:prob-equal}.

\begin{figure}
    \centering
    \includegraphics[width=0.7\linewidth]{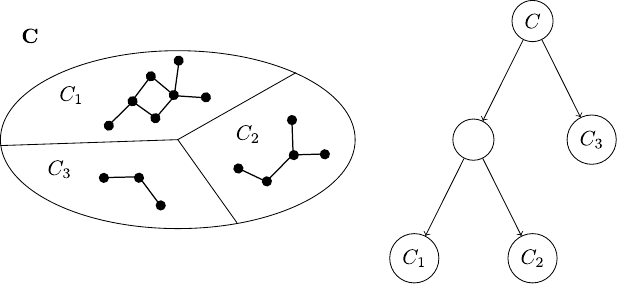}
    \caption{Cluster $C$ split into $C_1$, $C_2$, and $C_3$ and the corresponding Phylogenetic tree structure.}
    \label{fig:splits}
\end{figure}

\begin{proof}[Proof of Lemma~\ref{lem:prob-equal}] 
We already know that a Triplet Reconstruction instance $\calI$ is satisfiable if and only if the \textsc{Build} algorithm by~\cite*{aho1981inferring} succeeds in finding a solution that satisfies all constraints in $\calI$. We now define a \emph{gradual} variant of \textsc{Build}. This is an iterative algorithm. As in non-gradual \textsc{Build}, the \textsc{Gradual Build} algorithm creates a multigraph representing not yet satisfied constraints $ab|c$. We denote the state of this graph after iteration $i$ by $G(i)$. Initially, $G(0)$ contains an edge $(a,b)$ for every constraint $ab|\mathbf{c}$ in $\calI$. As the algorithm builds the Phylogenetic Tree, it satisfies some of the constraints in $\calI$ and removes the corresponding edges from $G$. We now discuss the details.
For every vertex $u\in V$, the algorithm maintains its own list of explored vertices. At iteration $i$, this list is $F_u(i)$. Initially, $F_u(0)=\varnothing$ for all $u$. Then, at every step the algorithm adds exactly one vertex to each list $F_u(i)$ ensuring that $|F_u(i)|=i$. The algorithm also guarantees that $F_u(i) = F_v(i)$ if vertices $u$ and $v$ are in the same connected component of $G(i-1)$, i.e., the list $F_C(i)$ is the same for all vertices in one connected component of $G(i-1)$.

Nodes of the Phylogenetic Tree constructed by the algorithm correspond to connected components of graphs $G(i)$. The root of the tree corresponds to the entire vertex set $V$. At every iteration $i\in\{1,\dots,n\}$, the algorithm partitions $G(i-1)$ into connected components by splitting each connected component in $G(i-2)$ and updates the Phylogenetic tree  accordingly. For $i=1$, the algorithm splits the set $V$ into connected components. See Figure~\ref{fig:splits}.  Then, for every connected component $C$ of $G(i-1)$ of size at least $2$, it picks an arbitrary vertex $f_C(i)$ outside of $C \cup F_u(i-1)$ (if such vertex exists), where $u$ is an arbitrary vertex in $C$. Note that $F_u(i-1) = F_v(i-1)$ for all $u,v\in C$ ($C$ is a connected component in $G(i-1)$ and in $G(i-2)$ if $i\geq 2$), so the choice of $u$ does not matter. If such vertex does not exist because $F_u(i-1) = V\setminus C$, then we say that the algorithm has failed and let $f_C(i)$ be an arbitrary vertex not yet explored by vertices in $C$. We include vertex $f_C(i)$ in every list $F_u(i)$, where $u\in C$, i.e., 
$$F_u(i) = F_u(i-1) \cup \{f_C(i)\}.
$$
Note that even if the algorithm fails at step $i$, we continue running it until iteration $n$. We do so for the sake of analysis. After we update all lists $F_u(i)$, we remove  edges $(a,b)$ from $G(i-1)$ that correspond to constraints $ab|\mathbf{c}$ with all  coordinates $\mathbf{c}^{(j)}$ in $F_a(i)$. Note that $F_a(i) = F_b(i)$ because $a$ and $b$ are connected with an edge in $G(i-1)$ and, consequently, are in the same connected component. Hence, the condition  $\mathbf{c}^{(j)}$ in $F_a(i)$ is equivalent to $\mathbf{c}^{(j)}$ in $F_b(i)$. We denote the new graph $G(i)$.
If the algorithm has not failed, it returns the Phylogenetic Tree after iteration $n$.

\begin{claim}
\textsc{Gradual  Build} fails on $\calI$ if and only if $\calI$ is not satisfiable.
\end{claim}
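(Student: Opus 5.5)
We prove the two directions separately. For both, the key fact is that the clusters maintained by \textsc{Gradual Build} are exactly the sets that any satisfying tree must keep together until a certain depth of exploration.

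\textbf{If the algorithm does not fail, then $\calI$ is satisfiable.} Assume \textsc{Gradual Build} does not fail, and let $T$ be the tree whose internal nodes are the connected components of the graphs $G(i)$, ordered by refinement. We first check that $G(n)$ has no edges. A component $C$ of size at least $2$ gains a new explored vertex outside $C$ at every iteration, as long as one exists. So after at most $n$ iterations every surviving edge $(a,b)$ has $F_a(i)\supseteq V\setminus C\supseteq\{\mathbf{c}^{(j)}\}$, and the edge is removed. (If the component were $V$ itself, no vertex outside it would exist, which is exactly the failure condition.) Hence every component eventually becomes a singleton and $T$ is a genuine hierarchy; nonbinary nodes can be refined arbitrarily without changing which triplets are satisfied.

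Now take a constraint $ab|\mathbf{c}$, and let $i$ be the iteration at which its edge is removed. Up to iteration $i-1$, the vertices $a$ and $b$ lie in the same component, since the edge is present. When the edge is removed, every $\mathbf{c}^{(j)}\in F_a(i)$. We need the invariant that every vertex in $F_C(i)$ lies outside $C$, and that $C$ is separated from it at some ancestor in $T$. This holds because $f_C(i)$ is chosen outside $C$, and $C$ only shrinks over time. So each $\mathbf{c}^{(j)}$ is separated from the cluster containing both $a$ and $b$ strictly before $a$ and $b$ are separated. That is exactly the condition for $ab|\mathbf{c}^{(j)}$ to be satisfied in $T$.

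\textbf{If $\calI$ is satisfiable, then the algorithm does not fail.} Fix a satisfying tree $T^*$. We prove by induction on $i$ the invariant
\[
\text{every component $C$ of $G(i-1)$ with $|C|\ge 2$ lies inside a single child subtree of the lca in $T^*$ of $C\cup F_C(i-1)$.}
\]
An equivalent way to phrase it: $C$ is contained in a subtree $S$ of $T^*$ that avoids every vertex of $F_C(i-1)$.

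Suppose for contradiction that the algorithm fails at iteration $i$ on a component $C$, so $F_C(i-1)=V\setminus C$. Let $v$ be the lca of $C$ in $T^*$, and split $C$ according to the children of $v$. By the invariant, all of $V\setminus C$ lies outside the subtree of $v$, so the subtree of $v$ contains exactly $C$. Since $C$ is connected in $G(i-1)$, some surviving edge $(a,b)$ crosses two children of $v$; pick a path in $C$ between parts and take the edge where it crosses. That edge comes from a constraint $ab|\mathbf{c}$ which is satisfied in $T^*$, so every $\mathbf{c}^{(j)}$ lies outside the subtree of $v$, that is, in $V\setminus C=F_a(i-1)$. But then the edge would already have been deleted, a contradiction.

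The same argument also proves the induction step. Components of $G(i)$ refine those of $G(i-1)$. Moreover, adding $f_C(i)$ to $F_C$ preserves the invariant, because connectivity of $C$ through edges whose $\mathbf{c}$ values are unexplored forces $C$ to stay inside a subtree. We would make this precise by showing that any constraint edge inside a component keeps its endpoints below the lca of the edge, and that this lca excludes every $\mathbf{c}^{(j)}$.

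\textbf{Main obstacle.} The hardest step is formulating the invariant for the second direction exactly. The explored sets $F_C$ are chosen arbitrarily, so they need not correspond to subtrees of $T^*$. I would therefore state the invariant in the weaker form above, phrased only in terms of separation from $C$, rather than claiming that $F_C$ is a union of subtrees. Before using it, I would check carefully that this weaker form survives the arbitrary choices of $f_C(i)$.
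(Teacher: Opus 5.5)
There are two genuine errors in the proposal, but both can be removed, and what remains is the paper's proof.

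\textbf{The invariant is false, and you do not need it.} Your contradiction argument for ``satisfiable $\Rightarrow$ no failure'' is correct and is essentially the paper's. The paper notes that at failure every surviving edge in $C$ has some $\mathbf{c}^{(j)}\in C$, so the induced instance on $C$ is connected and hence unsatisfiable by the correctness of \textsc{Build}. You inline that correctness argument through the lca $v$ of $C$ in $T^*$. However, the invariant you wrap around it is false, so the induction step you sketch cannot be carried out. Take $V=\{a,b,c,d\}$, the single constraint $ab|c$, and the satisfying tree $T^*=(((a,d),b),c)$. Since $f_C(1)$ is arbitrary, the component $C=\{a,b\}$ may pick $f_C(1)=d$. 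The edge survives because $c\notin F_a(1)=\{d\}$, so $C$ is still a component of $G(1)$. Yet every subtree of $T^*$ containing $a$ and $b$ also contains $d\in F_C(1)$; equivalently, the lca of $\{a,b,d\}$ separates $a$ from $b$. The algorithm still succeeds at iteration $2$ by choosing $c$. So the worry you flagged is real.

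The contradiction never uses the invariant. You only need $C\subseteq$ subtree of $v$, which is trivial. Then $\mathbf{c}^{(j)}\notin$ subtree of $v$ already gives $\mathbf{c}^{(j)}\in V\setminus C=F_a(i-1)$, so the edge would have been deleted at iteration $i-1$. For $i=1$ we have $C=V$, and no $\mathbf{c}^{(j)}$ can lie outside the subtree of the root at all. Delete the invariant and the induction-step paragraph, and this direction is complete.

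\textbf{The edge-removal step in the other direction is wrongly justified.} Your argument that every edge is eventually removed does not work as written. A surviving edge may have some $\mathbf{c}^{(j)}\in C$, so the inclusion $V\setminus C\supseteq\{\mathbf{c}^{(j)}\}$ is unjustified. Moreover, reaching $F_C(i-1)=V\setminus C$ for a component $C$ of $G(i-1)$ with $|C|\ge 2$ is exactly the failure condition; it is not $C=V$. Use counting instead. In a non-failing run, $F_C(i-1)\cap C=\emptyset$ and $|F_C(i-1)|=i-1$, so finding $f_C(i)$ forces $|C|\le n-i$. Hence $G(n-2)$ has no component of size at least $2$, so it has no edges, and the components refine down to singletons.

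The rest of that direction is correct: each $\mathbf{c}^{(j)}\in F_a(i)$ lies outside the component of $G(i-1)$ containing $a$ and $b$, so it is split off strictly above their lca.
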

\begin{proof}
Observe that if \textsc{Gradual  Build} succeeds then it finds a solution satisfying all the constraints (the proof is the same as for non-gradual  \textsc{Build}). If \textsc{Gradual  Build} fails, then at some step it could not find a vertex 
$f_C(i)$ outside of some connected component $C$. This means that $F_u(i-1) = V\setminus C$ for all $u\in C$. Therefore, all edges $(a,b)$ in the connected component $C$ correspond to constraints  labeled with $\mathbf{c}$ for which at least one coordinate $\mathbf{c}^{(j)}$ is in $C$. This means that the induced instance $\calI[C]$ is connected and, hence, $\calI[C]$ and $\calI$ are not satisfiable (see Theorem~\ref{thm:Build-Correct}).
\end{proof}

We now find the probability that \textsc{Gradual Build} fails on the random instance $\calI(n,\lambda)$.
Consider random graph process 
$\calG_{n,\Lambda_r}(t)$ at time points $\{n-2,n-3,\dots, 0\}$. Specifically, let $H(i) = \calG_{n,\Lambda_r}(n-2-i)$. We show that processes $G(i)$ and $H(i)$ have the same distribution. We are going to prove that $G(i)$ and $H(i)$ are Markov chains with the same initial distribution and same transition probabilities.

We first examine process $H(i)$. Recall that
$H(i)=\calG_{n,\Lambda_r}(n-2-i)$.
In multigraph
$H(0)=\calG_{n,\Lambda_r}(n-2)$,
each edge is present with multiplicity distributed according to the Poisson distribution with parameter
$p(n-2)^r=\lambda/(n-1)$.
Suppose that edge $(a,b)$ is present in
$H(i)=\calG_{n,\Lambda_r}(n-2-i)$
with multiplicity $X_{ab}$.
If we condition on the value of $X_{ab}$, then the transformed arrival times
$\Lambda(T_1),\dots,\Lambda(T_{X_{ab}})$
are independently and uniformly distributed in
$[0,\Lambda(n-2-i)]$
(see, e.g., Theorem~8.15 in~\cite*{mitzenmacher2017probability}).
When we transition from step $i$ to step $i+1$, edge copies with transformed arrival times in the interval
$[\Lambda(n-3-i),\Lambda(n-2-i)]$
disappear.
Thus, every edge copy disappears with probability
\[
\frac{\Lambda(n-2-i)-\Lambda(n-3-i)}
{\Lambda(n-2-i)}
=
\frac{(i+1)^r-i^r}
{(n-2)^r-i^r},
\]
independently of all other edge copies.

We now analyze process $G(i)$. Recall that the random instance $\calI_r(n,\lambda)$ is constructed as follows. For every constraint $ab|\mathbf{c}$ (where we identify $ab|\mathbf{c}$ and $ba|\mathbf{c}$ and do not allow $\mathbf{c}^{(j)}$ to be equal to $a$ or $b$), we pick the number of copies of $ab|\mathbf{c}$ according to the Poisson distribution with parameter
$p=\tfrac{\lambda}{(n-1)(n-2)^r}$. It will be convenient for us to construct $\calI_r(n,\lambda)$ in two steps. First, for every unordered pair $(a,b)$, we draw a random variable $X_{ab}$ with the Poisson distribution with parameter $\lambda/(n-1)$. Then, we create $X_{ab}$ constraints of the form $ab|\mathbf{c}$. The value of $\mathbf{c}$ is chosen uniformly from $(V\setminus\{a,b\})^r$, independently for each constraint. According to the Poisson process splitting theorem (see, e.g., Theorem~8.13 in~\cite{mitzenmacher2017probability}), the multiplicity of each constraint $ab|\mathbf{c}$, denoted by $X_{ab|\mathbf{c}}$, has the Poisson distribution with parameter $\tfrac{\lambda}{(n-1)(n-2)^r}$, and all random variables $X_{ab|\mathbf{c}}$ are independent.

For the sake of analysis, let us assume that the identity of $\mathbf{c}$ in every constraint is not revealed to us before the \textsc{Gradual Build} algorithm discovers its coordinates. Then, at every iteration $i$ of the algorithm and for every not yet satisfied constraint $ab|\mathbf{c}$, the vector $\mathbf{c}$ is distributed uniformly over $(V\setminus\{a,b\})^r\setminus F_a(i)^r$. Such a constraint is removed at iteration $i+1$ if every coordinate of $\mathbf{c}$ belongs to $F_a(i+1)$, where $C$ is the connected component containing $a$ and $b$. The probability of this event is
$
\tfrac{(i+1)^r-i^r}{(n-2)^r-i^r}$.

We showed that processes $G(i)$ and $H(i)$ have the same distribution. To finish the proof, observe that \textsc{Gradual Build} fails if and only if at some step $i$, we have $F_a(i-1) = V\setminus C$ for some connected component $C$ in $G(i-1)$ and $a\in C$. If \textsc{Gradual Build} has not yet failed, then $F_a(i)$ and $C$ are disjoint. Also, $|F_a(i)| = i$. Hence, \textsc{Gradual Build} fails if and only if at some step $i$, the size of the largest connected component in $G(i-1)$ is $n-i +1$. The probability of this event is the same as the probability that $H(i-1)$ has a connected component of size $n-i + 1$, which in turn equals the probability that $\calG_{n,\Lambda_r}(t)$ has a connected component of size at least $t+2$.
This concludes the proof.
\end{proof}

\begin{proof}[Proof of Lemma~\ref{lem:max-comp-pois-process} for $r=1$]
We now prove Lemma~\ref{lem:max-comp-pois-process} in the case $r=1$. The proof for every fixed $r>1$ is given in Appendix~\ref{sec:arbitrary-r}.\footnote{At this point we split the final random-graph estimate into two parts: the main text proves the simpler $r=1$ case, while Appendix~\ref{sec:arbitrary-r} contains the analogous argument for fixed $r>1$.}
Consider a fixed time $t$. When does a random graph $\calG_{n,p}(t)$ have a connected component of size at least $t+2$?
The answer to this question is known -- it follows from the works of \citet*{erdos1960evolution}, \cite{bollobas1984evolution}, \cite{luczak1990component} on the evolution of random graphs. We will use the following theorem from the book by~\cite{frieze2016introduction}.

\begin{theorem}[Theorem 2.14 in \cite{frieze2016introduction}; see also Theorem 5.12 in~\cite*{janson2011random}]\label{thm:giant-connected-comp}
    If $p = \frac{c}{n}$, $c > 1$, then w.h.p. $\bbG(n,p)$ consists of a unique giant component, with $\left( 1 - \frac{x}{c} + o(1) \right) n$ vertices and $\left( 1 - \frac{x^2}{c^2} + o(1) \right) \frac{cn}{2}$ edges. Here $0 < x < 1$ is the solution of the equation $xe^{-x} = ce^{-c}$. The remaining components are at most $O(\log n)$. The theorem also holds for $\bbG(n,m)$ with $m = \frac{cn}{2}$.
\end{theorem}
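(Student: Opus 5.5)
Since Theorem~\ref{thm:giant-connected-comp} is a classical result that we only invoke, we sketch the standard argument of the cited sources rather than reprove it in full. The plan is to analyse $\bbG(n,c/n)$ through the breadth-first exploration process and compare it with a Galton--Watson branching process with $\mathrm{Bin}(n,c/n)\approx\mathrm{Pois}(c)$ offspring.

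\textbf{Step 1: no mid-sized components.} Fix a vertex $v$ and explore its component. At each step we expose one active vertex and add its unexplored neighbours. After $k$ steps the number of active vertices is stochastically sandwiched between $\mathrm{Bin}(k(n-k),c/n)-k$ and $\mathrm{Bin}(kn,c/n)-k$. Chernoff bounds then show that, w.h.p. and simultaneously for all $v$, one of two things happens. Either the exploration dies within $k_-=K\log n$ steps, for a constant $K=K(c)$. Or, for every $k\in[k_-,k_+]$ with $k_+=n^{2/3}$, at least $\tfrac{c-1}{2}k$ vertices are active. Here we use that $c>1$ gives positive drift $c(1-k/n)-1$ while $k\ll n$. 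Consequently every component has size either at most $K\log n$ or at least $n^{2/3}$. Any vertex of the second kind has an active boundary of size $\Omega(n^{2/3})$ at step $k_+$.

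\textbf{Step 2: uniqueness and size of the giant.} Take two vertices in large components and run their explorations disjointly. Their two active sets each have size $\Omega(n^{2/3})$, and the edges between them are still unexposed. Hence the probability that there is no edge between them is at most $(1-c/n)^{\Omega(n^{4/3})}=e^{-\Omega(n^{1/3})}$. A union bound over pairs of vertices shows that all large vertices lie in one component. Next, for the count of small vertices $S$, the coupling gives $\Pr[v \text{ small}]=\rho+o(1)$, where $\rho$ is the extinction probability of the $\mathrm{Pois}(c)$ tree, i.e.\ the root of $\rho=e^{-c(1-\rho)}$ in $(0,1)$. Setting $x=c\rho$, this equation becomes $xe^{-x}=ce^{-c}$ with $0<x<1$, so $\E S=(x/c+o(1))n$. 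For concentration we would use a second-moment bound: two small explorations of size $O(\log n)$ are nearly independent, so $\mathrm{Var}(S)=o(n^2)$ and Chebyshev applies. This gives a giant of size $(1-x/c+o(1))n$, and all other components are $O(\log n)$.

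\textbf{Step 3: edges and the $\bbG(n,m)$ version.} Conditioned on the vertex set $S$ of the small components, the graph induced on $S$ is distributed as $\bbG(|S|,c/n)$ conditioned to have no large component. This is a subcritical graph with average degree $c|S|/n\approx x<1$. It is therefore a forest up to $O(1)$ unicyclic components w.h.p., so its edge count is $|S|-\#\text{components}$. Computing it as $\tfrac12\cdot x\cdot|S|(1+o(1))=\tfrac{x^2}{2c}n$ gives, since the total edge count is $(1+o(1))cn/2$, that the giant has $(1-x^2/c^2+o(1))\tfrac{cn}{2}$ edges.

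The statement for $\bbG(n,m)$ with $m=cn/2$ then follows by the standard equivalence of $\bbG(n,p)$ and $\bbG(n,m)$ for these monotone-type, continuous-in-$c$ properties. Alternatively, one can condition $\bbG(n,p)$ on its edge count, which lies within $O(\sqrt n)$ of $cn/2$, and use continuity of $x(c)$.

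I expect Step 1 to be the main obstacle: making the uniform sandwich bounds on the active set hold for all $k$ up to $n^{2/3}$ and all starting vertices simultaneously. The remaining steps are comparatively routine once the gap between $K\log n$ and $n^{2/3}$ is established.
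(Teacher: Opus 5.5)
The paper does not prove this theorem; it imports it from Frieze and Karo\'nski (Theorem~2.14). The relevant comparison is therefore with that textbook argument, and your sketch is correct in outline but follows a different route.

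\textbf{How the textbook proof goes.} Their proof is enumerative.
\begin{itemize}
\item The expected number of components of order $k$ is bounded by the expected number of isolated spanning trees, $\binom{n}{k}k^{k-2}p^{k-1}(1-p)^{k(n-k)}=O\bigl(\tfrac{n}{k^2}(ce^{1-c+ck/n})^k\bigr)$. This excludes all orders in $[\beta_0\log n,\beta_1 n]$ at once.
\item The number of vertices in small components comes from Cayley's formula and the identity $\sum_{k\ge1}\frac{k^{k-1}}{k!}(xe^{-x})^k=x$, with non-tree components contributing $O(1)$. It is then concentrated by Chebyshev.
\item The linear-size components are merged by sprinkling.
\end{itemize}

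\textbf{How your route differs.} You work with the exploration process instead.
\begin{itemize}
\item Chernoff bounds on the active set give the gap $[K\log n,n^{2/3}]$.
\item The two-frontier argument gives uniqueness with failure probability $e^{-\Omega(n^{1/3})}$, with no sprinkling.
\item The Poisson Galton--Watson coupling identifies $x/c$ as the extinction probability.
\item The duality step (off the giant one sees a conditioned subcritical $\bbG(|S|,c/n)$ of average degree $\approx x$) gives the $\frac{x^2}{2c}n$ edges outside the giant without enumerating trees.
\end{itemize}
Your route explains where $xe^{-x}=ce^{-c}$ comes from and extends to models without Cayley-type formulas. Theirs gives explicit first-moment counts and a gap reaching linear size almost for free.

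\textbf{The $\bbG(n,m)$ clause.} This is the one place where your sketch is too loose. ``All other components have $O(\log n)$ vertices'' is not a monotone property. The sandwich $\bbG(n,(c-\epsilon)/n)\subseteq\bbG(n,m)\subseteq\bbG(n,(c+\epsilon)/n)$ only confines a second component of $\bbG(n,m)$ to the difference of the two giants, a set of size $\Theta(\epsilon n)$.

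Here is how to fix it.
\begin{itemize}
\item Use the sandwich, together with continuity of $x(c)$, only for the vertex and edge counts of the giant.
\item Transfer the gap and uniqueness statements via $\Pr(\bbG(n,m)\notin\mathcal{P})=O(\sqrt m)\,\Pr(\bbG(n,p)\notin\mathcal{P})$ with $p=m/\binom n2$. This is legitimate because your Step~1 and Step~2 failure probabilities are polynomially small once $K$ is large.
\end{itemize}
Conditioning $\bbG(n,p)$ on its edge count is exactly this $O(\sqrt m)$ transfer. Using it for the sizes as well would need failure probability $o(n^{-1/2})$, i.e.\ a quantitative bound such as $\mathrm{Var}(S)=O(n\log^2 n)$ rather than just $o(n^2)$. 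The paper itself only ever applies the $\bbG(n,p)$ version.
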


After solving the equation $xe^{-x}=ce^{-c}$ for $c\geq 1$, we get the following corollary, which gives us a tight linear upper bound on the size of the largest connected component in $\bbG(n,c/n)$.

\begin{corollary}\label{cor:thm:giant-connected-comp}
For $\lambda_1^*$ defined as above (see~\eqref{eq:def:lambda-r}), the following two claims hold.
\begin{enumerate}
\item 
For every positive $\delta$ and $c$, the size of the largest connected component in the random $\bbG(n,c/n)$ graph is at most $(1+\delta)(c/\lambda_1^*)n$ with high probability.
\item 
There exists a positive $c^*$ such that for every $\delta\in(0,1)$, the largest connected component in the random $\bbG(n,c^*/n)$ graph is at least $(1-\delta)(c^*/\lambda_1^*)n$ with high probability.
\end{enumerate}
\end{corollary}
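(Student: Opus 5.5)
Let $g(c)$ denote the limiting fraction of vertices in the largest component of $\bbG(n,c/n)$. By Theorem~\ref{thm:giant-connected-comp}, $g(c)=0$ for $c\le 1$ (the largest component has size $O(\log n)$ for $c<1$ and $o(n)$ at $c=1$). For $c>1$, $g(c)=1-x/c$, where $x\in(0,1)$ solves $xe^{-x}=ce^{-c}$. The plan is to reparametrize by the giant fraction $s=g(c)\in(0,1)$ and compare $g(c)$ with the line $c/\lambda_1^*$.

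\textbf{Reparametrization.} Set $s=1-x/c$, so $x=c(1-s)$. Substituting into $xe^{-x}=ce^{-c}$ gives $(1-s)e^{cs}=1$, that is,
\[
c=\frac{-\ln(1-s)}{s}.
\]
This is the familiar fixed-point equation $1-s=e^{-cs}$. The map $s\mapsto c(s)$ is a continuous increasing bijection from $(0,1)$ onto $(1,\infty)$. Then
\[
\frac{c}{s}=\frac{-\ln(1-s)}{s^2},
\]
which is exactly the function minimized in \eqref{eq:def:lambda-r} with $r=1$, since $s(1-(1-s)^1)=s^2$. Hence $\lambda_1^*=\min_{c>1} c/g(c)$, so $g(c)\le c/\lambda_1^*$ for all $c>1$. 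For $c\le1$ this holds trivially because $g(c)=0$.

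\textbf{Existence of the minimizer.} I need to check that the minimum is attained at an interior point $s^*\in(0,1)$. As $s\to0$, $-\ln(1-s)/s^2\sim 1/s\to\infty$. As $s\to1$, the same expression tends to $\infty$. Continuity then gives a minimizer $s^*$, and I set $c^*=c(s^*)>1$.

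\textbf{Claim 1.} For any $\delta>0$ and $c>1$, Theorem~\ref{thm:giant-connected-comp} shows that the largest component has size $(g(c)+o(1))n$ w.h.p. We have $g(c)+o(1)\le c/\lambda_1^*+o(1)$, and this is at most $(1+\delta)(c/\lambda_1^*)n$ for large $n$, because $c/\lambda_1^*$ is a fixed positive constant. For $c\le1$, the largest component has size $o(n)$ w.h.p., which is below $(1+\delta)cn/\lambda_1^*$.

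The case $c=1$ is the only delicate point, since Theorem~\ref{thm:giant-connected-comp} only covers $c>1$. Here I would use monotone coupling with $c'=1+\eta$ slightly larger. The largest component of $\bbG(n,1/n)$ is at most that of $\bbG(n,c'/n)$, which is $(g(c')+o(1))n$, and $g(c')\to0$ as $\eta\to0$. Alternatively, one can cite the standard critical-window bound $O(n^{2/3})$.

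\textbf{Claim 2.} Take $c^*$ as defined above. Then $g(c^*)=s^*=c^*/\lambda_1^*$, so by Theorem~\ref{thm:giant-connected-comp} the largest component has size at least $(1-\delta)s^*n=(1-\delta)(c^*/\lambda_1^*)n$ w.h.p.

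The main obstacle is just the algebraic reparametrization and the boundary behaviour; everything else is direct from Theorem~\ref{thm:giant-connected-comp}.
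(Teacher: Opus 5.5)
Your proof is correct and follows essentially the same route as the paper: reparametrize by the giant-component fraction $s$ via $1-s=e^{-cs}$, so that $c/s=-\ln(1-s)/s^2\ge\lambda_1^*$ gives Part~1, and take $c^*=-\ln(1-s^*)/s^*$ at the minimizer $s^*$ to get Part~2. Your extra care at $c=1$ (via monotone coupling) and your boundary-behaviour argument for the existence of an interior minimizer $s^*$ are harmless refinements of what the paper does.
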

We prove an analog of this corollary for $r>1$ in Appendix~\ref{sec:arbitrary-r}.
\begin{proof}
Fix a positive $c$. If $c\leq 1$, then all components in $\bbG(n,c/n)$ have size $o(n)$ w.h.p., and Part I follows. If $c\geq \lambda_1^*$, then Part I is trivial because $(1+\delta)(c/\lambda_1^*)n\geq n$. Thus, we assume that $c\in(1,\lambda_1^*)$.

Let $\rho=\rho(c)$ be the asymptotic fraction of vertices in the giant component of $\bbG(n,c/n)$. By Theorem~\ref{thm:giant-connected-comp}, the largest connected component has size $(\rho+o(1))n$ w.h.p. Moreover, $\rho$ satisfies
$1-\rho=e^{-c\rho}$, or equivalently,
\[
c=\frac{-\ln(1-\rho)}{\rho}.
\]
Therefore, by the claim proved below,
\[
\frac{c}{\rho}
=
\frac{-\ln(1-\rho)}{\rho^2}
\geq \lambda_1^*.
\]
Hence $\rho\leq c/\lambda_1^*$, and Part I follows.

We now prove Part II. Let $s^*\in(0,1)$ be a point where the function
\[
h(s)=-\frac{\ln(1-s)}{s^2}
\]
attains its minimum, and let
\[
c^*=\frac{-\ln(1-s^*)}{s^*}.
\]
Since $h(s^*)=\lambda_1^*$, we have
$c^*=\lambda_1^*s^*$.
Also, the equation defining $c^*$ is equivalent to
$1-s^*=e^{-c^*s^*}$. Thus, $s^*$ is the giant-component fraction in $\bbG(n,c^*/n)$. By Theorem~\ref{thm:giant-connected-comp}, the largest connected component in $\bbG(n,c^*/n)$ has size $(s^*+o(1))n$ w.h.p. Since $s^*=c^*/\lambda_1^*$, Part II follows.
\end{proof}

We now use Corollary~\ref{cor:thm:giant-connected-comp} to finish the proof of Lemma~\ref{lem:max-comp-pois-process} in the case $r=1$. We give the proof for fixed $r>1$ in Appendix~\ref{sec:arbitrary-r}.

First suppose that $\lambda>\lambda_1^*$. Let $c^*$ be the constant from Part II of Corollary~\ref{cor:thm:giant-connected-comp}. Choose $\delta\in(0,1)$ such that
$
(1-\delta)\frac{c^*}{\lambda_1^*}
>
\frac{c^*}{\lambda}.
$
Let
$
t=\frac{c^*}{\lambda}(n-1).
$
Since $c^*<\lambda_1^*<\lambda$, we have $t\in[1,n-2]$ for all sufficiently large $n$. Also,
$
1-e^{-pt}
=
\frac{c^*+o(1)}{n}.
$
Thus, after replacing parallel edges by single edges, $\calG_{n,p}(t)$ has the same distribution as
$\bbG\left(n,\frac{c^*+o(1)}{n}\right)$. By Part II of Corollary~\ref{cor:thm:giant-connected-comp}, the graph $\calG_{n,p}(t)$ contains a connected component of size at least
$
(1-\delta)\frac{c^*}{\lambda_1^*}n
$
w.h.p. By our choice of $\delta$, this size is at least $t+2$ for all sufficiently large $n$. This proves Part II of Lemma~\ref{lem:max-comp-pois-process}.

We now turn our attention to Part 1 of the lemma. We are going to do the following. For every $\lambda<\lambda^*$, we pick an integer $s$ such that
$(1+\frac{1}{s})^2<\lambda^*/\lambda$. Note that $s$ does not depend on $n$. Then, for every $n$, we define time points $t_0<\dots<t_{2s+2}$, where $t_0=1$ and $t_{2s+2}=n-2$, and show that for every $i\geq1$, with high probability the largest connected component in $\calG_{n,p}(t_i)$ has size less than $t_{i-1}+2$.
We let $t_0=1$, $t_1=\sqrt[3]{n}$, and
$t_i=\frac{1}{3}(1+\frac{i-2}{s})(n-2)$ for $i\geq2$. The time points $t_i$ need not be integers, since the process $\calG_{n,p}(t)$ is defined for all real $t\in[0,n-2]$.

We first consider $t_1=n^{1/3}$. Since $1-e^{-pt_1}=O(n^{-5/3})$, the graph $\calG_{n,p}(t_1)$ has maximum degree at most $1$ w.h.p. Hence, its largest connected component has size at most $2$, which is less than $t_0+2=3$.
Next consider $t_2=(n-2)/3$. Since $1-e^{-pt_2}=(\lambda/3+o(1))n^{-1}$ and $\lambda/3<1$, the graph $\calG_{n,p}(t_2)$ is subcritical. Therefore, its largest connected component has size $O(\log n)$ w.h.p., which is less than $t_1+2$.

Now consider $i\geq3$ and let $j=i-2$. Then $t_i=\frac{1}{3}(1+\frac{j}{s})(n-2)$, and graph $\calG_{n,p}(t_i)$ is a
$\bbG(n,(\frac{1}{3}(1+\frac{j}{s})\lambda+o(1))n^{-1})$ graph. By Corollary~\ref{cor:thm:giant-connected-comp}, the size of its largest connected component is less than
\[
\frac{1+\frac{1}{s}}{3}\Bigl(1+\frac{j}{s}\Bigr)
\frac{\lambda}{\lambda^*}n
\]
with high probability. Since $(1+\frac{1}{s})^2\lambda<\lambda^*$, we have
\[
\Bigl(1+\frac{1}{s}\Bigr)\frac{\lambda}{\lambda^*}
<
\frac{1}{1+\frac{1}{s}}.
\]
Therefore,
\[
\frac{1+\frac{1}{s}}{3}\Bigl(1+\frac{j}{s}\Bigr)
\frac{\lambda}{\lambda^*}n
<
\frac{1+\frac{j}{s}}{3(1+\frac{1}{s})}n
=
\frac{s+j}{3(s+1)}n
\leq
\frac{1}{3}\Bigl(1+\frac{j-1}{s}\Bigr)n
\leq
t_{i-1}+2.
\]
Hence, the largest connected component in $\calG_{n,p}(t_i)$ has size less than $t_{i-1}+2$ w.h.p.

The number of time points $t_0,\dots,t_{2s+2}$ does not depend on $n$. Thus, with high probability, for all $i\in\{1,\dots,2s+2\}$, the size of the largest connected component in $\calG_{n,p}(t_i)$ is less than $t_{i-1}+2$. Suppose that this event holds. We claim that, in this case, the size of the largest connected component in $\calG_{n,p}(t)$ is less than $t+2$ for all real $t\in[1,n-2]$. Pick such a $t$. It belongs to one of the intervals $[t_{i-1},t_i]$. Graph $\calG_{n,p}(t)$ is a subgraph of $\calG_{n,p}(t_i)$ because the multiplicity of every edge $(a,b)$ in $\calG_{n,p}(t)$ is a non-decreasing function of $t$. The size of the largest connected component in $\calG_{n,p}(t)$ is at most the size of the largest connected component in $\calG_{n,p}(t_i)$, which in turn is less than $t_{i-1}+2\leq t+2$.
\end{proof}

\begin{figure}[ht]
    \centering    \includegraphics[width=6in]{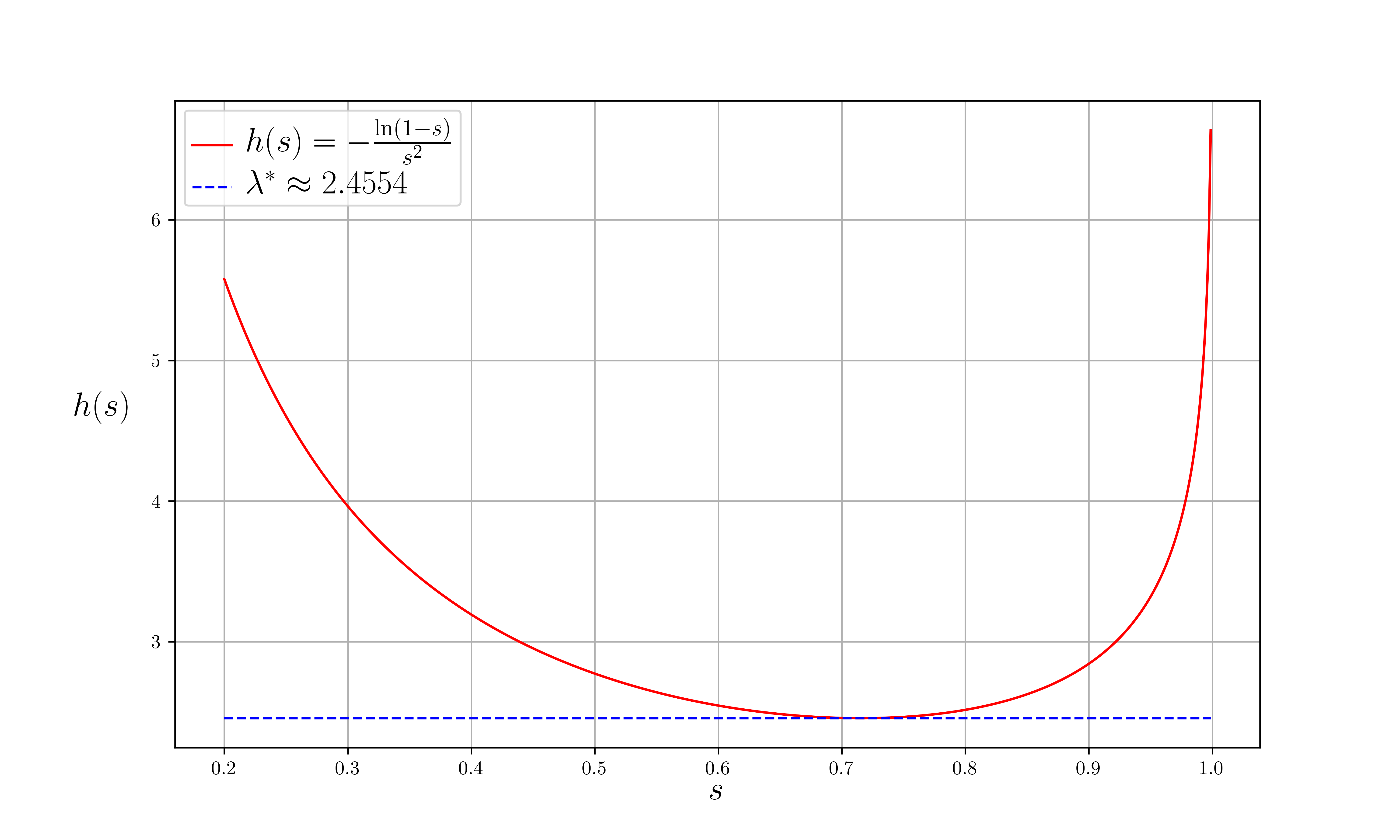}
    \caption{Function $h(s) = -\frac{\ln (1-s)}{s^2}$.}
    \label{fig:function-h}
\end{figure}

We now prove that $\lambda_1^*= -\frac{w^*}{(1 + \nicefrac{1}{2w^*})}\approx 2.4554$, where
$w^* = W_{-1}(- \nicefrac{1}{2\sqrt{e}})\approx -1.7564$ and $W_{-1}$ is the $-1$ branch of the Lambert $W$ function (see Figure~\ref{fig:function-h}).
\begin{claim}
$$\min_{s\in(0,1)}-\frac{\ln(1-s)}{s^2} = -\frac{w^*}{(1 + \nicefrac{1}{2w^*})}.$$
\end{claim}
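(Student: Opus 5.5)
The plan is elementary calculus: locate the unique interior critical point of $h(s)=-\ln(1-s)/s^2$, rewrite the critical-point equation as a Lambert-$W$ equation, and then evaluate $h$ there.

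\emph{The minimum is attained in the interior.} As $s\to0^+$ we have $-\ln(1-s)=s+O(s^2)$, so $h(s)\sim 1/s\to\infty$. As $s\to1^-$ we have $h(s)\to\infty$ as well. Hence $h$ attains its infimum at a critical point in $(0,1)$.

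\emph{The critical-point equation.} Differentiating gives
\[
h'(s)=\frac{1}{s^3}\Bigl(\frac{s}{1-s}+2\ln(1-s)\Bigr).
\]
Substitute $u=1-s\in(0,1)$ and set $g(u)=\frac1u-1+2\ln u$; then $h'(s)=g(u)/s^3$. We have
\[
g'(u)=\frac{2u-1}{u^2},
\]
so $g$ decreases on $(0,\tfrac12)$ and increases on $(\tfrac12,1)$. Moreover $g(1)=0$ and $g(u)\to+\infty$ as $u\to0^+$. It follows that $g<0$ on $[\tfrac12,1)$ and that $g$ has exactly one zero $u^*\in(0,\tfrac12)$. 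This $u^*$ gives the unique critical point, hence the global minimizer $s^*=1-u^*$.

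\emph{Reduction to Lambert $W$.} The equation $g(u)=0$ reads $2\ln u=1-\frac1u$. Set $w=-\frac{1}{2u}$, so that $\frac1u=-2w$ and $\ln u=-\ln(-2w)$. The equation becomes $\ln(-2w)=-\tfrac12-w$, that is, $-2w=e^{-1/2}e^{-w}$, or equivalently
\[
we^{w}=-\frac{1}{2\sqrt e}.
\]
This equation has two real solutions. The $W_0$ branch gives $w=-\tfrac12$, which corresponds to $u=1$, i.e.\ the degenerate point $s=0$. Since $u^*<\tfrac12$, we need $w<-1$, which forces the $W_{-1}$ branch, so $w^*=W_{-1}(-1/(2\sqrt e))$.

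\emph{Evaluation.} At the critical point, $\ln u^*=\frac{1+2w^*}{2}$ and $s^*=1-u^*=1+\frac{1}{2w^*}$. Using $1+2w^*=2w^*\bigl(1+\frac{1}{2w^*}\bigr)$, we obtain
\[
h(s^*)=-\frac{1+2w^*}{2\bigl(1+\frac{1}{2w^*}\bigr)^2}=-\frac{w^*}{1+\frac{1}{2w^*}},
\]
which is the claimed value. The only delicate step is selecting the correct branch of $W$; this is settled by the monotonicity analysis of $g$, which places $u^*$ below $\tfrac12$.
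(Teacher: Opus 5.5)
Your proposal is correct and follows essentially the same route as the paper: differentiate $h$, substitute $1-s$, rewrite the critical-point equation as $we^{w}=-\frac{1}{2\sqrt e}$, select the $W_{-1}$ branch because the $W_0$ solution corresponds to $s=0$, and evaluate $h(s^*)$. The only difference is that your monotonicity analysis of $g$ rigorously justifies the uniqueness of the interior critical point, which the paper simply asserts.
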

\begin{proof}
Let $h(s) = -\frac{\ln(1-s)}{s^2}$. The derivative of $h$ equals:
$$
h'(s) = \frac{1}{(1-s)s^2} + \frac{2\ln(1-s)}{s^3}.
$$
It is equal to $0$ when $2\ln(1-s) + \frac{s}{1-s} = 0$. Let $y = 1-s$. Then, we have $2\ln y +  \frac{1-y}{y} = 0$, or
$-\frac{1}{2y} + \ln \frac{1}{y} = -\frac{1}{2}$. We exponentiate both sides and then multiply them by $-\nicefrac{1}{2}$:
$$-\frac{1}{2y} e^{-\frac{1}{2y}} = -\frac{1}{2\sqrt{e}}.$$
The solution on the $W_{-1}$ branch is $-\frac{1}{2y} = W_{-1}\big(-\frac{1}{2\sqrt{e}}\big)$ (by the definition of the Lambert $W$ function). The principal branch solution corresponds to $s=0$, which is not in the interval $(0,1)$. Denote $w^*  = W_{-1}\big(-\frac{1}{2\sqrt{e}}\big)$, $y^* = -\frac{1}{2w^*}$, and $s^* = 1 +\frac{1}{2w^*}$. Function $h(s)$ is minimized when $s=s^*$, since $s^*$ is the only zero of $h'(s)$ in $(0,1)$, and $\lim_{s\to 0^+} h(s)= \lim_{s\to 1^-} h(s)=\infty$. Thus,
$$\min_{s\in(0,1)} h(s) =
h(s^*) = -\frac{\ln (1-s^*)}{(s^*)^2}.$$
We know that $\ln (1-s^*) = -\frac{s^*}{2(1-s^*)}$ because $s^*$ is the solution of $2\ln(1-s)+\frac{s}{1-s}=0$. Therefore,
$$
h(s^*) = \frac{1}{2s^*(1-s^*)} =
-\frac{1}{
2 (1+\frac{1}{2w^*})\frac{1}{2w^*}
} =
-\frac{w^*}{
(1+\frac{1}{2w^*})
}.
$$
\end{proof}

\subsection{Reduction to the \texorpdfstring{$\calI_r(n,m)$}{I_r(n,m)} Model}
We now sketch a reduction from the model $\calI_r(n,\lambda)$, examined in this section, to the fixed-size model $\calI_r(n,m)$ studied in this paper. Here, $\calI_r(n,m)$ denotes a uniformly random instance with exactly $m$ distinct constraints $ab|\mathbf{c}$. Consider a random instance $\calI_r(n,\lambda)$ with $\lambda<\lambda_r^*$ and pick a small positive number $\rho$. As we just showed, this instance is satisfiable w.h.p. By the Chernoff bound, w.h.p. it contains at least $(\lambda-\rho)n/2$ distinct constraints $ab|\mathbf{c}$. Thus, w.h.p. $\calI_r(n,\lambda)$ is satisfiable and contains at least $(\lambda-\rho)n/2$ distinct constraints. Condition on this event. If we keep a uniformly random subset of $(\lambda-\rho)n/2$ distinct constraints in this instance, then we get a random instance distributed as $\calI_r(n,(\lambda-\rho)n/2)$. It is satisfiable w.h.p.

Similarly, if we choose $\lambda>\lambda_r^*$, then w.h.p. $\calI_r(n,\lambda)$ is not satisfiable and contains at most $(\lambda+\rho)n/2$ distinct constraints. Condition on this event. We can include additional uniformly random distinct constraints not already present in the instance so that it has exactly $(\lambda+\rho)n/2$ distinct constraints. The new instance has the same distribution as $\calI_r(n,(\lambda+\rho)n/2)$ and remains unsatisfiable.
\section{Upper and Lower Bounds on the Threshold for Quartets}\label{sec:ub-quartets}

In this section, we provide lower and upper bounds on the unsatisfiability threshold for the Quartets CSP on trees (\textsc{Quartet Reconstruction}). 

\begin{theorem}\label{th:quartet_upper_lower}
Consider a random instance of the \textsc{Quartet Reconstruction} Problem with $n$ variables and $m=\lambda n$ constraints.  
\begin{enumerate}
\item For $\lambda < \lambda^*_2/2$ (where $\lambda^*_2$ is defined in Equation~\eqref{eq:def:lambda-r}; 
$\lambda_2^*/2\approx 0.908$), the instance is satisfiable with high probability.
\item For $\lambda > 
\frac{\ln 3 - \tfrac{2}{3}}{\ln(81/65)}\approx 2.89$, the instance is not satisfiable with high probability.
\end{enumerate}
\end{theorem}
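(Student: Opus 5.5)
For Part 1 we reduce to the generalized triplet setting with $r=2$ negative examples, and then invoke Theorem~\ref{thm:threshold-main}. Given a quartet $ab\mid cd$, we map it to the two-negative-example constraint $ab\mid(c,d)$. This constraint requires that both $c$ and $d$ be separated from the pair $\{a,b\}$ before $a$ and $b$ are separated. We should first double-check that this is the right sufficient condition. A rooted tree satisfying $ab\mid c$ and $ab\mid d$ has $\mathrm{lca}(a,b)$ strictly below $\mathrm{lca}(a,b,c)$ and $\mathrm{lca}(a,b,d)$. Hence the path from $a$ to $b$ lies strictly inside the subtree of $\mathrm{lca}(a,b)$, and this subtree contains neither $c$ nor $d$. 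The $c$–$d$ path then avoids that subtree entirely, since it only passes through ancestors of $c$ or $d$ and through descendants of $\mathrm{lca}(c,d)$. So every rooted tree satisfying all the $r=2$ constraints also satisfies all the original quartets, and satisfiability of the $r=2$ instance implies satisfiability of the quartet instance.

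It remains to match the densities. A uniformly random quartet corresponds to a uniformly random pair $\{a,b\}$ together with a uniform ordered pair $(c,d)$ of distinct elements outside it. The model $\calI_2(n,\lambda')$ instead allows $c=d$, but such constraints occur only $O(1)$ times in expectation among $\Theta(n)$ constraints, so coupling the two models costs nothing. A quartet instance with $m=\lambda n$ constraints is dominated by $\calI_2(n,\lambda')$ with $\lambda' n/2\approx \lambda n$, i.e.\ $\lambda'=2\lambda$. By the Poisson-to-fixed-$m$ reduction of the previous subsection (keeping a random subset of constraints), the quartet instance is satisfiable w.h.p.\ whenever $2\lambda<\lambda_2^*$. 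The parallel edges and the identification $ab\mid cd \equiv ba\mid dc$ are routine to handle.

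For Part 2 we use a first-moment argument over top splits, as announced after Theorem~\ref{th:quartets_thresholds}. Any rooted binary tree has a root split $V=S\cup \bar S$ with $|S|=k$, $1\le k\le n-1$, and we bound the probability that some split is consistent with all quartets. Conditioned on the root split $(S,\bar S)$, a quartet $ab\mid cd$ is violated regardless of the subtrees whenever $\{a,c\}\subseteq S$, $\{b,d\}\subseteq\bar S$ or symmetrically (these are the crossing patterns $ac\mid bd$ and $ad\mid bc$ with respect to the split). Let $\alpha=k/n$. The probability that a random quartet is not killed is
\[
q(\alpha)=1-4\alpha^2(1-\alpha)^2+o(1).
\]
The number of splits of size $k$ is $\binom nk\approx e^{nH(\alpha)}$. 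The expected number of root splits killing no quartet is therefore about $\sum_k \exp\big(n[H(\alpha)+\lambda\ln q(\alpha)]\big)$. At the balanced value $\alpha=1/2$ we get $q=3/4$, which is not quite the stated constant, so the stated bound $\frac{\ln 3-2/3}{\ln(81/65)}$ suggests a refinement. Since $81/65=1/(1-16/81)$, the quantity $16/81$ points to splitting the vertex set into three thirds ($\ln 3$ as the entropy term). The $-2/3$ presumably accounts for restricting to a canonical top structure: one can always find a node of the tree whose removal leaves three parts, each of size at most $n/2$ or so, and then union-bound over ordered three-part partitions with their forbidden patterns.

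The main obstacle is exactly this combinatorial step: choosing the right canonical decomposition (e.g.\ a centroid-type node splitting leaves into three groups $A,B,C$) so that the killed-quartet probability at the uniform partition is $16/81$, and so that the union bound over all partitions, including unbalanced ones, is dominated by the uniform one after the entropy correction producing $\ln 3-2/3$. The plan is to take the centroid edge/node of the unrooted tree, which gives three subtrees of sizes $\alpha_1,\alpha_2,\alpha_3$, each at most $1/2$. A quartet with two leaves in one part and one in each of the other two, or with pairs in two parts, is then forced in a specific way. We would compute the violation probability $v(\alpha_1,\alpha_2,\alpha_3)$ and verify
\[
\max_{\alpha}\Big[H(\alpha_1,\alpha_2,\alpha_3)+\lambda\ln\big(1-v(\alpha)\big)\Big]<0
\quad\text{for }\lambda>2.89,
\]
using calculus or a monotonicity argument on the simplex with the constraint $\alpha_i\le 1/2$. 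Finally, summing over the polynomially many size profiles gives that the instance is unsatisfiable w.h.p.
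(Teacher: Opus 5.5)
Your Part 1 is correct and is exactly the paper's argument: the quartet $ab\mid cd$ is implied by the $r=2$ constraint $ab\mid(c,d)$, and $\lambda'=2\lambda$. Part 2, however, has a genuine gap. The decisive estimate is never carried out, and the plan for it is reverse-engineered from a typo in the statement.

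The constant should be $\frac{\ln 3-\frac23\ln 2}{\ln(81/65)}$; only this evaluates to $\approx 2.89$, whereas the printed version would give $\approx 1.96$. Here $\ln 3-\frac23\ln 2=H(1/3)$ is the \emph{binary} entropy at $1/3$. Likewise $16/81=4(\tfrac13)^2(\tfrac23)^2$ is exactly your own two-part killing probability $4\alpha^2(1-\alpha)^2$ at $\alpha=1/3$. So no three-way structure is involved.

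What is missing is the right canonical split. The root split cannot work, and the reason is not the constant. If $|S|=1$, no quartet on four distinct variables is ever killed, so your first-moment sum over $1\le k\le n-1$ is at least $n$. The paper instead passes to the unrooted binary tree and fixes a balanced edge, whose removal splits the leaves into $L,R$ with $|L|/n\in[1/3,1/2]$. A quartet both of whose pairs cross this edge is violated. The union bound over $\binom nk$, $n/3\le k\le n/2$, then gives the exponent $H(x)+\lambda\ln(1-4x^2(1-x)^2)$. The remaining nontrivial step is to show that $f(x)=-H(x)/\ln(1-4x^2(1-x)^2)$ is decreasing on $[1/3,1/2]$. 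Then the worst case is the endpoint $x=1/3$, not the balanced point $x=1/2$ you looked at.

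Your three-way centroid idea is a sound framework in itself: if both pairs are separated by the centroid node $v$, both paths pass through $v$. But the targets you set for it are wrong. The killed probability is $(1-\sum_i\alpha_i^2)^2$, which equals $4/9$, not $16/81$, at the uniform partition. The exponent is also not dominated by the uniform partition. The worst case sits near the corner $(1/2,1/2,0)$ of the region $\{\alpha_i\le 1/2\}$, where the entropy has infinite slope in the small coordinate. A numerical check suggests the worst ratio $H(\alpha)/\bigl(-\ln(1-(1-\sum_i\alpha_i^2)^2)\bigr)$ is about $2.47$. So this route would likely succeed, even with a better constant. As written, though, the proposal neither identifies the correct killed probability nor proves the required two-variable inequality, and the monotonicity toward the uniform point you intend to use is false.
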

Part 1 of the theorem follows from Theorem~\ref{thm:threshold-main} because every quartet constraint $ab|cd$ is implied by the triplet constraint $ab|(c,d)$ with two negative examples $c$ and $d$ (so $r=2$ for this case). Indeed, if $c$ and $d$ are separated from $a$ and $b$ before $a$ and $b$ are separated from each other, then the quartet constraint $ab|cd$ is satisfied. Note that the converse does not hold. Specifically, the quartet constraint $ab|cd$ may be satisfied while the triplet constraint $ab|(c,d)$ is violated.

We now prove Part 2. Consider an instance $\calI$ of the Quartets problem. Let $T$ be an unrooted tree whose leaves are labeled with variables $x_1, \dots, x_n$. We say that two leaves are separated by an edge $(u,v)$ in $T$ if the unique path between them in $T$ passes through $(u,v)$. Observe that for every \emph{satisfiable} constraint $ab|cd$ in $\calI$, it is not possible for $a$ and $b$ to be separated by $(u,v)$ and for $c$ and $d$ to be separated by $(u,v)$. If that were the case, the paths from $a$ to $b$ and from $c$ to $d$ would both pass through $(u,v)$, and hence the constraint $ab|cd$ would not be satisfied.

We may assume that $T$ is binary, since refining an unrooted tree does not destroy any quartet constraint that is already satisfied.\footnote{The original argument referred to a partition of the vertices of $T$; the necessary condition for quartets uses the induced partition of the labeled leaves. Refining the tree lets us use the standard balanced-edge property for binary trees.} In every binary tree with $n$ labeled leaves, there exists an edge whose removal partitions the leaves of the tree into two parts, $L$ and $R$, such that the size of $L$ is between $n/3$ and $n/2$, and the size of $R$ is between $n/2$ and $2n/3$. Fix such an edge $(u,v)$ in $T$. The observation above implies that for every satisfiable constraint $ab|cd$, it is not possible that
\begin{enumerate}
    \item exactly one of the variables $a$ and $b$ belongs to $L$, and
    \item exactly one of the variables $c$ and $d$ belongs to $L$.
\end{enumerate}
We conclude that if the instance $\calI$ is satisfiable, then there exists a partitioning of the set of variables into sets $L$ and $R$ such that $n/3\leq |L|\leq n/2$ and, for every constraint $ab|cd$, conditions (1) and (2) are not both satisfied. Note that this is a necessary but not sufficient condition for $\calI$ to be satisfiable.

Our goal now is to show that if
\[
c > c^* \equiv \frac{\ln 3 - \frac{2}{3} \ln 2}{\ln\left( \frac{81}{65} \right)}\approx
2.89,
\]
then the random instance $\calI(m,n)$ with $m=cn$ is not satisfiable with high probability. To this end, we show that for every partition of the variables into $L$ and $R$ with $n/3\leq |L|\leq n/2$, there exists at least one constraint $ab|cd$ for which both conditions (1) and (2) hold.

To establish this claim, we fix an arbitrary partitioning of the variables into sets $L$ and $R$ with $|L| = k$ (where $n/3 \leq k \leq n/2$), compute the probability that, for this partitioning, every constraint $ab|cd$ does not satisfy both conditions (1) and (2), and then apply the union bound over all such partitions $L$, $R$, and values $k$ as described above.

For a given partition $L, R$ with $|L| = k$, the probability that $|\{a,b\} \cap L| = 1$ for a random pair $(a,b)$ is $ \frac{2k(n-k)}{n^2}. $ Consequently, the probability that for a random constraint $ab|cd$, we have $|\{a,b\} \cap L| = 1$ and $|\{c,d\} \cap L| = 1$ equals \[ \frac{4k^2(n-k)^2}{n^4}, \] and the probability that for all $m$ (where $m = cn$ ) random constraints $ab|cd$ at least one of the conditions (1) or (2) is violated is 
\[ \left(1 - \frac{4k^2(n-k)^2}{n^4} \right)^m. \]

The number of partitions of $n$ variables into sets $L$ and $R$ with $|L| = k$ is $\binom{n}{k}$. Hence, for a fixed value of $k$, the probability that there exist sets $L$ and $R$ with $|L| = k$ such that for every constraint at least one of the conditions (1) or (2) does not hold is upper bounded by \[ \binom{n}{k} \cdot \left(1 - \frac{4k^2(n-k)^2}{n^4} \right)^{cn}. \] Finally, the probability that there exists some $k$ and corresponding partitioning $L$, $R$ as above is upper bounded by \[ \sum_{k = \lceil n/3 \rceil}^{\lfloor n/2 \rfloor} \binom{n}{k} \cdot \left(1 - \frac{4k^2(n-k)^2}{n^4} \right)^{cn} \leq \frac{n}{6} \cdot \max_{k \in \{\lceil n/3 \rceil, \dots, \lfloor n/2 \rfloor\}} \binom{n}{k} \left(1 - \frac{4k^2(n-k)^2}{n^4} \right)^{cn}. \]

We now prove that the right-hand side tends to $0$ as $n \to \infty$. To this end, we consider the logarithm of the $n$-th root of the right-hand side and show that its limit as $n\to \infty$ is less than $0$ for $c > c^*$. Let \[ \vartheta_c = \lim_{n \to \infty} \ln \Bigg[ \Bigg( \max_{n/3 \leq k \leq n/2} \frac{n}{6} \cdot \binom{n}{k} \cdot \bigg(1 - \frac{4k^2(n-k)^2}{n^4} \bigg)^{cn} \Bigg)^{1/n} \Bigg]. \] Taking the $n$-th root of each factor, we get \[ \vartheta_c = \lim_{n \to \infty} \ln \Bigg( \max_{n/3 \leq k \leq n/2} \bigg(\frac{n}{6}\bigg)^{1/n} \cdot \binom{n}{k}^{1/n} \cdot \bigg(1 - \frac{4k^2(n-k)^2}{n^4} \bigg)^c \Bigg). \] We further simplify this expression by noting that $\big(n/6\big)^{1/n} \to 1$, and by applying the standard asymptotic approximation (stated formally in Lemma~\ref{lem:binom-nk-approx}), $\ln \big( \binom{n}{k}^{1/n} \big) \approx H(k/n)$ for large $n$, where $H(x)$ denotes the binary entropy function: 
\[ H(x) = -x \ln x - (1 - x) \ln(1 - x), \quad \text{for } x \in (0,1). \] We thus obtain \[ \vartheta_c \leq \lim_{n \to \infty} \max_{n/3 \leq k \leq n/2} \bigg( H(k/n) + \varepsilon_n + c \ln \bigg(1 - \frac{4k^2(n-k)^2}{n^4} \bigg) \bigg). \] 
Letting $x = k/n$, we rewrite the bound as 
\[ \vartheta_c \leq \lim_{n \to \infty} \max_{x \in [1/3, 1/2]} \bigg( H(x) + \varepsilon_n + c \ln \big(1 - 4x^2(1 - x)^2 \big) \bigg) = \max_{x \in [1/3, 1/2]} \bigg( H(x) + c \ln \big(1 - 4x^2(1 - x)^2 \big) \bigg). \]
Denote $g(x) = 4x^2(1 - x)^2$. Then, 
\[ \vartheta_c \leq \max_{x \in [1/3, 1/2]} \bigg( H(x) + c \ln \big(1 - g(x) \big) \bigg). \] Corollary~\ref{cor:lem:f-decreasing}, proved below, shows that $\vartheta_c < 0$ for $c > c^*$.

\medskip

\noindent We now prove Lemma~\ref{lem:binom-nk-approx}, Lemma~\ref{lem:f-decreasing}, and Corollary~\ref{cor:lem:f-decreasing}.

\begin{lemma}\label{lem:binom-nk-approx}
There exists a function $ \varepsilon_n = o(1) $ as $ n \to \infty $ such that for all integers $ k \in \{1, \dots, n-1\} $,
\[
\left| \frac{1}{n} \ln \binom{n}{k} - H\left( \frac{k}{n} \right) \right| \leq \varepsilon_n.
\]
\end{lemma}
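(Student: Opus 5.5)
We will prove the bound with explicit constants using Stirling's formula in the form of Robbins' two-sided estimate. For every integer $N\ge 1$,
\[
\sqrt{2\pi N}\,(N/e)^N \le N! \le e\sqrt{N}\,(N/e)^N .
\]
Write $x=k/n$ and $\binom{n}{k}=\frac{n!}{k!(n-k)!}$. Apply the upper bound to $n!$ and the lower bound to $k!$ and $(n-k)!$. The exponential factors combine exactly: $n^n/(k^k(n-k)^{n-k}) = e^{nH(x)}$, and the powers of $e$ cancel. This gives
\[
\binom{n}{k}\le \frac{e}{2\pi}\sqrt{\frac{n}{k(n-k)}}\,e^{nH(x)}\le e^{nH(x)},
\]
because $\sqrt{n/(k(n-k))}\le 1$ whenever $1\le k\le n-1$. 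Symmetrically, apply the lower bound to $n!$ and the upper bound to $k!$ and $(n-k)!$:
\[
\binom{n}{k}\ge \frac{\sqrt{2\pi}}{e^2}\sqrt{\frac{n}{k(n-k)}}\,e^{nH(x)}\ge \frac{\sqrt{2\pi}}{e^2}\cdot\frac{1}{\sqrt{n}}\,e^{nH(x)},
\]
using $k(n-k)\le n^2$.

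Taking logarithms and dividing by $n$, both directions give, uniformly over $k\in\{1,\dots,n-1\}$,
\[
\left|\frac1n\ln\binom{n}{k}-H(k/n)\right|\le \frac{\tfrac12\ln n+2}{n} =: \varepsilon_n .
\]
Since $\varepsilon_n=O(\ln n/n)=o(1)$, this proves the lemma.

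The one point that needs care is the endpoints $k=1$ and $k=n-1$. There $k!$ or $(n-k)!$ equals $1!$, so Stirling is being used at its smallest argument. Robbins' bounds hold for every $N\ge1$, so no special case is needed, and the factor $\sqrt{n/(k(n-k))}$ is controlled at these endpoints by the crude bounds above.

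Alternatively, the lower bound follows without Stirling. Since $\binom{n}{j}x^j(1-x)^{n-j}$ is maximized at $j=k$ when $x=k/n$, and the $n+1$ terms of the binomial expansion sum to $1$,
\[
\binom{n}{k}\ge \frac{e^{nH(x)}}{n+1}.
\]
The upper bound likewise follows without Stirling from the single term $\binom{n}{k}x^k(1-x)^{n-k}\le 1$. Together these give $\varepsilon_n=\ln(n+1)/n$ directly, and this is the version I would actually write.
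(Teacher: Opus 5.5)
Your proposal is correct. Your first argument is essentially the paper's. The paper also uses Stirling, but in the cruder asymptotic form $\ln N! = N\ln N - N + O(\ln N)$, and gets $\varepsilon_n = C\ln n/n$ with an unspecified $C$. Your two-sided bounds $\sqrt{2\pi N}(N/e)^N\le N!\le e\sqrt{N}(N/e)^N$ make the constant explicit and the uniformity in $k$ transparent; the paper leaves that uniformity implicit in the $O(\ln n)$.

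There is one small slip, exactly at the endpoints you single out. The inequality $\sqrt{n/(k(n-k))}\le 1$ is false for $k\in\{1,n-1\}$, where the left side equals $\sqrt{n/(n-1)}>1$. The conclusion $\binom{n}{k}\le e^{nH(k/n)}$ still holds, because $n/(k(n-k))\le 2$ for all $1\le k\le n-1$ and $e\sqrt{2}/(2\pi)<1$. Your endpoint paragraph should say this rather than appeal to the bound ``$\le 1$''.

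The alternative you say you would actually write is a genuinely different, Stirling-free route: the method-of-types argument. Set $x=k/n$. Then $x^k(1-x)^{n-k}=e^{-nH(x)}$. The term $\binom{n}{k}x^k(1-x)^{n-k}$ is the largest of the $n+1$ terms summing to $(x+(1-x))^n=1$: the ratio of consecutive terms is $\frac{(n-j)k}{(j+1)(n-k)}$, which exceeds $1$ iff $j\le k-1$. Hence $e^{nH(x)}/(n+1)\le\binom{n}{k}\le e^{nH(x)}$.

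This gives a fully elementary proof with the clean $\varepsilon_n=\ln(n+1)/n$. It also gives the upper direction with no error term at all. That is the only direction the paper uses, when it bounds $\vartheta_c$ via the union bound over partitions $(L,R)$.

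The paper's version is shorter to state but relies on an external asymptotic whose uniformity in $k$ is left implicit. Note that at $k=1$ the error in $\ln k!$ is a constant, not $O(\ln k)$, so the global $O(\ln n)$ is genuinely needed there.
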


\begin{proof}
We will use Stirling's approximation in the form
$\ln(n!) = n \ln n - n + O(\ln n)$,
as $ n \to \infty$. Write
\begin{align*}
\ln \binom{n}{k} &= \ln\Big(\frac{n!}{k!(n-k)!}\Big)\\
&= \ln n! - \ln k! - \ln(n - k)! \\
&= \left(n \ln n - n\right) - \left(k \ln k - k\right) - \left((n - k) \ln(n - k) - (n - k)\right) + O(\ln n) \\
&= n \ln n - k \ln k - (n - k) \ln(n - k) + O(\ln n).
\end{align*}
Using $n\ln n = k\ln n + (n-k)\ln n$, we obtain
\[
\ln \binom{n}{k} = - k \ln \frac{k}{n} - (n - k) \ln\left(\frac{n - k}{n}\right) + O(\ln n).
\]
Dividing both sides by $n$, we get
\[
\frac{1}{n}\ln \binom{n}{k}
=
- \frac{k}{n} \ln \frac{k}{n}
- \frac{n-k}{n} \ln\left(\frac{n-k}{n}\right)
+ O\left(\frac{\ln n}{n}\right).
\]
Thus, the desired bound holds with $\varepsilon_n=C\ln n/n$ for a sufficiently large constant $C$.
\end{proof}

\begin{figure}[h!]
  \centering
\includegraphics[width=0.7\linewidth]{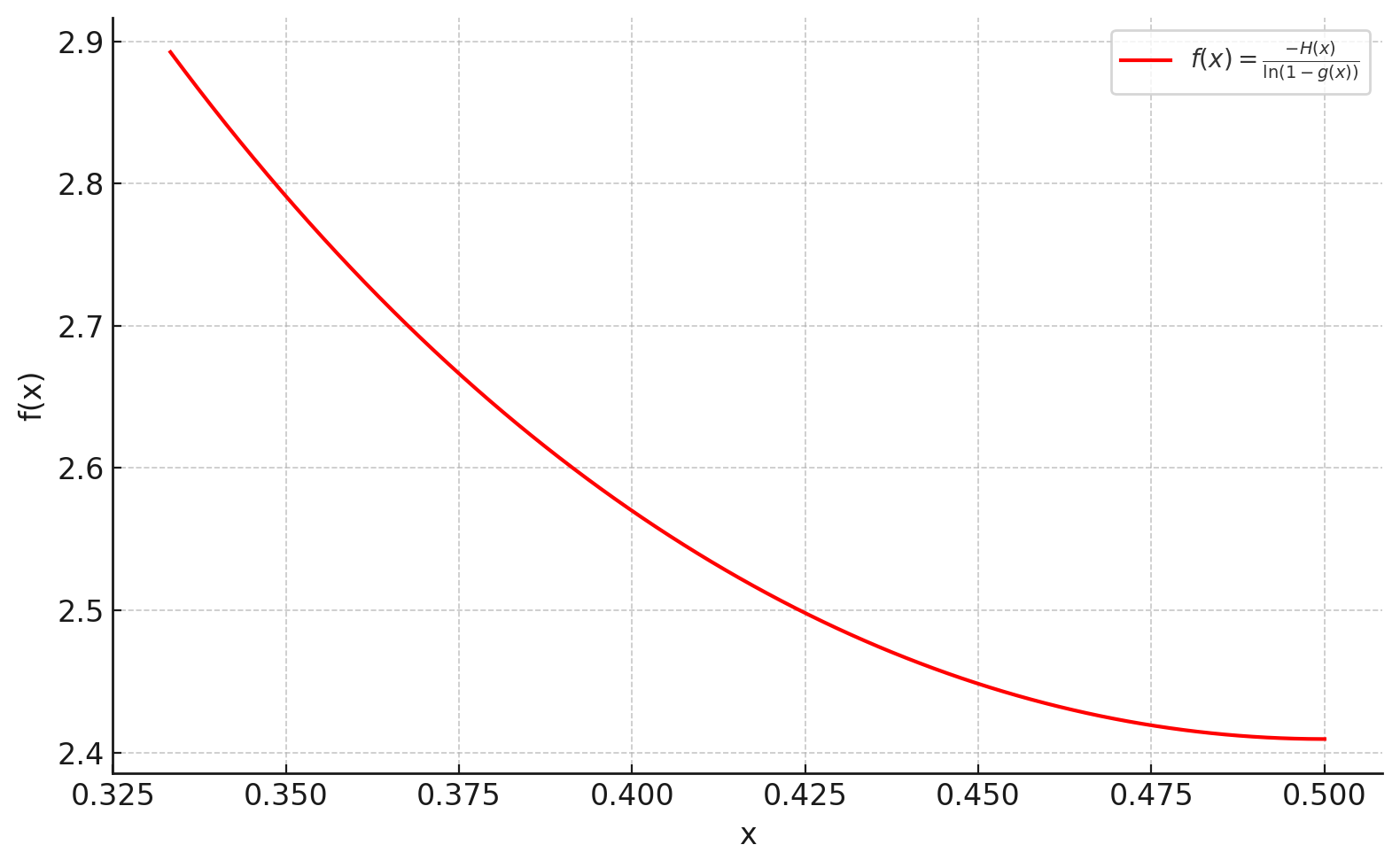} 
  \caption{Plot of $ f(x) = \frac{-H(x)}{\ln(1 - g(x))} $ over the interval $[1/3, 1/2]$.}
  \label{fig:fx_plot}
\end{figure}

\begin{lemma}\label{lem:f-decreasing}
The function
\[
f(x)=\frac{-H(x)}{\ln(1-g(x))}
\]
is strictly decreasing on $\left[\frac13,\frac12\right]$, where
\[
H(x)=-x\ln x-(1-x)\ln(1-x),
\qquad
g(x)=4x^2(1-x)^2.
\]
\end{lemma}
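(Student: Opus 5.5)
A convenient route is the substitution $u = x(1-x)$. On $[1/3,1/2]$, $u$ increases strictly from $2/9$ to $1/4$, since $du/dx = 1-2x \ge 0$ with equality only at $x=1/2$. The denominator then becomes $\ln(1-4u^2)$. This is a strictly decreasing function of $u$, and it is negative because $4u^2\le 1/4<1$. So $-\ln(1-g(x)) = -\ln(1-4u^2)$ is positive and strictly increasing in $x$.

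We would write $f(x) = H(x)/D(x)$ with $D(x) = -\ln(1-g(x))>0$. We then show $f' < 0$ on the open interval $(1/3,1/2)$, i.e.
\[
H'(x)\,D(x) < H(x)\,D'(x).
\]
Here $H'(x) = \ln\frac{1-x}{x} \ge 0$ and $D'(x) = \frac{g'(x)}{1-g(x)}$ with $g'(x) = 8x(1-x)(1-2x)$. Both sides vanish at $x=1/2$, which is why we work on the open interval and conclude strict monotonicity on the closed one by continuity.

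Since both $H'$ and $D'$ carry a factor vanishing at $1/2$, we would factor it out. Put $y = 1-2x\in(0,1/3]$. Then $\ln\frac{1-x}{x} = \ln\frac{1+y}{1-y} = 2\,\mathrm{artanh}(y) \le 2y/(1-y^2)$. We compare this against
\[
D'(x) = \frac{8x(1-x)\,y}{1-4x^2(1-x)^2}, \qquad x(1-x) = \frac{1-y^2}{4}.
\]
After dividing by $y>0$, the required inequality reduces to a one-variable inequality in $y\in(0,1/3]$ of the form
\[
\frac{\ln\frac{1+y}{1-y}}{y}\cdot D < H\cdot \frac{2(1-y^2)}{1-(1-y^2)^2/4},
\]
where $H$ and $D$ are also expressed in $y$. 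We would bound $H$ from below by a polynomial in $y^2$ (e.g.\ $H \ge \ln 2 - y^2/2 - y^4/4\cdot c$ using its Taylor series, all of whose coefficients in $y^2$ are negative, so truncations give bounds of known direction on a bounded range). We would bound $D$ and the artanh quotient from above by truncated series with explicit remainders.

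The main obstacle is precisely this last numerical verification. The margin is not obviously large near $y=1/3$, i.e.\ $x=1/3$. At $x=1/2$ the ratio of the two sides is roughly
\[
\frac{2\cdot(-\ln(3/4))}{\ln 2\cdot \tfrac{8}{3}}\approx 0.31,
\]
so there is room there. If the series bounds become too lossy near $x=1/3$, the fallback is to split $[1/3,1/2]$ into a few subintervals. On each subinterval we bound $H'D$ from above and $HD'$ from below using monotonicity of each factor at the endpoints ($H$ and $D$ increase in $x$, while $H'$ decreases). This is a finite interval-arithmetic check, confirming the picture in Figure~\ref{fig:fx_plot}.
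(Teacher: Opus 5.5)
Your route is sound but differs from the paper's, and the step you flag as the main obstacle is much easier than you expect. The paper calls your $D$ by the name $L$, uses the same $y=1-2x$, and writes $HD'-DH'=(H-D)D'+D(D'-H')$. This amounts to separating the two ratios by the constant $1$: $H'/D'<1<H/D$. The right inequality is the one-line comparison $H(x)\ge H(1/3)>D(1/2)\ge D(x)$. The left one is the pointwise bound $\operatorname{arctanh} y<4y(1-y^2)/(3+2y^2-y^4)$. The paper proves it with a geometric-tail estimate on $\operatorname{arctanh}$ in which the coefficients after the first are bounded by $\frac13$, and reduces it to $3-24y^2+19y^4-2y^6>0$. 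Your cruder $y/(1-y^2)$ would actually fail there at $y=1/3$, since $3/8>96/260$. So the paper's split is fairly tight but needs no numerics beyond two constants.

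Your direct comparison has far more slack. The two sides of your reduced inequality are in ratio about $0.31$ at $y=0$ and about $0.32$ at $y=1/3$, so you need no truncated series and no subdivision; one set of endpoint bounds on $(0,1/3]$ closes it. On the left, the quotient $\ln\frac{1+y}{1-y}/y=2(1+y^2/3+y^4/5+\cdots)$ is at most its value $3\ln 2$ at $y=1/3$, and $D\le D(1/2)=\ln\frac43$. So the left side is below $3\ln 2\cdot\ln\frac43<0.60$. On the right, $H\ge H(1/3)=\ln 3-\frac23\ln 2>0.63$. With $s=1-y^2$, the factor $\frac{2s}{1-s^2/4}=\frac{8s}{4-s^2}$ is increasing in $s$, so it is at least its value $\frac{576}{260}>2.2$ at $y=1/3$. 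Hence the right side exceeds $1.38$. As written, your proposal stops before any such verification, so you need to add this bound to turn the plan into a proof.

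Two small corrections. First, $H=\ln 2-\frac12\sum_{n\ge1}\frac{y^{2n}}{n(2n-1)}$ has all non-constant coefficients negative, so truncations give \emph{upper} bounds on $H$. A lower bound needs a tail estimate, or simply $H\ge H(1/3)$. Second, your interval-arithmetic fallback must be run on the $y$-factored inequality. On any subinterval ending at $x=1/2$, the endpoint lower bound for $HD'$ is $0$, while $H'D$ is only bounded above by a positive number, so the unfactored check cannot succeed there.
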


\noindent
To prove this lemma, we show that $f'(x) < 0$ for $x\in[1/3,1/2]$. We present the details in Section~\ref{sec:lem:f-decreasing}.

\begin{corollary}\label{cor:lem:f-decreasing}
Let $ f(x) $ be as in Lemma~\ref{lem:f-decreasing}. Then
\[
\sup_{x\in[1/3,1/2]} f(x)
=
f\left( \frac{1}{3} \right)
=
-
\frac{\ln 3 - \frac{2}{3} \ln 2}{\ln\left( \frac{65}{81} \right)}
=
\frac{\ln 3 - \frac{2}{3} \ln 2}{\ln\left( \frac{81}{65} \right)}
\approx 2.89.
\]
\end{corollary}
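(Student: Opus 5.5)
The corollary follows directly from Lemma~\ref{lem:f-decreasing}, together with an explicit evaluation at $x=1/3$. The plan is to first record why $f$ is well defined and positive on $[1/3,1/2]$, then use monotonicity to identify the supremum, and finally compute the value.

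On $[1/3,1/2]$ the function $g(x)=4x^2(1-x)^2$ is increasing, since $x(1-x)$ is increasing there. Its values run from $g(1/3)=4\cdot\frac19\cdot\frac49=\frac{16}{81}$ to $g(1/2)=\frac14$. Hence $0<1-g(x)<1$, so $\ln(1-g(x))<0$. Also $H(x)>0$ on $(0,1)$. Therefore $f(x)=-H(x)/\ln(1-g(x))$ is a well-defined, positive, continuous function on the closed interval.

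By Lemma~\ref{lem:f-decreasing}, $f$ is strictly decreasing on $[1/3,1/2]$. Consequently the supremum is attained at the left endpoint, and $\sup_{x\in[1/3,1/2]}f(x)=f(1/3)$.

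It remains to evaluate $f(1/3)$. For the numerator,
\[
H(1/3)=\tfrac13\ln 3+\tfrac23\ln\tfrac32=\ln 3-\tfrac23\ln 2 .
\]
For the denominator, $1-g(1/3)=1-\frac{16}{81}=\frac{65}{81}$. Hence
\[
f(1/3)=-\frac{\ln 3-\frac23\ln2}{\ln(65/81)}=\frac{\ln 3-\frac23\ln 2}{\ln(81/65)} .
\]

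For the numerical value, $\ln 3-\frac23\ln2\approx 1.0986-0.4621=0.6365$ and $\ln(81/65)\approx 0.2200$. Their ratio is $\approx 2.89$.

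There is no real obstacle here, since all the analytic difficulty sits in Lemma~\ref{lem:f-decreasing}. One thing worth noting for the application is that $\vartheta_c<0$ for $c>c^*$. This holds because $H(x)+c\ln(1-g(x))=\ln(1-g(x))\,(c-f(x))$, which is negative whenever $c>f(x)$. Taking the maximum over the compact interval keeps the bound strictly negative, since $c-f(x)\ge c-c^*>0$ and $|\ln(1-g(x))|\ge \ln(81/65)$.
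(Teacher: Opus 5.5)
Your proposal is correct and matches the paper's proof: you invoke the strict monotonicity from Lemma~\ref{lem:f-decreasing} to put the supremum at $x=1/3$, then evaluate $H(1/3)=\ln 3-\tfrac23\ln 2$ and $g(1/3)=16/81$. Your added checks (well-definedness of $f$ on the interval and the factorization $H(x)+c\ln(1-g(x))=\ln(1-g(x))\,(c-f(x))$ showing $\vartheta_c<0$) are also correct.
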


\begin{proof}
By Lemma~\ref{lem:f-decreasing}, the function $ f(x) $ is strictly decreasing on the interval $[1/3,1/2]$. Therefore, its supremum over this interval is attained at $x=1/3$. We have
\[
H\left( \frac{1}{3} \right)
=
-\left( \frac{1}{3} \ln \frac{1}{3} + \frac{2}{3} \ln \frac{2}{3} \right)
=
\ln 3 - \frac{2}{3} \ln 2,
\]
and
\[
g\left( \frac{1}{3} \right)
=
4 \cdot \left( \frac{1}{3} \right)^2 \cdot \left( \frac{2}{3} \right)^2
=
\frac{16}{81}.
\]
Thus,
\[
f\left( \frac{1}{3} \right)
=
-\frac{\ln 3 - \frac{2}{3}\ln 2}{\ln(65/81)}
=
\frac{\ln 3 - \frac{2}{3}\ln 2}{\ln(81/65)}.
\]
\end{proof}

\section{Bounds for the Probability of Success of \texorpdfstring{$t$}{t}-wise Uniform Distributions}\label{sec:bounds-uniform-distr}

We now show how to certify that the value of the optimal tree for a random \textsc{Triplet} and \textsc{Quartet Reconstruction} instance is at most $\tfrac{5}{9}+\epsilon$ (strong refutation task). For triplets, the refutation algorithm succeeds w.h.p if $m > C'_{\epsilon}n$ and for quartets the algorithm succeeds w.h.p. if  $m > C''_{\epsilon}n^{3/2}$ for some  constants $C'_{\epsilon}$ and $C''_{\epsilon}$ depending only on $\epsilon$. This result relies on Theorem~\ref{thm:t-wise-phy-CSP-refute}, which we prove in Appendix~\ref{subsec:t-wise}. To prove the results using Theorem~\ref{thm:t-wise-phy-CSP-refute}, we show that the expected value of the triplet predicate on
every \textit{pairwise} independent distribution on the leaves of an arbitrary tree is at most $5/9$ and similarly  the expected value of the quartet predicate on
every \textit{3-wise} independent distribution is at most $5/9$. See Lemmas~\ref{lem:pairwise-unirofm-triplets-bound}
and~\ref{lem:3wise-unirofm-quartet-bound}.

\subsection{Triplet Reconstruction}\label{sec:beta-triplets}
Let $T$ be a binary, rooted tree. For an internal node $u$, we use $l(u)$ and $r(u)$ to denote the left and right child of $u$ respectively and $T_u$ for the tree rooted at $u$. We will use $f_{\uprightsf{tri}}$ to denote the triplet predicate, that is, $f_{\uprightsf{tri}}(a,b,c)=1$ if $ab|c$ is satisfied.

\begin{lemma}
\label{lem:pairwise-unirofm-triplets-bound}
    Let $T$ be a rooted binary tree with $q$ leaves. For any pairwise uniform distribution $\mathcal{D}$ over $[q]^{3}$, it holds that, for $(a,b,c)\sim \mathcal{D}$:
    \begin{align*}
        \Prob{f_{\uprightsf{tri}}(a,b,c)=1}\leq \frac{5}{9}
    \end{align*}
\end{lemma}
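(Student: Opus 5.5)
The natural tool is LP duality over pairwise-uniform distributions. The set of pairwise-uniform distributions on $[q]^3$ is a polytope, and $\Pr_{\mathcal D}(f_{\uprightsf{tri}}(a,b,c)=1)$ is linear in $\mathcal D$. So it suffices to exhibit a \emph{dual certificate}: functions $\phi_{1},\phi_{2},\phi_{3}:[q]^2\to\mathbb{R}$ such that, pointwise for all $(a,b,c)\in[q]^3$,
\[
f_{\uprightsf{tri}}(a,b,c)\le \phi_1(a,b)+\phi_2(a,c)+\phi_3(b,c),
\qquad\text{and}\qquad
\sum_{j=1}^{3}\E_{(x,y)\sim U([q]^2)}\bigl[\phi_j(x,y)\bigr]\le \tfrac59 .
\]
Taking expectations under any pairwise-uniform $\mathcal D$ then gives the lemma, because each pair of coordinates of $\mathcal D$ is uniform on $[q]^2$. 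I will look for $\phi_j$ that depend only on the tree geometry of the pair, namely on which node $u$ of $T$ is $\lca(x,y)$ (with $x=y$ a separate case), and that are symmetric in $a,b$ (so $\phi_2=\phi_3$).

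To build the certificate I will use the decomposition
\[
f_{\uprightsf{tri}}(a,b,c)=\sum_{v}\ind{\lca(a,c)=v}\,\ind{\lca(b,c)=v}.
\]
This holds because, if $a$ and $c$ split at $v$ and $b$ and $c$ split at $v$, then $a,b$ lie in the same child of $v$. Under the uniform pair marginal, $\Pr(\lca(x,y)=v)=2p_{l(v)}p_{r(v)}$, where $p_u=|T_u|/q$ is the leaf fraction of $T_u$. I will bound each product by a linear combination of pair indicators, using per node $v$ the inequality
\[
\ind{\lca(a,c)=v}\ind{\lca(b,c)=v}\le \alpha\bigl(\ind{\lca(a,c)=v}+\ind{\lca(b,c)=v}\bigr)-\beta\,\ind{\lca(a,b)=v}+\gamma_v(a,b).
\]
Here $\gamma_v$ is a correction depending only on the pair $(a,b)$, for instance $\ind{a,b\in T_v}$ weighted by a node-dependent constant. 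I will check this inequality on the few configurations of $a,b,c$ relative to $T_{l(v)}$, $T_{r(v)}$ and the complement of $T_v$: at most one point outside $T_v$, or all three split among the two children in each pattern. Summing over $v$ and using $\sum_v\ind{\lca(x,y)=v}=\ind{x\ne y}$ turns the first two groups of terms into constants. The remaining expectation becomes a function of the numbers $p_u$ alone, of the form $\sum_v \lambda_v\,(\text{polynomial in } p_{l(v)},p_{r(v)})$.

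The last step is to maximize this expression over all binary trees. I plan to do this by induction on the tree: the root split with fractions $x,1-x$ contributes an explicit quadratic in $x$, and the subtrees contribute scaled copies of the same bound (scaled by $x^2$ and $(1-x)^2$). So it suffices to show that an inequality of the form $B\le \max_x\bigl[\text{root term}(x)+x^2B+(1-x)^2B\bigr]$ forces $B\le 5/9$. The diagonal events $a=b$ etc.\ cost only $O(1/q)$, and I expect the extremal configurations to be the small balanced ones; the target $5/9$ is consistent with $(2/3)\cdot(5/6)$-type numbers arising from a $1/3$--$2/3$ split.

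The main obstacle is finding the right coefficients $\alpha,\beta,\gamma_v$. A naive union-type bound such as $\min(p_up_s,p_u^2)$ summed over nodes is too lossy. I will first guess the certificate by solving the LP exactly for small trees ($q=3,4$: the star-like caterpillar and the balanced tree), read off the optimal dual weights, and then verify that the same node-local pattern, with weights proportional to $p_u$, extends to arbitrary trees through the inductive step above.
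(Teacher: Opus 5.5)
Your LP-duality framing is sound; in fact the paper's proof is exactly such a certificate. The gap is that the proposal stops before the only nontrivial step, constructing the certificate, and the class of certificates you commit to provably cannot reach $5/9$.

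\textbf{The lca-only class fails.} Suppose each $\phi_j(x,y)$ depends only on $\lca(x,y)$. Pairwise uniformity then enters only through the laws of $\lca(a,b)$, $\lca(a,c)$, $\lca(b,c)$. By weak duality the certificate also bounds $\Prob{f_{\uprightsf{tri}}(a,b,c)=1}$ for every distribution whose three pairwise $\lca$-laws agree with that of a uniform pair.
\begin{itemize}
\item On a long caterpillar, $\lca(x,y)$ records only the shallower of $x,y$, and $ab|c$ holds iff $c$ is the shallowest of the three.
\item One can make $c$ the shallowest point with probability $1-\epsilon$. The remaining mass $\epsilon$ goes to configurations in which $a$ or $b$ is shallowest, with the two distinct lca's far apart.
\item For $q\gg 1/\epsilon$ the depth laws can be tuned so that all three pairwise $\lca$-laws are exactly those of a uniform pair. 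The $c$-shallowest part gives the same law to the pairs $ac$ and $bc$ and a slightly deeper one to $ab$; the $\epsilon$-part compensates.
\end{itemize}
So every lca-only certificate has value $1-o(1)$ on long caterpillars. The true value there is at most $1/2$, certified by $\ind{c\text{ is shallower than }a}$, which is not a function of $\lca(a,c)$.

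\textbf{Your explicit template is already stuck at the trivial bound.} Four configurations force the constraints:
\begin{itemize}
\item $a,b$ in one child of $v$, $c$ in the other: $2\alpha+g_v\ge 1$.
\item $a,b$ split at $v$, $c\notin T_v$: $g_v\ge\beta$.
\item $a,c$ split at $v$, $b\notin T_v$: $\alpha\ge 0$.
\item all three in one child: $g_v\ge 0$.
\end{itemize}
Write $\ell=p_{l(v)}$, $r=p_{r(v)}$. Under these constraints the node-$v$ cost $4\alpha\ell r-2\beta\ell r+g_v(\ell+r)^2$ is at least $2\ell r$. Hence the total is at least $\sum_v 2p_{l(v)}p_{r(v)}=1-1/q$, and no recursion over subtrees can repair an lca-only certificate.

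\textbf{The missing idea} is to charge each node to a pairwise marginal that uses the mass \emph{outside} $T_v$, in an orientation-dependent way, chosen by a global rule. The paper conditions on $v=\lca(a,b)$, not on $\lca(a,c)=\lca(b,c)$.
\begin{itemize}
\item The satisfying event at $v$ is $\{a\in T_{l(v)},\,b\in T_{r(v)},\,c\notin T_v\}$ or its mirror. Each is bounded by any one of $L(v)R(v)$, $L(v)E(v)$, $R(v)E(v)$.
\item Fix an edge $e^*$ leaving a fraction $x\in[1/3,2/3]$ of the leaves on one side. Let $P$ be the path from the upper endpoint of $e^*$ to the root.
\item At nodes $v\in P$, charge $E(v)$ times the leaf fraction of the child off $P$. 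Elsewhere charge $L(v)R(v)$.
\item Re-rooting at the upper endpoint of $e^*$ shows these charges sum to at most the probability that two independent uniform leaves lie on the same side of $e^*$. That probability is $x^2+(1-x)^2\le 5/9$.
\end{itemize}
In your notation this is the certificate $\phi_1(a,b)=\ind{\lca(a,b)\notin P}$ and $\phi_2(a,c)=\phi_3(a,c)=\ind{a\in T_{o(v)},\ c\notin T_v\text{ for some }v\in P}$, where $o(v)$ is the child of $v$ off $P$. Here $\phi_2$ depends on which element of the pair sits lower, which is precisely the information an lca-only function discards.
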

\begin{proof}
      We first rewrite the probability that the predicate is satisfied using the law of total probability and conditioning on the lowest common ancestor of leaves $a$ and $b$. We have that:
    \begin{align*}
        \Prob{f_{\uprightsf{tri}}(a,b,c)=1}=\sum_{v\in T}\Prob{f_{\uprightsf{tri}}(a,b,c)=1 \text{ and } \lca(a,b)=v}
    \end{align*}
    For a fixed vertex $v$, we have that
    \begin{align*}
        \Prob{f_{\uprightsf{tri}}(a,b,c)=1 \text{ and } \lca(a,b)=v}=&\Prob{a\in T_{r(v)} \text { and }b\in T_{l(v)}\text{ and } c\in T_{v}^{c}}\\&+\Prob{a\in T_{l(v)} \text { and }b\in T_{r(v)}\text{ and } c\in T_{v}^{c}}
    \end{align*}
    where we have used  $T_v^{c}$ to denote the vertices that are not in the subtree rooted at $v$. We now have that
    \begin{align*}
        \Prob{a\in T_{l(v)} \text { and }b\in T_{r(v)}\text{ and } c\in T_{v}^{c}}\leq \min(\Prob{a\in T_{l(v)} \text {, }b\in T_{r(v)}},\\ \Prob{a\in T_{l(v)} \text{, } c\in T_{v}^{c}},\\ \Prob{b\in T_{r(v)}\text{, } c\in T_{v}^{c}})
    \end{align*}
    We use the following convenient notation, for a vertex $v$, we let $L(v)$ be the fraction of leaves on the left subtree rooted at $v$ and similarly $R(v)$ for the fraction of leaves on the right subtree. Finally, we use $E(v)$ to denote the fraction of leaves that are not in the subtree rooted at $v$, in other words:
    \[
        E(v)=1-L(v)-R(v).
    \]
    Using this notation and the fact that $\mathcal{D}$ is a pairwise uniform distribution, we can write the bound on the probability $\Prob{a\in T_{l(v)} \text { and }b\in T_{r(v)}\text{ and } c\in T_{v}^{c}}$ as:
    \begin{align*}
        \Prob{a\in T_{l(v)} \text { and }b\in T_{r(v)}\text{ and } c\in T_{v}^{c}}\leq \min\left(L(v) R(v), L(v)E(v), R(v)E(v)\right)
    \end{align*}
    we have that
    \begin{equation} \label{eq:triplet_refute_strong_bound}       \Prob{f_{\uprightsf{tri}}(a,b,c)=1}\leq 2\sum_{v}\min(L(v)R(v), L(v)E(v), R(v)E(v)).
    \end{equation}
    For an edge $e=(x,y)$ we let $A_e$ and $B_e$ be the sets of leaves of the two connected components that are created if $e$ is removed from the tree.
        Let $e^*=(x^*, y^*)$ be an edge such that $1/3\leq\frac{|A_{e^*}|}{q}\leq 2/3$ (such an edge always exists in a binary tree) and assume without loss of generality that $x^*$ is the left child of $y^*$ and furthermore that $A_{e^*}$ is the set of leaves of the subtree rooted at $x^*$. We also define function $\kappa$ as follows:
        \begin{align*}
            \kappa(v)=\begin{cases}
                E(v)\cdot R(v) \text{, if $v$ is in the path from $y^*$ to the root; }\\
                L(v)\cdot R(v) \text{, otherwise.}
            \end{cases}
        \end{align*}
        Note that by Equation \ref{eq:triplet_refute_strong_bound} we have that
        \begin{align}
        \label{eq:triplets_refute_kappa_bound}
            \Prob{f_{\uprightsf{tri}}(a,b,c)=1}\leq \sum_{v}2\kappa(v).
        \end{align} Now consider an experiment where two leaves $X,Y$ are sampled uniformly and independently among the leaves of $T$. Let $\mathcal{E}_{v}$ be the event that $v$ is the first common vertex in the paths from $X$ to $y^*$ and $Y$ to $y^*$. We now use the law of total probability to get that
        \begin{align*}
            \Prob{X,Y \in A_{e^*} \text{ or } X,Y\in B_{e^*}}=\sum_{v}\Prob{(X,Y \in A_{e^*} \text{ or } X,Y\in B_{e^*})\text{ and } \mathcal{E}_v}
        \end{align*}
        Observe that $\Prob{(X,Y \in A_{e^*} \text{ or } X,Y\in B_{e^*})\text{ and } \mathcal{E}_v}=2\kappa(v)$ which in turn means that
        \begin{align*}
            \Prob{X,Y \in A_{e^*} \text{ or } X,Y\in B_{e^*}}=\sum_{v}2\kappa(v)
        \end{align*}
        We now use the bound from Equation \ref{eq:triplets_refute_kappa_bound} to get that:
        \begin{align*}
            \Prob{f_{\uprightsf{tri}}(a,b,c)=1}&\leq\Prob{X,Y \in A_{e^*} \text{ or } X,Y\in B_{e^*}}\\
            &=\left(\frac{|A_{e^*}|}{q}\right)^2+\left(1-\frac{|A_{e^*}|}{q}\right)^2\\
            &\leq \frac{5}{9},
        \end{align*}
        where we have used that $1/3\leq \frac{|A_{e^*}|}{q}\leq 2/3$
\end{proof}

\subsection{Quartets with 3-wise uniform distributions}\label{sec:beta-quartets}
We now prove that for every rooted, binary tree $T$ and every $3$-wise uniform distribution $\calD$ the probability of the event that $(a,b,c,d)\sim \calD$ satisfies the quartet predicate $ab|cd$ is bounded away from $1$. For convenience, let $f_{\uprightsf{quar}}(a,b,c,d)=\ind{ab|cd \text{ is satisfied}}$. 
Formally, we have the following lemma.
\begin{lemma}\label{lem:3wise-unirofm-quartet-bound}
Let $T$ be a rooted binary tree with $q$ leaves and $\calD$ a $3$-wise uniform distribution over $[q]^4$, then for $(a,b,c,d)\sim \calD$, we have that:
\begin{align*}
    \Prob{f_{\uprightsf{quar}}(a,b,c,d)=1}\leq \frac{5}{9}
\end{align*}
\end{lemma}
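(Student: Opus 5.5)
The plan is to reduce the quartet bound to the quantity already bounded in the proof of Lemma~\ref{lem:pairwise-unirofm-triplets-bound}. View $T$ as an unrooted tree in which every internal node has degree $3$; the root has degree $2$ and is suppressed, which changes no quartet. For an internal node $v$, the three branches at $v$ have leaf fractions $P(v),Q(v),S(v)$ with $P+Q+S=1$. If we root at the original root, these are exactly $L(v),R(v),E(v)$ from the triplet proof.

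\textbf{Decomposition by median.} Suppose $ab|cd$ is satisfied with $a,b,c,d$ distinct. Let $v$ be the median of $a,b,c$, i.e.\ the unique node where the paths between these three leaves meet. Then $a$, $b$ and $c$ lie in three distinct branches at $v$. Since the $a$--$b$ path and the $c$--$d$ path are disjoint, and the $a$--$b$ path passes through $v$, the leaf $d$ must lie in the same branch as $c$. Tuples with repeated leaves never satisfy the predicate. Hence the satisfying event is a disjoint union, over internal nodes $v$ and over the $6$ assignments of distinct branches $(\beta_a,\beta_b,\beta_c)$ at $v$, of the events $\{a\in\beta_a,\ b\in\beta_b,\ c,d\in\beta_c\}$.

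\textbf{Bounding each piece with $3$-wise uniformity.} Each such event is contained in events that involve only three of the four coordinates. Write $x=|\beta_a|/q$, $y=|\beta_b|/q$, $z=|\beta_c|/q$. By $3$-wise uniformity, the probability is at most
\[
\min\bigl(xyz,\ xz^2,\ yz^2\bigr)=z\cdot\min(xy,\ xz,\ yz),
\]
using the triples $(a,b,c)$, $(a,c,d)$ and $(b,c,d)$ respectively. The key observation is that $m(v):=\min(PQ,PS,QS)$ is symmetric in the three branches. So, summing over the choice of the branch for $\{c,d\}$ and the two orders of $(a,b)$ in the other two branches, node $v$ contributes at most
\[
2\,m(v)\,(P+Q+S)=2\,m(v).
\]
Therefore
\[
\Prob{f_{\uprightsf{quar}}(a,b,c,d)=1}\le 2\sum_v \min\bigl(L(v)R(v),\,L(v)E(v),\,R(v)E(v)\bigr).
\]
This is exactly the right-hand side of~\eqref{eq:triplet_refute_strong_bound}.

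\textbf{Concluding.} From here I will repeat the rest of the proof of Lemma~\ref{lem:pairwise-unirofm-triplets-bound} verbatim. That argument only uses the tree structure, not the distribution. Choose a balanced edge $e^*$ with $1/3\le |A_{e^*}|/q\le 2/3$, dominate each min by $\kappa(v)$, and identify $\sum_v 2\kappa(v)$ with the probability that two independent uniform leaves fall on the same side of $e^*$. That probability is at most $(1/3)^2+(2/3)^2=5/9$.

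The main obstacle is the bounding step. The naive bound $xyz$ sums to roughly $1$ over all $v$ and is useless, so the fourth variable $d$ must be used through the triples containing both $c$ and $d$. The factor $z$ then has to be absorbed via $P+Q+S=1$, and the argument depends on that minimum being symmetric.

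I will also check that summing over internal nodes of the unrooted tree matches the triplet argument's sum over nodes of the rooted tree. The suppressed root is the only discrepancy: for it $E=0$, so it contributes nothing to the triplet sum, and it is also never a median here. Hence the two sums agree.
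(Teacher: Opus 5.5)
Your proof is correct, but it takes a different route from the paper's.

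\textbf{The paper's route.} The paper uses the triplet lemma as a black box. It suppresses the root, deletes $a$, and reroots at the former neighbor of $a$. It then notes that $ab|cd$ holds in $T$ if and only if $cd|b$ holds in the rerooted tree. Finally, it uses that, conditioned on the value of $a$, the triple $(b,c,d)$ is pairwise uniform, so each conditional probability is at most $5/9$.

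\textbf{Your route.} You never condition or reroot. You split the satisfying event by the median of $a,b,c$ in the unrooted tree. You bound each piece by $z\cdot\min(xy,xz,yz)$ using the marginals of $(a,b,c)$, $(a,c,d)$, $(b,c,d)$. The symmetry of the minimum then lets you absorb the factor $z$ via $P+Q+S=1$. This lands exactly on the intermediate functional $2\sum_v\min\bigl(L(v)R(v),L(v)E(v),R(v)E(v)\bigr)$ of the triplet proof, after which only the distribution-free balanced-edge computation remains.

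\textbf{What each buys.} The paper's argument is shorter and modular: any pairwise-uniform bound for triplets lifts at once to a $3$-wise-uniform bound for quartets. As written, though, it passes over a detail. The rerooted tree has only $q-1$ leaves, while $b,c,d$ still range over all $q$ leaves, including the one occupied by $a$. This is harmless under the distinct-arguments convention, but it needs a remark. Your argument avoids this issue and works with a single fixed tree. It also shows that the quartet probability under $3$-wise uniformity and the triplet probability under pairwise uniformity are controlled by the same tree functional.

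\textbf{Two small precision points.}
\begin{enumerate}
\item The satisfying event is contained in your disjoint union, not equal to it, since the union also contains tuples with $c=d$. Only the containment is used, so nothing breaks.
\item When you reuse the $\kappa$ step, choose the left/right labels so that for every ancestor $v$ of $x^*$ the child on the path to $x^*$ is $l(v)$. Otherwise $E(v)R(v)$ is not the correct product at the proper ancestors of $y^*$.
\end{enumerate}
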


\begin{proof}
Our proof uses a simple reduction from quartet reconstruction to triplet reconstruction. Consider a tree $T$ and a quartet constraint $ab|cd$ as well as a tree $T_{r=p(a)}$ constructed from $T$ by the following procedure. First, smooth out the root of $T$ (replace the root and its two outgoing edges by an edge connecting the two children of the root) then remove $a$ and make the parent of $a$ the root of the new tree. Observe that the quartet constraint $ab|cd$ is satisfied in $T$ if and only if the triplet constraint $cd|b$ is satisfied in $T_{r=p(a)}$. We now use that conditioned on where $a$ is, $b,c$ and $d$ come from a pairwise uniform distribution and use Lemma \ref{lem:pairwise-unirofm-triplets-bound} to bound the probability that the triplet constraint is satisfied. We have that:
\begin{align*}
    \Prob{f_{\uprightsf{quar}}(a,b,c,d)=1}&=\frac{1}{q}\sum_{v}\Prob{f_{\uprightsf{tri}}(T_{r=p(v)}, b,c,d)=1|a=v}\\
    &\leq\frac{1}{q}\sum_{v}\frac{5}{9}\\
    &=\frac{5}{9}
\end{align*}
\end{proof}

Our above approach also extend to popular ordering CSPs from the literature such as the Betweenness and the Cyclic Ordering predicates~\cite{guruswami2008beating,guruswami2015towards}. For the analogous bounds, we defer the reader to Appendix~\ref{subsec:BTW}.
\newpage

\bibliographystyle{abbrvnat}
\bibliography{references.bib}
\newpage
\appendix
\begin{appendices}

\section{Preliminaries}\label{sec:prelim}

In this section, we recall the definitions of the two most important phylogenetic CSP problems, namely the triplet and quartet consistency problems, and then introduce the random phylogenetic CSP model. A more general definition of arbitrary ordering and phylogenetic CSPs is given in Appendix~\ref{sec:prelim-general}.
We first formally define triplet and quartet predicates.

\begin{definition}
A \emph{triplet constraint} $\mathit{triplet}$ (often denoted $(ab \mid c)$) is a constraint of arity $3$. It takes three leaves $a,b,c$ of a binary rooted tree and is satisfied if the least common ancestor of $a$ and $b$, denoted $\lca(a,b)$, lies in the subtree rooted at $\lca(a,c)$. Equivalently, the constraint is satisfied if $\lca(a,c) = \lca(b,c)$.
\end{definition}

\begin{definition}
A \emph{quartet constraint} $\mathit{quartet}$ (often denoted $(ab \mid cd)$) is a constraint of arity $4$. It takes four leaves $a,b,c,d$ of a binary rooted tree and is satisfied if the unique shortest paths from $a$ to $b$ and from $c$ to $d$ are vertex-disjoint.
\end{definition}


We now give a formal definition of a phylogenetic CSP with predicate $f_{\uprightsf{phy}}$. For now, the reader may assume that $f_{\uprightsf{phy}}$ is either a triplet or a quartet constraint. In the general case, however, $f_{\uprightsf{phy}}$ may be an arbitrary phylogenetic constraint, as described in Appendix~\ref{sec:prelim-general}. A predicate, or more generally a payoff function, $f_{\uprightsf{phy}}$ of arity $k$ takes as arguments the leaves of a tree $T$ together with the tree $T$ itself, and returns a value in $[0,1]$. For brevity, we will often omit the parameter $T$ when it is clear from the context to which tree the leaves belong.

\begin{definition}[Phylogenetic CSP]\label{def:phy-csp}
An instance $\calI_{f_{\uprightsf{phy}}} = (V,\C)$ of a phylogenetic CSP $\Gamma$ with payoff function $f_{\uprightsf{phy}}$ consists of a set of variables $V$ and a set of constraints $\C$ with payoff $f_{\uprightsf{phy}}$. Each constraint in $\C$ is a tuple of $k$ distinct variables from $V$, representing an application of $f_{\uprightsf{phy}}$ to these $k$ variables.

A solution to the instance is given by a bijective assignment $\phi$ of the variables in $V$ to leaves of an ordered tree $T$; the tree $T$ is also part of the solution. The value of a solution $\phi$, denoted by $\val(\phi,\calI)$, is the average value over all constraints in $\C$:
\begin{align*}
    \val(\phi,\calI)
    = \frac{1}{|\C|}\sum_{(u_1,\ldots,u_k)\in \C}
    f_{\uprightsf{phy}}(T,\phi(u_1),\phi(u_2),\ldots,\phi(u_k)).
\end{align*}
\end{definition}

We note that the results described in the following sections apply to arbitrary ordered phylogenetic CSPs, unless stated otherwise. However, we encourage the reader to focus on the triplet and quartet consistency problems, which are unordered phylogenetic CSPs. Throughout the paper, we use the notion of the best approximation factor achievable by a biased random assignment scheme, denoted by $\alpha^*(f_{phy})$ for a phylogenetic payoff function $f_{phy}$; this notion is defined in Definition~\ref{def:biased}. For triplet and quartet constraints, $\alpha^*(f_{phy}) = 1/3$, which is the approximation factor of the most basic random assignment algorithm that assigns variables to random leaves of an arbitrary tree.

Finally, we define a random instance of a phylogenetic CSP with payoff function $f_{\uprightsf{phy}}$.
\begin{definition}\label{def:random-instance}
A random instance $\calI_{f_{\uprightsf{phy}}}(n,m)$ of a phylogenetic CSP with payoff function $f_{\uprightsf{phy}}$ consists of a set $V$ of $n$ variables and a multiset of $m$ constraints. Each constraint is an ordered tuple $(X^1,\ldots,X^k)$ of \emph{distinct} variables from $V$, sampled independently and uniformly at random from the set of all such tuples.
\end{definition}

We use $\Phi(\calI)$ to denote the set of all solutions for instance $\calI$ and $\opt(\calI)$ to denote the value of an optimal solution in $\Phi(\calI)$. We say that an instance $\calI$ is satisfiable if $\opt(\calI)=1$ and unsatisfiable otherwise.

\subsection{Notion of Coarse Solutions.}\label{sec:coarse-soln} 
Following~\cite{chatziafratis2023triplet}, we define \emph{coarse solutions} of Ordered Phylogenetic CSPs.
For a fixed accuracy $\epsilon$, a \emph{coarse solution} $\xi$ is a mapping from the $n$ variables to a fixed-size binary tree, denoted as $T_\xi$, with only $q$ leaves ($q$ is a large constant depending on $\epsilon$). Each leaf of $T_\xi$ (referred to as a \textit{coarse leaf}) can be thought of as a ``bucket" containing a subset of variables, and each variable is assigned to exactly one coarse leaf. Furthermore, the coarse solution $\xi$ assigns a color to every coarse leaf. Therefore, variables located at the same coarse leaf of $T_\xi$ must have the same color, while variables assigned to different coarse leaves may have the same or different colors. Importantly, coarse solutions $\xi$ are such so that the number of variables having the same color is at most $\delta n$ and $\delta = O(\nicefrac{1}{q})$, so we have essentially partitioned the $n$ variables into roughly balanced parts. We write that such a coarse solution $\xi$ is in class $\Xi_{\delta, q}$. We denote the coarse leaf assigned to variable $x$ by $\xi(x)$ and the color of $x$ by $\clr(\xi(x))$.

Once we have this bucketing into $q$ buckets, we use it to evaluate the performance of the coarse tree solution $\xi$ in two different ways, in terms of the fraction of input constraints that it satisfies. But there seems to be a problem since $T_{\xi}$ is not a true solution for the Phylogenetic CSP: in particular, what happens if two or more of the $n$ variables are mapped to the same bucket (this is certainly going to happen since we map $n$ variables to the much smaller set of $q$ leaves of the coarse solution)?



There are two value functions, $\val^-(\xi,\calI)$ and $\val^+(\xi,\calI)$%
, which represent the pessimistic and optimistic evaluations of the coarse solution, respectively. Let $f_{\uprightsf{phy}}(x_1,\dots,x_k)$ be a payoff function. If all variables $x_1,\dots, x_k$ have distinct colors in the coarse solution, i.e., $\clr(\xi(x_i)) \neq \clr(\xi(x_j))$ for all $i\neq j$, then we define
$$
f_{\uprightsf{phy}}^-(\xi(x_1),\dots,\xi(x_k)) = 
f_{\uprightsf{phy}}^+(\xi(x_1),\dots,\xi(x_k))
=f_{\uprightsf{phy}}(\xi(x_1),\cdots,\xi(x_k)).
$$
If two or more variables have the same color (i.e., $\clr(\xi(x_i)) = \clr(\xi(x_j))$ for some $i \neq j$), then we let
$f_{\uprightsf{phy}}^-(\xi(x_1),\dots,\xi(x_k))
= 0$ and
$f_{\uprightsf{phy}}^+(\xi(x_1),\dots,\xi(x_k)) = 1$. We then define $\val^-(\xi)$ and $\val^+(\xi)$ as the average of $f_{\uprightsf{phy}}^-$ and $f_{\uprightsf{phy}}^+$ over all constraints $(x_1,\dots,x_k)$:
\begin{align}
\label{eq:def:val-pm-1}
\val^-(\xi) &= 
\operatorname*{Avg}\limits_{(x_1,\dots,x_k)\in\calI}
f_{\uprightsf{phy}}^-(\xi(x_1),\dots,\xi(x_k))\\
\val^+(\xi) &= 
\label{eq:def:val-pm-2}
\operatorname*{Avg}\limits_{(x_1,\dots,x_k)\in\calI}
f_{\uprightsf{phy}}^+(\xi(x_1),\dots,\xi(x_k)).
\end{align}

We will need the following result from~\cite*{chatziafratis2023triplet}.
\begin{lemma}[Lemma 6.4 in~\citet{chatziafratis2023triplet}]\label{lem:cs-good}
Consider an instance $\calI$ of (ordered) Phylogenetic CSP. For every solution $\varphi$ and every positive $q$, there exists a coarse solution $\xi \in \Xi_{\frac{16}{q}, q}$ such that
$$
\val^+(\xi,\calI) \geq \val(\varphi,\calI).
$$    
\end{lemma}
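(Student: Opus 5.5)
We would construct $\xi$ directly from $\varphi$ by cutting the tree $T$ of $\varphi$ into $O(q)$ connected pieces, each containing few leaves. All leaves of one piece get one color, and the coarse tree $T_\xi$ is the ``quotient'' of $T$ by these pieces. The guiding observation is that $f^+_{\phy}$ equals $1$ automatically whenever two variables of a constraint share a color. So the only constraints we need to control are those whose variables lie in pairwise different pieces. For these we must ensure that the relative topology (and left/right order, in the ordered case) of the relevant coarse leaves in $T_\xi$ coincides with that of the original leaves in $T$. This gives $f^+_{\phy}(\xi(x_1),\dots,\xi(x_k)) \ge f_{\phy}(T,\varphi(x_1),\dots,\varphi(x_k))$ constraint by constraint, and averaging yields $\val^+(\xi,\calI)\ge\val(\varphi,\calI)$.

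\emph{Step 1: decomposition.} Set $s=n/q$, up to a constant. By the standard greedy bottom-up argument we select a set $S$ of $O(q)$ internal vertices of $T$ with two properties:
\begin{itemize}
\item every component of $T\setminus S$ contains at most about $2s$ leaves;
\item every component is adjacent to at most two vertices of $S$, one ``top'' and one ``bottom'' boundary.
\end{itemize}
To get this, we repeatedly mark the lowest vertex whose unmarked subtree has at least $s$ leaves. We then close $S$ under pairwise LCAs, which at most doubles $|S|$ and guarantees the two-boundary property. The constants here (at most $16/q$ leaves per color, at most $q$ coarse leaves) come from this counting, and we would tune $s$ to match the statement.

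\emph{Step 2: the coarse tree.} The tree $T_\xi$ is the tree spanned by $S$ and the root, with edges of $T$ between consecutive vertices of $S$ contracted to single edges. A component with boundaries $u$ above and $w$ below corresponds to a path from $u$ to $w$ in $T$, with pendant subtrees hanging to the left or to the right of it. We subdivide the coarse edge $(u,w)$ and attach two coarse leaves there, one on the left and one on the right. These collect the left-hanging and right-hanging leaves respectively, and both receive the color of the component. A component with only a top boundary (a clade) becomes a single coarse leaf. Each color class is one component, so it has at most $2s\le \delta n$ variables, and the number of coarse leaves is $O(|S|)\le q$.

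\emph{Step 3: correctness, the main obstacle.} We must check that for leaves $x_1,\dots,x_k$ in pairwise distinct components, every LCA relation and left/right relation among them in $T$ is reproduced by their coarse leaves. The point is that the LCA of two leaves in different components is either a vertex of $S$ or a vertex on the boundary path of one of the two components, namely the branching point where the other leaf's path leaves it. In the latter case the other leaf lies below $w$, so its position relative to the attachment point on the subdivided edge is reproduced correctly. This case analysis is where care is needed, especially for ordered predicates. I expect it to be the main obstacle, and I would handle it by proving that the map sending each leaf to its coarse leaf preserves the rooted ordered topology of any set of leaves drawn from distinct components.
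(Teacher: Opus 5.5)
Your proof is correct. This paper does not prove the lemma itself but imports it from \citet{chatziafratis2023triplet}, and your construction is the one that the shared colors in $\Xi_{\delta,q}$ are designed for: cut $T$ at an LCA-closed set of $O(q)$ vertices (put the root into $S$), turn each spine piece into a left and a right coarse leaf of one common color, and let $f^+_{\phy}=1$ absorb every constraint that meets a piece twice.

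To close Step~3, observe the following. Each LCA outside $S$ of two chosen leaves from distinct pieces lies on the spine of some piece and forces one of those leaves into that piece's pendant at this vertex, so distinct such LCAs lie in distinct pieces. Hence the map that fixes vertices of $S$ and sends a spine vertex to the attachment vertex of its left or right coarse leaf is injective on the LCAs of the chosen leaves and preserves ancestry and left/right order. That gives equality of the induced (ordered) patterns.
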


\subsection{General Phylogenetic and Ordering CSPs} \label{sec:prelim-general}

Here, we define ordering and phylogenetic payoff functions.

\subsubsection{Ordering CSPs}
\label{subsec:ordering_definitions}
Our definition follows that of \cite{GHMRC11}. We start by the definition of the permutation pattern.
\begin{definition}[Permutation Pattern]
A permutation pattern $P(x_1,x_2,\ldots,x_k)$ is a permutation (ordering) of  $x_1,x_2,\ldots,x_k$, i.e. $x_{i_1}<x_{i_2}<\ldots<x_{i_k}$. Here $k$, the  arity of the pattern, is a fixed constant.
\end{definition}

Similarly to finite domain CSPs, an ordering CSP is defined by one (or more) \textit{ordering payoff function}. We now give the definition of an ordering payoff function.
\begin{definition}[Ordering payoff $f_{\uprightsf{ord}}$]
    An ordering payoff function $f_{\uprightsf{ord}}$ is described by a list of distinct ordering patterns of arity $k$ (out of $k!$ possible ordering patterns). Each pattern in the list is associated with a payoff, i.e. a real value in $[0,1]$. To evaluate the ordering payoff function on a permutation $\sigma$ of variables $l_1,l_2,\ldots,l_k$, denoted by $f(\sigma, l_1,l_2,\ldots, l_k)$, we search for a pattern in the list such that $\sigma$ matches $P$, i.e., if $x_{i_1}<x_{i_2}<\ldots<x_{i_k}$ according to $P$ then $\sigma(l_{i_1})<\sigma(l_{i_2})<\ldots <\sigma(l_{i_k})$. If no such pattern exists then the evaluation is $0$, otherwise the evaluation is the payoff associated with the pattern that matches $\sigma$. 
\end{definition}
To give a concrete example, for Betweenness, a constraint $a|b|c$ is satisfied by a permutation $\pi$ if $\pi(a)<\pi(b)<\pi(c)$ or $\pi(c)<\pi(b)<\pi(a)$, thus the payoff function in that case will be described by two permutation patterns $P_1(x_1,x_2,x_3)$ and $P_2(x_1,x_2,x_3)$, that are $x_1<x_2<x_3$ and $x_3<x_2<x_1$ respectively.
We say that an ordering payoff function $f_{\uprightsf{ord}}$ is satisfiable if there exists a pattern with payoff $1$, we call such patterns satisfying patterns. We will only deal with satisfiable payoff functions in this paper.

 An ordering CSP $\Lambda$ is defined by one or more payoff functions of arity $k$. Even though our results can be generalized to ordering CSPs with multiple payoff functions, for ease of presentation we focus on problems with only one payoff function for this paper. Note that Maximum Acyclic Subgraph, Betweenness and Cyclic Ordering are all ordering CSPs with one payoff function.

 We now give the formal definition of an ordering CSP.

\begin{definition}[Ordering CSP]
    An instance $\calI=(V,\C)$ of an ordering CSP $\Lambda$ with payoff function $f_{\uprightsf{ord}}$ is described by a set of variables $V$ and a set of constraints $\C$ with payoff $f_{\uprightsf{ord}}$. Every constraint in $\C$ is a tuple of $k$ distinct variables in $V$, representing a constraint $f_{\uprightsf{ord}}$ on the $k$ variables. A solution to the instance is given by a total ordering (permutation) $\pi$ of the variables in $V$. The value of a constraint $(u_1,u_2,\ldots,u_k)$ is the evaluation of its payoff $f_{\uprightsf{ord}}$ on the induced ordering $\pi(u_1), \pi(u_2), \ldots, \pi(u_k)$. The value of the solution $\pi$, denoted by $\val(\pi, \calI)$, is the average value over all constraints in $\C$:
\begin{align*}
    \val(\pi, \calI)=\frac{1}{|\C|}\sum_{(u_1,\ldots,u_k)\in \C}f_{\uprightsf{ord}}(\pi(u_1),\ldots,\pi(u_k)).
\end{align*}
\end{definition}

\subsubsection{Phylogenetic CSPs}
\label{subsec:phylogenetic_definitions}
We now define a general phylogenetic payoff function $f_{\phy}$, which is defined on the leaves of an ordered binary tree (that is, a binary tree in which each internal node has a designated left and right child, inducing a left-to-right order on the leaves). Note that every ordering payoff function can be viewed as a special case of a phylogenetic payoff function.
We begin by introducing the notion of tree patterns, which relies on the concepts of isomorphism and homeomorphism between trees.


\begin{definition}[Isomorphism of trees with labeled leaves]
    Let $T_1 = (V_1, E_1)$, $T_2 = (V_2, E_2)$ be trees and $u_1,u_2,\ldots,u_k$, $v_1,v_2,\ldots,v_k$ be distinct leaves in $T_1$ and $T_2$ respectively. We say that $T_1$ with labeled leaves $u_1,\ldots,u_k$ is isomorphic to $T_2$ with labeled leaves $v_1,\ldots,v_k$ if there exists a graph-theoretic bijection $g: V_1 \to V_2$ with
    \[
      \{v,w\}\in E_1\iff\{g(v),g(w)\}\in E_2,
      \quad\text{and}\quad
      g(u_i)=v_i \quad\forall i\in [k].
    \]
    \textbf{Note:} For ordered rooted trees, we additionally require that $g$ preserves the left-to-right order of vertices. That is, if node $v$ is the left (right) child of node $w$ in $T_1$ then $g(v)$ must be the left (right) child of node $g(w)$ in $T_2$.
\end{definition}


\begin{definition}[Homeomorphism of trees with labeled leaves]
\label{def:homeomorphism_of_trees}
    Let $T_1$, $T_2$ be trees with $u_1,u_2,\ldots,u_k$ and $v_1,v_2,\ldots,v_k$ being distinct leaves in $T_1$ and $T_2$ respectively. We say that $T_1$ with labeled leaves $u_1,\ldots,u_k$ is homeomorphic to $T_2$ with labeled leaves $v_1,\ldots,v_k$ if they can be transformed to isomorphic trees $T_1'$ and $T_2'$, by repeatedly using the following operations:
    \begin{enumerate}
    \item Removing every non-labeled leaf in $T_1$ or $T_2$ (i.e., any leaf other than $u_1, \ldots, u_k$ in $T_1$; and any leaf other than $v_1, \ldots, v_k$ in $T_2$);
    \item Smoothing out vertices of degree $2$ (except the root). If a vertex $w$ that is not the root in $T_1$ or $T_2$ has degree $2$, remove $w$ and connect its two neighbors (i.e., contract the path through $w$);
    \item Removing the root of $T_1$ or $T_2$ if its degree is $1$ and making its only child the new root.
    \end{enumerate}
     \textbf{Note:} For ordered rooted trees, the isomorphism between $T_1'$ and $T_2'$ must also preserve the order of vertices as defined above.
\end{definition}

\begin{definition}[Tree Pattern]
A tree pattern $P(x_1,x_2,\ldots,x_k)$ is a rooted, binary tree with $k$ leaves that are labeled by variable names $x_1,x_2,\ldots,x_k$. Here, $k$ is a fixed constant. In an ordered tree pattern we require that $P(x_1,x_2,\ldots, x_k)$ is also ordered.
\end{definition}

Let $T$ be a tree with $n$ leaves and $P(x_1,x_2,\ldots,x_k)$ be a pattern. We say that $T$ with labeled leaves $u_1,u_2,\ldots,u_k$ matches pattern $P(x_1,x_2,\ldots, x_k)$ if $T$ with labeled leaves $u_1,u_2,\ldots,u_k$ is homeomorphic to $P$ with labeled leaves $x_1,x_2,\ldots,x_k$. (For ordered patterns, we use the definition of homeomorphism for ordered trees.) When $T$ is clear from the context we equivalently say that leaves $u_1,u_2,\ldots,u_k$ match pattern $P$.

We now give a formal definition of phylogenetic payoff function. The definition follows \cite{chatziafratis2023triplet}.

\begin{definition}[Phylogenetic payoff $f_{\uprightsf{phy}}$]
    A phylogenetic payoff function $f_{\uprightsf{phy}}$ is described by a list of distinct (non-homeomorphic) trees (patterns) with $k$ leaves. The leaves are labeled $x_1,x_2,...,x_k$. Every tree in the list is associated with a payoff, i.e. a real value in $[0,1]$. To evaluate $f_{\uprightsf{phy}}$ on a tree $T$ with labeled leaves $l_1,l_2,\ldots, l_k$, denoted by $f(T,l_1,\ldots,l_k)$, we search for a pattern in the list such that $T$ with labeled leaves $l_1,l_2,\ldots,l_k$ is homeomorphic to that pattern (matches that pattern). If no such pattern exists then the evaluation is $0$, otherwise the evaluation is the payoff associated to the pattern that is homeomorphic to $T$. 
\end{definition}

Throughout the paper, we think of $f_{\uprightsf{phy}}$ as specifying a constraint for how variables $x_1,x_2,...,x_k$ should be related in the final tree $T$. Similarly to ordering payoff functions,  we say that a phylogenetic payoff function $f_{\uprightsf{phy}}$ is satisfiable if there exists a pattern with payoff $1$. We will only deal with satisfiable payoff functions $f_{\uprightsf{phy}}$ in this paper. We sometimes omit $T$ when it is clear from the context and write $f_{\uprightsf{phy}}(l_1,\ldots,l_k)$ instead of $f_{\uprightsf{phy}}(T,l_1,\ldots,l_k)$.

A phylogenetic CSP $\Gamma$ is defined by one or more payoff functions of arity $k$. Although our results extend to the more general setting with multiple payoff functions, for ease of presentation we restrict attention to phylogenetic CSPs with a single payoff function. In this case, we use Definition~\ref{def:phy-csp}.

\begin{remark}
\label{rmrk: Ordered phylogenetic CSPs}
In ordered phylogenetic CSPs there is a distinction between the left and right child of every internal node. Specifically,  it is easy to see that the class of ordered phylogenetic CSPs contains both ordering CSPs and Phylogenetic CSPs.
\end{remark}

Finally, we provide the definition of a biased random assignment given by \cite{chatziafratis2023triplet}.

\begin{definition}\label{def:biased}[\cite{chatziafratis2023triplet}]
A \emph{biased} random assignment algorithm for a Phylogenetic CSP with payoff function $f_{\uprightsf{phy}}$ is defined by a tree $T$ with $q$ leaves $1,\dots, q$ and probabilities $p_1,\dots,p_q$. First, the algorithm assigns variables in $V$ ($|V|=n$) independently to random leaves of 
 $T$. The probability of assigning variable $v$ to leaf $l$ is $p_l$. Then, it recursively partitions variables that got mapped to the same leaf,
splitting\footnote{In fact, after variables are assigned to the $q$ leaves of the tree $T$, one can arbitrarily separate the variables that got mapped to the same leaf, without affecting the performance of the best biased assignment, see also~\cite{chatziafratis2023triplet}.} them w.p. 1/2 to left or to right at every step, to get a hierarchy over the $n$ variables. We denote the approximation factor of this algorithm by $\alpha_{T,p}(f_{\uprightsf{phy}})$ and let $\alpha^*_{\uprightsf{biased}} (f_{\phy})= \sup_{T,p}(\alpha_{T,p}(f_{\uprightsf{phy}}))$.
\end{definition}

\subsection{\textsc{Build} algorithm}

\label{sec:build-algo}
\paragraph{\textsc{Build} algorithm for triplets.}
In this section we describe the \textsc{Build} algorithm of \citet{aho1981inferring} for the \textsc{Triplet Reconstruction} problem.\footnote{We should note that the algorithm proposed in \cite{aho1981inferring} works for a more general case of lower common ancestor constraints of the form $\lca(i,j)<\lca(k,l)$ ($\lca(i,j)$ is a descendant of $\lca(k,l)$).} .  In section \ref{sec:sharp-threshold} we use a gradual version of this algorithm ---we call it \textsc{Gradual Build}--- to determine the satisfiability threshold for \textsc{Triplet Reconstruction}.

As mentioned, the algorithm ``builds'' the tree in a top-down manner. In order to partition a given set $S$, the algorithm considers the induced set of constraints on that set and constructs the constraint graph. The set $S$ is partitioned in $S_1$ and $S_2$ so that there are no edges between the two sets (if such a partition is possible). The algorithm terminates either when all sets are singletons or a set cannot be partitioned without violating a triplet constraint. For the sake of completeness, we provide here a proof of correctness for \textsc{Build} (for pseudocode of the algorithm, see~Figure~\ref{fig:build-algo}).

\begin{figure}[ht]
\centering
\begin{tcolorbox}
\textbf{Build algorithm as used in triplet reconstruction (variables $S$, constraints $C$)}\\\\
Build($S$,$C$):
\begin{enumerate}

\item Remove all constraints involving variables outside $S$ to get set $C'$
\item If $S$ contains a single variable return a tree with a single node, that variable.
\item Construct constraint graph $G$ with parameters $S$, $C'$.
\item If $G$ is connected then instance is unsatisfiable, halt.
\item Otherwise, partition $S$ into $S_1,S_2\subseteq S$ such that there are no edges between them.
\item Solve the problem recursively for $(S_1,C')$ and $(S_2,C')$ to get trees $T_1,T_2$.
\item Combine $T_1$ and $T_2$ as the left and right subtree of a new tree $T$ and return $T$.
\end{enumerate}
\end{tcolorbox}
\caption{\textsc{Build} algorithm.}
\label{fig:build-algo}
\end{figure}

\begin{theorem}\label{thm:Build-Correct}(\cite{aho1981inferring})
    Let $\mathcal{I}=(V,C)$ be an instance of the triplet reconstruction problem. \text{Build}$(V,C)$ returns a tree satisfying all constraints if and only if such a tree exists.
\end{theorem}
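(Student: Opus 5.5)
We prove the two directions separately, by induction on $|V|$ following the recursion of \textsc{Build}. Here the constraint graph $G$ on $S$ has an edge $(a,b)$ for every constraint $ab|c$ with $a,b,c\in S$.

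\emph{Soundness.} Suppose \textsc{Build}$(V,C)$ returns a tree $T$. We show $T$ satisfies every constraint $ab|c$. Follow the recursion from the root down to the first call \textsc{Build}$(S,C')$ whose split $S_1,S_2$ separates some two of $a,b,c$. At that call all three variables lie in $S$, so the constraint belongs to $C'$ and $(a,b)$ is an edge of $G$. Because $S_1$ and $S_2$ are unions of connected components of $G$, the vertices $a$ and $b$ fall on the same side, say $S_1$, and hence $c\in S_2$. In $T$ the node created at this call is then $\lca(a,c)=\lca(b,c)$, while $\lca(a,b)$ lies inside the subtree $T_1$ below it. So $ab|c$ is satisfied. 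Every constraint is eventually separated in this way, since the recursion ends with singletons.

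\emph{Completeness.} Suppose some tree $T^*$ satisfies all constraints. We show \textsc{Build} never halts at step~4. The key fact is that satisfiability passes to subsets: for any $S\subseteq V$, the restriction of $T^*$ to the leaves in $S$ (homeomorphic reduction, Definition~\ref{def:homeomorphism_of_trees}) still satisfies every constraint with all three variables in $S$. Least common ancestor relations among leaves in $S$ are preserved under removing other leaves and smoothing degree-2 vertices.

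Now take a call on $S$ with $|S|\ge 2$. Let $A\,|\,B$ be the split of $S$ at the root of the restricted tree $T^*|_S$; both parts are nonempty. Suppose $G$ had an edge between $A$ and $B$, coming from a constraint $ab|c$ with $a\in A$, $b\in B$. Then $\lca(a,b)$ is the root of $T^*|_S$, and this root is an ancestor of $\lca(a,c)$, so the constraint is violated. This is a contradiction. Hence $G$ is disconnected, step~5 succeeds, and by induction the recursive calls on $S_1$ and $S_2$ also succeed, again using satisfiability of subsets.

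The termination and the "only if" parts combine in the obvious way. If no satisfying tree exists, soundness shows that \textsc{Build} cannot return a tree, so it must halt.

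The only delicate step is the restriction argument in the completeness direction. We need that homeomorphic reduction preserves the relation "$\lca(a,b)$ is a strict descendant of $\lca(a,c)$", and we handle it by noting that ancestor relations among the surviving vertices are unchanged. A second point to check is that an arbitrary partition into unions of components is fine in step~5, even when $G$ has more than two components. The soundness argument above handles this, since it only uses that no edge crosses the partition.
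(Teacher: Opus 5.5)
Your proof is correct and follows essentially the same route as the paper. Soundness comes from the first recursive call that separates $a,b,c$, and completeness from the fact that the top split of $T^*$ restricted to the current set $S$ is a cut of the constraint graph crossed by no edge. The paper instead takes the two subtrees of $\lca(S_f)$ in $T^*$ directly at the failing call, which avoids your homeomorphic-restriction step but is otherwise the same argument.
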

\begin{proof}
    We first show that if the instance is not satisfiable then the algorithm declares the instance unsatisfiable and halts. For the sake of contradiction assume that this is not the case and \text{Build}$(V,C)$ returns a tree $T$. Let $(u,v,w)\in C$ be a constraint violated by $T$. Consider the point where vertices $u$, $v$ and $w$ were separated. We know that $(u,v,w)\in C$ so it must have been the case that there was an edge connecting $u$ and $v$ with label $w$, meaning that $u$ and $v$ must have been in the same part of the partition and $w$ in the other. This results in the constraint being satisfied which in turn leads to a contradiction. We now turn our attention to the case where the instance is satisfiable and assume, for the sake of contradiction, that \text{Build}$(V,C)$ fails to find a tree satisfying all triplet constraints. Let $T^*$ be a tree satisfying all the constraints and let $S_f$ be the set of variables that cause the algorithm to fail, meaning that when the algorithm was considering the corresponding constraint graph $G_f$, $G_f$ was connected. Consider the lowest common ancestor, $u$, of all the variables $S_f$ in $T^*$. Notice that the left and right subtrees of node $u$ induce a non-trivial partitioning of $S_f$; we view this partitioning as a cut $(S_1,S_2)$ for graph $G_f$. Recall that $G_f$ is connected, meaning that there is at least one edge crossing the cut. Let $(u,v,w)$ be the constraint that this edge corresponds to with $u\in S_1$, $v\in S_2$ and $w\in S_f$. Notice that $w$ must either be in $S_1$ or $S_2$; in both cases the constraint $(u,v,w)$ is violated, leading to a contradiction.
\end{proof}

\section{Strong Refutation}\label{sec:refutation}
In this section, we present our results on the refutation of Phylogenetic CSPs for both unordered and ordered trees (the latter of which includes the class of ordering CSPs). These results generalize the work of~\citet*{allen2015refute} on refuting ordinary CSPs (Boolean or with fixed alphabet size $q$). We begin by formally defining a certifying algorithm.

\begin{definition}\label{def:certify}
We say that an algorithm certifies that its input satisfies a condition $\calE$ if, given the input, the algorithm either succeeds or fails, and it succeeds only if the condition $\calE$ holds. In other words, the algorithm must fail on all inputs that do not satisfy $\calE$. However, it may also fail on some instances that do satisfy $\calE$.
\end{definition}

We are interested in polynomial-time certifying algorithms that succeed with high probability on random CSP instances. Our main theorem is as follows.

\begin{theorem}\label{thm:main-refute}
There exists a polynomial-time algorithm that certifies the optimal value of a Phylogenetic CSP instance $\calI_{f}$ with a $k$-ary payoff function $f$ is at most $\alpha^*(f) + \epsilon$,
where $\alpha^*(f)$ denotes the optimal biased random assignment value (see Definition~\ref{def:biased}).
The algorithm succeeds with high probability for random Phylogenetic CSP instances $\calI_{f}(n, m)$ with $m \geq C_{\ref{thm:main-refute}}(k, \epsilon)\, n^{k/2}\,\log^{3}n$ ($m \geq C_{\ref{thm:main-refute}}(k, \epsilon)\, n^{k/2}$ for even $k$), where $C_{\ref{thm:main-refute}}(k, \epsilon)$ is a function depending only on $k$ and $\epsilon$.
\end{theorem}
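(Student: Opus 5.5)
We reduce Theorem~\ref{thm:main-refute} to strong refutation of an ordinary finite-alphabet CSP \emph{without negations}, using coarse solutions. Fix $\epsilon$ and choose a constant $q=q(\epsilon,k)$ large. By Lemma~\ref{lem:cs-good}, every solution $\varphi$ is dominated by a coarse solution $\xi\in\Xi_{16/q,q}$: $\val(\varphi,\calI)\le\val^+(\xi,\calI)$. The coarse trees $T_\xi$ with $q$ leaves, together with their leaf colorings, form a finite family $\mathcal{T}_q$ whose size depends only on $q$. For each fixed $(T,\text{coloring})\in\mathcal{T}_q$ we obtain an ordinary CSP over the alphabet $[q]$ (variable $x\mapsto$ its coarse leaf) with payoff $f^+_T:[q]^k\to[0,1]$. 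It therefore suffices to certify, for each of the finitely many $T$, that
\[
\max_{\sigma:V\to[q],\ \text{color classes of size}\le 16n/q}\ \operatorname*{Avg}_{C\in\calI} f^+_T(\sigma(C))\le\alpha^*(f)+\epsilon,
\]
and then take the conjunction of these certificates. The colliding-color tuples contribute at most $O(k^2/q)$ in expectation under the balance constraint, and we will charge them to $\epsilon$.

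\textbf{Step 1: Fourier expansion without negations.} Expand $f^+_T$ in an orthonormal basis of functions on $[q]^k$ with respect to the uniform measure: $f^+_T=\sum_{S\subseteq[k]}f_S$, where $f_\emptyset=\E_{\text{unif}}f^+_T$. Since $\sigma$ is nearly balanced (each color class, hence each leaf, has size at most $16n/q$ in $T_\xi$), but individual leaves can still carry non-uniform mass, the constant and low-degree terms are not automatically bounded by $\alpha^*$. We therefore treat the low-order part as follows. For $|S|\le t$, the value $\operatorname*{Avg}_C f_S(\sigma(C))$ concentrates, uniformly over all $\sigma$, around its value under independent sampling from the empirical leaf distribution $p_\sigma$ of $\sigma$. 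This holds since $m\ge n^{k/2}\gg n$, via a union bound over the $q^n$ assignments together with Chernoff. The sum over all $S$ of these "product" evaluations equals exactly the biased random-assignment value $\alpha_{T,p_\sigma}\le\alpha^*(f)$, up to the $O(1/q)$ collision error. This is precisely where the absence of negations forces the benchmark to be the biased value $\alpha^*$ rather than the uniform mean.

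\textbf{Step 2: certify the high-order terms.} Center each indicator, writing $\one[\sigma(x)=a]=p_\sigma(a)+\bigl(\one[\sigma(x)=a]-p_\sigma(a)\bigr)$. Every term of the expanded sum is then a product over a subset $S$ of centered indicator functions times constants. For $|S|\ge 1$ this reduces to bounding $\bigl|\sum_{C\in\calI} w_C\prod_{i\in S}y_{C_i}\bigr|$ over vectors $y\in[-1,1]^n$, where the random instance supplies the structure. This is the standard setting of Allen--O'Donnell--Witmer: even-arity $|S|$ is handled with a spectral norm bound of the $n^{|S|/2}\times n^{|S|/2}$ flattening matrix, and odd arity with Cauchy--Schwarz, which produces the $\log^3 n$ loss for odd $k$. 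The certificate is polynomial-time, since it consists of eigenvalue computations. The coefficients $w_C$ depend on $p_\sigma$, but they can be absorbed by expanding in the $q$ fixed indicator coordinates, giving $q^k$ separate polynomials with $\sigma$-independent random coefficients. Summing, the high-order part is at most $\epsilon/2$ once $m\ge C(k,\epsilon)n^{k/2}\log^3 n$.

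\textbf{Main obstacle.} The delicate step is Step 1/2's decoupling. The centering uses $p_\sigma$, which depends on the unknown assignment, so the certified polynomial must be independent of $\sigma$. We plan to resolve this by certifying bounds on the random tensors $\sum_C \bigotimes_{i\in S}e_{C_i}$ that hold for \emph{all} vectors simultaneously, namely injective-norm-type spectral bounds. The payoff is then a fixed multilinear form in these tensors and in $p_\sigma$, and we maximize over $p_\sigma$ in the probability simplex on $q$ points by brute force over an $\epsilon$-net. That maximum is at most $\alpha^*(f)+\epsilon/2$ by Definition~\ref{def:biased}.
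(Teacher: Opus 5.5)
There is a genuine gap, and it is exactly where the paper locates the difficulty of CSPs without negations: the polynomials you certify never have mean-zero coefficients.

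\textbf{The gap.} In Step~2 you call bounding $\bigl|\sum_{C\in\calI}w_C\prod_{i\in S}y_{C_i}\bigr|$ over $y\in[-1,1]^n$ ``the standard setting of Allen--O'Donnell--Witmer.'' But Theorem~\ref{thm:AOW15} needs independent coefficients with $\E[w(T)]=0$. The count tensor $\sum_{C}\bigotimes_{i\in S}e_{C_i}$ has a rank-one mean of order $(m/n^{|S|})\,\one^{\otimes|S|}$, and its value at $y=\one$ is exactly $m$. So no spectral or injective-norm bound holding ``for all vectors simultaneously'' (your proposed resolution of the main obstacle) can be $o(m)$.

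Your device for making the coefficients $\sigma$-independent makes this concrete. Re-expanding $\one[\sigma(x)=a]-p_\sigma(a)$ into the $q$ indicator coordinates turns each centered term into an alternating sum of uncentered count polynomials. Each of these has size $\Theta(m)$ on the cube, so the cancellation that centering was supposed to supply is lost once they are bounded separately.

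\textbf{A secondary problem.} The union bound over $q^n$ assignments in Step~1 only shows the bound is true w.h.p.\ (this is the content of Lemma~\ref{lem:cs-bad}). A polynomial-time algorithm cannot check it, so it cannot be part of the certificate. Fortunately the $S=\emptyset$ term needs no certification: $\E_{a\sim p^{\otimes k}}f^+_T(a)\le\alpha^*(f)+8k^2/q$ holds deterministically for every balanced $p$. For the same reason the $\epsilon$-net over $p_\sigma$ is unnecessary.

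\textbf{The missing idea and how the paper supplies it.} You need to subtract the \emph{instance's} mean, rather than centering the assignment. Work in the model where each $T\in[n]^k$ is present independently with probability $p=m/n^k$, and let $y_{x,a}=\one[\sigma(x)=a]$. Then $\sum_{T}p\prod_{j}y_{T_j,a_j}=m\prod_j p_\sigma(a_j)$. Hence $\val^+$ equals the biased-assignment term plus $\frac1m\sum_a f^+_T(a)\sum_T(\one[T\in\calI]-p)\prod_j y_{T_j,a_j}$. This is a $\sigma$-independent family of $q^k$ polynomials with centered coefficients, and centering $\sigma$ becomes superfluous.

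This is what the paper does in Lemma~\ref{lem:random-vs-random}, with one more twist. Since $\one[T\in\calI]-p$ is nonzero with probability $1$, Theorem~\ref{thm:AOW15} does not apply verbatim. The paper therefore first compares $\calI'$ with an independent random instance $\calI''$; the coefficients $\one[T\in\calI']-\one[T\in\calI'']$ are symmetric and nonzero with probability $2p(1-p)$. It then obtains the comparison with $\E_{\calI''}[\val^+(\xi,\calI'')]$ by averaging over $L$ coupled copies, or via the derandomization in Appendix~\ref{sec:derandom-refutation}.

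\textbf{Repairing your own route.} Keep the full $k$-ary tensor. Because at least one coordinate is evaluated on a zero-sum vector $z_{\cdot,a}$, you may replace the coefficients $\one[T\in\calI]$ by $\one[T\in\calI]-p$ at no cost. You would still need either a version of Theorem~\ref{thm:AOW15} for such dense centered coefficients (e.g.\ via moment bounds in the trace method) or the paper's symmetrization.
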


The subscript in the constant is useful later in the proof, to avoid confusion with other constants that may appear. We also establish a similar result for predicates that do not support $t$-wise independence (see below for details).

Our proof proceeds by reducing the problem of refuting Phylogenetic CSPs to certifying that a random instance does not admit a good coarse solution (see Section~\ref{sec:coarse-soln}). 
The task of finding such a coarse solution can be viewed as an ordinary constraint satisfaction problem over a large domain. However, unlike the CSPs
considered by \citet{allen2015refute}, these do not have \textit{negations} (i.e., variables cannot appear negated). 
As a result, we cannot directly apply the refutation scheme developed by \citet{allen2015refute} for ordinary CSPs to this setting. 

Note that random negation patterns introduce an additional source of randomness, which makes refutation easier. In fact, for CSPs of even arity $k$, 
one can refute an ordinary CSP with negations even if the list of argument tuples $C$ is chosen adversarially. That is, 
if an adversary selects a set of $m > Cn^{3/2}$ tuples $(X_1,\dots,X_k)$, and the negation pattern for each tuple is then chosen uniformly at random, 
the resulting CSP can still be refuted with high probability (see Theorem~\ref{thm:even-allen2015refute}). In our setting, such coarse CSP would be completely adversarial as we do not have negations.
Nevertheless, we show that the techniques of \citet{allen2015refute} can be adapted to the Phylogenetic CSP setting.
Specifically, we take advantage of the following key technical result from their work (see also~\cite{coja2007strong,barak2016noisy} for related results). 
We also slightly improve the lower bound on $m$ in the theorem below, shaving off polylogarithmic factors in the case of even $k$.

\begin{theorem}[Theorem~4.1 in ~\cite{allen2015refute}]\label{thm:AOW15}
For $k \geq 2$ and $p \geq n^{-k/2}$, let \{$w(T)\}_{T \in [n]^k}$ be independent random variables such that for each $T \in [n]^{k}$,
\begin{align*}
    \mathbb{E}[w(T)] &= 0, \\
    \Pr[w(T) \neq 0] &\leq p,\\
    |w(T)| &\leq 1. 
\end{align*}
Then there is an efficient algorithm certifying that
\begin{equation}
\label{eq:bound-AOW15}
    \sum_{T \in [n]^{k}} w(T) 
    \cdot
    x_{T_1}\cdots x_{T_k}
    \leq 2^{O(k)} \sqrt{p} n^{3k/4} \log^{3/2} n, 
\end{equation}
for all $x \in \mathbb{R}^n$ with $\|x\|_{\infty} \leq 1$ with high probability.
\end{theorem}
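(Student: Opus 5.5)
The plan is to certify the bound with a spectral relaxation, following \cite{allen2015refute}. The case of even $k$ gets a sharper argument that removes the logarithmic factors. Write $P(x)=\sum_T w(T)\,x_{T_1}\cdots x_{T_k}$.

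\textbf{Even $k$.} Split each index tuple into halves, $T=(I,J)$ with $I,J\in[n]^{k/2}$, and form the $n^{k/2}\times n^{k/2}$ random matrix $W$ with entries $W_{I,J}=w(I,J)$. For $y=x^{\otimes k/2}$ we have $P(x)=y^\top W y$ and $\|y\|_2^2\le n^{k/2}$. Hence
\[
P(x)\le n^{k/2}\|W\|.
\]
The spectral norm is computable in polynomial time, so it is the certificate. The entries of $W$ are independent, centered, bounded by $1$, and have variance at most $p$. Row sums of variances are at most $pN$ with $N=n^{k/2}$. Since $p\ge n^{-k/2}$, we have $pN\ge 1$. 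Matrix Bernstein gives $\|W\|=O(\sqrt{pN\log N})$, which yields the stated bound with a $\sqrt{\log n}$ factor. To shave the logarithm entirely I would use the Bandeira--van Handel bound $\E\|W\|\lesssim \sqrt{pN}+\sqrt{\log N}$, followed by Talagrand concentration. This gives $\|W\|=O(\sqrt{pN})$ w.h.p. whenever $pN\gtrsim\log N$. In the sparse corner $pN<\log N$ I would fall back to the logarithmic bound, or trim high-degree rows first.

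\textbf{Odd $k$.} Write $k=2\ell+1$ and follow AOW. Apply Cauchy--Schwarz on the first coordinate:
\[
P(x)^2\le n\sum_{i}\Bigl(\sum_{U\in[n]^{2\ell}} w(i,U)\,x_U\Bigr)^2 .
\]
Expand the right-hand side as $n\sum_{U,U'}M_{U,U'}\,x_Ux_{U'}$, where $M_{U,U'}=\sum_i w(i,U)w(i,U')$. The matrix $M$ is indexed by pairs of $\ell$-tuples, i.e.\ rows $(U_1,U'_1)$ and columns $(U_2,U'_2)$ of dimension $n^{2\ell}$. Separate the diagonal terms with $U=U'$. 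Their contribution is at most $n\sum_{i,U}w(i,U)^2$, which is $\approx n\cdot p n^{k}$ by concentration and is certifiable. The off-diagonal part is a sum of products of independent entries. Its spectral norm is bounded with high probability by the trace method or matrix Bernstein after a degree truncation, giving $\tilde O(p\, n^{(k+1)/2}+\sqrt{p}\,n^{k/2})$ up to polylogarithmic factors. Combining the two pieces and taking square roots gives $P(x)\le 2^{O(k)}\sqrt{p}\,n^{3k/4}\log^{3/2}n$, using $p\ge n^{-k/2}$ to compare terms.

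\textbf{Main obstacle.} The hard step is the off-diagonal norm bound for odd $k$. The entries of $M$ are dependent, and heavy rows arise in the sparse regime. I would follow AOW's decoupling plus trace-moment computation, which is where the $\log^{3/2}$ factor comes from. The removal of logs for even $k$ depends on correctly invoking a log-free sparse random matrix norm bound.
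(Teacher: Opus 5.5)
Your even-$k$ argument is sound for the statement as posed, and it needs less than you give it. Matrix Bernstein actually yields $\|W\|\lesssim\sqrt{pN\log N}+\log N$. Since $pN\ge 1$, both terms fit inside the allowed $\log^{3/2}n$, so neither Bandeira--van Handel nor trimming is required. (For its own log-free even-$k$ bound, Theorem~\ref{thm:even-allen2015refute}, the paper uses a different certificate: an SDP over vectors indexed by half-tuples, then Grothendieck's inequality, then Hoeffding with a union bound over sign vectors. That bound holds even in the sparse regime and for adversarial supports.) The genuine gap is in the odd case, at exactly the step you flag as the obstacle. The bound you assert there is the wrong one.

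With $k=2\ell+1$ and $\|x^{\otimes 2\ell}\|_2^2\le n^{2\ell}$, the off-diagonal part contributes at most $n\cdot n^{2\ell}\|A_{\mathrm{off}}\|=n^k\|A_{\mathrm{off}}\|$ to $P(x)^2$. The target is $P(x)^2\le 2^{O(k)}p\,n^{3k/2}\log^3 n$, so you need $\|A_{\mathrm{off}}\|\le 2^{O(k)}p\,n^{k/2}\log^3 n$.

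Your bound $\tilde O(pn^{(k+1)/2}+\sqrt p\,n^{k/2})$ does not deliver this. The first term only gives $P\lesssim\sqrt p\,n^{(3k+1)/4}$, off by $n^{1/4}$. The second only gives $P\lesssim p^{1/4}n^{3k/4}$, off by $p^{-1/4}$, which is as large as $n^{k/8}$. The hypothesis $p\ge n^{-k/2}$ cannot help, since $\sqrt p\ge p$. So your closing ``combining the two pieces'' step is false as written.

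Both terms are artifacts:
\begin{itemize}
\item $pn^{(k+1)/2}$ is the norm of $\sum_i\E[W_i\otimes W_i]$, where $W_i[U_1,U_2]=w(i,U_1,U_2)$. That mean lives entirely on the $U=U'$ terms you already split off.
\item $\sqrt p\,n^{k/2}$ is what you get by charging variance $p$ instead of $p^2$ to a product $w(i,U)w(i,U')$ of two independent sparse entries.
\end{itemize}

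The right scale is $\sqrt{n^{k-1}\cdot np^2}=pn^{k/2}$, and proving it is the real content of the theorem. \cite{allen2015refute} (and the paper's derandomization section, via Lemma~\ref{lem:trace_bound}) use the trace bound $\E\tr((AA^\top)^r)\le n^{O(k)}2^{O(r)}r^{6r}p^{2r}n^{kr}$. With $r=\Theta(\log n)$ and Markov, this gives $\|A_{\mathrm{off}}\|\le 2^{O(k)}pn^{k/2}\log^3 n$, and the square root of that $\log^3 n$ is the $\log^{3/2}n$.

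Your Bernstein idea can also be made to work, but not through entrywise independence. Write $A_{\mathrm{off}}=\sum_i Z_i$, where $Z_i$ is $W_i\otimes W_i$ with its diagonal block removed. This is a sum of independent centered matrices with $\|\sum_i\E Z_iZ_i^\top\|\le 2p^2n^k$. Truncate on the event that $\|W_i\|$ is small, so that $\|Z_i\|\lesssim pn^{\ell}\log n+\log^2 n$. Using $pn^{k/2}\ge 1$, this again gives $\|A_{\mathrm{off}}\|\lesssim k\,pn^{k/2}\log^3 n$.
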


\begin{remark}
In Theorem~\ref{thm:even-allen2015refute}, we slightly improve the bound~\eqref{eq:bound-AOW15} for even $k$ to
$$
    \sum_{T \in [n]^{k}} w(T)  \cdot
    x_{T_1}\cdots x_{T_k}\leq 2^{O(k)} \sqrt{p} n^{3k/4}.
$$
\end{remark}

We first use this theorem to prove the lemma below for ordinary CSPs without negations, and then apply this lemma to coarse solutions of Phylogenetic CSPs. 


\begin{lemma}\label{lem:random-vs-random}
I. Let $f$ be a payoff function over an alphabet $\Sigma$, with $|\Sigma|= q$. Then, for an ordinary CSP defined by $f$, there exists a polynomial-time algorithm that certifies 
\begin{equation}\label{eq:lem:random-vs-random-1}
\max_{\xi: [n] \to \Sigma}    
\big( \val(\xi,\calI'_{f}) 
-   
\val(\xi,\calI''_{f}) \big) 
\leq \epsilon
\end{equation}
for given instances $\calI'_{f}$ and $\calI''_{f}$ of the CSP $\Gamma$. The algorithm succeeds with high probability for random instances $\calI'_f$ and $\calI''_f$ with $n$ variables and $m \ge C_{\ref{lem:random-vs-random}}(q,k,\epsilon) n^{k/2} \log^{3} n
$ ($m \geq C_{\ref{lem:random-vs-random}}(q,k, \epsilon)\, n^{k/2}$ for even $k$) constraints, where $C_{\ref{lem:random-vs-random}}(q,k,\epsilon) = q^{O(k)}\epsilon^{-2}$ is a function of $q$, $k$, and $\epsilon$.

II. Furthermore, there exists an algorithm certifying that for a given instance $\calI'_f$, we have
\begin{equation}\label{eq:lem:random-vs-random-2}
\max_{\xi: [n]\to \Sigma}    
\big( \val(\xi,\calI'_{f}) 
-   
\mathbb{E}_{\calI''_f} \big[\val(\xi,\calI''_{f}) \big] \big) 
\leq \epsilon.
\end{equation}
Above, the expectation is taken over a random choice of instance $\calI''_f$ with $n$ variables and $m$ constraints. The algorithm succeeds with high probability for random instances $\calI'_f$ with $n$ variables and $m$ constraints, provided that $m$ satisfies the bound from Part~I.
\end{lemma}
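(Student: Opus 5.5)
We will prove Part I by writing the difference of values as a low-degree polynomial with random, mean-zero coefficients and invoking Theorem~\ref{thm:AOW15} once for every pattern of colors. Part II will then follow from Part I by an averaging argument.

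\textbf{Part I: reduction to polynomials.} We expand $f$ over indicator variables. For each $\xi:[n]\to\Sigma$ and each $\sigma\in\Sigma$, let $x^\sigma\in\{0,1\}^n$ be given by $x^\sigma_i=\ind{\xi(i)=\sigma}$. For a tuple $T=(T_1,\dots,T_k)$ of distinct variables,
\[
f(\xi(T_1),\dots,\xi(T_k))=\sum_{\sigma\in\Sigma^k} f(\sigma)\, x^{\sigma_1}_{T_1}\cdots x^{\sigma_k}_{T_k}.
\]
Let $N'(T)$ and $N''(T)$ be the multiplicities of $T$ in $\calI'_f$ and $\calI''_f$. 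Then
\[
m\bigl(\val(\xi,\calI'_f)-\val(\xi,\calI''_f)\bigr)=\sum_{\sigma\in\Sigma^k} f(\sigma)\sum_{T} \bigl(N'(T)-N''(T)\bigr)\prod_{j} x^{\sigma_j}_{T_j}.
\]
Two features of $w(T)=N'(T)-N''(T)$ matter here. It has mean zero because both instances are drawn from the same distribution; this is exactly what replaces the randomness of negations in~\cite{allen2015refute}. It is nonzero with probability at most about $2m/n^k$.

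\textbf{Part I: applying Theorem~\ref{thm:AOW15}.} The theorem requires independent coordinates bounded by $1$, so we proceed as follows.
\begin{enumerate}
\item Poissonize, i.e.\ draw $\mathrm{Pois}(m)$ constraints instead of exactly $m$. Then $N'(T)$ and $N''(T)$ are independent Poisson variables with mean $m/n^k$ (up to the distinctness normalization).
\item Truncate: w.h.p.\ every multiplicity is $O(\log n)$, or $O(1)$ once $m\le n^{k}$, so after scaling $|w(T)|\le 1$.
\item Apply the theorem separately to each of the $q^k$ polynomials, indexed by $\sigma\in\Sigma^k$, with $p\approx 2m/n^k$. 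Since $\|x^\sigma\|_\infty\le1$, each polynomial is certified to be at most $2^{O(k)}\sqrt{m/n^k}\,n^{3k/4}\log^{3/2}n=2^{O(k)}\sqrt m\, n^{k/4}\log^{3/2}n$.
\item Sum over $\sigma$ with $|f|\le 1$ and divide by $m$. The total is at most $\epsilon$ as soon as $m\ge q^{O(k)}\epsilon^{-2}n^{k/2}\log^3 n$. For even $k$ we use the improved bound of Theorem~\ref{thm:even-allen2015refute}, which drops the logarithm.
\end{enumerate}
Tuples with repeated indices are excluded by the model and contribute nothing. We then de-Poissonize, either by coupling or by conditioning on the total count, which changes the values by $o(1)$.

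\textbf{Part II.} Here $\calI''_f$ is replaced by its expectation, which we handle by symmetrization. Using the same expansion, $m\,\val(\xi,\calI'_f)-m\,\E[\val(\xi,\calI''_f)]$ is a polynomial in the $x^\sigma$ with coefficients $N'(T)-\E N'(T)$. These coefficients are already mean zero, so Theorem~\ref{thm:AOW15} applies directly, after Poissonization and centering so that the coefficients are bounded. Alternatively, the algorithm can sample its own $\calI''_f$, run Part I, and separately certify $\val(\xi,\calI''_f)\le \E\val+\epsilon$ for all $\xi$ simultaneously. The direct route is cleaner, and we use it.

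\textbf{Main obstacle.} The delicate step is fitting the hypotheses of Theorem~\ref{thm:AOW15}: independence, sparsity probability $p\ge n^{-k/2}$, and boundedness. The trouble comes from the fixed-$m$ model and from Poisson tails. We will handle it by Poissonization with truncation, scaling $w$ by the $O(\log n)$ multiplicity bound (absorbed into constants or logs), and checking that the certified inequality transfers back to the fixed-$m$ instances.
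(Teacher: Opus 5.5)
Your Part I follows the paper's argument. You expand $f$ over color indicators. You note that each coefficient $N'(T)-N''(T)$ of the $q^k$ pattern polynomials is symmetric, mean zero, and nonzero with probability $O(m/n^k)$; this is indeed what replaces random negations. You then apply Theorem~\ref{thm:AOW15} (or Theorem~\ref{thm:even-allen2015refute} for even $k$) to each polynomial. One small step is implicit: the theorem is stated for a single vector, so you need to stack the vectors $x^{\sigma}$ into one vector in $[-1,1]^{qn}$, as the paper does with $N=qn$. The paper works in the $\calI_f(n,p)$ model, where coefficients lie in $\{-1,0,1\}$ automatically. Your Poissonize-and-truncate version also works, since symmetric truncation keeps the mean at zero and multiplicities are $O(1)$ w.h.p.\ in the relevant range of $m$.

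The gap is in Part II, in the ``direct route'' you chose. Mean zero is not the only hypothesis of Theorem~\ref{thm:AOW15}. It also needs $\Pr[w(T)\neq 0]\le p$, and the certified bound scales with $\sqrt{p}$. The centered coefficients $N'(T)-\E N'(T)$ are nonzero for essentially every tuple; in the $\calI_f(n,p)$ model, $\ind{T\in\calC'}-p\neq 0$ always. So the theorem as stated applies only with $p=1$, and it then certifies a bound of order $n^{3k/4}\log^{3/2}n$. That is at most $\epsilon m$ only when $m\gtrsim n^{3k/4}$, not at density $n^{k/2}\log^3 n$. Theorem~\ref{thm:even-allen2015refute} fails the same way, because its Hoeffding step would run over all $n^k$ tuples. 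To exploit the small variance of centered coefficients you must reopen the trace-method proof. The paper does this only in its derandomization appendix, bounding $\E_{\calI''_f}[P_a(y)]$ through Cauchy--Schwarz, Jensen, and the matrix $\E_{\calI''_f}[A]$ built from the symmetrized weights.

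Your fallback does not help either. Certifying $\val(\xi,\calI''_f)\le\E[\val]+\epsilon$ for all $\xi$, for a single sampled $\calI''_f$, is just Part II again. The missing idea is the paper's coupled sampling:
\begin{enumerate}
\item Take $p=K/L$ and build $L$ instances $\calI^{(1)},\dots,\calI^{(L)}$ by putting each tuple into exactly $K$ of them, chosen at random.
\item Each $\calI^{(i)}$ is then marginally distributed as $\calI_f(n,p)$.
\item The average over $i$ of the total values equals the expectation exactly, for every $\xi$ simultaneously.
\item Run Part I on each pair $(\calI'_f,\calI^{(i)})$ with accuracy $\epsilon/2$.
\item By Markov's inequality, w.h.p.\ all but an $\epsilon/2$ fraction of the runs succeed.
\end{enumerate}
This yields~\eqref{eq:lem:random-vs-random-2} at the Part I density.
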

\begin{remark}\label{remark:lem:random-vs-random}
 Note that this lemma can be easily generalized to ordinary CSPs with multiple payoff functions, $f_1, \dots, f_s$. A random instance of such a CSP is specified by the number of variables $n$, the number of constraints $m$, and the probabilities $\lambda_1, \dots, \lambda_s$ associated with the payoff functions $f_1, \dots, f_s$. In such random instances, each constraint is created as follows: we first select a random hyperedge $(x_1, \dots, x_k)$, and then choose a payoff function $f_i$ with probability~$\lambda_i$. In particular, this lemma can be applied to ordinary CSPs with negations: For a given payoff function $f$, we can construct $2^k$ functions $f_i$ by applying all possible patterns of negations to the inputs and assigning each function a probability of $1/2^k$. Therefore, this result is a generalization of Theorem~2.3 in \cite{allen2015refute}.
\end{remark}
\begin{proof}[Proof of Lemma~\ref{lem:random-vs-random}]
We first prove these results for random instances $\calI'_f$ and $\calI''_f$ drawn from the random model $\calI_f(n, p)$, in which each constraint is included independently with probability $p = m / n^k$. Specifically, we show how to certify that
\begin{equation}\label{eq:max-total-val}
\max_{\xi:[n]\to \Sigma} \Big(
\operatorname{TotalVal}(\xi,\calI'_f) - 
\operatorname{TotalVal}(\xi,\calI''_f)
\Big)
\leq \epsilon m,
\end{equation}
where 
$\operatorname{TotalVal}(\xi,\calI'_f)$ and $\operatorname{TotalVal}(\xi,\calI''_f)$ denote the total (rather than average) payoffs of the solution $\xi$ on the instances 
$\calI'_f$ and $\calI''_f$, respectively. The instances $\calI'_f$ and $\calI''_f$ are drawn independently from the $\calI_f(n,p)$ model. To simplify the exposition, we allow constraints $(x_1,\dots,x_k)$ to include repeated variables. Since the fraction of such constraints is negligible, this does not affect the value of the random CSP. We assume that every payoff function evaluates to $1$ on any constraint with repeated variables. We also suppose that the parameter $p$ is known to the algorithm.
We then use a standard argument to show that the same result holds for the desired models $\calI_f(n, m)$ and $\calI_f(n, p)$ with unknown $p$.

We write a polynomial integer program to bound the left-hand side of~\eqref{eq:max-total-val}. For every variable $x_i$ and every possible value $a \in \Sigma$, we introduce a decision variable  
$y_{i,a} \in \{0,1\}$. Each assignment $\xi: [n] \to \Sigma$ corresponds to an integer solution $y$ defined as follows:  
\[
y_{i,a} =
\begin{cases} 
1, & \text{if } \xi(i) = a, \\ 
0, & \text{otherwise}.
\end{cases}
\]

We define the following polynomial objective function:
\begin{multline*}
P(y) = 
\sum_{(x_1,\dots,x_k) \in \calC'}
\sum_{(a_1,\dots,a_k) \in \Sigma^k}
f(a_1,\dots,a_k) 
\prod_{j=1}^{k} y_{x_j,a_j}
\\
-
\sum_{(x_1,\dots,x_k) \in \calC''}
\sum_{(a_1,\dots,a_k) \in \Sigma^k}
f(a_1,\dots,a_k)
\prod_{j=1}^{k} y_{x_j,a_j}.
\end{multline*}

Observe that the values of the solution $\xi$ on the CSP instances $\calI'_f$ and $\calI''_f$ (i.e., $\operatorname{TotalVal}(\xi,\calI')$ and $\operatorname{TotalVal}(\xi,\calI'')$) are equal to the first and second terms of $P(y)$, respectively, for $y$ defined as above. This equality holds because the product $\prod_{j=1}^{k} y_{x_j,a_j}$ equals $1$ if $\xi(x_j) = a_j$ for all $j$, and $0$ otherwise. Consequently,
\[
f(\xi(x_1),\dots,\xi(x_k)) =
\sum_{(a_1,\dots,a_k) \in \Sigma^k}
f(a_1,\dots,a_k) 
\prod_{j=1}^{k} y_{x_j,a_j}.
\]

We do not impose any additional constraints in the polynomial integer program (such as $\sum_a y_{i,a} = 1$), making it a relaxation of the exact optimization problem in the left-hand side of~\eqref{eq:max-total-val}. We further relax the integrality condition $y_{i,a} \in \{0,1\}$ to allow $y_{i,a} \in [-1,1]$. This yields a continuous relaxation of the original task of maximizing
$
\operatorname{TotalVal}(\xi,\calI'_f) 
-   
\operatorname{TotalVal}(\xi,\calI''_f)
$
over assignments $\xi : [n] \to \Sigma$.

We certify that $P(y) \leq \epsilon m$ for all $y_{i,a} \in [-1,1]$ using Theorem~\ref{thm:AOW15}, which in turn implies inequality~\eqref{eq:max-total-val}.  To this end, we define $|\Sigma|^k=q^k$ polynomials $P_{a_1,\dots,a_k}$ for all $(a_1,\dots,a_k) \in \Sigma^k$:
\[
P_{a_1,\dots,a_k}(y) = 
\sum_{(x_1,\dots,x_k) \in \calC'}
\prod_{j=1}^{k} y_{x_j,a_j}
-
\sum_{(x_1,\dots,x_k) \in \calC''}
\prod_{j=1}^{k} y_{x_j,a_j},
\]
where $\calC'$ and $\calC''$ are the sets of constraints for instances $\calI'$ and $\calI''$. Then,
$$P(y) = \sum_{(a_1,\dots,a_k)\in \Sigma^k} P_{a_1,\dots,a_k}(y)f(a_1,\dots,a_k).$$
The  coefficients of each polynomial $P_{a_1,\dots,a_k}$ are random variables determined by the choice of the constraints in $\calI'_f$ and $\calI''_f$. 

Let us verify the conditions of Theorem~\ref{thm:AOW15} for every polynomial $P_{a_1,\dots,a_k}$. All coefficients are equal to $1$, $-1$, or $0$ and are thus upper bounded by $1$ in absolute value. Moreover, the coefficient of the monomial $\prod y_{x_j,a_j}$ is $1$ if $(x_1,\dots,x_k) \in \calC' \setminus \calC''$, $-1$ if $(x_1,\dots,x_k) \in \calC'' \setminus \calC'$, and $0$ otherwise. Thus, each coefficient takes the values $-1$ and $1$ with equal probability. Finally, the probability that a given coefficient is nonzero is $2p(1 - p)$. 

We run the algorithm from Theorem~\ref{thm:AOW15} to certify that each polynomial  
$P_{a_1,\dots,a_k}(y)$ is upper bounded by  
\[
2^{O(k)} \sqrt{2p} \, N^{3k/4} \log^{3/2} N,
\]
where $N = |\Sigma| n$ is the number of variables $y$.

The algorithm succeeds with high probability for each polynomial and, therefore, by the union bound, for all polynomials simultaneously (the number of polynomials is $q^k$ where $q$ and $k$ are constants).
Thus, we certify that  
\[
P(y) \equiv \sum_{(a_1,\dots,a_k)\in \Sigma^k} P_{a_1,\dots,a_k}(y) f(a_1,\dots,a_k) \leq  
|\Sigma|^k 2^{O(k)} \sqrt{2p} \, N^{3k/4} \log^{3/2} (N).
\]
Using that $p = m/n^k$, write 
\[
|\Sigma|^k 2^{O(k)} \sqrt{2p} \, N^{3k/4} \log^{3/2} N
\leq 
|\Sigma|^{O(k)} m^{1/2} n^{k/4} \log^{3/2} N.
\]
The right-hand side is upper bounded by $\epsilon m$ for  
$
m > C_{\ref{lem:random-vs-random}}(q,k,\epsilon)n^{k/2} \log^{3} n
$, where $C_{\ref{lem:random-vs-random}}(q,k,\epsilon) = q^{O(k)}\epsilon^{-2}$ is some function of $q$, $k$, and $\epsilon$.

II. We now prove Part II. We present a simple algorithm that certifies~\eqref{eq:lem:random-vs-random-2} with high probability over the choice of $\calI'_f$ and the \emph{internal randomness of the algorithm}. In Appendix~\ref{sec:derandom-refutation}, we show how to obtain a deterministic algorithm by analyzing the proof of Theorem~\ref{thm:AOW15} from~\cite{allen2015refute}.

Let us first assume that $p = K/L$, where $K$ and $L$ are integers upper bounded by $\mathrm{poly}(n)$. We create $L$ random instances $\calI^{(1)}_f, \dots, \calI^{(L)}_f$ of the CSP as follows: for each tuple $(x_1, \dots, x_k) \in [n]^k$, we choose $K$ random instances among the $L$ instances and add the payoff function $f$ on $(x_1, \dots, x_k)$ to each of them. Then, each instance $\calI^{(i)}_f$ is distributed according to $\calI_f(n, p)$.
Note that the instances $\calI^{(1)}, \dots, \calI^{(L)}$ are not independent. In fact, each tuple $(x_1, \dots, x_k) \in [n]^k$ appears in exactly $K$ of them. Thus, for every $\xi:[n]\to \Sigma$, we have
\begin{equation}\label{eq:expect-equals-empirical-avg}
\mathbb{E}_{\calI''_f} \big[\val(\xi,\calI''_{f}) \big] = 
\frac{1}{L}
\sum_{i=1}^L
\val(\xi,\calI^{(i)}).
\end{equation}
Now, for each $i$, we certify that  
\begin{equation}\label{eq:lem:random-vs-random-3}
\max_{\xi: [n] \to \Sigma}    
\left( \val(\xi,\calI'_f) 
-   
\val(\xi,\calI^{(i)}_f) \right) 
\leq \epsilon/2
\end{equation}  
using Part~I. For each instance $\calI^{(i)}_f$, the certification succeeds with high probability. Thus, the expected number of failures is $o(1) \cdot L$, and by Markov's inequality, we successfully certify~\eqref{eq:lem:random-vs-random-3} for a $(1 - \epsilon/2)$ fraction of the $L$ instances with high probability. In this case, we are guaranteed that the desired bound~\eqref{eq:lem:random-vs-random-2} holds, since by \eqref{eq:expect-equals-empirical-avg} we have
\begin{multline*}
\max_{\xi: [n]\to \Sigma}    
\big( \val(\xi,\calI'_{f}) 
-   
\mathbb{E}_{\calI''_f} \big[\val(\xi,\calI''_{f}) \big] \big) =
\max_{\xi: [n] \to \Sigma}    
\left( \val(\xi,\calI'_f) 
-   
\frac{1}{L}\sum_{i=1}^L\val(\xi,\calI^{(i)}_f) \right) 
\leq\\ \leq
\frac{1}{L}\sum_{i=1}^L
\max_{\xi: [n] \to \Sigma}    
\underbrace{
\left( \val(\xi,\calI'_f) 
-   
\val(\xi,\calI^{(i)}_f) \right)}_{\Delta_i} 
\leq \epsilon/2 + \epsilon/2 = \epsilon.
\end{multline*}
Here we used that for at least a $(1-\epsilon/2)$ fraction of indices $i$, we have $\Delta_i \le \epsilon/2$,
and for the remaining at most $\epsilon/2$ fraction of indices,
we trivially have $\Delta_i \le 1$, since the value of any CSP is at most $1$.

III. We now sketch an argument showing how to obtain the result for the $\calI_f(n,m)$ model from the one for the $\calI_f(n,p)$ model. Choose a small parameter $\rho > 0$ and set $p = (1 - \rho)m / n^k$. Then, a random instance drawn from the $\calI_f(n,p)$ model contains between $(1 - 2\rho)m$ and $m$ constraints with high probability. Moreover, the distributions $\calI_f(n,m)$ and $\calI_f(n,p)$ can be coupled so that, with high probability, the instance $\calI_f(n,m)$ consists of all constraints of $\calI_f(n,p)$, followed by at most $2\rho m$ additional constraints.
Thus, it suffices to refute the instance $\calI_f(n,p)$. A minor complication is that the algorithm does not know in advance how many constraints are in $\calI_f(n,p)$. However, it can try all possible values $m' \in [(1 - 2\rho)m,\, m]$. The same reasoning applies to the $\calI_f(n,p)$ model when $p$ is unknown.

If $p = (1 - \rho)m / n^k$ as above, and $\rho$ is rational, then $p$ is a rational number with both numerator and denominator polynomially bounded in $n$. (Otherwise, we may approximate $p$ by a rational number $K / L$ on a sufficiently fine grid.)

\end{proof}

\begin{proof}[Proof of Theorem~\ref{thm:main-refute}]
By Lemma~\ref{lem:cs-bad}, 
if $m/n$ exceeds a constant that depends on $\epsilon$, then there exists $q$ such that the following inequality holds for random instances $\calI_{f_{\uprightsf{phy}}}(n,m)$:
\[
\max_{\xi \in \Xi_{\frac{16}{q}, q}} \mathbb{E}\left[   
\val^+(\xi,\calI_{f_{\uprightsf{phy}}}(n,m)) \right]
\leq \alpha^*(f_{\uprightsf{phy}}) + \epsilon.
\]
Fix such a value of $q$, and certify that
\[
\max_{\xi: [n] \to \Sigma}    
\left( \val^+(\xi,\calI_{f}) 
-   
\mathbb{E}_{\calI_{f_{\uprightsf{phy}}}(n,m)} \left[\val^+(\xi,\calI_{f_{\uprightsf{phy}}}(n,m)) \right] \right)
\leq \epsilon
\]
using Lemma~\ref{lem:random-vs-random} (Part~II), for every coarse tree with $q$ leaves. Here, $\Sigma$ denotes the set of leaves of the coarse tree, and $f = f^+_{\uprightsf{phy}}$ is the payoff function associated with that coarse tree. With high probability, all certifications succeed, since the number of ordered full binary trees with $q$ leaves is given by the $(q-1)$st Catalan number and is upper bounded by $4^{q}$; in particular, this bound does not depend on $n$.
As a result, we certify that
\[
\max_{\xi \in \Xi_{\frac{16}{q}, q}}
\val^+(\xi,\calI_{f}) \leq \alpha^*(f_{\uprightsf{phy}}) + \epsilon.
\]
This yields the desired result because, by Lemma~\ref{lem:cs-good}, the optimal value of every Phylogenetic CSP instance $\calI_{f_{\uprightsf{phy}}}$ is upper bounded by $\val^+(\xi,\calI_{f_{\uprightsf{phy}}})$ for some coarse solution $\xi \in \Xi_{\frac{16}{q}, q}$.
\end{proof}

\subsection{Certifying near \texorpdfstring{$t$}{t}-wise independence}\label{subsec:t-wise}


We now show how to refute  ordered and unordered Phylogenetic CSPs that do not support $t$-wise independent distributions, obtaining results similar to those of~\cite*{allen2015refute} for ordinary CSPs with predicates that do not support $t$-wise uniformity. We provide the most popular examples of such Phylogenetic and Ordering CSPs, together with improved refutations, in Appendix~\ref{sec:bounds-uniform-distr}. As a byproduct, we also obtain a similar result for ordinary CSPs without negations.

The term $t$-wise independence is used in slightly different ways in the literature. Thus, we adopt the term \emph{$t$-wise $\mu$-independent distribution}. We say that a distribution $\mathcal{D}$ over $\Omega^k$ of $(X_1,\dots,X_k)$ is $t$-wise $\mu$-independent ($t\geq 2$) if  
(1) each $X_i$ has marginal distribution $\mu$, and  
(2) every subset of $t$ variables $X_{i_1}, \dots, X_{i_t}$ is mutually independent.

We define $\beta^*_t(f)$ for ordinary, ordering, and Phylogenetic CSPs as the maximum value achievable by a $t$-wise $\mu$-independent distribution. The formal definition is as follows.
For ordinary CSPs with a payoff function $f$ of arity $k$ and domain $\Omega$:
\begin{equation}\label{eq:def:beta:ordinary}
    \beta^*_t(f) =
    \sup_{\calD} \E_{(a_1, \dots, a_k) \sim \calD}[f(a_1, \dots, a_k)],
\end{equation}
where the supremum is taken over all $t$-wise $\mu$-independent distributions $\calD$ over $\Omega^k$, for some distribution $\mu$ on $\Omega$.
For Phylogenetic CSPs on ordered and unordered trees:
\begin{equation}\label{eq:def:beta:phy}
    \beta^*_t(f) =
    \sup_{T,\,\calD} \E_{(a_1, \dots, a_k) \sim \calD}[f^-(a_1, \dots, a_k)],
\end{equation}
where $T$ is a finite or infinite tree, and $\calD$ is a $t$-wise $\mu$-independent distribution on the leaves of $T$. The function $f^-$ is defined by $f^-(x_1,\dots,x_k) = f(x_1,\dots,x_k)$ if all $x_i$ are distinct, and $0$ otherwise. In this definition, we may also assume that $\mu$ is the uniform distribution. This can be achieved by first splitting each leaf into multiple leaves so that the probability mass of each leaf becomes approximately equal (note that splitting leaves can only increase the expected value of $f^-(a_1,\dots,a_k)$), and then approximating the resulting distribution by the uniform distribution over the new set of leaves.

For Ordering CSPs (a subclass of ordered Phylogenetic CSPs), this definition can be simplified: $\beta^*_t(f)$ is given by~(\ref{eq:def:beta:ordinary}), where the domain is $\Omega = [0,1]$ and $\calD$ must be a $t$-wise uniform-independent distribution.
Note that $\alpha^*(f) = \beta^*_k(f)$.

We prove the following theorem, which is a generalization of Theorem~\ref{thm:main-refute}.

\begin{theorem}\label{thm:t-wise-phy-CSP-refute}
There exists a polynomial-time algorithm that certifies the optimal value of a Phylogenetic CSP instance $\calI_{f_{\uprightsf{phy}}}$ with a $k$-ary payoff function $f_{\uprightsf{phy}}$ is at most $\beta^*_t(f_{\uprightsf{phy}}) + \epsilon$, for any given $t \geq 2$.
The algorithm succeeds with high probability for random Phylogenetic CSP instances $\calI_{f_{\uprightsf{phy}}}(n, m)$ with
$
m \geq C_{\ref{thm:t-wise-phy-CSP-refute}}(k, \epsilon)\, n^{t/2} \log^{3} n
$
($
m \geq C_{\ref{thm:t-wise-phy-CSP-refute}}(k, \epsilon)\, n^{t/2},
$ for even $t$),
where $C_{\ref{thm:t-wise-phy-CSP-refute}}(k, \epsilon)$ is a function depending only on~$k$ and~$\epsilon$.
\end{theorem}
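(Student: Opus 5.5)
The plan is to follow the template of the proof of Theorem~\ref{thm:main-refute}, with the $k$-wise comparison replaced by a comparison of only the $t$-wise marginals. In outline: reduce to coarse solutions via Lemma~\ref{lem:cs-good}, then for each of the at most $4^q$ coarse trees certify an upper bound on $\val^+(\xi,\calI)$ over all colorings $\xi:[n]\to\Sigma$ ($\Sigma$ = coarse leaves).

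\textbf{Step 1 (reduction and Fourier-type split).} Instead of comparing $\val^+$ with its random-instance expectation, I would split the payoff $f^+$ on the coarse tree into two pieces. Write $f^+ = f^- + g$, where $g$ is supported on tuples with a repeated color. Since each color class has at most $\delta n = O(n/q)$ variables, the $g$-part contributes $O(k^2/q)$ on a random instance. We certify this by applying Lemma~\ref{lem:random-vs-random} to the arity-$2$ predicate ``$\clr(x_i)=\clr(x_j)$'' on each of the $\binom{k}{2}$ pairs of positions, which needs only $m\gtrsim n$.

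\textbf{Step 2 (the main $f^-$ part, where the key idea enters).} By LP duality, $f^-\le \beta^*_t(f)$ fails to be implied pointwise. The standard argument of \cite{allen2015refute} shows the weaker statement we need: the failure to support a $t$-wise $\mu$-independent distribution with value exceeding $\beta^*_t+\epsilon$ gives a decomposition
\[
f^-(a_1,\dots,a_k)\le \beta^*_t+\epsilon+\sum_{S\subseteq[k],\,|S|\le t} h_S(a_S).
\]
Here each $h_S$ depends only on the coordinates in $S$ and satisfies $\E_{a\sim\mu^{\otimes S}}[h_S]=0$ for the marginal $\mu$ induced by $\xi$. I expect to obtain this by linear programming duality over distributions on $\Sigma^k$ whose $t$-marginals are close to $\mu^{\otimes t}$, using compactness (finite $\Sigma$) to get bounded $h_S$ with norm depending on $q,k,\epsilon$.

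The catch is that $\mu$ depends on $\xi$. I would handle this by guessing $\mu$ on an $\epsilon$-net of distributions on $\Sigma$, whose size depends only on $q$. The algorithm can add linear constraints forcing the empirical color frequencies to match the guess, or simply absorb the approximation error.

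\textbf{Step 3 (certify the small-arity terms).} Summing over constraints, $\sum_{C}h_S(\xi(C_S))$ is exactly the value of an ordinary CSP of arity $|S|\le t$ on the projected tuples $C_S$. Projections of a uniformly random $k$-tuple instance are uniformly random $|S|$-tuple instances. Its expectation over random instances is $m\,\E_{\mu_\xi^{\otimes S}}[h_S]\approx 0$ up to $O(1/n)$ from distinctness. Lemma~\ref{lem:random-vs-random} Part~II, applied with arity $|S|\le t$, then certifies that $\sum_C h_S\le \epsilon m$ for all $\xi$ once $m\ge C n^{t/2}\log^3 n$, or $m\ge Cn^{t/2}$ for even $t$ via the improved bound. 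Lower-arity pieces satisfy this threshold a fortiori.

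Combining the three steps certifies $\val^+(\xi)\le\beta^*_t+O(\epsilon)+O(k^2/q)$ for all coarse $\xi$, and choosing $q$ large finishes the proof. The union bound over trees, nets and subsets $S$ involves only constants.

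\textbf{Main obstacle.} I expect Step 2 to be the main obstacle: producing the dual certificate uniformly in $\mu$ with bounded coefficients, and matching the definition \eqref{eq:def:beta:phy}, which takes a supremum over trees with uniform leaf marginal, to the coarse setting with a non-uniform $\mu$. For the latter I would use the leaf-splitting remark following \eqref{eq:def:beta:phy} to pass from $\mu$ to uniform without decreasing $\beta^*_t$.
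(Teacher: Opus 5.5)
Your outline is correct, but it is essentially the LP dual of the paper's argument. The paper never constructs a certificate for $f^-$. Through Lemma~\ref{lem:eps-close-assignment-distr} it applies Lemma~\ref{lem:random-vs-random} to the indicator payoffs $\mathbf{1}\{x_{i_1}=a_{i_1},\dots,x_{i_t}=a_{i_t}\}$. This certifies, for every $\xi$ simultaneously, that all $t$-wise marginals of the empirical distribution of $(\xi(X_1),\dots,\xi(X_k))$ are within $\epsilon'$ of $\mu_\xi^{\otimes t}$. The paper then uses its Fourier-analytic robustness result, Theorem~\ref{thm:t-wise-mu-approx}, to replace that distribution by an exactly $t$-wise $\mu_\xi$-independent $\calD$ at total variation cost $\epsilon$. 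After that, $\E_{\calD}[f^-]\le\beta^*_t$ holds by definition. The bound $\E_{\calD}[f^+-f^-]\le\binom{k}{2}\cdot 16/q$ follows from pairwise independence of $\calD$ and the color-class bound. So the paper needs neither your separate arity-$2$ collision certification nor a net over $\mu$, since $\mu_\xi$ enters only the analysis, not the algorithm.

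Your route replaces Theorem~\ref{thm:t-wise-mu-approx} by LP duality plus compactness, which amounts to a non-quantitative version of the same robustness fact. The obstacle you flag does close. For the $\eta$-relaxed LP, the dual objective is bounded below by $z+\sum_S\E_{\mu^{\otimes S}}[y_S]+\eta\|y\|_1$. Taking $\E_{\mu^{\otimes k}}$ of the pointwise bound $f^-(a)\le z+\sum_S y_S(a_S)$ gives $z+\sum_S\E_{\mu^{\otimes S}}[y_S]\ge 0$, so $\|y\|_1\le(1+\epsilon)/\eta$ for every $\mu$. The parameter $\eta$ can be fixed uniformly in $\mu$ by compactness of the simplex: a subsequential limit of $\eta_j$-approximately $t$-wise $\mu_j$-independent distributions with $\eta_j\to 0$ is exactly $t$-wise independent. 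Since this bound on $\|h_S\|_\infty$ is set before the net, the net mesh can then absorb the error $\|h_S\|_\infty\,\|\mu_\xi^{\otimes S}-\mu^{\otimes S}\|_1$ without circularity.

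Your leaf-splitting step is unnecessary. Definition \eqref{eq:def:beta:phy} already allows arbitrary $\mu$, and the coarse $f^-$ (which requires distinct colors) is pointwise at most the $f^-$ of the definition (which requires distinct leaves).

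The trade-off is as follows. The paper's primal route gives explicit constants depending only on $q,k,\epsilon$, uniformly in the payoff, at the price of proving Theorem~\ref{thm:t-wise-mu-approx}. Yours gives a transparent certificate as a sum of low-arity functions, the natural analogue for non-uniform $\mu$ of a low-degree dominating polynomial. The cost is non-explicit constants, an extra net and union bound, and a dependence on $f$ through the dual norm unless you also run the compactness argument over payoffs.
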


We begin by proving several results for ordinary CSPs.
Consider a finite domain $\Omega$ and an assignment $\xi : [n] \to \Omega$. This assignment naturally induces a distribution $\mu_{\xi}$ on $\Omega$, where $\mu_{\xi}(a) = |\xi^{-1}(a)|/n$ is the fraction of elements in $[n]$ that are mapped to $a$. 

\begin{lemma}\label{lem:eps-close-assignment-distr}
There exists a polynomial-time algorithm that certifies the following property: given a set
of $k$-tuples $C\subseteq [n]^k$ and a finite domain $\Omega$ of size $q$, the following holds. For every assignment 
$\xi : [n] \to \Omega$, there exists a $t$-wise $\mu_{\xi}$-independent
distribution $\calD$ on $\Omega^k$ such that the distribution of the random variables $\xi(X_1), \dots, \xi(X_k)$ 
is $\epsilon$-close to $\calD$ in total variation distance, where $(X_1, \dots, X_k)$ is drawn uniformly at random from $C$. The algorithm succeeds with high probability if $C$ consists of $m$ random $k$-tuples sampled from $[n]^k$, for $m \geq C_{\ref{lem:eps-close-assignment-distr}}(q,k,\epsilon)\,n^{t/2}\log^{3}n$ ($m \geq C_{\ref{lem:eps-close-assignment-distr}}(k, \epsilon)\, n^{t/2}$ for even $t$), where $C_{\ref{lem:eps-close-assignment-distr}}(q,k,\epsilon)$ is some function of $q$, $k$,and $\epsilon$.
\end{lemma}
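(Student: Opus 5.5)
The plan is to certify that, for every $\xi$, the $t$-wise marginals of $(\xi(X_1),\dots,\xi(X_k))$ under $X$ uniform in $C$ are close to product measures $\mu_\xi^{\otimes t}$. Then I would use a standard correction argument to turn an approximately $t$-wise independent distribution into an exactly $t$-wise $\mu_\xi$-independent one nearby in total variation.

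\textbf{Step 1: marginal statistics.} Fix a subset $S\subseteq[k]$ with $|S|=s\le t$ and a pattern $(a_j)_{j\in S}\in\Omega^S$. The quantity
\[
\frac{1}{m}\sum_{X\in C}\prod_{j\in S}\one\{\xi(X_j)=a_j\}
\]
should be certified to lie within $\eta$ of $\prod_{j\in S}\mu_\xi(a_j)$. I would write $y_{i,a}=\one\{\xi(i)=a\}$, so this quantity is a degree-$s$ polynomial in $y$. Its expectation over a fresh random $C$ (with repeated coordinates allowed, as in the proof of Lemma~\ref{lem:random-vs-random}) equals
\[
\prod_{j\in S}\Big(\tfrac1n\sum_i y_{i,a_j}\Big)=\prod_{j\in S}\mu_\xi(a_j).
\]
The deviation is then $\sum_{T\in[n]^s} w(T)\prod_{j} y_{T_j,a_j}$. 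Here $w(T)=\frac1m\big(N_T-m/n^s\big)$, and $N_T$ counts the constraints whose projection to $S$ equals $T$.

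To bound this deviation I would invoke Theorem~\ref{thm:AOW15}, or Theorem~\ref{thm:even-allen2015refute} when $s$ is even, in arity $s$. The coefficients need rescaling to fit its hypotheses. I would work in the Poissonized model $\calI(n,p)$ with $p=m/n^k$. There the projected counts $N_T$ are independent Poisson variables with mean $m/n^s$. For $s=t$ and $m\approx n^{t/2}$ the mean $m/n^t\le1$, so the hypotheses roughly hold after truncation. The certified bound is then about $\sqrt{m/n^s}\,n^{3s/4}\log^{3/2}n/m$, which is at most $\eta$ once $m\ge C\eta^{-2}n^{s/2}\log^3 n$.

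If $s<t$ then $m/n^s$ may exceed $1$. In that case I would split each $N_T$ into $\lceil m/n^s\rceil$ independent thinned pieces, each with mean at most $1$. Alternatively I would use a simpler spectral or degree bound, since low-degree marginals concentrate easily. Both cases need $m\gtrsim n^{s/2}\le n^{t/2}$. The maximum over real $y\in[-1,1]$ dominates the maximum over indicator assignments, so this certifies the bound for all $\xi$ at once. A union bound over the at most $2^k q^k$ choices of $(S,a)$ costs only a constant factor.

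\textbf{Step 2: correcting to exact independence.} Let $\calD_\xi$ be the empirical distribution of $(\xi(X_1),\dots,\xi(X_k))$. After Step 1, every $\le t$-marginal of $\calD_\xi$ is $\eta q^t$-close to $\mu_\xi^{\otimes}$. I then need a $t$-wise $\mu_\xi$-independent $\calD$ within $\epsilon$ of $\calD_\xi$. The standard route is the Alon–Goldreich–Mansour / O'Donnell–Zhao style correction, adapted to an arbitrary product base measure: expand in an orthonormal basis for $L^2(\mu_\xi)$, zero out the low-order non-constant Fourier coefficients, and mix with $\mu_\xi^{\otimes k}$ to restore nonnegativity.

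The obstacle is that $\mu_\xi$ may put tiny mass on some symbols, which blows up the basis norms. I would handle it by restricting to symbols with mass at least $\gamma$. Symbols below this threshold are handled by moving their mass directly: first certify that their total frequency in each coordinate is small, which is again a degree-1 statement covered by Step 1. Then fix them with a cost of $O(k q\gamma)$ in total variation. The resulting loss is $\mathrm{poly}(q^k,\gamma^{-k})\,\eta$. Choosing $\gamma$ and then $\eta$ small as functions of $(q,k,\epsilon)$ gives total distance at most $\epsilon$ and fixes $C_{\ref{lem:eps-close-assignment-distr}}(q,k,\epsilon)$. I expect Step 2 to be the main obstacle, specifically making the correction lemma uniform over non-uniform, possibly degenerate $\mu_\xi$.
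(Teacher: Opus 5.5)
Your architecture matches the paper's. You certify approximate $t$-wise marginals for all $\xi$ at once by relaxing the indicators $y_{i,a}$ to $[-1,1]$ and using an AOW-type certificate in arity $t$. You then round to an exactly $t$-wise $\mu_\xi$-independent distribution by Fourier truncation and mixing with $\mu_\xi^{\otimes k}$, after merging light symbols.

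The step that fails as written is how you make the coefficients mean-zero. With $w(T)=N_T-m/n^s$, every coefficient is nonzero: it equals $-m/n^s$ whenever $N_T=0$. So the hypothesis $\Pr[w(T)\neq 0]\le p$ of Theorem~\ref{thm:AOW15} is violated outright, and truncation does not repair it. Theorem~\ref{thm:even-allen2015refute} does not rescue this either. Its Hoeffding-based bound $4K_G\sqrt{M/n^s}\,n^{3s/4}$ for $M$ weighted tuples here has $M=n^s$. It therefore only certifies deviations of order $n^{3s/4}$, which would need $m\gtrsim n^{3t/4}$ rather than $n^{t/2}$. This is exactly the ``no negations'' obstacle: there is no built-in randomness that is both centered and sparse.

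The missing idea is the paper's Lemma~\ref{lem:random-vs-random}. You compare $C$ with an independent random instance $\calI''$. Then $w(T)=\one\{T\in\calI'\}-\one\{T\in\calI''\}$ is symmetric, bounded by $1$, and nonzero with probability $2p(1-p)$, so the black box applies. To pass from $\calI''$ to $\E_{\calI''}[\val(\xi,\calI'')]$, the paper builds $L$ coupled copies whose average value equals this expectation exactly. It certifies against each copy and controls the failures by Markov's inequality (Part~II). Applying this to the payoff $\one\{x_{i_1}=a_{i_1},\dots,x_{i_t}=a_{i_t}\}$, which depends on only $t$ coordinates, gives your Step~1 at the $n^{t/2}$ rate.

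Alternatively, you could re-derive the AOW bounds under a second-moment hypothesis $\E[w(T)^2]\le p$. That would use Bernstein instead of Hoeffding in the even case, and $|\E w^j|\le \E w^2$ for $j\ge 2$ in the trace method. This is plausible, but it is not a black-box use of the theorems you cite.

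Two smaller points. First, separate certificates for $s<t$ are unnecessary. Lower-order marginal errors follow from the order-$t$ ones by summing out coordinates, at a factor $q^{t-s}$, so your thinning construction is moot.

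Second, in Step~2, ``moving the mass'' of light symbols must preserve the exact marginal $\mu_\xi$ and exact $t$-wise independence. The paper does this as follows:
\begin{enumerate}
\item It merges all light symbols together with the lightest heavy symbol, so the merged symbol has mass at least $\gamma/q$. Merging only the light ones could leave a light merged symbol.
\item It corrects on the coarsened alphabet.
\item It re-splits the merged symbol independently in each coordinate according to the conditional law of $\mu_\xi$.
\end{enumerate}
The total-variation cost is $O(kq\gamma+k\epsilon)$, matching your estimate. Using an orthonormal basis of $L^2(\mu_\xi)$ instead of the paper's $\mathbb{Z}_q^k$ characters is a legitimate alternative.
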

\begin{proof}
We certify that for every assignment $\xi : [n] \to \Omega$, the following holds: for every subset of $t$ distinct indices $i_1, \dots, i_t$ and every choice of labels $a_{i_1}, \dots, a_{i_t}$, we have
\begin{equation}
\label{eq:eps-mu-independ}
\left|
\Pr(\xi(X_{i_1}) = a_{i_1}, \dots, \xi(X_{i_t}) = a_{i_t}) 
- \mu_{\xi}(a_{i_1}) \cdots \mu_{\xi}(a_{i_t})
\right| \leq \epsilon'
\end{equation}
for a sufficiently small $\epsilon' > 0$ (to be specified later). We do so by running the certifying algorithm from Lemma~\ref{lem:random-vs-random} with the payoff function  
$f(x_1,\dots,x_k) = \mathbf{1}\{x_{i_1} = a_{i_1}, \dots, x_{i_t} = a_{i_t}\}$  
for every choice of distinct indices $(i_1, \dots, i_t)$ and labels $(a_1, \dots, a_t)$. Note that this payoff function depends only on the $t$ variables $x_{i_1}, \dots, x_{i_t}$, and therefore the algorithm can be applied as long as $m \ge C_{\ref{lem:random-vs-random}}(q,k,\epsilon) n^{t/2} \log^{3} n$ ($m \geq C_{\ref{lem:random-vs-random}}(k, \epsilon)\, n^{t/2}$ for even $t$) . Each run of the algorithm succeeds with high probability, so by the union bound over all such choices of indices and labels, all runs succeed with high probability.

Observe that if the random variables $\xi(X_1), \dots, \xi(X_k)$ satisfied condition~(\ref{eq:eps-mu-independ}) with $\epsilon' = 0$, then they would be $t$-wise $\mu_{\xi}$-independent. However, in our case, $\epsilon' > 0$. We claim that, nevertheless, condition~(\ref{eq:eps-mu-independ}) implies that the distribution of $(\xi(X_1), \dots, \xi(X_k))$ is $\epsilon$-close (in total variation distance) to some $t$-wise $\mu_{\xi}$-independent distribution.
For the uniform distribution $\mu$ over the binary domain, this claim follows from the work of~\cite*{alon2003almost} (see also~\cite{alon2007testing, goldreich2011three}).
For general (non-uniform) distributions $\mu$ over larger domains, we prove Theorem~\ref{thm:t-wise-mu-approx} in Appendix~\ref{app:lem:t-wise-mu-approx}. 

To apply Theorem~\ref{thm:t-wise-mu-approx}, we verify that its conditions hold. Let $f : \Omega^k \to \mathbb{C}$ be an arbitrary function depending on at most $t$ variables, say $a_{i_1}, \dots, a_{i_t}$. Then
\begin{multline*}
\Bigl|
\sum_{(a_1, \dots, a_k) \in \Omega^k}
\Big(
\Pr(\xi(X_1) = a_1, \dots, \xi(X_k) = a_k)
- \mu_{\xi}(a_1) \cdots \mu_{\xi}(a_k)
\Big)
f(a_1, \dots, a_k)
\Bigr|
=
\\
=
\Bigl|
\sum_{(a_{i_1}, \dots, a_{i_t}) \in \Omega^t}
\underbrace{\Big(
\Pr(\xi(X_{i_1}) = a_{i_1}, \dots, \xi(X_{i_t}) = a_{i_t})
- \mu_{\xi}(a_{i_1}) \cdots \mu_{\xi}(a_{i_t})
\Big)
f(a_1, \dots, a_k)}_{(*)}
\Bigr|.
\end{multline*}
Each expression $(*)$ is bounded in absolute value by $\epsilon'\|f\|_{\infty}$, and there are  $q^t$ such terms. Thus, the entire expression is upper bounded by $q^t \epsilon' \|f\|_\infty$. 

Theorem~\ref{thm:t-wise-mu-approx} guarantees the existence of a $t$-wise $\mu_{\xi}$-independent distribution $\calD$ that is $\epsilon$-close in total variation distance to the distribution of $(\xi(X_1), \dots, \xi(X_k))$, where $\epsilon = O\big((2q)^k (q^t \epsilon')^{1/(k+1)}\big)$. Choosing $\epsilon'$ sufficiently small completes the proof.
\end{proof}

We now prove Theorem~\ref{thm:t-wise-phy-CSP-refute}.

\begin{proof}[Proof of Theorem~\ref{thm:t-wise-phy-CSP-refute}]
As in the proof of Theorem~\ref{thm:main-refute}, let $q$ be a sufficiently large constant to be specified later. Let $(X_1,\dots,X_k)$ be a random constraint in $\calC$. Using the algorithm from Lemma~\ref{lem:eps-close-assignment-distr}, we certify that for every assignment $\xi : [n] \to [q]$, there exists a $t$-wise distribution $\calD$ on $[q]^k$ that is $\epsilon$-close (in total variation distance) to the distribution of $(\xi(X_1),\dots,\xi(X_k))$.
We show that if the certification algorithm succeeds (which happens with high probability), then the optimal value of the Phylogenetic CSP instance $\calI_{f_{\uprightsf{phy}}}$ is at most $\beta^*_t(f_{\uprightsf{phy}}) + 2\epsilon$.

The value of any solution to $\calI_{f_{\uprightsf{phy}}}$ is upper bounded by the value of a coarse solution, i.e., $\val^+(\xi, \calI_{f_{\uprightsf{phy}}})$ for some $\xi \in \Xi_{\frac{16}{q}, q}$. Let $\xi^* \in \Xi_{\frac{16}{q}, q}$ be the assignment that maximizes this value. The assignment $\xi^*$ maps variables in $[n]$ to the leaves of some tree, where each leaf is assigned a color, and at most a $16/q$ fraction of variables are mapped to any single color class. For convenience, we associate each leaf with an element of $[q]$ and treat $\xi^*$ as a function from $[n]$ to $[q]$.
Then, by Lemma~\ref{lem:eps-close-assignment-distr}, there exists a $t$-wise $\mu_{\xi^*}$-independent distribution $\calD$ on $[q]^k$ that is $\epsilon$-close to the distribution of $(\xi^*(X_1), \dots, \xi^*(X_k))$. Therefore,
\[
\val^+(\xi^*, \calI_{f_{\uprightsf{phy}}})
=
\E_{(X_1, \dots, X_k) \in \calC}[f^+(\xi^*(X_1), \dots, \xi^*(X_k))]
\leq \E_{(a_1, \dots, a_k) \sim \calD}[f^+(a_1, \dots, a_k)] + \epsilon.
\]
Also, by the definition of $\beta^*_t(f_{\uprightsf{phy}})$,
\[
\E_{(a_1, \dots, a_k) \sim \calD}[f^-(a_1, \dots, a_k)] \leq \beta^*_t(f_{\uprightsf{phy}}).
\]
It remains to bound the difference:
\[
\E_{(a_1, \dots, a_k) \sim \calD}[f^+(a_1, \dots, a_k) - f^-(a_1, \dots, a_k)] \leq \epsilon.
\]

Note that $f^+(a_1,\dots,a_k)$ and $f^-(a_1,\dots,a_k)$ differ only when $a_i$ and $a_j$ share the same color for some distinct $i,j \in \{1,\dots,k\}$. Since $\xi^*$ maps at most a $16/q$ fraction of variables to any single color, the marginal distribution $\mu_{\xi^*}$ assigns at most $16/q$ mass to each color class. Hence, the probability that $\clr(a_i) = \clr(a_j)$ for some $i \neq j$ is at most $16/q \cdot \binom{k}{2} < 8k^2 / q$. Choosing $q = 8k^2 / \epsilon$ ensures that this quantity is at most $\epsilon$, completing the proof.
\end{proof}

We conclude this section by stating an analogous result for ordinary CSPs without negations. We omit the proof, as it is essentially identical to that of Theorem~\ref{thm:t-wise-phy-CSP-refute}, except that it does not require handling the functions $f^+$, $f^-$, or different color classes.

\begin{tcolorbox}
\begin{theorem}\label{thm:ordinary-CSP-t-wise-refute}
There exists a polynomial-time algorithm that certifies the optimal value of an ordinary CSP instance $\calI_{f}$ with a $k$-ary payoff function $f$ is at most $\beta^*_t(f) + \epsilon$, for any given $t \geq 2$.
The algorithm succeeds with high probability for random CSP instances $\calI_{f}(n, m)$ with
$
m \geq C_{\ref{thm:ordinary-CSP-t-wise-refute}}(k, \epsilon)\, n^{t/2} \log^{3} n
$ ($
m \geq C_{\ref{thm:ordinary-CSP-t-wise-refute}}(k, \epsilon)\, n^{t/2}
$ for even $t$),
where $C_{\ref{thm:ordinary-CSP-t-wise-refute}}(k, \epsilon)$ is a function depending only on~$k$ and~$\epsilon$.
\end{theorem}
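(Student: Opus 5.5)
The plan mirrors the proof of Theorem~\ref{thm:t-wise-phy-CSP-refute}, but without coarse solutions. In an ordinary CSP over a finite domain $\Omega$ with $|\Omega|=q$, every solution is already an assignment $\xi:[n]\to\Omega$. The value of $\xi$ on $\calI_f$ is $\E_{(X_1,\dots,X_k)}[f(\xi(X_1),\dots,\xi(X_k))]$, where $(X_1,\dots,X_k)$ is a uniformly random constraint of $\calC$. The algorithm runs the certifier of Lemma~\ref{lem:eps-close-assignment-distr} on the constraint tuples $\calC$ with domain $\Omega$ and accuracy $\epsilon$. It outputs success exactly when that certifier succeeds. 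For $m\ge C(q,k,\epsilon)n^{t/2}\log^3 n$ (or $C n^{t/2}$ for even $t$), this happens with high probability. Since $q$ is a fixed constant of the CSP, the dependence of the constant on $q$ is absorbed into $C_{\ref{thm:ordinary-CSP-t-wise-refute}}(k,\epsilon)$.

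Suppose the certifier succeeds, and fix an arbitrary $\xi$. The certificate guarantees a $t$-wise $\mu_\xi$-independent distribution $\calD$ on $\Omega^k$ that is $\epsilon$-close in total variation to the law of $(\xi(X_1),\dots,\xi(X_k))$. Because $f$ takes values in $[0,1]$, we get
\[
\val(\xi,\calI_f)\le \E_{(a_1,\dots,a_k)\sim\calD}[f(a_1,\dots,a_k)]+\epsilon .
\]
Here $\calD$ is $t$-wise $\mu$-independent for $\mu=\mu_\xi$, so by definition~\eqref{eq:def:beta:ordinary} the expectation is at most $\beta^*_t(f)$. Taking the maximum over $\xi$ gives $\opt(\calI_f)\le\beta^*_t(f)+\epsilon$, as required.

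The one point to check is the handling of the random model. Lemma~\ref{lem:eps-close-assignment-distr} is stated for $m$ tuples sampled from $[n]^k$, while $\calI_f(n,m)$ uses tuples of distinct variables. Tuples with repeated variables make up only an $O(k^2/n)$ fraction, so the standard coupling or conditioning from Part III of Lemma~\ref{lem:random-vs-random} transfers the high-probability guarantee, at the cost of an additive $o(1)$ that is absorbed into $\epsilon$.

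No real obstacle remains. The substantive work is the approximation result Theorem~\ref{thm:t-wise-mu-approx}, which is used inside Lemma~\ref{lem:eps-close-assignment-distr}, and that lemma is assumed here.
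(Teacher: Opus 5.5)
Your proposal is correct and is exactly the argument the paper intends when it omits this proof as identical to that of Theorem~\ref{thm:t-wise-phy-CSP-refute} minus the $f^{\pm}$/color-class step: certify via Lemma~\ref{lem:eps-close-assignment-distr}, pay $\epsilon$ in total variation to pass to a $t$-wise $\mu_\xi$-independent $\calD$, and bound $\E_{\calD}[f]\le\beta^*_t(f)$ by definition. Your remarks on absorbing the $q$-dependence and on the distinct-variable model go beyond what the paper spells out; for the latter, use the coupling route (augmenting with independently generated repeated-variable tuples) rather than conditioning, because when $m\gg n$ the event that all sampled tuples are repeat-free has vanishing probability.
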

\end{tcolorbox}

\begin{remark} This theorem is similar to Theorem~2.4 in the work of~\cite{allen2015refute}. However, it applies to CSPs without negations. Examples of such CSPs include predicates on $k$ variables $x_1, \dots, x_k \in \{0,1\}$ that evaluate to $1$ if exactly $s$ of the variables are equal to $1$, and the rainbow $q$-coloring predicate on $k$ variables $x_1, \dots, x_k \in [q]^k$, which evaluates to $1$ if $x_1, \dots, x_k$ include every label (or color) from $[q]$. This theorem can also be extended to the setting with multiple payoff functions $f_1, \dots, f_s$ (see Remark~\ref{remark:lem:random-vs-random}). This variant generalizes the result of~\cite{allen2015refute}: For a given payoff function $f$, we can construct $2^k$ functions $f_i$ by applying all possible patterns of negations to the inputs and assigning each function a probability of $1/2^k$.
\end{remark}

\subsection{Derandomization of the Refutation Algorithm}\label{sec:derandom-refutation}

In this section we provide a deterministic algorithm that for every $a\in \Sigma^k$ certifies that for every $y\in [-1,1]^{N}$:
\begin{align*}
    \mathbb{E}_{\calI_f''}[P_a(y)]\leq 2^{O(k)}\sqrt{2p}N^{3k/4}\log^{3/2}(N).
\end{align*}
It follows from our analysis that this implies a deterministic algorithm for refutation. We will prove the claim in the more challenging case where $k$ is odd. Our analysis follows the proof Theorem 4.1 in \cite{allen2015refute}. Notice that the polynomial $P_a$ is over $N=n\cdot q$ variables indexed by $(i,j)$ for $i\in [n]$ (the set of variables) and $j\in [q]$ (the elements of the alphabet $\Sigma$), for notational convenience for $T\in [N]^k$, we will use $x(T)$ to refer to the tuple of variables induced by tuple $T$ and $a(T)$ to refer to the set of elements in $\Sigma$ induced by tuple $T$. With this notation, we can rewrite the polynomial $P_a(y)$ as:
\begin{align*}
    P_a(y)=\sum_{T\in [N]^k}\ind{a(T)=a}\left(\ind{x(T)\in \calI_f'}-\ind{x(T)\in \calI_f''}\right)\prod_{j\in T}y_j
\end{align*}
Following the notation in \cite{allen2015refute}, we let
\[
    w(T) = \ind{a(T)=a}\left(\ind{x(T)\in \calI_f'}-\ind{x(T)\in \calI_f''}\right),
\]
so we can simply write
\[
    P_a(y) = \sum_{T\in[N]^k} w(T) \prod_{j\in T} y_j.
\]
We let
\[
    W_i = \left| \sum_{T\in [N]^{k-1}} w(T,i) \prod_{j\in T} y_j \right|.
\]
We also define the $N^{k-1}\times N^{k-1}$ matrix $A$, indexed by $[N]^{k-1}$. For $i_1,i_2,j_1,j_2\in [N]^{\frac{k-1}{2}}$:
\begin{align}
\label{eq:definition_of_A}
    A_{(i_1,i_2),(j_1,j_2)}=\begin{cases}
        \sum_{l\in[N]}w(i_1,j_1,l)w(i_2,j_2,l)&\text{ if } (i_1,j_1)\neq (i_2,j_2)\\
        0 &\text{ otherwise}
    \end{cases}
\end{align}
We prove the following bound on $\mathbb{E}_{\calI_f''}[P_a(y)]$.
\begin{lemma}
\label{lem:bound_on_Exp_Pa}
For every $y\in [-1,1]^N$
    \begin{align*}
        \mathbb{E}_{\calI_f''}[P_a(y)]\leq \sqrt{N}\sqrt{N^{k-1}\|\mathbb{E}_{\calI_{f}''}[A]\|+\mathbb{E}_{\calI_f''}\left[\sum_{T\in [N]^k}w(T)^2\right]}
    \end{align*}
\end{lemma}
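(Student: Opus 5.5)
We follow the Cauchy--Schwarz route of \cite{allen2015refute}, but apply every step to the expectation over $\calI''_f$ rather than to a single sample. Throughout, $y\in[-1,1]^N$ is fixed and $k$ is odd. The instance $\calI'_f$ is also fixed, so only the coefficients $w(T)$ are random through $\calI''_f$.

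First we bound the expectation by a quadratic quantity. Splitting off the last coordinate of $T$ gives
\[
P_a(y)=\sum_{i\in[N]} y_i \sum_{T\in[N]^{k-1}} w(T,i)\prod_{j\in T}y_j \le \sum_{i} W_i .
\]
Jensen's inequality then gives $\mathbb{E}[P_a(y)]\le \mathbb{E}[\sum_i W_i]$. By Cauchy--Schwarz over $i$ (using $|y_i|\le 1$) and Jensen again ($\mathbb{E}\sqrt{Z}\le\sqrt{\mathbb{E}Z}$),
\[
\mathbb{E}\Big[\sum_i W_i\Big]\le \sqrt{N}\,\sqrt{\mathbb{E}\Big[\sum_i W_i^2\Big]} .
\]

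Next we expand $\sum_i W_i^2$. We write each $(k-1)$-tuple as $(i_1,j_1)$ with $i_1,j_1\in[N]^{(k-1)/2}$. Then
\[
\sum_{l} W_l^2=\sum_{(i_1,j_1),(i_2,j_2)}\sum_l w(i_1,j_1,l)\,w(i_2,j_2,l)\,y_{i_1}y_{j_1}y_{i_2}y_{j_2},
\]
where $y_{i_1}$ denotes the product of the coordinates of $y$ indexed by $i_1$. We separate the diagonal terms $(i_1,j_1)=(i_2,j_2)$ from the rest. The diagonal terms contribute at most $\sum_T w(T)^2$, since $y^2\le 1$.

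The off-diagonal terms form exactly the bilinear form $u^\top A v$ with the index pairing of \eqref{eq:definition_of_A}. Here $u_{(i_1,i_2)}=y_{i_1}y_{i_2}$ and $v_{(j_1,j_2)}=y_{j_1}y_{j_2}$, both vectors in $[-1,1]^{N^{k-1}}$. To make this match, we reorder the factors as $(i_1,i_2)$ versus $(j_1,j_2)$; the product of the four $y$-factors is unchanged by the reordering. Since $\|u\|_2,\|v\|_2\le \sqrt{N^{k-1}}$, we get $u^\top A v\le N^{k-1}\|A\|$.

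Taking expectations, linearity shows that the off-diagonal expectation equals $u^\top \mathbb{E}[A]v$, which is at most $N^{k-1}\|\mathbb{E}[A]\|$. Combining this with the diagonal bound gives the claim.

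The main obstacle is making this last step legitimate. We must pass the expectation inside before taking the operator norm, so the bound involves $\|\mathbb{E}A\|$ rather than $\mathbb{E}\|A\|$. This is valid only because $u$ and $v$ depend on $y$ alone and not on $\calI''_f$. The index bookkeeping, that the diagonal exclusion in $A$ coincides exactly with the separated terms, must also be checked carefully.
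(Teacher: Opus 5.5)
Your proposal is correct and follows essentially the same route as the paper. Both arguments bound $P_a(y)\le\sum_i W_i$, apply Cauchy--Schwarz and Jensen to reach $\sqrt{N}\sqrt{\mathbb{E}\sum_i W_i^2}$, split $\sum_i W_i^2$ into the quadratic form $(y^{\otimes k-1})^\top A\, y^{\otimes k-1}$ (your $u$ and $v$ are both exactly $y^{\otimes(k-1)}$) plus diagonal terms bounded by $\sum_T w(T)^2$, and use linearity of expectation to replace $A$ by $\mathbb{E}[A]$ before taking the operator norm. One small point: the first step $\mathbb{E}[P_a(y)]\le\mathbb{E}[\sum_i W_i]$ is just monotonicity of expectation rather than Jensen, which changes nothing.
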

\begin{proof}
    First observe that $P_a(y)\leq \sum_{i=1}^{N}W_i$. By Cauchy-Schwarz inequality, we have that:
    \begin{align*}
        \mathbb{E}_{\calI_f''}[P_a(y)]&\leq \mathbb{E}_{\calI_f''}\left[\sum_{i=1}^{N}W_i\right]\\
        &\leq \mathbb{E}_{\calI_f''}\left[\sqrt{N}\sqrt{\sum_{i=1}^{N}W_i^2}\right]\\
        &=\sqrt{N}\cdot \mathbb{E}_{\calI_f''}\left[\sqrt{\sum_{i}^{N}W_i^2}\right]
    \end{align*}
    By Jensen's inequality, we get that:
    \begin{align*}
        \mathbb{E}_{\calI_f''}\left[\sqrt{\sum_{i=1}^{N}W_i^2}\right]\leq \sqrt{\mathbb{E}_{\calI_f''}\left[\sum_{i=1}^{N}W_i^2\right]}
    \end{align*}
    We now bound $\mathbb{E}_{\calI_f''}\left[\sum_{i=1}^{N}W_i^2\right]$. Observe that for $A$ defined as in \ref{eq:definition_of_A}, we have that:
    \begin{align*}
        \sum_{i=1}^{N}W_i^2&=(y^{\otimes k-1})^{\top}Ay^{\otimes k-1}+\sum_{T,U\in [N]^{\frac{k-1}{2}}}\prod_{l\in T}y_l^2\prod_{j\in U}y_j^2\sum_{i=1}^{N}w(T,U,i)^2\\
        &\leq (y^{\otimes k-1})^{\top}Ay^{\otimes k-1}+\sum_{T\in [N]^k}w(T)^2
    \end{align*}
    Applying expectations, we have that:
    \begin{align*}
        \mathbb{E}_{\calI_f''}\left[\sum_{i=1}^{N}W_i^2\right]&\leq \mathbb{E}_{\calI_f''}\left[(y^{\otimes k-1})^{\top}Ay^{\otimes k-1}+\sum_{T\in [N]^k}w(T)^2\right]\\
        &=(y^{\otimes k-1})^{\top}\mathbb{E}_{\calI_f''}\left[A\right]y^{\otimes k-1}+\mathbb{E}_{\calI_f''}\left[\sum_{T\in[N]^k}w(T)^2\right]\\
        &\leq N^{k-1}\|\mathbb{E}_{\calI_f''}\left[A\right]\|+\mathbb{E}_{\calI_f''}\left[\sum_{T\in[N]^k}w(T)^2\right]
    \end{align*}
    Where from the second to the third line we use that $y\in[-1,1]^{N}$, meaning that $\|y^{\otimes k-1}\|\leq N^{\frac{k-1}{2}}$. The claim of the lemma follows.
\end{proof}
Observe that the right-hand side of the bound in Lemma \ref{lem:bound_on_Exp_Pa} can be efficiently computed by the refutation algorithm to certify that it is indeed less than or equal to $2^{O(k)}\sqrt{2p}N^{3k/4}\log^{3/2}(N)$. We now need to show that for random instance $\calI_f'\sim \calI_f(n,p)$, with high probability over the randomness of $\calI_f'$, we have that:
\begin{align*}
    \sqrt{N}\sqrt{N^{k-1}\|\mathbb{E}_{\calI_{f}''}[A]
    \|+\mathbb{E}_{\calI_f''}\left[\sum_{T\in [N]^k}w(T)^2\right]}\leq 2^{O(k)}\sqrt{2p}N^{3k/4}\log^{3/2}(N)
\end{align*}
To that end, it suffices to show the following lemma.
\begin{lemma}
    Let $\calI_{f}'\sim \calI(n,p)$, then with high probability, over the randomness of $\calI_f'$, we have that:\\
    I. \begin{align*}
        \|\mathbb{E}_{\calI_f''}[A]\|\leq 2^{O(k)}pN^{k/2}\log^3(N)
    \end{align*}
    II. \begin{align*}
        \mathbb{E}_{\calI_f''}\left[\sum_{T\in [N]^k}w(T)^2\right]\leq 3pn^{k} 
    \end{align*}
\end{lemma}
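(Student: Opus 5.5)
The plan is to compute the two expectations over $\calI''_f$ explicitly, conditioned on $\calI'_f$. This reduces Part~II to a counting bound and Part~I to a norm bound of the form in Theorem~\ref{thm:AOW15}, for a matrix built from independent, centered, bounded weights depending only on $\calI'_f$.

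\textbf{Part II.} Fix $a\in\Sigma^k$. Each $x\in[n]^k$ corresponds to exactly one $T\in[N]^k$ with $x(T)=x$ and $a(T)=a$, so $w(T)$ is supported on $n^k$ tuples. For such $T$ we have $w(T)^2=\ind{x(T)\in\calI'_f}\oplus\ind{x(T)\in\calI''_f}$, and $\calI''_f$ is independent of $\calI'_f$. Hence
$$
\mathbb{E}_{\calI''_f}\Big[\sum_T w(T)^2\Big]=\sum_{x}\big(\ind{x\in\calI'_f}(1-p)+\ind{x\notin\calI'_f}\,p\big)\le |\calI'_f|+pn^k .
$$
Since $\mathbb{E}|\calI'_f|=pn^k=m\to\infty$, a Chernoff bound gives $|\calI'_f|\le 2pn^k$ w.h.p., which yields the bound $3pn^k$.

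\textbf{Part I, reduction.} Consider an off-diagonal entry of $A$, i.e.\ $(i_1,j_1)\ne(i_2,j_2)$. The two tuples $T=(i_1,j_1,l)$ and $T'=(i_2,j_2,l)$ are distinct. If both have $a(\cdot)=a$, then their $x$-tuples are also distinct. By independence of the coordinates of $\calI''_f$,
$$
\mathbb{E}_{\calI''_f}[w(T)w(T')]=w'(T)\,w'(T'),\qquad w'(T):=\ind{a(T)=a}\big(\ind{x(T)\in\calI'_f}-p\big).
$$
So $\mathbb{E}_{\calI''_f}[A]=A'$, where $A'$ is the matrix of~\eqref{eq:definition_of_A} with $w$ replaced by $w'$. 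It is important that the diagonal $(i_1,j_1)=(i_2,j_2)$ is excluded. That exclusion removes the only terms where a square $w'(T)^2$, which is not centered, could appear. The weights $w'(T)$ are independent over $T$ (functions of $\calI'_f$ only), centered, satisfy $|w'(T)|\le1$, and have $\mathbb{E}[w'(T)^2]=p(1-p)\le p$.

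\textbf{Part I, main step.} It remains to show that $\|A'\|\le 2^{O(k)}pN^{k/2}\log^3N$ w.h.p. This is precisely the spectral bound on the matrix $A$ at the heart of the proof of Theorem~4.1 in~\cite{allen2015refute}. I would redo that argument for $w'$. Write $A'=\sum_{l}\big(B_l\otimes B_l-\mathrm{diag}\big)$, where $(B_l)_{i,j}=w'(i,j,l)$. Then bound the norm by matrix Bernstein / decoupling in the same rounds as in their proof.

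The main obstacle is that the weights $w'$ are not sparse: $w'(T)=-p\ne0$ with probability $1-p$. So the hypothesis $\Pr[w\ne0]\le p$ of Theorem~\ref{thm:AOW15} fails literally. I expect their proof to use sparsity only through second-moment bounds $\mathbb{E}[w^2]\le p$ and the almost-sure bound $|w|\le1$, which enter the Bernstein variance and range parameters. Both still hold here. If a step genuinely needs sparsity, I would first try the decomposition $w'=(u-p\,\mathbb{1})$ with $u=\ind{x\in\calI'_f}$. Then I would bound the $u\otimes u$ part by the sparse theorem, and the cross and constant parts by direct norm computations on rank-structured matrices: their entries are $p\cdot(\text{sum of independent indicators})$, concentrated around $p^2 n$. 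I would need to check that these pieces cancel against the $u\otimes u$ mean rather than being bounded separately, since a naive bound on $p^2 n$ times an all-ones block is too large by a factor of $n^{k/2}/m^{0}$. Handling that cancellation is the crux.
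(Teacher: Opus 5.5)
\textbf{Part II and the reduction in Part I are fine; the main estimate of Part I is missing.} Your Part~II is the same computation the paper does. Your identity $\mathbb{E}_{\calI''_f}[A]=A'$ with $w'(T)=\ind{a(T)=a}\bigl(\ind{x(T)\in\calI'_f}-p\bigr)$ is also correct: off-diagonal entries pair two tuples whose $x$-parts differ whenever both weights can be nonzero, so the $\calI''_f$-expectation factorizes. The gap is that the real content of Part~I, the bound $\|A'\|\le 2^{O(k)}pN^{k/2}\log^3 N$, is never established. You defer it to redoing the argument of \cite{allen2015refute}, based on an unchecked guess about which hypotheses that argument uses. Your fallback decomposition stops at a cancellation you explicitly leave open. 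By averaging over $\calI''_f$ before taking the norm, you turned weights with $\Pr[w\neq 0]\le 2p$ into weights that are nonzero for every $T$ with $a(T)=a$. So Lemma~\ref{lem:trace_bound} can no longer be cited, and the proposal supplies nothing in its place.

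\textbf{The missing idea is to never compute $\mathbb{E}_{\calI''_f}[A]$.} This is what the paper does. Convexity of the operator norm and of $x\mapsto x^{2r}$, together with $\|A\|^{2r}\le\tr((AA^\top)^r)$, give
\[
\|\mathbb{E}_{\calI''_f}[A]\|^{2r}\le\bigl(\mathbb{E}_{\calI''_f}\|A\|\bigr)^{2r}\le\mathbb{E}_{\calI''_f}\bigl[\|A\|^{2r}\bigr]\le\mathbb{E}_{\calI''_f}\bigl[\tr((AA^\top)^r)\bigr].
\]
Markov's inequality over $\calI'_f$ then bounds the failure probability by $\mathbb{E}[\tr((AA^\top)^r)]/B^{2r}$, where the expectation is over the joint randomness of $\calI'_f$ and $\calI''_f$. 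The joint weights $w(T)=\ind{a(T)=a}\bigl(\ind{x(T)\in\calI'_f}-\ind{x(T)\in\calI''_f}\bigr)$ are independent, centered, bounded by $1$, and nonzero with probability $2p(1-p)$. So Lemma~\ref{lem:trace_bound} applies verbatim, and taking $r=\Theta(\log N)$ finishes the proof.

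\textbf{Your own route can also be closed, but only by turning your expectation into an argument.} Lemma~\ref{lem:trace_bound} is a trace-moment bound. You would need to check that in its proof sparsity enters only through $|\mathbb{E}[w(T)^j]|\le\mathbb{E}[w(T)^2]\le p$ for $j\ge 2$. This is the natural mechanism, since each expanded term factors over distinct $T$ and vanishes when some $T$ occurs exactly once. The weights $w'$ satisfy this condition, because $|w'|\le 1$ and $\mathbb{E}[w'(T)^2]=p(1-p)$. With that verified, neither matrix Bernstein nor the decomposition of $w'$ is needed.
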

\begin{proof}
    We first prove (I). For convenience, we let $B=2^{O(k)}pN^{k/2}\log^3(N)$. We use the following lemma from \cite{allen2015refute}.
    \begin{lemma}[Lemma A.12 in \cite{allen2015refute}] 
    \label{lem:trace_bound}
    Let $\{w(T)\}_{T\in [N]^k}$ be independent random variables such that for every $T\in [N]^k$:
    \begin{align*}
    \mathbb{E}[w(T)] &= 0, \\
    \Pr[w(T) \neq 0] &\leq p,\\
    |w(T)| &\leq 1. 
\end{align*}
    Then, for $A$ defined as in equation \ref{eq:definition_of_A}, we have that:
    \begin{align*}
        \E[\tr((AA^{\top})^r)]\leq N^{O(k)}2^{O(r)}r^{6r}p^{2r}N^{kr}
    \end{align*}
\end{lemma}
Notice that the conditions of Lemma~\ref{lem:trace_bound} hold in our case, so we get the bound on $\E[\tr((AA^{\top})^r)]$. By applying Jensen's inequality twice, we get that:
\begin{align*}    \Prob{\|\mathbb{E}_{\calI_{f}''}[A]\|\geq B}&\leq \Prob{\mathbb{E}_{\calI_{f}''}[\|A\|]\geq B}\\
    &= \Prob{\left(\mathbb{E}_{\calI_{f}''}[\|A\|]\right)^{2r}\geq B^{2r}}\\
    &\leq \Prob{\mathbb{E}_{\calI_{f}''}[\|A\|^{2r}]\geq B^{2r}}
\end{align*}
We now use that $\|A\|^{2r}\leq \tr\left((AA^{\top})^{r}\right)$, to get that:
\begin{align*}
    \Prob{\mathbb{E}_{\calI_{f}''}[\|A\|^{2r}]\geq B^{2r}}\leq \Prob{\mathbb{E}_{\calI_f''}\left[\tr\left((AA^{\top})^r\right)\right]\geq B^{2r}}
\end{align*}
Applying Markov's inequality, we get that:
\begin{align*}
    \Prob{\|\mathbb{E}_{\calI_f''}[A]\|\geq B}&\leq\frac{\E\left[\mathbb{E}_{\calI_f''}\left[\tr\left((AA^{\top})^{r}\right)\right]\right]}{B^{2r}}\\
    &=\frac{\mathbb{E}\left[\tr\left((AA^{\top})^{r}\right)\right]}{B^{2r}}.
\end{align*}
Taking $r=\Theta(\log(N))$ yields (I). 
We now focus on (II). We have that:
\begin{align*}
    \E_{\calI_f''}\left[\sum_{T\in [N]^{k}} w(T)^2\right]&=\sum_{T\in [N]^{k}}\E_{\calI_f''}\left[w(T)^2\right]\\
    &=\sum_{\substack{T\in [N]^k\\a(T)=a}}\ind{x(T)\in \calI_{f}'}+p-2p\ind{x(T)\in \calI_f'}\\
    &\leq \sum_{\substack{T\in[N]^k\\a(T)=a}}\ind{x(T)\in \calI_f'}+p\\
    &=pn^{k}+\sum_{\substack{T\in[N]^k\\a(T)=a}}\ind{T\in \calI_f'}
\end{align*}
We now use the Chernoff bound for the sum $\sum_{\substack{T\in[N]^k\\a(T)=a}}\ind{T\in \calI_f'}$, this is a sum of $n^k$ independent Bernoulli random variables with parameter $p$. By Theorem 4.4 in \cite{mitzenmacher2017probability} we get that:
\begin{align*}
    \Prob{\sum_{\substack{T\in [N]^k\\a(T)=a}}\ind{T\in \calI_f'}\geq 2pn^k}\leq \exp\left(-\frac{4}{3}pn^k\right)
\end{align*}
Part (II) of the lemma follows. 
\end{proof}
\subsection{Certifying Polynomials for Even \texorpdfstring{$k$}{k}}\label{sec:polynomial-even-k}
For the case where $k$ is even, we prove the following strengthening of Theorem 4.1 in \citet{allen2015refute}.
\begin{theorem}\label{thm:even-allen2015refute}
Let $\{T^{(1)}, T^{(2)},\ldots, T^{(m)}\}\subseteq [n]^k$ (can be adversarially selected) \\
and let $\{w(T^{(i)})\}_{i\in \{1,2,\ldots, m\}}$ be independent random variables such that for each $i\in \{1,2,\ldots, m\}$:
\begin{align*}
        \E[w(T^{(i)})]&=0,\\
        |w(T^{(i)})|&\leq 1.
    \end{align*}
    Then there is an efficient algorithm certifying that
    \begin{align*}
        \sum_{i=1}^{m}w(T^{(i)})\prod_{j\in T}x_j\leq 4K_G\sqrt{\frac{m}{n^k}}n^{\frac{3k}{4}}
    \end{align*}
    for all $x\in \mathbb{R}^n$ with $\|x\|_{\infty}\leq 1$ with high probability, where $K_G$ is the Grothendieck's constant. 
\end{theorem}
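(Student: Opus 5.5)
We will certify the bound by flattening the even-arity form into a bilinear form and applying the Grothendieck SDP. Write $k=2\ell$ and split each tuple $T^{(i)}=(U^{(i)},V^{(i)})$ with $U^{(i)},V^{(i)}\in[n]^{\ell}$. Define the $n^{\ell}\times n^{\ell}$ random matrix
\[
M=\sum_{i=1}^m w(T^{(i)})\, e_{U^{(i)}} e_{V^{(i)}}^{\top},
\]
where $e_U$ is the standard basis vector indexed by $U\in[n]^\ell$. For $x\in\mathbb{R}^n$ with $\|x\|_\infty\le 1$, set $u=x^{\otimes \ell}$. Then the polynomial equals $u^\top M u$, and $\|u\|_\infty\le 1$. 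Hence it is bounded by $\max_{u,v\in[-1,1]^{n^\ell}} u^\top M v=\|M\|_{\infty\to 1}$.

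The algorithm solves the standard Grothendieck SDP, maximizing $\sum_{U,V} M_{UV}\langle a_U,b_V\rangle$ over unit vectors. This is a polynomial-time computable upper bound on $\|M\|_{\infty\to1}$, and by Grothendieck's inequality it is at most $K_G\|M\|_{\infty\to 1}$. The algorithm outputs ``success'' if the SDP value is at most the claimed bound. It therefore remains to show that, with high probability, $\|M\|_{\infty\to 1}\le 4\sqrt{m}\,n^{\ell/2}=4\sqrt{m/n^k}\,n^{3k/4}$, since $\sqrt{m/n^{2\ell}}\,n^{3\ell/2}=\sqrt m\, n^{\ell/2}$.

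To bound $\|M\|_{\infty\to1}$ we avoid spectral norms, which would cost log factors, and use a union bound over sign vectors. The maximum is attained at $u,v\in\{\pm1\}^{N}$ with $N=n^\ell$. For fixed such $u,v$, the quantity $u^\top Mv=\sum_i w(T^{(i)})u_{U^{(i)}}v_{V^{(i)}}$ is a sum of $m$ independent, mean-zero terms bounded by $1$. This holds even though the tuples are adversarial, since only the $w$'s are random. Hoeffding gives
\[
\Pr\bigl[u^\top M v>s\bigr]\le e^{-s^2/(2m)}.
\]
There are $4^{N}$ pairs $(u,v)$, so taking $s=4\sqrt{m N}$ yields failure probability at most $4^{N}e^{-8N}=e^{-\Omega(N)}$, which gives the bound.

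The main obstacle is the union bound: it works only because the deviation scale $\sqrt{mN}$ exactly matches the entropy $N$ of sign vectors, so the constant $4$ has to be tuned to beat $4^N$. One should also double-check that restricting to $u=x^{\otimes\ell}$ loses nothing in the direction we need. It does not, because we only want an upper bound and the relaxation contains all such $u$.
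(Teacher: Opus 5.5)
Your proposal is correct and follows the paper's proof essentially step for step. Both relax to a Grothendieck-type SDP over vectors indexed by $[n]^{k/2}$, pass via Grothendieck's inequality to a $\pm 1$ bilinear maximization, and bound that by Hoeffding plus a union bound over the $4^{n^{k/2}}$ sign pairs, with the same threshold constant $4$. The only difference is cosmetic: you use the bipartite SDP with separate families $a_U, b_V$, while the paper uses a single family $U_T$ on both sides, which is dominated by the bipartite relaxation and so gives the same bound.
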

\begin{proof}
    For convenience, let $P(x)=\sum_{i=1}^{m}w(T^{(i)})\prod_{j\in T}x_j$.
    We will relax the problem by introducing for every $T\in [n]^{\frac{k}{2}}$ a vector $U_{T}$, which in the intended solution is $U_{T}=e\cdot \prod_{i\in T}x_i $ where $e$ is a unit vector. We have that:
    \begin{align*}
P(x_1,\ldots,x_n)\leq \max_{\substack{U_T,\\ \|U_T\|\leq 1}}\sum_{i=1}^{m}w(T^{(i)})\langle U_{T^{(i)}_1}, U_{T^{(i)}_2}\rangle 
    \end{align*}
    The algorithm solves this semidefinite program that upper bounds $P(x_1,\ldots,x_n)$ and certifies that it is indeed less that or equal to $ 4K_G\sqrt{\frac{m}{n^k}}n^{\frac{3k}{4}}$. It remains to show that with high probability over the randomness in $\{w(T^{(i)})\}_{i\in \{1,2,\ldots, m\}}$ the semidefinite program is bounded by $ 4K_G\sqrt{\frac{m}{n^k}}n^{\frac{3k}{4}}$. To that end, we use Grothendieck's inequality (see \cite{grothendieck1956resume,braverman2013grothendieck,guedon2016community}) to get that:
    \begin{align*}
        \max_{\substack{U_T,\\ \|U_T\|\leq 1}}\sum_{i=1}^{m}w(T^{(i)})\langle U_{T^{(i)}_1}, U_{T^{(i)}_2}\rangle \leq K_G\cdot \max_{Y,Z\in \{-1,1\}^{n^{k/2}}}\sum_{i=1}^{m}w(T^{(i)})Y_{T^{(i)}_1}Z_{T^{(i)}_2}.
    \end{align*}
 We now fix $Y,Z$ and use the Hoeffding bound (see e.g., Theorem 2.2.6 in \cite{vershynin}) to get that:
 \begin{align*}
     \Prob{\sum_{i=1}^{m}w(T^{(i)})Y_{T_1^{(i)}}Z_{T_2^{(i)}}\geq 4\sqrt{\frac{m}{n^k}}n^{3k/4}}\leq \exp\left(-8n^{k/2}\right) 
 \end{align*}
 We apply the union bound over all $Y,Z$ to get that:
 \begin{align*}
     \Prob{\exists Y,Z: \sum_{i=1}^{m}w(T^{(i)})Y_{T_1^{(i)}}Z_{T_2^{(i)}}\geq 4\sqrt{\frac{m}{n^k}}n^{3k/4}}&\leq 2^{2n^{k/2}}\exp\left(-8n^{k/2}\right)\\
     &\leq \exp\left(2n^{k/2}-8n^{k/2}\right)\\
     &=\exp\left(-6n^{k/2}\right)
 \end{align*}
 It follows that with probability at least $1-\exp\left(-6n^{k/2}\right)$ it holds that:
 \begin{align*}
     \max_{\substack{U_T,\\ \|U_T\|\leq 1}}\sum_{i=1}^{m}w(T^{(i)})\langle U_{T^{(i)}_1}, U_{T^{(i)}_2}\rangle \leq  4K_G\sqrt{\frac{m}{n^k}}n^{\frac{3k}{4}}.
 \end{align*}
 The theorem follows. 
\end{proof}
\subsection{\texorpdfstring{Constructing Nearly $t$-wise $\mu$-independent Distributions}{Constructing Nearly t-wise Independent Distributions}}
\label{app:lem:t-wise-mu-approx}
In this section, we extend the results of~\cite{alon2003almost} to non-uniform distributions over large domains. We prove the following theorem.

\begin{theorem}\label{thm:t-wise-mu-approx}
Consider a probability measure $\nu$ on $[q]^k$ and a probability measure $\mu$ on $[q]$. Suppose that for every function 
$f : [q]^k \to \mathbb{C}$ 
depending on at most $t$ variables $a_{i_1}, \dots, a_{i_t}$, we have
\begin{equation}   
\label{eq:t-wise-mu-approx:main}
\left|
\sum_{(a_1, \dots, a_k) \in [q]^k} 
\left( \nu(a_1, \dots, a_k) - \mu(a_1) \cdots \mu(a_k) \right)
f(a_1, \dots, a_k)
\right|
\leq \epsilon \|f\|_{\infty}.
\end{equation}
Then there exists a 
$t$-wise $\mu$-independent
probability measure $\nu'$ such that $$d_{\uprightsf{TV}}(\nu,\nu')\leq 
O((2q)^k\;\epsilon^{1/(k+1)}).$$
\end{theorem}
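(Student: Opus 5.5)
The plan follows the Alon--Goldreich--Mansour strategy, replacing the Fourier basis over $\mathbb{Z}_2$ with an orthonormal basis adapted to $\mu$.

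\textbf{Step 1 (basis).} Assume first that $\mu(a)>0$ for all $a$; atoms with $\mu(a)=0$ are handled separately at the end. Fix an orthonormal basis $\chi_0\equiv 1,\chi_1,\dots,\chi_{q-1}$ of $L^2([q],\mu)$. The products $\chi_\alpha(a)=\prod_i \chi_{\alpha_i}(a_i)$, for $\alpha\in\{0,\dots,q-1\}^k$, then form an orthonormal basis of $L^2([q]^k,\mu^{\otimes k})$.

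Write $\nu=\mu^{\otimes k}\cdot(1+g)$, where $g=\sum_{\alpha\neq 0}\hat g(\alpha)\chi_\alpha$ and $\hat g(\alpha)=\sum_a(\nu(a)-\mu^{\otimes k}(a))\chi_\alpha(a)$. A measure $\nu'$ is $t$-wise $\mu$-independent exactly when its density relative to $\mu^{\otimes k}$ has zero coefficient on every $\chi_\alpha$ with $1\le|\mathrm{supp}(\alpha)|\le t$. Indeed, these coefficients vanish if and only if every $t$-marginal of $\nu'$ equals $\mu^{\otimes t}$.

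Apply hypothesis~\eqref{eq:t-wise-mu-approx:main} with $f=\chi_\alpha$, which depends on $|\mathrm{supp}(\alpha)|\le t$ variables. This gives $|\hat g(\alpha)|\le \epsilon\|\chi_\alpha\|_\infty\le \epsilon\,\mu_{\min}^{-k/2}$ for all such $\alpha$.

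\textbf{Step 2 (project).} Set $\nu_1=\mu^{\otimes k}(1+g-g_{\le t})$, where $g_{\le t}$ is the part of $g$ supported on $\alpha$ with $1\le|\mathrm{supp}\,\alpha|\le t$. This $\nu_1$ is a signed measure of total mass $1$ that is $t$-wise $\mu$-independent. Its distance from $\nu$ is
\[
\|\nu-\nu_1\|_1=\mathbb{E}_{\mu^{\otimes k}}|g_{\le t}|\le q^k\epsilon\mu_{\min}^{-k}.
\]
The trouble is that $\nu_1$ may have negative entries of total mass at most that amount.

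\textbf{Step 3 (fix negativity).} Mix with the product measure: $\nu'=(1-\eta)\nu_1+\eta\mu^{\otimes k}$. Since $\mu^{\otimes k}$ is itself $t$-wise $\mu$-independent, the mixture remains $t$-wise $\mu$-independent. It is nonnegative as soon as $\eta\,\mu_{\min}^k\ge (1-\eta)\max_a|(\nu_1-\nu)(a)|$, and pointwise that difference is at most $q^k\epsilon\mu_{\min}^{-k/2}\cdot\mu_{\min}^{-k/2}$ up to constants. This yields
\[
d_{TV}(\nu,\nu')\lesssim \eta+q^k\epsilon\mu_{\min}^{-k}\qquad\text{with}\qquad \eta\approx q^k\epsilon\,\mu_{\min}^{-2k}.
\]

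\textbf{Step 4 (small atoms), the main obstacle.} The bound from Step 3 blows up when $\mu_{\min}$ is small, and that is where the exponent $1/(k+1)$ has to come from. The plan is to threshold at level $\tau$.

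Call the atoms with $\mu(a)<\tau$ light. Tuples that contain a light coordinate carry at most $kq\tau$ mass under $\mu^{\otimes k}$, and under $\nu$ as well, up to $q\epsilon$, by applying~\eqref{eq:t-wise-mu-approx:main} to 1-variable indicators. Merge all light atoms into a single symbol $\star$, and redistribute mass within $\star$ independently according to $\mu$ restricted to the light atoms. This redistribution preserves $t$-wise $\mu$-independence once the merged measure has it, and it costs $O(kq\tau)$ in total variation.

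If $\star$ itself still has mass below $\tau$, remove it from the support instead: condition on its absence and reweight, at cost $O(kq\tau)$. Now run Steps 1--3 with $\mu_{\min}\ge\tau$. The resulting error is $O(kq\tau+q^k\epsilon\tau^{-2k})$, which already has the right shape. Tightening the $\eta$-argument to a dependence of order $\epsilon\tau^{-k}$, for example by bounding negativity via the $\ell_\infty$ norm of the projection relative to $\mu^{\otimes k}$, and then choosing $\tau=\epsilon^{1/(k+1)}$ gives the stated $O((2q)^k\epsilon^{1/(k+1)})$. The constant $(2q)^k$ absorbs the sum over the at most $q^k$ values of $\alpha$ and the $2^k$ support patterns.

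The delicate points are exactly this bookkeeping: verifying that merging and conditioning keep the hypothesis (with $\epsilon$ replaced by $O(\epsilon+q\tau)$), and balancing $\tau$ against $\epsilon$.
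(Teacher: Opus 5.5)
\textbf{Gap: the conditioning case in Step 4.} Your Steps 1--3 are a valid alternative to the paper's lemma for $\mu_{\min}\ge\tau$. The paper instead uses the standard characters of $\mathbb{Z}_q^k$, which have sup norm $1$. With your tightened negativity bound $\eta\approx\|g_{\le t}\|_\infty\le\epsilon\sum_\alpha\|\chi_\alpha\|_\infty^2$, you get the same $\epsilon\,\mu_{\min}^{-k}$-type dependence. Merging light atoms is also the right idea. The problem is the case where $\star$ has mass below $\tau$, which you handle by conditioning $\nu$ on no coordinate lying in $\star$. This does not preserve the hypothesis at scale $\epsilon$. A $\le t$-variable test function, multiplied by the indicator that the other coordinates avoid $\star$, depends on more than $t$ variables. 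Concretely, for $k>t$ there are exactly $t$-wise $\mu$-independent $\nu$ (so $\epsilon=0$) in which $\{a_k\in\star\}$ is correlated with a function of $a_1,\dots,a_t$. After conditioning, their $t$-marginals deviate from the product of the conditioned $\mu$ by $\Theta(\mu(\star))$.

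So the new parameter really is $\epsilon'=\Theta(\epsilon+\mu(\star))$, as you anticipate in your last sentence. Steps 1--3 with $\mu_{\min}\ge\tau$ then give error of order $q^k\epsilon'\tau^{-k}$. When $\mu(\star)$ is comparable to $\tau$, this is of order $q^k\tau^{1-k}$, which does not vanish for $k\ge 2$, and no choice of $\tau$ balances it. Your final balancing $O(kq\tau+q^k\epsilon\tau^{-k})$ silently uses the original $\epsilon$. There is also a second, fixable omission: after removing $\star$ the marginals are $\mu(\cdot\mid\text{heavy})$ rather than $\mu$, so you would need a re-insertion step.

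\textbf{Fix: merge into a heavy atom, never condition.} Merge the light atoms together with one heavy atom; the paper merges every atom from the lightest heavy one onward into a single symbol. This works for three reasons:
\begin{itemize}
\item Some atom has mass $\ge 1/q\ge\tau$, so the merged alphabet always has minimum mass $\ge\tau$, and the conditioning case never arises.
\item The hypothesis transfers with exactly the same $\epsilon$. Any function of the coarsened coordinates that depends on $\le t$ of them lifts to a function of the original coordinates depending on the same coordinates, with the same sup norm.
\item Un-merging costs $O(k(q\tau+\epsilon))$ in total variation. You resample each coordinate that lands on the merged symbol, independently, from $\mu$ conditioned on that class. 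This preserves $t$-wise $\mu$-independence, and a coordinate changes only when it is or becomes a light atom.
\end{itemize}
With this replacement, your bound $O(kq\tau+q^k\epsilon\tau^{-k})$ is legitimate, and $\tau=\epsilon^{1/(k+1)}$ finishes the proof.
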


We begin by proving the theorem under the additional assumption that $\mu(z) \geq \gamma / q$ for every $z$.

\begin{lemma}
\label{lem:t-wise-mu-approx}
Under the conditions of Theorem~\ref{thm:t-wise-mu-approx}, when $\mu(z) \geq \gamma / q$ for every $z \in [q]$ and some $\gamma \in (0, 1]$, there exists a $t$-wise $\mu$-independent probability measure $\nu'$ such that
$
d_{\uprightsf{TV}}(\nu, \nu') \leq 
\binom{k}{t} \, \gamma^{-k} q^t \, \epsilon.
$
\end{lemma}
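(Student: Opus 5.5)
We follow the strategy of \cite{alon2003almost}, replacing the Fourier (character) basis of $\{0,1\}^k$ with an orthonormal basis adapted to $\mu$. Since $\mu(z)\ge\gamma/q>0$, we can choose an orthonormal basis $\phi_0\equiv 1,\phi_1,\dots,\phi_{q-1}$ of $L^2([q],\mu)$. The products $\phi_S(a)=\prod_{i=1}^k\phi_{s_i}(a_i)$, for $s\in\{0,\dots,q-1\}^k$, then form an orthonormal basis of $L^2([q]^k,\mu^{\otimes k})$. Write $\nu=\mu^{\otimes k}\cdot(1+g)$ and expand $g=\sum_{s\neq 0}\hat g(s)\phi_s$, where $\hat g(s)=\sum_a\nu(a)\phi_s(a)$ for $s\ne0$. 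We use the standard characterization: a measure of the form $\mu^{\otimes k}(1+h)$ is $t$-wise $\mu$-independent if and only if $\hat h(s)=0$ for every nonzero $s$ with support size $|\mathrm{supp}(s)|\le t$. To check this, observe that the marginal on any $t$ coordinates is determined exactly by the coefficients supported inside those coordinates.

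\textbf{Construction.} Define $h=\sum_{|\mathrm{supp}(s)|>t}\hat g(s)\phi_s$ and $\nu'=\mu^{\otimes k}(1+h)$. In words, we delete the low-degree coefficients of $g$. The total mass of $\nu'$ is still $1$, and $\nu'$ has the required $t$-wise marginals. Its total variation distance from $\nu$ is
\[
\tfrac12\sum_a\mu^{\otimes k}(a)\Bigl|\sum_{0<|\mathrm{supp}(s)|\le t}\hat g(s)\phi_s(a)\Bigr| .
\]

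\textbf{Bounding the removed part.} Each $\phi_s$ with $|\mathrm{supp}(s)|\le t$ depends on at most $t$ variables, and
\[
\hat g(s)=\sum_a(\nu(a)-\mu^{\otimes k}(a))\phi_s(a).
\]
Hypothesis~\eqref{eq:t-wise-mu-approx:main} therefore gives $|\hat g(s)|\le\epsilon\|\phi_s\|_\infty$. Orthonormality together with $\mu(z)\ge\gamma/q$ yields $|\phi_j(z)|\le\sqrt{q/\gamma}$, so $\|\phi_s\|_\infty\le(q/\gamma)^{|\mathrm{supp}(s)|/2}$. Combining these, $|\hat g(s)|\,\|\phi_s\|_\infty\le\epsilon(q/\gamma)^{t}$. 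The number of nonzero $s$ with support at most $t$ is at most $\binom{k}{t}q^t$ up to lower-order terms. The sum over these terms is thus roughly $\binom kt q^{t}(q/\gamma)^t\epsilon$. This is a bit worse in $q$ than the claimed bound $\binom kt\gamma^{-k}q^t\epsilon$, so the accounting must be tightened. One option is to group the $s$ by support set $S$ and bound each group's contribution as the signed difference $\nu_S-\mu^{\otimes S}$ of the marginals. Its $L^1$ norm is at most $\epsilon$ times a factor coming from projection, and by the hypothesis applied to $f=\mathrm{sign}$ it is at most about $\epsilon$. The projection onto exact support $S$ is an inclusion–exclusion over subsets of $S$, which costs a factor of $2^{|S|}$.

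\textbf{Positivity (main obstacle).} The function $1+h$ might be negative, so $\nu'$ need not be a measure. Following \cite{alon2003almost}, we would mix with the product measure: $\nu''=(1-\eta)\nu'+\eta\mu^{\otimes k}$. Both summands are $t$-wise $\mu$-independent, so $\nu''$ is as well. We have $\nu'(a)\ge\nu(a)-|\text{removed}(a)|\mu^{\otimes k}(a)$, and the removed part is at most $\delta:=\binom kt q^t\epsilon\cdot(\text{basis norms})$ pointwise. Hence $\nu''\ge0$ as soon as $\eta\ge\delta$. Moreover $\mu^{\otimes k}(a)\ge(\gamma/q)^k$, which converts the pointwise bound into the $\gamma^{-k}$ factor when normalizing by $\mu^{\otimes k}(a)$. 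The resulting distance is at most $\delta+\eta=O(\delta)$, which gives the stated bound $\binom kt\gamma^{-k}q^t\epsilon$ once the constants are bookkept. We expect the hardest part to be pinning down the exact constants in the pointwise estimate.
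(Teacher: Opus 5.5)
Your setup is sound. The $\mu$-orthonormal product basis, the characterization of $t$-wise $\mu$-independence by the vanishing of all coefficients with support of size at most $t$, truncation, and mixing with $\mu^{\otimes k}$ all work. Your inclusion--exclusion regrouping also controls the truncation itself in total variation.

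The gap is in the positivity step, which is where the final bound is decided. There you need a \emph{sup-norm} bound on the density correction $R=\sum_{0<|\mathrm{supp}(s)|\le t}\hat g(s)\phi_s$. Your estimates give only $\|R\|_\infty\le\binom kt q^t\cdot\epsilon(q/\gamma)^t$. Mixing with weight $\eta\approx\|R\|_\infty$ therefore yields $d_{\mathrm{TV}}(\nu,\nu'')\lesssim\binom kt q^{2t}\gamma^{-t}\epsilon$. This exceeds the claimed $\binom kt q^t\gamma^{-k}\epsilon$ by a factor $q^t\gamma^{k-t}$, which is unbounded when $\mu$ is close to uniform and $q$ is large. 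So the issue is not constant bookkeeping.

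Your remark that $\mu^{\otimes k}(a)\ge(\gamma/q)^k$ ``converts the pointwise bound into the $\gamma^{-k}$ factor'' is not a valid step in your parametrization. The mass you remove at $a$ is already $\mu^{\otimes k}(a)R(a)$, so a lower bound on $\mu^{\otimes k}(a)$ buys nothing. All of your $\gamma$-loss sits in $\|\phi_s\|_\infty\le(q/\gamma)^{|\mathrm{supp}(s)|/2}$. Regrouping by support improves only the $L^1$ estimate, not the $L^\infty$ estimate that positivity needs.

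The paper's mixing step is the same as yours; what differs is the basis. The paper works additively with the unweighted DFT on $\mathbb{Z}_q^k$. For $F=\nu-\mu^{\otimes k}$ and every frequency $b$ with at most $t$ nonzero coordinates, $|\chi_b|\equiv1$ gives $|\widehat F(b)|\le\epsilon$ directly from the hypothesis. Hence the deleted low-frequency part is pointwise at most $|B_t|\epsilon/q^k$, with $|B_t|\le\binom kt q^t$. Only then is $\mu^{\otimes k}(a)\ge\gamma^k/q^k$ used, to absorb this by mixing in a $\delta=\epsilon\gamma^{-k}|B_t|$ fraction of $\mu^{\otimes k}$. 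That is exactly where $q^t$ and $\gamma^{-k}$ come from. The $t$-wise marginals are exact because $\mu^{\otimes k}$ is untouched, and every function of $t$ coordinates is orthogonal, in the unweighted inner product, to characters with more than $t$ nonzero frequencies. (The paper's own bookkeeping is also slightly loose: it proves $\|\delta F\|_1\le2\delta$ but then uses $\delta$, so its argument literally gives $(\gamma^{-k}+\tfrac12)|B_t|\epsilon$.)

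Alternatively, you can stay in your framework by doing the inclusion--exclusion \emph{pointwise}. Let $g_S$ be the part of $g$ with support exactly $S$, and let $\nu_U$ be the marginal of $\nu$ on the coordinates in $U$. Then $g_S=\sum_{U\subseteq S}(-1)^{|S\setminus U|}\bigl(\nu_U/\mu^{\otimes U}-1\bigr)$. Indicator test functions give $|\nu_U(a_U)-\mu^{\otimes U}(a_U)|\le\epsilon$, and $\mu^{\otimes U}(a_U)\ge(\gamma/q)^{|U|}$. Hence $\|R\|_\infty\le\epsilon\sum_{j=1}^{t}\binom kj\bigl((1+q/\gamma)^j-1\bigr)=O_k\bigl((q/\gamma)^t\epsilon\bigr)$. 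This bound is better in $\gamma$ ($\gamma^{-t}$ rather than $\gamma^{-k}$) and suffices for Theorem~\ref{thm:t-wise-mu-approx}. It matches the stated expression only up to a factor depending on $k$ and $t$.
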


\begin{proof}
Let $\mu^{\otimes k}$ be the product measure on $[q]^k$ defined as 
$\mu^{\otimes k} (a_1,\dots,a_k) = \mu(a_1)\cdots \mu(a_k)$ and 
$F(a_1,\dots,a_k) = 
\nu(a_1,\dots,a_k)-\mu^{\otimes k}(a_1,\dots,a_k)$.
Consider the $k$-dimensional Discrete Fourier Transform of the function $F$:
\[
\widehat{F}(b) = \sum_{a \in [q]^k} F(a) \cdot \overline{\chi_b(a)},
\]
where
$
\chi_b(a) = e^{\frac{2\pi i \langle a, b \rangle}{q}}
$
is the character associated with the frequency $b = (b_1, \dots, b_k)$. The inverse transform is given by
\[
F(a) = 
\frac{1}{q^k} 
\sum_{b\in [q]^k} \widehat{F}(b) \cdot \chi_b(a).
\]

Let $B_t$ be the set of all vectors $b = (b_1, \dots, b_k)$ in $[q]^k$ with at most $t$ nonzero coordinates. Every character $\chi_b(a_1, \dots, a_k)$ with $b$ in $B_t$ depends on at most $t$ of the coordinates $a_i$ (those for which $b_i \neq 0$). Also, 
$\|\chi_b\|_{\infty}
=\|\overline{\chi_b}\|_{\infty}=1$. Thus, by the assumption of Theorem~\ref{thm:t-wise-mu-approx}, we have 
\[
|\widehat{F}(b)| = \left| \sum_{a \in [q]^k} F(a) \cdot \overline{\chi_b(a)} \right| \leq \epsilon.
\]

We now construct the measure $\nu'$ by removing all Fourier frequencies from $\nu$ corresponding to $b \in B_t$ to make the measure  $t$-wise $\mu$-independent, and by slightly rescaling the remaining Fourier coefficients to guarantee that $\nu'$ is nonnegative. Specifically, define
\[
\nu'(a) = \mu^{\otimes k}(a) + \frac{(1 - \delta)}{q^k} \sum_{b \in [q]^k \setminus B_t} \widehat{F}(b)\, \chi_b(a),
\]
where $\delta = \epsilon \gamma^{-k}|B_t|$. Note that $\nu(a')$ is a real number for every $a \in [q]^k$, because 
$\widehat{F}(b) = \overline{\widehat{F}(-b)}$ and 
$\chi_b(a) = \overline{\chi_{-b}(a)}$. 
Thus, the sum 
$\widehat{F}(b)\chi_b(a) + 
\widehat{F}(-b)\chi_{-b}(a)$ is real. 
Here, $-b$ denotes the coordinate-wise negation modulo $q$. Also, observe that 
\[\widehat{F}(0) = 
\sum_{a\in[q]^k} F(a) =
\nu([q]^k) - \mu^{\otimes k}([q]^k) = 1 - 1 =0.
\]

We next verify that $\nu'$ satisfies the conditions of Lemma~\ref{lem:t-wise-mu-approx}. First, we check that $\nu'$ is a probability measure; that is, $\nu'([q]^k) = 1$ and $\nu'$ is nonnegative.
Since $\sum_{a \in [q]^k} \chi_b(a) = 0$ for all $b \neq 0$, we have
\[
\nu'([q]^k)\equiv
\sum_{a \in [q]^k} \nu'(a) = \sum_{a \in [q]^k} \mu^{\otimes k}(a)
\equiv \mu^{\otimes k}([q]^k)= 1.
\]
Now, using that $\nu(a) = \mu^{\otimes k}(a) + F(a)$, we get
\begin{align*}   
\nu'(a) &= 
\mu^{\otimes k}(a) + 
(1-\delta) \Big(F(a)
- 
\frac{1}{q^k} \sum_{b \in B_t} \widehat{F}(b)\, \chi_b(a)
\Big) 
\\&=
\delta\mu^{\otimes k}(a) + 
(1-\delta)\,\nu(a) - 
\frac{(1 - \delta)}{q^k} \sum_{b \in B_t} \widehat{F}(b)\, \chi_b(a)
\\&\geq 
\delta\mu^{\otimes k}(a) 
- 
\frac{(1 - \delta)}{q^k} \sum_{b \in B_t} \widehat{F}(b)\, \chi_b(a).
\end{align*}
Recall that $\mu(z)\geq \gamma/q$ for all $z\in[q]$ and hence
$\mu^{\otimes k}(a) \geq  \gamma^k/q^k$. Therefore,
\[
\nu'(a) \geq \frac{1}{q^k} \Big[
\delta \gamma^k - (1 - \delta)
\sum_{b \in B_t} \widehat{F}(b)\, \chi_b(a)
\Big].
\]
Since each Fourier coefficient $\widehat{F}(b)$ is bounded in absolute value by $\epsilon$, $|\chi_b(a)| = 1$, and $\delta = \epsilon \gamma^{-k} |B_t|$, the right-hand side is at least $\tfrac{1}{q^k}\Big[\epsilon |B_t| - (1 - \delta)\, \epsilon\, |B_t|\Big]  > 0$. This proves that $\nu'(a) > 0$ for all $a$, and hence $\nu'$ is a probability measure. 

We now show that $\nu'$ is $t$-wise $\mu$-independent, and then estimate $d_{\uprightsf{TV}}(\nu, \nu')$. Consider $k$ random variables $(X_1, \dots, X_k)$ with joint distribution given by the measure $\nu'$. We need to show that they are $t$-wise $\mu$-independent. That is, for every $t$ indices $i_1, \dots, i_t$ and values $a^*_{i_1}, \dots, a^*_{i_t}$, we must have
\[
\Pr(X_{i_1} = a^*_{i_1}, \dots, X_{i_t} = a^*_{i_t}) = 
\mu(a^*_{i_1}) \cdots \mu(a^*_{i_t}).
\]
The probability above can be expressed as the expectation of the indicator function of the event $\mathbf{1}\{a_{i_1} = a^*_{i_1}, \dots, a_{i_t} = a^*_{i_t}\}$:
\[
\Pr(X_{i_1} = a^*_{i_1}, \dots, X_{i_t} = a^*_{i_t}) = 
\sum_{a \in [q]^k}
\mathbf{1}\{a_{i_1} = a^*_{i_1}, \dots, a_{i_t} = a^*_{i_t}\} \cdot \nu'(a).
\]
The indicator function $\mathbf{1}\{a_{i_1} = a^*_{i_1}, \dots, a_{i_t} = a^*_{i_t}\}$ depends only on the $t$ variables $a_{i_1}, \dots, a_{i_t}$ and is therefore orthogonal to all characters $\chi_b$ with $b \notin B_t$. That is,
\[
\sum_{a \in [q]^k} \chi_b(a) \cdot
\mathbf{1}\{a_{i_1} = a^*_{i_1}, \dots, a_{i_t} = a^*_{i_t}\} = 0.
\]
Hence,
\begin{align*}
\sum_{a \in [q]^k}
\mathbf{1}\{a_{i_1} = a^*_{i_1}, \dots, a_{i_t} = a^*_{i_t}\} \cdot \nu'(a)
&= 
\sum_{a \in [q]^k}
\mathbf{1}\{a_{i_1} = a^*_{i_1}, \dots, a_{i_t} = a^*_{i_t}\}
\cdot \mu^{\otimes k}(a)
\\
&=
\mu^{\otimes k}(\{a \in [q]^k : a_{i_1} = a^*_{i_1}, \dots, a_{i_t} = a^*_{i_t}\})\\
&=
\mu(a^*_{i_1}) \cdots \mu(a^*_{i_t}).
\end{align*}

We are almost done. It only remains to estimate $d_{\uprightsf{TV}}(\nu, \nu')\equiv \frac{1}{2} \|\nu - \nu'\|_1
$.
We have
\begin{align*}
\nu(a) - \nu'(a) &= 
\Big(\mu^{\otimes k}(a) + F(a)\Big) - 
\Big(
\mu^{\otimes k}(a) 
+ \frac{(1 - \delta)}{q^k} \sum_{b \in [q]^k \setminus B_t} \widehat{F}(b)\, \chi_b(a)
\Big)\\
&= F(a) - \frac{(1 - \delta)}{q^k} \sum_{b \in [q]^k \setminus B_t} \widehat{F}(b)\, \chi_b(a).
\end{align*}
We write $F(a) = \delta F(a) + (1 - \delta) F(a)$ and use the Fourier expansion of $(1 - \delta) F(a)$:
\begin{align*}
\nu(a) - \nu'(a) &= 
\delta F(a) + 
\frac{(1 - \delta)}{q^k} \sum_{b \in [q]^k} \widehat{F}(b)\, \chi_b(a)
- \frac{(1 - \delta)}{q^k} \sum_{b \in [q]^k \setminus B_t} \widehat{F}(b)\, \chi_b(a)\\
&=
\delta F(a) + 
\frac{(1 - \delta)}{q^k} \sum_{b \in B_t} \widehat{F}(b)\, \chi_b(a).
\end{align*}

We now bound the $\ell_1$ norm of the first term: $\|\delta F\|_1 = \delta \|F\|_1 = \delta \|\nu - \mu^{\otimes k}\|_1 \leq 2\delta$.
For the second term, we use that the $\ell_1$ norm of each function $\chi_b / q^k$ is $1$, and that $|\widehat{F}(b)| \leq \epsilon$ for all $b \in B_t$. Thus, the $\ell_1$ norm of each term $\frac{(1 - \delta)}{q^k} \widehat{F}(b)\, \chi_b(a)$ is at most $(1 - \delta)\epsilon <\epsilon$. Therefore, the  $\ell_1$ norm of $\nu-\nu'$ is upper bounded by 
$\delta + |B_t|\epsilon
= (\gamma^{-k} +1)|B_t|\epsilon$.  We get (using that $\gamma\leq 1$)
$$
d_{\uprightsf{TV}}(\nu, \nu') \equiv \frac{1}{2}\|\nu' - \nu\|_1
\leq |B_t|\gamma^{-k}\epsilon,
$$
Observing that $|B_t| \leq \binom{k}{t} q^t$ concludes the proof.
\end{proof}

\begin{proof}[Proof of Theorem~\ref{thm:t-wise-mu-approx}]  
We now extend the result to the general case, where $\mu$ is not necessarily bounded away from zero. Let $\gamma = \epsilon^{\nicefrac{1}{k+1}}$. If $\mu(z) \geq \gamma / q$ for all $z \in [q]$, then Lemma~\ref{lem:t-wise-mu-approx} applies directly. Otherwise, there exists at least one $z$ such that $\mu(z) < \gamma / q$.

Without loss of generality, assume that $\mu(0) \geq \mu(1) \geq \cdots \geq \mu(q - 1)$. If this is not the case, we may simply rearrange the elements of $[q]$. We apply Lemma~\ref{lem:t-wise-mu-approx} to an auxiliary measure $\tilde{\mu}$ that satisfies the lower bound assumption of the lemma. We define $\tilde{\mu}$ by coarsening the domain of $\mu$ -- merging all low-probability elements into a single one. Since $\sum_{z=0}^{q-1} \mu(z) = 1$, there exists at least one $z$ with $\mu(z) \geq 1/q > \gamma/q$. Let $\tilde{q}$ be the largest index such that $\mu(z) \geq \gamma / q$ for all $z \leq \tilde{q}$. For $z > \tilde{q}$, we then have $\mu(z) < \gamma / q$.
Define $\tilde{\mu}$ on $[\tilde{q} + 1]$ by preserving the high-probability elements and combining the rest:
\[
\tilde{\mu}(z) =
\begin{cases}
\mu(z), & \text{if } z < \tilde{q}; \\
\sum_{z' = \tilde{q}}^{q - 1} \mu(z'), & \text{if } z = \tilde{q}.
\end{cases}
\]
Then $\tilde{\mu}(z) \geq \gamma / q$ for all $z \in [\tilde{q} + 1]$, and hence $\tilde{\mu}(z) \geq \tilde{\gamma} / (\tilde{q} + 1)$, where $\tilde{\gamma} = (\tilde{q} + 1) \gamma/ q$.

Let $(X_1, \dots, X_k)$ be random variables with joint distribution given by the measure $\nu$. Define the modified random variables $(\tilde{X}_1, \dots, \tilde{X}_k)$ as follows:
\[
\tilde{X}_i =
\begin{cases}
X_i, & \text{if } X_i < \tilde{q}; \\
\tilde{q}, & \text{if } X_i \geq \tilde{q}.
\end{cases}
\]
In other words, $\tilde{X}_i = \min(X_i, \tilde{q})$. We now apply Lemma~\ref{lem:t-wise-mu-approx} to the distribution of $(\tilde{X}_1, \dots, \tilde{X}_k)$, which we denote by $\tilde{\nu}$, and measure $\tilde{\mu}$. To do so, we must verify condition~\eqref{eq:t-wise-mu-approx:main} for  $\tilde{\nu}$ and $\tilde{\mu}$.

Consider an arbitrary function $\tilde{f} : [\tilde{q} + 1]^k \to \mathbb{C}$ that depends on at most $t$ variables. Define $f : [q]^k \to \mathbb{C}$ as follows:
\[
f(a_1, \dots, a_k) = \tilde{f}(\min(a_1, \tilde{q}), \dots, \min(a_k, \tilde{q})).
\]
Then, 
$
\tilde{f}(\tilde{X}_1, \dots, \tilde{X}_k) = f(X_1, \dots, X_k)$ and $\E[\tilde{f}(\tilde{X}_1, \dots, \tilde{X}_k)] = \E[f(X_1, \dots, X_k)]
$.
Similarly, let $(M_1, \dots, M_k)$ and $(\tilde{M}_1, \dots, \tilde{M}_k)$ be random variables with joint distributions given by $\mu^{\otimes k}$ and $\tilde{\mu}^{\otimes k}$, respectively. Then, 
$
\E[\tilde{f}(\tilde{M}_1, \dots, \tilde{M}_k)] = \E[f(M_1, \dots, M_k)]
$.
Inequality~\eqref{eq:t-wise-mu-approx:main} is equivalent to
\[
\left| \E[f(X_1, \dots, X_k)] - \E[f(M_1, \dots, M_k)] \right| \leq \epsilon \|f\|_{\infty}.
\]
 We have
\[
\Big| \E[\tilde{f}(\tilde{X}_1, \dots, \tilde{X}_k)] - \E[\tilde{f}(\tilde{M}_1, \dots, \tilde{M}_k)] \Big| =
\Big| \E[f(X_1, \dots, X_k)] - \E[f(M_1, \dots, M_k)] \Big| 
\leq \epsilon \|f\|_{\infty}.
\]

Therefore, we can apply Lemma~\ref{lem:t-wise-mu-approx} to the distributions $\tilde{\nu}$ and $\tilde{\mu}$. By this lemma, there exist $t$-wise $\tilde{\mu}$-independent random variables $\tilde{Y}_1, \dots, \tilde{Y}_k$ such that the total variation distance between the distributions of 
$\tilde{X}_1, \dots, \tilde{X}_k$
and 
$\tilde{Y}_1, \dots, \tilde{Y}_k$ is at most
\begin{equation}\label{eq:dTV-tX-tY}
\binom{k}{t} \cdot \tilde{\gamma}^{-k}\cdot (\tilde{q} + 1)^t \epsilon 
= \binom{k}{t} \cdot \gamma^{-k} \; \frac{q^k}{(\tilde{q} + 1)^k} \cdot (\tilde{q} + 1)^t \epsilon 
\leq \left(\frac{2q}{\gamma} \right)^k \epsilon.
\end{equation}
Here, we used the fact that the random variables are defined on $[\tilde{q} + 1]^k$, and that $\tilde{\mu}(z) \geq \tilde{\gamma}/(\tilde{q} + 1)$.

Finally, we define the desired random variables $Y_1, \dots, Y_k$ as follows: if $\tilde{Y}_i < \tilde{q}$, we let $Y_i = \tilde{Y}_i$. If $\tilde{Y}_i = \tilde{q}$, then we let $Y_i = a_i$ with probability
\[
\frac{\mu(a_i)}{\mu(\{\tilde{q}, \dots, q - 1\})} \quad \text{for } a_i \geq \tilde{q},
\]
independently for all $i$.

We verify that $Y_1, \dots, Y_k$ are $t$-wise $\mu$-independent random variables. Each $Y_i$ depends only on $\tilde{Y}_i$ and a random coin toss (used to select the value of $Y_i$ if $\tilde{Y}_i = \tilde{q}$), which is independent of all other variables. Thus, if $\tilde{Y}_{i_1}, \dots, \tilde{Y}_{i_t}$ are independent, then so are $Y_{i_1}, \dots, Y_{i_t}$. We get that $Y_1,\dots,Y_k$ are $t$-wise independent. For every $a_i<\tilde{q}$,
\[
\Pr(Y_i = a_i) = 
\Pr(\tilde{Y}_i = a_i) = \mu(a_i).
\]
For $a_i \geq \tilde{q}$, we have
\[
\Pr(Y_i = a_i) = 
\Pr(\tilde{Y}_i = \tilde{q}) \cdot \frac{\mu(a_i)}{\mu(\{\tilde{q}, \dots, q - 1\})} = \mu(a_i).
\]
Therefore, the marginal distribution of each $Y_i$ is $\mu$.

It only remains to bound the total variation distance between the distributions of 
$(X_1, \dots, X_k)$ and $(Y_1, \dots, Y_k)$. The total variation distance between $(\tilde{X}_1, \dots, \tilde{X}_k)$ and 
$(\tilde{Y}_1, \dots, \tilde{Y}_k)$ is at most 
$
\big( \frac{2q}{\gamma} \big)^k \epsilon
\quad \text{(see~\eqref{eq:dTV-tX-tY})}
$. 
The total variation distance between 
$(X_1, \dots, X_k)$ and 
$(\tilde{X}_1, \dots, \tilde{X}_k)$ is
\[
\Pr(\exists i \text{ such that } X_i \neq \tilde{X}_i) 
= \Pr(\exists i \text{ such that } X_i > \tilde{q}) 
\leq \sum_{i=1}^{k-1} \Pr(X_i > \tilde{q}).
\]
Each $\Pr(X_i > \tilde{q})$ is upper bounded by $\mu(\{\tilde{q}+1, \dots, q - 1\}) + \epsilon$, because
\[
\Pr(X_i > \tilde{q}) = 
\E[\mathbf{1}\{X_i > \tilde{q}\}] \leq \mu(\{\tilde{q}+1, \dots, q - 1\}) + \epsilon,
\]
as the indicator function $\mathbf{1}\{X_i > \tilde{q}\}$ depends only on the single variable $X_i$.
Hence,
\[
\Pr(\exists i \text{ such that } X_i \neq \tilde{X}_i) 
\leq k \left( \mu(\{\tilde{q}+1, \dots, q - 1\}) + \epsilon \right) 
\leq kq\gamma + k\epsilon.
\]
Similarly, the total variation distance between 
$(Y_1, \dots, Y_k)$ and 
$(\tilde{Y}_1, \dots, \tilde{Y}_k)$ is
\[
\Pr(\exists i \text{ such that } Y_i \neq \tilde{Y}_i) 
= \Pr(\exists i \text{ such that } {Y}_i > \tilde{q}) 
\leq \sum_{i=1}^k \Pr({Y}_i > \tilde{q})
= k \mu(\{\tilde{q}+1, \dots, q - 1\}) 
\leq kq\gamma.
\]

Therefore, by the triangle inequality, the total variation distance between the distributions of 
$(X_1, \dots, X_k)$ and 
$(Y_1, \dots, Y_k)$ is at most
$
\left( \frac{2q}{\gamma} \right)^k \epsilon + 2kq\gamma + k\epsilon
$. Since $\gamma = \epsilon^{1/(k+1)}$, we obtain the bound
$$
(2q)^k \epsilon^{1/(k+1)} + 2kq \epsilon^{1/(k+1)} + k\epsilon 
= O\left((2q)^k \epsilon^{1/(k+1)}\right).
$$
\end{proof}

\subsection{Examples from Ordering CSPs: Betweenness and Cyclic Ordering}\label{subsec:BTW}

\paragraph{Betweenness Predicate.} We prove a lower and matching upper bound for the probability that any pairwise uniform distribution satisfies the betweenness predicate. To that end, we show that no pairwise uniform distribution over $[0,1]^3$ satisfies the predicate with probability greater than $1/2$. We then provide an example of a pairwise uniform distribution that satisfies the Betweenness predicate with probability $1/2$. Recall that real numbers $a,b,c\in [0,1]$ satisfy the betweenness predicate if $a<b<c$ or $c<b<a$, for convenience for the remaining of this section we will use $f_{\uprightsf{btw}}$ to denote the betweenness predicate, in other words $f_{\uprightsf{btw}}(a,b,c)=\ind{a< b<c \text{ or } c<b<a}$.
We have the following lemma:
\begin{lemma}
    Let $\calD$ be a pairwise uniform distribution over $[0,1]^3$ and $(a,b,c)\sim \calD$, then:
    \begin{align*}
        \Prob{f_{\uprightsf{btw}}(a,b,c)=1}\leq 1/2
    \end{align*}
\end{lemma}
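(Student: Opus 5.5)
We bound the event $\{a<b<c\}\cup\{c<b<a\}$ by a union of two events whose probabilities depend only on pairwise marginals. Since $\calD$ is pairwise uniform, each pair of coordinates is distributed as two independent uniform points of $[0,1]$. In particular, ties among the three coordinates occur with probability $0$, because each pair of coordinates has a continuous joint law. So we may assume $a,b,c$ are distinct.

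The key observation is that betweenness forces $b$ to lie on the opposite side from $a$ and from $c$ simultaneously. Concretely, if $a<b<c$ then $a<b$ and $b<c$, and if $c<b<a$ then $b<a$ and $c<b$. For a fixed threshold, say $1/2$, we try to control the event in terms of the position of $b$ relative to $1/2$. Define $E_1=\{a<b,\ b<c\}$ and $E_2=\{b<a,\ c<b\}$. Then
\[
\Prob{f_{\uprightsf{btw}}(a,b,c)=1}=\Prob{E_1}+\Prob{E_2}.
\]
Each of these is a genuinely three-wise event, so we bound it via pairwise quantities. Condition on $b$ and use $\Prob{a<b,\ b<c}\le \E[\min(\Pr(a<b\mid b),\ \Pr(c<\cdot \ldots))]$. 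That conditional coupling is not available, so instead we use the pointwise inequality
\[
\ind{a<b<c}+\ind{c<b<a}\le \tfrac12\big(\ind{a<b}+\ind{b<c}+\ind{b<a}+\ind{c<b}\big)-\tfrac12\cdot\text{(correction)}.
\]
Rather than this, the cleanest route is a linear inequality among pairwise indicators that holds pointwise for distinct reals. We claim
\[
\ind{a<b<c}+\ind{c<b<a}\;\le\; \ind{a<b}+\ind{c<b}-2\,\ind{a<b}\ind{c<b}+\ldots
\]
The right-hand side again mixes pairs in one term. The final approach therefore uses a function of pairs only:
\[
g(a,b,c)=\ind{a<c}\big(\ind{a<b}-\ind{c<b}\big)+\ind{c<a}\big(\ind{c<b}-\ind{a<b}\big).
\]
This $g$ still involves triple products, so pairwise expectations alone do not pin it down.

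To get a genuine pairwise certificate, find real weights $\alpha$ and functions $h_{ab},h_{bc},h_{ac}$, each depending on its own pair, such that
\[
f_{\uprightsf{btw}}(a,b,c)\le h_{ab}(a,b)+h_{bc}(b,c)+h_{ac}(a,c)
\]
for all distinct $a,b,c\in[0,1]$, with $\sum \E_{\text{indep.\ uniform}}[h]=1/2$. Pairwise uniformity makes the expectation of the right-hand side equal to its value under the fully independent distribution, which gives the bound $1/2$.

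The natural candidate is
\[
h=\tfrac12-\big|b-\tfrac12\big|\cdot(\ldots)
\]
but the simplest choice is based on the ranks. Betweenness means $b$ is the median, and the median of three distinct points satisfies $|b-a|+|b-c|=|a-c|$. This suggests
\[
f_{\uprightsf{btw}}\le \lambda\big(|a-c|-|a-b|-|b-c|\big)+\text{pairwise terms},
\]
where the bracket is $\le 0$ always and $=0$ exactly on betweenness. Matching the expectation under independence, where $\E|x-y|=1/3$ for each pair, forces a careful choice of the added pairwise terms.

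I expect finding the explicit pairwise dominating function $h$ to be the main obstacle. The verification of the pointwise inequality is a finite case check over the $6$ orderings, possibly together with the positions relative to the threshold. If no clean choice is found, the fallback is a dual LP over orderings and discretized positions, which should produce the weights exhibiting the bound $1/2$. Tightness is not needed for this lemma.
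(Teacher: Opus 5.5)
There is a genuine gap: the proposal never produces the certificate it describes. The framework you settle on is sound: bound $f_{\uprightsf{btw}}$ pointwise by $h_{ab}(a,b)+h_{bc}(b,c)+h_{ac}(a,c)$ whose expectation under independent uniforms is $1/2$. Pairwise uniformity fixes the expectation of such a sum, so this would suffice. But the whole content of the lemma is exhibiting the $h$'s, and you leave that as ``the main obstacle.'' None of the candidates you float closes the gap.
\begin{itemize}
\item Your $g$ involves triple products, so it is not pairwise.
\item The metric bracket $|a-c|-|a-b|-|b-c|$ is pairwise-decomposable and vanishes on betweenness. Off betweenness, however, it equals $-2$ times the distance from $b$ to the median, which approaches $-2$. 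With only a constant added you need $C\ge\max(1,2\lambda)$, and the resulting bound $C-\lambda/3$ is at best $5/6$.
\item The discretized dual LP is a numerical search, not a proof.
\end{itemize}

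The missing idea is the one you mentioned at the start and then abandoned: split on whether $b\le 1/2$. If $b$ lies strictly between $a$ and $c$, then exactly one of $a,c$ is below $b$ and exactly one is above it. Hence pointwise
\[
f_{\uprightsf{btw}}(a,b,c)\le h(a,b)+h(c,b),\qquad h(x,b)=\ind{b\le 1/2}\,\ind{x<b}+\ind{b>1/2}\,\ind{x>b}.
\]
The right-hand side equals $1$ on betweenness and is always nonnegative. Each term depends on a single pair, so you may take $h_{ac}=0$. For independent uniforms, $\E[h(a,b)]=\int_0^{1/2}x\,dx+\int_{1/2}^1(1-x)\,dx=1/4$, which gives the bound $1/2$.

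This is exactly the paper's argument in different dress. The paper writes $\Prob{f_{\uprightsf{btw}}=0}\ge\Prob{b\le\tfrac12,\ a,c>b}+\Prob{b>\tfrac12,\ a,c<b}$. It then applies a union bound conditional on $b=x$, using only the laws of $(a,b)$ and $(c,b)$, to get $\int_0^{1/2}(1-2x)\,dx=1/4$ for each term.
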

\begin{proof}
    We will, equivalently, lower bound the probability of the event that $f_{\uprightsf{btw}}(a,b,c)=0$. We have that:
    \begin{align*}
        \Prob{f_{\uprightsf{btw}}(a,b,c)=0}\geq \Prob{b\leq \frac{1}{2} \text{ and } a,c>b}+\Prob{b> \frac{1}{2} \text{ and 
 } a,c< b}
    \end{align*}
    We bound the probability $\Prob{b\leq \frac{1}{2} \text{ and } a,c>b}$, we have that:
    \begin{align*}
        \Prob{b\leq \frac{1}{2} \text{ and } a,c\geq b}&=\int_{0}^{1/2}\cProb{a,c\geq b}{b=x}dx\\
        &=\int_{0}^{1/2}(1-\cProb{a< x \text{ or }c<x  }{b=x})dx\\
        &\geq \int_{0}^{1/2}(1-\cProb{a<x}{b=x}-\cProb{c<x}{b=x})dx\\
        &=\int_{0}^{1/2}(1-2x) dx\\
        &=\frac{1}{4}.
    \end{align*}
    By symmetry, we also get that $\Prob{b>\frac{1}{2} \text{ and } a,c<b}\geq \frac{1}{4}$, which gives the result.
\end{proof}

We now provide an example of a pairwise uniform distribution that satisfies the betweenness predicate with probability $1/2$. Let $\calD^{*}$ be the distribution sampled according to the following procedure:
\begin{enumerate}
    \item Sample $a$ and $b$ uniformly at random.
    \item Let
    \begin{align*}
        c=\begin{cases}
            a+b, & \text{ if } a+b\leq 1\\
            a+b-1, & \text{ otherwise}
        \end{cases}
    \end{align*}
\end{enumerate}
We have the following lemma:
\begin{lemma}
    Let $(a,b,c)\sim \calD^*$, then:
    \begin{align*}
        \Prob{f_{\uprightsf{btw}}(a,b,c)=1}=\frac{1}{2}
    \end{align*}
\end{lemma}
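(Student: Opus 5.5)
The claim has two parts: $\calD^*$ is pairwise uniform, and the betweenness predicate holds with probability exactly $1/2$. I will prove them in that order.

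\textbf{Pairwise uniformity.} Write $c = a+b \bmod 1$, with the convention that the value $1$ is attained only on a null set. The pair $(a,b)$ is uniform on $[0,1]^2$ by construction. For $(a,c)$, fix $a$. The map $b\mapsto a+b \bmod 1$ is a measure-preserving bijection of the circle $[0,1)$, so $c$ is uniform and independent of $a$. The same argument with the roles of $a$ and $b$ swapped shows that $(b,c)$ is uniform. Each coordinate is therefore uniform and every pair is independent.

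\textbf{The satisfaction probability.} I split into the two regions $a+b\le 1$ and $a+b>1$, each of which has probability $1/2$.

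\emph{Case $a+b\le 1$.} Here $c=a+b\ge a$. The ordering $a<b<c$ requires $a<b$; the other inequality $b<a+b$ holds automatically when $a>0$. The ordering $c<b<a$ is impossible, since $c\ge a$. So this case contributes
\[
\Pr(a<b,\ a+b\le 1)=\tfrac14,
\]
which is half of the triangle $\{a+b\le1\}$, by the symmetry $a\leftrightarrow b$.

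\emph{Case $a+b>1$.} Here $c=a+b-1$, so $c<a$ and $c<b$. The ordering $a<b<c$ is impossible, because it would need $a<c$. The ordering $c<b<a$ requires $b<a$, while $c<b$ holds automatically. So this case contributes $\Pr(b<a,\ a+b>1)=\tfrac14$, again by the same symmetry.

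Adding the two cases gives total probability $1/2$, matching the upper bound from the previous lemma.

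\textbf{Main obstacle.} There is no real obstacle. The only care needed is to state the circular-shift argument cleanly for the pairs involving $c$, and to note that ties and boundary events have measure zero.
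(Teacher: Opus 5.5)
Your proof is correct and rests on the same key observation as the paper's: $c$ is always the maximum (when $a+b\le 1$) or the minimum (when $a+b>1$), so it is never the middle element, and the $a\leftrightarrow b$ symmetry does the rest. The paper packages this as $f_{\uprightsf{btw}}(a,b,c)+f_{\uprightsf{btw}}(b,a,c)+f_{\uprightsf{btw}}(a,c,b)=1$ almost surely, with the last term identically zero and the first two equal in probability by symmetry, whereas you compute the two quarter-probabilities explicitly; your check that $\calD^*$ is pairwise uniform goes beyond the statement and fills in a claim the paper only asserts.
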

\begin{proof}
    Observe that $\Prob{f_{\uprightsf{btw}}(a,c,b)=1}=0$ and furthermore, by symmetry, $\Prob{f_{\uprightsf{btw}}(a,b,c)=1}=\Prob{f_{\uprightsf{btw}}(b,a,c)=1}$, meaning that:
    \begin{align*}
        \Prob{f_{\uprightsf{btw}}(a,b,c)=1}=\frac{1}{2}
    \end{align*}
\end{proof}
\begin{remark}
    This distribution is useful as an example for other predicates as well. In particular, it is an example where the non-betweenness predicate is satisfied with probability $1$ and the triplets predicate on caterpillar is satisfied with probability $1/2$. 
\end{remark}

\paragraph{Cyclic Ordering.}
We analyze the probability that any pairwise uniform distribution satisfies the Cyclic Ordering predicate. Recall that real numbers $a,b,c\in [0,1]$ satisfy the Cyclic Ordering predicate if $a<b<c$, $b<c<a$, or $c<a<b$. For the remainder of this section, we denote the Cyclic Ordering predicate by $f_{\uprightsf{cyc}}$, where $f_{\uprightsf{cyc}}(a,b,c)=\ind{a<b<c \text{ or } b<c<a \text{ or } c<a<b}$.

\begin{lemma}
    Let $\calD$ be a pairwise uniform distribution over $[0,1]^3$ and $(a,b,c)\sim \calD$, then:
    \begin{align*}
        \Prob{f_{\uprightsf{cyc}}(a,b,c)=1} \leq \frac{1}{2}
    \end{align*}
\end{lemma}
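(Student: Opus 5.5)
We bound the failure probability from below, following the template of the Betweenness lemma. The Cyclic Ordering predicate fails exactly when $(a,b,c)$ appears in one of the three "anticyclic" orders $a<c<b$, $c<b<a$, $b<a<c$. Each of these orders is characterized by a designated middle element together with one element above it and one below. We will show that, under any pairwise uniform $\calD$, the event $f_{\uprightsf{cyc}}(a,b,c)=0$ has probability at least $1/2$.

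\textbf{Reduction to pairwise events.} By exchangeability of the argument in pairwise statistics, it suffices to use only events that depend on pairs. For distinct reals, exactly one of the six orderings holds. Write $p_{abc}$ for $\Pr(a<b<c)$ and similarly for the other orders. Then
\[
\Pr(a<b)=p_{abc}+p_{acb}+p_{cab},\qquad
\Pr(b<c)=p_{abc}+p_{bac}+p_{bca},\qquad
\Pr(c<a)=p_{bca}+p_{cab}+p_{cba}.
\]
Pairwise uniformity gives that each pair of coordinates is independent and uniform on $[0,1]^2$. Ties have probability zero, so each of these three pairwise probabilities equals $1/2$. Summing the three identities yields
\[
\tfrac32 = 2(p_{abc}+p_{bca}+p_{cab}) + (p_{acb}+p_{bac}+p_{cba}).
\]
Writing $S=\Pr(f_{\uprightsf{cyc}}=1)=p_{abc}+p_{bca}+p_{cab}$, the anticyclic mass is $1-S$. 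Substituting gives $\tfrac32 = 2S+(1-S)$, so $S=1/2$ exactly.

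\textbf{Conclusion.} This computation actually yields the equality $\Pr(f_{\uprightsf{cyc}}(a,b,c)=1)=1/2$ for every pairwise uniform distribution, which in particular gives the claimed upper bound of $1/2$. The only points that need care are confirming that $\Pr(a=b)=0$ under pairwise uniformity, since each pair is uniform on the square, so that the six strict orders partition the space up to a null set, and keeping the bookkeeping of which orders contribute to each pairwise event straight.

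There is no real obstacle here. The argument is simpler than the Betweenness one because cyclic ordering is balanced: its indicator is an affine function of the three pairwise comparison indicators. Concretely, one checks that $f_{\uprightsf{cyc}} = \mathbf{1}\{a<b\}+\mathbf{1}\{b<c\}+\mathbf{1}\{c<a\}-1$ on distinct triples. Taking expectations then gives $3/2-1=1/2$.
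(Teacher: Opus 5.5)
Your proof is correct and is essentially the paper's argument. The paper writes $f_{\uprightsf{cyc}}(a,b,c)=\frac12+\frac12\bigl(\sign(b-a)+\sign(c-b)+\sign(a-c)\bigr)$, which is exactly your affine identity in sign form, since $\sign(x)=2\cdot\mathbf{1}\{x>0\}-1$ for $x\neq 0$; pairwise uniformity then makes the pairwise terms average out, giving exactly $1/2$. Your explicit check that ties have probability zero is a small point the paper leaves implicit.
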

\begin{proof}
    We will prove a stronger statement, that $f_{\uprightsf{cyc}}$ is satisfied with probability \textit{exactly} $\frac{1}{2}$, if $(a,b,c)$ is sampled from a pairwise uniform distribution over $[0,1]^3$. The proof follows from the observation that we can write the predicate $f_{\uprightsf{cyc}}$ is as the sum of a constant term and functions that only depend on $2$ out of the three variables. In particular, we have that:
    \begin{align*}
        f_{\uprightsf{cyc}}(a,b,c)=\frac{1}{2}+\frac{1}{2}\left(\sign(b-a)+\sign(c-b)+\sign(a-c)\right)
    \end{align*}
    We have that:
    \begin{align*}
        \Prob{f_{\uprightsf{cyc}}(a,b,c)=1}&=\E[f_{\uprightsf{cyc}}(a,b,c)]\\
        &=\frac{1}{2}+\frac{1}{2}\E[\sign(b-a)+\sign(c-b)+\sign(a-c)]\\
        &=\frac{1}{2}
    \end{align*}
\end{proof}

\begin{remark}
    The uniform distribution over $[0,1]^3$ achieves this bound, thus our refutation results for Cyclic Ordering are optimal.
\end{remark}

\section{The Case of Several Negative Examples}\label{sec:arbitrary-r}

In this appendix, we prove the part of Lemma~\ref{lem:max-comp-pois-process} that is needed for fixed $r>1$. The argument follows the proof for $r=1$, but uses the nonlinear time scale of the process $\calG_{n,\Lambda_r}(t)$.

For $0<c<\lambda_r^*$, define
\[
\psi_r(c)=1-\left(1-\frac{c}{\lambda_r^*}\right)^{1/r},
\]
and let $\psi_r(c)=1$ for $c\geq\lambda_r^*$.

\begin{corollary}\label{cor:thm:giant-connected-comp-r}
For $\lambda_r^*$ defined in~\eqref{eq:def:lambda-r}, the following two claims hold.
\begin{enumerate}
\item 
For every positive $\delta$ and $c$, the size of the largest connected component in the random $\bbG(n,c/n)$ graph is at most $(1+\delta)\psi_r(c)n$ with high probability.
\item 
There exists a positive $c_r^*$ such that for every $\delta\in(0,1)$, the largest connected component in the random $\bbG(n,c_r^*/n)$ graph is at least $(1-\delta)\psi_r(c_r^*)n$ with high probability.
\end{enumerate}
\end{corollary}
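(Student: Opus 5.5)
Mirror the proof of Corollary~\ref{cor:thm:giant-connected-comp}: translate the giant-component fraction from Theorem~\ref{thm:giant-connected-comp} into an inequality against $\psi_r$, using the definition of $\lambda_r^*$ as a minimum. Let $\rho=\rho(c)$ be the giant fraction of $\bbG(n,c/n)$ for $c>1$; it satisfies $1-\rho=e^{-c\rho}$, i.e.\ $c=-\ln(1-\rho)/\rho$. The key identity is that, with $F_r(s)=\frac{-\ln(1-s)}{s(1-(1-s)^r)}$, we have
\[
\frac{c}{1-(1-\rho)^r}=F_r(\rho)\ \ge\ \lambda_r^*,
\]
since $\lambda_r^*=\min_{s\in(0,1)}F_r(s)$. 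Rearranging gives $(1-\rho)^r\ge 1-c/\lambda_r^*$, i.e.\ $\rho\le\psi_r(c)$ whenever $c<\lambda_r^*$.

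\textbf{Part I.} Split into cases as in the $r=1$ proof. If $c\le 1$, all components have size $o(n)$ w.h.p., while $\psi_r(c)>0$ is a fixed constant, so the bound holds. If $c\ge\lambda_r^*$, then $\psi_r(c)=1$ and the bound is trivial. If $c\in(1,\lambda_r^*)$, Theorem~\ref{thm:giant-connected-comp} gives largest component $(\rho+o(1))n$ w.h.p., and the inequality above gives $\rho\le\psi_r(c)$. Since $\psi_r(c)>0$, the $o(1)n$ error is absorbed into $\delta\psi_r(c)n$.

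\textbf{Part II.} Take $s_r^*\in(0,1)$ attaining the minimum of $F_r$, and set $c_r^*=-\ln(1-s_r^*)/s_r^*$. Then $1-s_r^*=e^{-c_r^*s_r^*}$, and $c_r^*>1$ because $-\ln(1-s)>s$. Hence $s_r^*$ is the giant fraction of $\bbG(n,c_r^*/n)$, namely the nonzero root of $1-\rho=e^{-c\rho}$.

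Equality in the minimum gives $c_r^*=\lambda_r^*(1-(1-s_r^*)^r)$. This implies $c_r^*<\lambda_r^*$ and $(1-s_r^*)^r=1-c_r^*/\lambda_r^*$, so $s_r^*=\psi_r(c_r^*)$. Theorem~\ref{thm:giant-connected-comp} then yields a component of size at least $(1-\delta)\psi_r(c_r^*)n$ w.h.p.

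\textbf{Remaining checks.} First, the minimum in \eqref{eq:def:lambda-r} is attained in the open interval. As $s\to1^-$, $F_r(s)\to\infty$. As $s\to0^+$, $F_r(s)\to 1/(rs)\cdot(1+O(s))\to\infty$. By continuity $F_r$ has an interior minimizer. Second, the root of $1-\rho=e^{-c\rho}$ in $(0,1)$ is unique for $c>1$, which follows from concavity of $1-e^{-c\rho}$.

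The only point needing a little care is identifying $\rho$ with the correct (nonzero) root. Beyond that, the argument is a direct algebraic transcription of the $r=1$ case.
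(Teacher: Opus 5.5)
Your proof is correct and matches the paper's argument essentially line for line. Part I uses the same case split and the same inequality $c/(1-(1-\rho)^r)\ge\lambda_r^*$; Part II uses the same choice $c_r^*=-\ln(1-s_r^*)/s_r^*$, giving $\psi_r(c_r^*)=s_r^*$. Your extra checks make explicit details the paper leaves implicit: that the minimum in \eqref{eq:def:lambda-r} is attained in the open interval, that $c_r^*>1$, and that $s_r^*$ is the unique nonzero root of $1-\rho=e^{-c\rho}$.
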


\begin{proof}
Fix a positive $c$. If $c\leq 1$, then all components in $\bbG(n,c/n)$ have size $o(n)$ w.h.p., and Part I follows. If $c\geq\lambda_r^*$, then Part I is trivial because $\psi_r(c)=1$. Thus, we assume that $c\in(1,\lambda_r^*)$.

Let $\rho=\rho(c)$ be the asymptotic fraction of vertices in the giant component of $\bbG(n,c/n)$. By Theorem~\ref{thm:giant-connected-comp}, the largest connected component has size $(\rho+o(1))n$ w.h.p. Moreover, $\rho$ satisfies $1-\rho=e^{-c\rho}$, or equivalently, $c=-\ln(1-\rho)/\rho$. Therefore, by the definition of $\lambda_r^*$,
\[
\lambda_r^*
\leq
\frac{-\ln(1-\rho)}{\rho(1-(1-\rho)^r)}
=
\frac{c}{1-(1-\rho)^r}.
\]
Hence $1-(1-\rho)^r\leq c/\lambda_r^*$, and since $c<\lambda_r^*$, we get
\[
\rho
\leq
1-\left(1-\frac{c}{\lambda_r^*}\right)^{1/r}
=
\psi_r(c).
\]
This proves Part I.

We now prove Part II. Let $s_r^*\in(0,1)$ be a point where the function
$h_r(s)=-\ln(1-s)/(s(1-(1-s)^r))$ attains its minimum, and let
$c_r^*=-\ln(1-s_r^*)/s_r^*$. Since $h_r(s_r^*)=\lambda_r^*$, we have
$c_r^*=\lambda_r^*(1-(1-s_r^*)^r)$. Therefore, $\psi_r(c_r^*)=s_r^*$. Also, the equation defining $c_r^*$ is equivalent to
$1-s_r^*=e^{-c_r^*s_r^*}$. Thus, $s_r^*$ is the giant-component fraction in $\bbG(n,c_r^*/n)$. By Theorem~\ref{thm:giant-connected-comp}, the largest connected component in $\bbG(n,c_r^*/n)$ has size $(s_r^*+o(1))n$ w.h.p. Since $s_r^*=\psi_r(c_r^*)$, Part II follows.
\end{proof}

We now use Corollary~\ref{cor:thm:giant-connected-comp-r} to finish the proof of Lemma~\ref{lem:max-comp-pois-process} for fixed $r>1$.
First suppose that $\lambda>\lambda_r^*$. Let $c_r^*$ be the constant from Part II of Corollary~\ref{cor:thm:giant-connected-comp-r}, and let
\[
u=1-\left(1-\frac{c_r^*}{\lambda}\right)^{1/r}.
\]
Since $\lambda>\lambda_r^*$, we have $u<\psi_r(c_r^*)$. Choose $\delta\in(0,1)$ such that $(1-\delta)\psi_r(c_r^*)>u$. Let $t=u(n-2)$. The time $t$ need not be an integer, since the process $\calG_{n,\Lambda_r}(t)$ is defined for all real $t\in[0,n-2]$. By the definition of $\Lambda_r$, at time $t$ the graph $\calG_{n,\Lambda_r}(t)$, after replacing parallel edges by single edges, has the same distribution as
\[
\bbG\left(n,\frac{c_r^*+o(1)}{n}\right).
\]
By Part II of Corollary~\ref{cor:thm:giant-connected-comp-r}, the graph $\calG_{n,\Lambda_r}(t)$ contains a connected component of size at least $(1-\delta)\psi_r(c_r^*)n$ w.h.p. Since $(1-\delta)\psi_r(c_r^*)>u$, this size is at least $t+2$ for all sufficiently large $n$. This proves Part II of Lemma~\ref{lem:max-comp-pois-process} for fixed $r>1$.

We now turn our attention to Part 1 of the lemma. Fix $\lambda<\lambda_r^*$. Choose a small positive number $\beta$ such that $\lambda(1-(1-\beta)^r)<1$. For $u\in(0,1]$, we have
$\psi_r(\lambda(1-(1-u)^r))<u$. Indeed, this follows directly from $\lambda<\lambda_r^*$ and the definition of $\psi_r$. By compactness, we can choose a positive number $\varepsilon$ and an integer $s$ large enough so that, for $\Delta=(1-\beta)/s$ and every $j\in\{1,\dots,s\}$,
\[
(1+\varepsilon)\psi_r\!\left((1+\varepsilon)\lambda(1-(1-\beta-j\Delta)^r)\right)
<
\beta+(j-1)\Delta.
\]
Note that $\beta$, $\varepsilon$, $s$, and $\Delta$ do not depend on $n$.

Then, for every $n$, we define time points $t_0<\dots<t_{s+2}$, where $t_0=1$ and $t_{s+2}=n-2$, and show that for every $i\geq1$, with high probability the largest connected component in $\calG_{n,\Lambda_r}(t_i)$ has size less than $t_{i-1}+2$. We let $t_0=1$, $t_1=\sqrt[3]{n}$, and
$t_i=(\beta+(i-2)\Delta)(n-2)$ for $i\geq2$. These time points need not be integers, since the process $\calG_{n,\Lambda_r}(t)$ is defined for all real $t\in[0,n-2]$.

We first consider $t_1=n^{1/3}$. Since $1-e^{-\Lambda_r(t_1)}=O(n^{-5/3})$, the graph $\calG_{n,\Lambda_r}(t_1)$ has maximum degree at most $1$ w.h.p. Hence, its largest connected component has size at most $2$, which is less than $t_0+2=3$. Next consider $t_2=\beta(n-2)$. Since $1-e^{-\Lambda_r(t_2)}=(\lambda(1-(1-\beta)^r)+o(1))n^{-1}$ and $\lambda(1-(1-\beta)^r)<1$, the graph $\calG_{n,\Lambda_r}(t_2)$ is subcritical. Therefore, its largest connected component has size $O(\log n)$ w.h.p., which is less than $t_1+2$.

Now consider $i\geq3$ and let $j=i-2$. Then $t_i=(\beta+j\Delta)(n-2)$, and graph $\calG_{n,\Lambda_r}(t_i)$ is a
$\bbG(n,(\lambda(1-(1-\beta-j\Delta)^r)+o(1))n^{-1})$ graph. For all sufficiently large $n$, the $o(1)$ term can be absorbed into the factor $1+\varepsilon$. By Corollary~\ref{cor:thm:giant-connected-comp-r}, the size of its largest connected component is less than
\[
(1+\varepsilon)\psi_r\!\left((1+\varepsilon)\lambda(1-(1-\beta-j\Delta)^r)\right)n
<
(\beta+(j-1)\Delta)n
\leq
t_{i-1}+2
\]
with high probability. Hence, the largest connected component in $\calG_{n,\Lambda_r}(t_i)$ has size less than $t_{i-1}+2$ w.h.p.

The number of time points $t_0,\dots,t_{s+2}$ does not depend on $n$. Thus, with high probability, for all $i\in\{1,\dots,s+2\}$, the size of the largest connected component in $\calG_{n,\Lambda_r}(t_i)$ is less than $t_{i-1}+2$. Suppose that this event holds. We claim that, in this case, the size of the largest connected component in $\calG_{n,\Lambda_r}(t)$ is less than $t+2$ for all real $t\in[1,n-2]$. Pick such a $t$. It belongs to one of the intervals $[t_{i-1},t_i]$. Graph $\calG_{n,\Lambda_r}(t)$ is a subgraph of $\calG_{n,\Lambda_r}(t_i)$ because the multiplicity of every edge $(a,b)$ in $\calG_{n,\Lambda_r}(t)$ is a non-decreasing function of $t$. The size of the largest connected component in $\calG_{n,\Lambda_r}(t)$ is at most the size of the largest connected component in $\calG_{n,\Lambda_r}(t_i)$, which in turn is less than $t_{i-1}+2\leq t+2$.
\subsection{Proof of Lemma~\ref{lem:f-decreasing}}\label{sec:lem:f-decreasing}
\begin{proof}
Let $L(x)=-\ln(1-g(x))$. Then $L(x)>0$ on $\left[\frac13,\frac12\right]$, and
\[
f(x)=\frac{H(x)}{L(x)}.
\]
We prove that $f'(x)<0$ for $x\in\left(\frac13,\frac12\right)$. Write,
\[
f'(x)
=
\frac{H'(x)L(x)-H(x)L'(x)}{L(x)^2}.
\]
We show that the numerator is negative, or equivalently, that the following
expression is positive:
\[
H(x)L'(x)-L(x)H'(x)
=
(H(x)-L(x))L'(x)+L(x)(L'(x)-H'(x)).
\]
To do so, we prove that $L'(x) \geq 0$; $H(x)-L(x)>0$; and $L'(x)-H'(x)>0$.

Observe that both $H$ and $L$ are increasing, since
$H$ is the binary entropy (increasing on $[0,1/2]$) and $L$ is increasing because $g(x)=(2x(1-x))^2$ is increasing on
$[1/3,1/2]$. Thus, $L'(x)\geq 0$ (as $L(x)$ is differentiable). Also, since $H(1/3) > L(1/2)$, we have
\[
H(x) \geq H(1/3) > L(1/2) \geq L(x)
\qquad
\text{for }x\in\left[\frac13,\frac12\right].
\]
It remains to prove that
$
L'(x)>H'(x)
$
for $x\in\left(\frac13,\frac12\right)$.
Let $y=1-2x$. Then $0<y<1/3$, and
\begin{align*}
H'(x)
&=
\ln\frac{1-x}{x}
=
\ln\frac{1+y}{1-y}
=
2\operatorname{arctanh} y,\\
L'(x)&=\frac{g'(x)}{1-g(x)}
=
\frac{8x(1-x)(1-2x)}{1-4x^2(1-x)^2}
=
\frac{8y(1-y^2)}{3+2y^2-y^4}.
\end{align*}
Therefore, in order to prove that $L'(x)>H'(x)$, it suffices to show that
\[
\operatorname{arctanh} y
<
\frac{4y(1-y^2)}{3+2y^2-y^4}.
\]
We bound the left-hand side as follows:
\[
\operatorname{arctanh}y
=
\sum_{n=0}^{\infty}\frac{y^{2n+1}}{2n+1}
\le
y+\frac13\sum_{n=1}^{\infty}y^{2n+1}
=
\frac{y(3-2y^2)}{3(1-y^2)}.
\]
We now verify that this last expression is smaller than
$\frac{4y(1-y^2)}{3+2y^2-y^4}$. Since all denominators are positive and
$y>0$, this is equivalent to
$
(3-2y^2)(3+2y^2-y^4)<12(1-y^2)^2
$.
Expanding, the difference between the right-hand side and the left-hand side is
\[
12(1-y^2)^2-(3-2y^2)(3+2y^2-y^4)
=
3-24y^2+19y^4-2y^6.
\]
This is positive for $0<y<1/3$, because
$
3-24y^2>\frac13$ and $19y^4-2y^6>0$.  Hence $L'(x)>H'(x)$.

\medskip

This completes the proof that $f'(x)<0$ for $x\in\left(\tfrac13,\tfrac12\right)$. Hence, $f$ is decreasing on
$\left[\tfrac13,\tfrac12\right]$. 
\end{proof}

\section{Unsatisfiability at High Linear Regime}\label{sec:unsat-high-density}


In this section, we prove Theorem~\ref{th:linear-density-upper}, where we consider \textit{ordered} phylogenetic CSPs that contain Ordering CSPs and (unordered) Phylogenetic CSPs (for the remaining of this section we will refer to ordered Phylogenetic CSPs as Phylogenetic CSPs). Consider a random instance $\calI_{f_{\uprightsf{phy}}}(n,m)$ with $n$ variables and $m=C n$ constraints ($C$ is constant depending on $\epsilon$) sampled according to our random model. 

\begin{tcolorbox}
\begin{theorem}[Information-Theoretic Bound for $\opt(\mathcal{I}_{f})$]\label{th:linear-density-upper}
 For every $\epsilon>0$, let $C$ be a sufficiently large constant and consider a random instance with $m=Cn$ constraints. For Ordering CSPs, there is no assignment satisfying more than a fraction of $\alpha^*+\epsilon$ of constraints w.h.p., where $\alpha^*$ is the approximation ratio achieved by outputting a uniformly random permutation. Moreover, for Phylogenetic CSPs, there is no assignment satisfying more than a fraction of $\alpha^*+\epsilon$ of constraints w.h.p., where $\alpha^*$ is the approximation ratio of the best biased random assignment (see Definition \ref{def:biased}). 
\end{theorem}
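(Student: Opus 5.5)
We plan to prove Theorem~\ref{th:linear-density-upper} by a first-moment (union bound) argument over coarse solutions, with Lemma~\ref{lem:cs-good} reducing the infinite family of trees to a finite-complexity family. Fix $\epsilon>0$ and choose a constant $q=q(k,\epsilon)$ large, for example $q\geq 64k^2/\epsilon$. By Lemma~\ref{lem:cs-good}, every solution $\varphi$ satisfies $\val(\varphi,\calI)\le\val^+(\xi,\calI)$ for some coarse solution $\xi\in\Xi_{16/q,q}$. It therefore suffices to show that, w.h.p., every $\xi\in\Xi_{16/q,q}$ has $\val^+(\xi,\calI)\le\alpha^*+\epsilon$. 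The number of coarse solutions is at most $4^q\cdot q^n\cdot q^q$, since a coarse solution is determined by the tree shape, the map from variables to leaves, and the colors. For fixed $q$ this count is $e^{O(n\log q)}$.

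Next, fix a single coarse solution $\xi$. Because the $m$ constraints are i.i.d.\ uniform $k$-tuples of distinct variables, $\val^+(\xi,\calI)$ is an average of $m$ i.i.d.\ $[0,1]$-valued variables. Its mean is $\E[f^+(\xi(X_1),\dots,\xi(X_k))]$, where the $X_i$ are distinct uniform variables. We bound this mean in two steps. First, up to an $O(k^2/n)$ correction for distinctness, the $\xi(X_i)$ are i.i.d.\ with law $\mu_\xi$ on the leaves of $T_\xi$. The value of $f^-$ under this product distribution is exactly the value achieved by the biased random assignment defined by $(T_\xi,\mu_\xi)$, after splitting each leaf so that variables in the same bucket are separated. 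Indeed, $f^-$ only counts tuples landing in distinct colors and hence in distinct leaves, and on such tuples the pattern is decided by $T_\xi$. So this value is at most $\alpha^*$ by Definition~\ref{def:biased}; for ordering CSPs the analogous statement with a uniformly random permutation follows in the same way. Second, $f^+-f^-$ is nonzero only when two coordinates share a color, which has probability at most $\binom{k}{2}\cdot 16/q\le\epsilon/2$. Hence $\E[\val^+(\xi,\calI)]\le\alpha^*+\epsilon/2+o(1)$. Hoeffding's inequality then gives
\[
\Pr\big[\val^+(\xi,\calI)>\alpha^*+\epsilon\big]\le e^{-\epsilon^2 m/8}=e^{-\epsilon^2 Cn/8}.
\]

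Finally, we take a union bound. The failure probability is at most $e^{O(n\log q)}\,e^{-\epsilon^2Cn/8}$, which is $o(1)$ once $C\ge C_0\,\epsilon^{-2}\log(k/\epsilon)$ for a suitable absolute $C_0$. This proves the theorem.

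The main obstacle is the first step of the mean bound: showing rigorously that $\E_{\mu_\xi^{\otimes k}}[f^-]$ is at most the value of some biased random assignment. Two points need care. One is matching the within-leaf arbitrary splitting in Definition~\ref{def:biased}. The other is showing that color classes, which may group several leaves, only decrease $f^-$ relative to treating leaves individually. For the first, we would use the footnote to Definition~\ref{def:biased}, which says that the splitting inside a leaf does not affect the value. For the second, we would use the fact that $f^-\le$ the predicate evaluated on distinct leaves, together with the fact that tuples sharing a leaf contribute $0$ to $f^-$.
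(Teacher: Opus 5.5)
Your proposal is correct and follows essentially the same route as the paper. The paper reduces to coarse solutions via Lemma~\ref{lem:cs-good}, compares a fixed $\xi$ to the $\xi$-biased assignment with leaf probabilities $n_l/n$ (losing $\binom{k}{2}\cdot 16/q$ for color collisions and $\binom{k}{2}/n$ for sampling without replacement), and then applies Hoeffding and a union bound over $q^{O(q)}q^n$ coarse solutions. The step you flag as the main obstacle is a one-liner there: on tuples with distinct colors, the coarse tree and the split tree induce the same pattern, and otherwise $f^-=0$; this also shows that $\E[f^-]$ is \emph{at most}, not exactly equal to, the biased assignment's value.
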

\end{tcolorbox}

The goal of this section is to prove that the optimum solution for a random instance does not attain a value significantly better than the best biased random assignment. We will show that the best solution for $\calI_{f_{\uprightsf{phy}}}(n,m)$ is close to the biased random assignment approximation ratio $\alpha^*(f_{\uprightsf{phy}})$:
\begin{equation}\label{eq:upper}
\val(\varphi^*,\calI) \le \alpha^*(f_{\uprightsf{phy}})+\epsilon 
\end{equation}
\noindent where $\varphi^*$ is the optimum assignment for instance $\calI$ that maps the $n$ variables $V$ to the $n$ leaves of an ordered phylogenetic tree (see Section \ref{subsec:phylogenetic_definitions}). Recall that for a solution $\varphi$, we defined value $\val(\varphi, \calI)$ to be the fraction of satisfied phylogenetic constraints or, more generally, the average value of phylogenetic constraints of $\calI$ in solution $\varphi$ divided by $m$, the total number of constraints in $\calI$.

Our proof uses again the notion of coarse solutions defined in Section \ref{sec:coarse-soln}. Consider the optimal solution $\varphi^*$ for the random instance $\calI$. Fix a positive $\epsilon$ and  sufficiently large number $q$.
By Lemma~\ref{lem:cs-good}, there exists a coarse solution 
$\xi^* \in \Xi_{\frac{16}{q}, q}$ such that 
$\val(\varphi^*, \calI) \leq \val^+(\xi^*,\calI)$. In Lemma~\ref{lem:cs-bad}, we show that $\val^+(\xi,\calI) \leq \alpha^*(f_{\uprightsf{phy}}) + \epsilon$ for large enough constant $C$. Therefore,
$\val(\varphi^*, \calI) \leq \alpha^*(f_{\uprightsf{phy}}) + \epsilon$ as claimed in Theorem~\ref{th:linear-density-upper}.

\begin{lemma}\label{lem:cs-bad}
For every phylogenetic payoff function $f_{\uprightsf{phy}}$ and every positive $\epsilon$, there exists $q$, such that for every sufficiently large constant $C_{\ref{lem:cs-bad}}$ the following holds with high probability: 
    \[
\max_{\xi \in \Xi_{\frac{16}{q}, q}}    
\val^+(\xi,\calI_{f_{\uprightsf{phy}}}(n,m)) \leq \alpha^*(f_{\uprightsf{phy}}) + \epsilon.
    \]
\end{lemma}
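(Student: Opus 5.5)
The plan is a first-moment / union-bound argument over coarse solutions, combined with concentration for a fixed coarse solution. First fix $\epsilon>0$ and choose $q$ large enough (depending on $\epsilon$ and $k$) that two facts hold. The first is that for every coarse tree $T_\xi$ with $q$ leaves, every coloring with color classes of mass at most $16/q$, and every distribution $\mu$ on the coarse leaves arising from a coarse solution $\xi\in\Xi_{\frac{16}{q},q}$, the product-measure value satisfies
\[
\E_{(a_1,\dots,a_k)\sim\mu^{\otimes k}}\big[f^+_{\uprightsf{phy}}(a_1,\dots,a_k)\big]\le \alpha^*(f_{\uprightsf{phy}})+\epsilon/3 .
\]
The second is that the discrepancy between sampling $k$ distinct variables uniformly and sampling $k$ independent variables is negligible.

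The first fact is the key structural step. Up to the color-collision term, the product-measure value is exactly the value of the biased random assignment defined by the tree $T_\xi$ and probabilities $p_l=\mu(l)$. Hence it is at most $\alpha^*_{\uprightsf{biased}}=\alpha^*$: variables landing in the same coarse leaf are split arbitrarily, which only lower-bounds the biased scheme's payoff, while $f^+$ credits them with $1$. The credit that $f^+$ gives to tuples with two equal colors is at most $\Pr[\exists i\ne j:\clr(a_i)=\clr(a_j)]\le\binom k2\cdot 16/q$, which is at most $\epsilon/3$ once $q\ge 8k^2\cdot 3/\epsilon$. Thus, for each fixed $\xi$, we get $\E_{\calI}[\val^+(\xi,\calI)]\le \alpha^*+\epsilon/3+O(k^2/n)$. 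The last term accounts for sampling distinct tuples rather than i.i.d.\ ones, and it follows because $\mu_\xi$ is the empirical distribution of $\xi$.

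Next comes concentration. For a fixed $\xi$, the value $\val^+(\xi,\calI)$ is an average of $m$ i.i.d.\ $[0,1]$-valued terms, since the constraints are sampled independently. By Hoeffding's inequality, $\Pr[\val^+(\xi,\calI)>\alpha^*+2\epsilon/3]\le e^{-2(\epsilon/3)^2 m}=e^{-2\epsilon^2Cn/9}$. The number of coarse solutions is at most $4^q\cdot q^{q}\cdot q^n$, counting tree shapes (Catalan), colorings of the leaves, and maps of variables to leaves. This is $e^{O(n\ln q)}$. Taking $C\ge C_0\ln q/\epsilon^2$ for a suitable absolute $C_0$, the union bound gives failure probability $e^{-\Omega(n)}$, which proves the lemma with $\alpha^*+\epsilon$ (indeed with $2\epsilon/3$ of slack).

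The step I expect to be the main obstacle is the identification of the product-measure value of $f^+$ with a biased random assignment value, especially for ordered trees and in the presence of colors. There are two points to handle. One must check that a tuple with distinct colors but possibly repeated coarse leaves cannot occur: it cannot, because variables in the same leaf share a color. One must also check that tuples of distinct coarse leaves evaluate $f_{\uprightsf{phy}}$ exactly as in the biased scheme on $(T_\xi,\mu)$, which holds by the homeomorphism-based definition of patterns. Everything else is a standard Hoeffding-plus-union-bound computation, where the only care needed is that $q$ is fixed before $C$ is chosen.
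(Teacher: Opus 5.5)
Your proposal is correct and follows the paper's proof essentially step for step. You bound $\E[f^+]$ under the product measure $\mu_\xi^{\otimes k}$ by the value of the $\xi$-biased random assignment on $T_\xi$ plus the color-collision probability $\binom{k}{2}\cdot 16/q$, pay $\binom{k}{2}/n$ in total variation for sampling distinct rather than i.i.d.\ tuples, and finish with Hoeffding and a union bound over $e^{O(n\ln q)}$ coarse solutions with $C\gtrsim \ln q/\epsilon^2$. The differences are cosmetic: your $\epsilon/3$ bookkeeping (where the Hoeffding deviation is strictly $\epsilon/3-O(k^2/n)$, which is harmless for large $n$) and a slightly tighter count of $4^q q^q q^n$ coarse solutions versus the paper's $q^{3q}q^n$.
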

\begin{proof}
Pick an integer $q > 8k^2/\epsilon$.
Fix a coarse solution $\xi \in \Xi_{\frac{16}{q}, q}$. We shall use this solution to define a biased random assignment algorithm. First, the algorithm randomly and independently assigns each variable $x \in V$ to a leaf $l$ with probability $n_l/n$, where $n_l = |\{x : \xi(x) = l\}|$ is the number of variables assigned to leaf $l$ by the coarse solution $\xi$.  Then, it recursively partitions variables in each leaf splitting them with a 50\%-50\% probability at every step. We call this biased randomized assignment algorithm the $\xi$-biased algorithm and denote its approximation factor by $\alpha_\xi(f_{\uprightsf{phy}})$. In other words, $\alpha_\xi(f_{\uprightsf{phy}})$ is the expected value of the solution created by the $\xi$-biased algorithm. 

Consider arbitrary $k$ variables $y^1,\dots,y^k$. The  $\xi$-biased algorithm first assigns $y^1,\dots,y^k$ to random leaves $\zeta(y^1),\dots,\zeta(y^k)$ and then after recursively splitting variables in the leaves assigns $y^1,\dots,y^k$ to 
$\eta(y^1),\dots, \eta(y^k)$. If leaves $\zeta(y^1),\dots,\zeta(y^k)$ are all distinct, then 
$$f_{\uprightsf{phy}}(\zeta(y^1),\dots,\zeta(y^k)) = f_{\uprightsf{phy}}(\eta(y^1),\dots,\eta(y^k)).$$
Moreover, if leaves $\zeta(y^1),\dots,\zeta(y^k)$ have distinct colors, then
$$f_{\uprightsf{phy}}^+(\zeta(y^1),\dots,\zeta(y^k)) =  f_{\uprightsf{phy}}(\zeta(y^1),\dots,\zeta(y^k)) = f_{\uprightsf{phy}}(\eta(y^1),\dots,\eta(y^k)).$$

The probability that two or more leaves among $\zeta(y^1),\dots,\zeta(y^k)$ have the same color is at most $\nicefrac{16}{q}\cdot \binom{k}{2} < \eps$, because the probability that variable $y^j$ is assigned to a leaf of any given color is at most $16/q$ (by the definition of class $\Xi_{\frac{16}{q},q}$).
Therefore, we have
$$
\E[f_{\uprightsf{phy}}^+(\zeta(y^1),\dots,\zeta(y^k))] \leq \E[f_{\uprightsf{phy}}(\eta(y^1),\dots,\eta(y^k))] + \epsilon
= \alpha_{\xi}(f_{\uprightsf{phy}}) +\epsilon\leq \alpha^*(f_{\uprightsf{phy}}) + \epsilon.
$$

Now consider one of the constraints $(X_i^1, \dots, X_i^k)$ in the random instance $\calI_{f_{\uprightsf{phy}}}(n,m)$. Variables $X_i^1, \dots, X_i^k$ are chosen randomly in $V$ and then mapped to the leaves of $T_\xi$ by the coarse solution $\xi$. Since $n_u$ variables from $V$ are mapped to leaf $u$, we have $\Pr(\xi(X_i^j) = u) = n_u/n$. Thus, each $\xi(X_i^j)$ has the same marginal distribution as $\zeta(y^j)$. Note that $\xi(X_i^1),\dots,\xi(X_i^k)$ are not independent since $X_i^1,\dots,X_i^k$ are chosen in $V$ without replacement. So, the distributions of the tuples 
$(\zeta(y^1),\dots,\zeta(y^k))$
and 
$(\xi(X_i^1),\dots,\xi(X_i^k))$ are not identical but they are very close if $n$ is sufficiently large. Indeed, let $\tilde X^1,\dots, \tilde X^k$ be $k$ random variables sampled in $V$ \emph{with replacement}. Then, $(\zeta(y^1),\dots,\zeta(y^k))$
and 
$(\xi(\tilde X^1),\dots,\xi(\tilde 
X^k))$ have the same distribution, and the total variation distance between distributions of $(X_i^1,\dots,X_i^k)$ and 
$(\tilde X^1,\dots, \tilde X^k)$ is at most $\binom{k}{2}/n$, which is an upper bound on the probability that $\tilde X^a = \tilde X^b$ for some $a$ and $b$.
This probability is less than $\eps$ for sufficiently large $n$. Hence,
$$
\E[f_{\uprightsf{phy}}^+(\xi(X_i^1),\dots,\xi(X_i^k))]
\leq
\E[f_{\uprightsf{phy}}^+(\zeta(y^1),\dots,\zeta(y^k))] + \epsilon \leq \alpha^*(f_{\uprightsf{phy}}) + 2\epsilon.
$$
We showed that the expected ``+''-value of every payoff function in $\calI_{f_{\uprightsf{phy}}}(n,m)$ is at most $\alpha^*(f_{\uprightsf{phy}}) + 2\epsilon$. We now estimate the probability that the value $\val^+(\xi,\calI_{f_{\uprightsf{phy}}}(n,m))$
of the instance is larger than $\alpha^*(f_{\uprightsf{phy}}) + 3\epsilon$. By Hoeffding's inequality, we have 
$$
\Pr\big[\val^+(\xi,\calI_{f_{\uprightsf{phy}}}(n,m))
\geq
\alpha^*(f) + 3\epsilon\big] 
\leq e^{-2\epsilon^2 m}.
$$
The number of all coarse solutions in $\Xi_{\frac{16}{q},q}$ is upper bounded by $q^{3q}q^n$ (there are at most $q^q$ trees with $q$ leaves; for each tree there are  at most $q!<q^q$ orderings of the leaves and each tree can be colored in at most $q^q$ different ways; there are at most $q^n$ ways to assign variables in $V$ to the leaves of the tree).
Using the union bound over all coarse solutions 
$\xi$ in $\Xi_{\frac{16}{q},q}$, we get
$$
\Pr\left[\exists \xi \in \Xi_{\frac{16}{q}, q} : \val^+(\xi,\calI_{f_{\uprightsf{phy}}}(n,m))
\geq
\alpha^*(f_{\uprightsf{phy}}) + 3\epsilon\right] \leq q^{3q} \cdot q^n \cdot e^{-2\epsilon^2 m}
= q^{3q} \cdot e^{-(2\epsilon^2 C-\ln q)n}
.
$$
This probability tends to $0$ as $n$ tends to infinity if $C > \frac{\ln q}{2\epsilon^2}$. This finishes the proof.
\end{proof}

\section{Figures}\label{sec:figures}

\begin{figure}[h]
    \centering
    {

\begin{tikzpicture}[
    dot/.style={draw, thick, circle, fill=black, inner sep=0pt, outer sep=0pt, opacity=0},
    leaf text/.style={font=\sffamily, anchor=west, align=left},
    thick,
    line cap=round,
    line join=round,
    scale=0.9, every node/.style={scale=0.9} 
]

\newcommand{\connect}[4]{
    \draw (#1) to[out=#3, in=#4] (#2);
}


\begin{scope}
    \coordinate (r1) at (-0.9, 0);          
    \coordinate (i1) at (0.1, 0.3);      

    \draw (r1) -- (i1);

    \node[leaf text] (c1) at ($(r1)+(1.7, -0.6)$) {tiger};
    \draw (r1) -- (c1.west);

    \node[leaf text] (a1) at ($(i1)+(1.3, 0.3)$) {lion};
    \draw (i1) -- (a1.west);
    \node[leaf text] (b1) at ($(i1)+(1.3, -0.3)$) {wolf};
    \draw (i1) -- (b1.west);
    \node[dot] at (r1) {};
    \node[dot] at (i1) {};
\end{scope}

\begin{scope}[yshift=2cm]
    \coordinate (r1) at (-0.9, 0);          
    \coordinate (i1) at (0.1, 0.3);      

    \draw (r1) -- (i1);

    \node[leaf text] (c1) at ($(r1)+(1.7, -0.6)$) {eagle};
    \draw (r1) -- (c1.west);

    \node[leaf text] (a1) at ($(i1)+(1.3, 0.3)$) {tiger};
    \draw (i1) -- (a1.west);
    \node[leaf text] (b1) at ($(i1)+(1.3, -0.3)$) {bear};
    \draw (i1) -- (b1.west);
    \node[dot] at (r1) {};
    \node[dot] at (i1) {};
\end{scope}

\begin{scope}[yshift=4cm]
    \coordinate (r1) at (-0.9, 0);          
    \coordinate (i1) at (0.1, 0.3);      

    \draw (r1) -- (i1);

    \node[leaf text] (c1) at ($(r1)+(1.7, -0.6)$) {dolphin};
    \draw (r1) -- (c1.west);

    \node[leaf text] (a1) at ($(i1)+(1.3, 0.3)$) {shark};
    \draw (i1) -- (a1.west);
    \node[leaf text] (b1) at ($(i1)+(1.3, -0.3)$) {tuna};
    \draw (i1) -- (b1.west);
    \node[dot] at (r1) {};
    \node[dot] at (i1) {};
\end{scope}

\begin{scope}[xshift=6cm, yshift=3.5cm]
    \node[dot] (rootL) at (0,-0.5) {};
    \coordinate (stemL) at (-1.0,-0.5);
    \draw (stemL) -- (rootL);

    \coordinate (nFish_pos) at (1.0, 0.5);
    \node[dot] (nFish) at (nFish_pos) {};
    \draw (rootL) to[out=85, in=180] (nFish);

    \coordinate (nTetra_pos) at (1.0, -1.4);
    \node[dot] (nTetra) at (nTetra_pos) {};
    \draw (rootL) to[out=-85, in=180] (nTetra);

    \node[leaf text] (tunaL) at ($(nFish)+(4.0, -0.3)$) {tuna};
    \connect{nFish}{tunaL.west}{-20}{180}

    \node[leaf text] (sharkL) at ($(nFish)+(4.0, 0.4)$) {shark};
    \connect{nFish}{sharkL.west}{20}{180}

    \coordinate (nMammal_pos) at ($(nTetra)+(1.4, -1.0)$);
    \node[dot] (nMammal) at (nMammal_pos) {};
    \draw (nTetra) to[out=-85, in=180] (nMammal);

    \node[leaf text] (eagleL) at ($(nTetra)+(4.0, 0.8)$) {eagle};
    \draw (nTetra) to[out=45,in=180] (eagleL.west);

    \coordinate (nCarn_pos) at ($(nMammal)+(1.2, -0.8)$);
    \node[dot] (nCarn) at (nCarn_pos) {};
    \draw (nMammal) to[out=-85, in=180] (nCarn);

    \node[leaf text] (dolphL) at ($(nMammal)+(3, 0.8)$) {dolphin};
    \connect{nMammal}{dolphL.west}{45}{180}

    \coordinate (nCani_pos) at ($(nCarn)+(1.2, 0.5)$);
    \node[dot] (nCani) at (nCani_pos) {};
    \draw (nCarn) to[out=85, in=180] (nCani);

    \coordinate (nFeli_pos) at ($(nCarn)+(1.2, -0.5)$);
    \node[dot] (nFeli) at (nFeli_pos) {};
    \draw (nCarn) to[out=-85, in=180] (nFeli);

    \node[leaf text] (wolfL) at ($(nCani)+(1.2, 0.3)$) {wolf};
    \connect{nCani}{wolfL.west}{30}{180}
    \node[leaf text] (bearL) at ($(nCani)+(1.2, -0.3)$) {bear};
    \connect{nCani}{bearL.west}{-30}{180}

    \node[leaf text] (lionL) at ($(nFeli)+(1.2, 0.3)$) {lion};
    \connect{nFeli}{lionL.west}{30}{180}
    \node[leaf text] (tigerL) at ($(nFeli)+(1.2, -0.3)$) {tiger};
    \connect{nFeli}{tigerL.west}{-30}{180}
\end{scope}

\end{tikzpicture}

    }
    \caption{\textsc{Triplet Reconstruction}. \textbf{Left:} three triplet constraints of the form $ab \mid c$. A triplet is satisfied if $\mathrm{LCA}(a,b)$ is a descendant of $\mathrm{LCA}(a,c)$ (equivalently, $\mathrm{LCA}(b,c)$). \textbf{Right:} A rooted binary tree that satisfies the first and the second triplet, but does not satisfy the third ($\textsf{lion wolf} \mid \textsf{tiger}$).}
    \label{fig:phylogenetic_tree_triplet}
\end{figure}

\begin{figure}[h]
    \centering
    {

\begin{tikzpicture}[
    dot/.style={draw, thick, circle, fill=black, inner sep=0pt, outer sep=0pt, opacity=0},
    leaf text/.style={font=\sffamily, anchor=west, align=left},
    clade text/.style={font=\ttfamily, anchor=bottom, midway, yshift=2pt},
    title text/.style={font=\ttfamily\bfseries, align=center, anchor=north},
    thick,
    line cap=round,
    line join=round,
    scale=0.9, every node/.style={scale=0.9} 
]

\newcommand{\connect}[4]{
    \draw (#1) to[out=#3, in=#4] (#2);
}

\begin{scope}
    \coordinate (n1a) at (-0.5, 0);
    \coordinate (n1b) at (0.5, 0);

    \node[leaf text, anchor=east] (lion) at ($(n1a)+(-1.2, 0.3)$) {tiger};
    \draw (n1a) -- (lion.east);
    \node[leaf text, anchor=east] (leopard) at ($(n1a)+(-1.2, -0.3)$) {bear};
    \draw (n1a) -- (leopard.east);

    \node[leaf text] (jaguar) at ($(n1b)+(1.2, 0.3)$) {lion};
    \draw (n1b) -- (jaguar.west);
    \node[leaf text] (tiger) at ($(n1b)+(1.2, -0.3)$) {wolf};
    \draw (n1b) -- (tiger.west);

    \draw (n1a) -- (n1b);
    \node[dot] at (n1a) {};
    \node[dot] at (n1b) {};
\end{scope}

\begin{scope}[yshift=1.5cm]
    \coordinate (n1a) at (-0.5, 0);
    \coordinate (n1b) at (0.5, 0);

    \node[leaf text, anchor=east] (lion) at ($(n1a)+(-1.2, 0.3)$) {dolphin};
    \draw (n1a) -- (lion.east);
    \node[leaf text, anchor=east] (leopard) at ($(n1a)+(-1.2, -0.3)$) {eagle};
    \draw (n1a) -- (leopard.east);

    \node[leaf text] (jaguar) at ($(n1b)+(1.2, 0.3)$) {lion};
    \draw (n1b) -- (jaguar.west);
    \node[leaf text] (tiger) at ($(n1b)+(1.2, -0.3)$) {wolf};
    \draw (n1b) -- (tiger.west);

    \draw (n1a) -- (n1b);
    \node[dot] at (n1a) {};
    \node[dot] at (n1b) {};
\end{scope}

\begin{scope}[yshift=3cm]
    \coordinate (n1a) at (-0.5, 0);
    \coordinate (n1b) at (0.5, 0);

    \node[leaf text, anchor=east] (lion) at ($(n1a)+(-1.2, 0.3)$) {shark};
    \draw (n1a) -- (lion.east);
    \node[leaf text, anchor=east] (leopard) at ($(n1a)+(-1.2, -0.3)$) {tuna};
    \draw (n1a) -- (leopard.east);

    \node[leaf text] (jaguar) at ($(n1b)+(1.2, 0.3)$) {eagle};
    \draw (n1b) -- (jaguar.west);
    \node[leaf text] (tiger) at ($(n1b)+(1.2, -0.3)$) {wolf};
    \draw (n1b) -- (tiger.west);

    \draw (n1a) -- (n1b);
    \node[dot] at (n1a) {};
    \node[dot] at (n1b) {};
\end{scope}

\begin{scope}[xshift=6cm, yshift=3cm]
    \node[dot] (rootL) at (0,-0.5) {};
    \coordinate (stemL) at (-1.0,-0.5);
    \draw (stemL) -- (rootL);

    \coordinate (nFish_pos) at (1.0, 0.5);
    \node[dot] (nFish) at (nFish_pos) {};
    \draw (rootL) to[out=85, in=180] (nFish);

    \coordinate (nTetra_pos) at (1.0, -1.4);
    \node[dot] (nTetra) at (nTetra_pos) {};
    \draw (rootL) to[out=-85, in=180] (nTetra);

    \node[leaf text] (tunaL) at ($(nFish)+(4.0, -0.3)$) {tuna};
    \connect{nFish}{tunaL.west}{-20}{180}

    \node[leaf text] (sharkL) at ($(nFish)+(4.0, 0.4)$) {shark};
    \connect{nFish}{sharkL.west}{20}{180}

    \coordinate (nMammal_pos) at ($(nTetra)+(1.4, -1.0)$);
    \node[dot] (nMammal) at (nMammal_pos) {};
    \draw (nTetra) to[out=-85, in=180] (nMammal);

    \node[leaf text] (eagleL) at ($(nTetra)+(4.0, 0.8)$) {eagle};
    \draw (nTetra) to[out=45,in=180] (eagleL.west);


    \coordinate (nCarn_pos) at ($(nMammal)+(1.2, -0.8)$);
    \node[dot] (nCarn) at (nCarn_pos) {};
    \draw (nMammal) to[out=-85, in=180] (nCarn);

    \node[leaf text] (dolphL) at ($(nMammal)+(3, 0.8)$) {dolphin};
    \connect{nMammal}{dolphL.west}{45}{180}

    \coordinate (nCani_pos) at ($(nCarn)+(1.2, 0.5)$);
    \node[dot] (nCani) at (nCani_pos) {};
    \draw (nCarn) to[out=85, in=180] (nCani);

    \coordinate (nFeli_pos) at ($(nCarn)+(1.2, -0.5)$);
    \node[dot] (nFeli) at (nFeli_pos) {};
    \draw (nCarn) to[out=-85, in=180] (nFeli);

    \node[leaf text] (wolfL) at ($(nCani)+(1.2, 0.3)$) {wolf};
    \connect{nCani}{wolfL.west}{30}{180}
    \node[leaf text] (bearL) at ($(nCani)+(1.2, -0.3)$) {bear};
    \connect{nCani}{bearL.west}{-30}{180}

    \node[leaf text] (lionL) at ($(nFeli)+(1.2, 0.3)$) {lion};
    \connect{nFeli}{lionL.west}{30}{180}
    \node[leaf text] (tigerL) at ($(nFeli)+(1.2, -0.3)$) {tiger};
    \connect{nFeli}{tigerL.west}{-30}{180}
\end{scope}
\end{tikzpicture}

    }
    \caption{\textsc{Quartet Reconstruction}. \textbf{Left:} three quartet constraints of the form $ab \mid cd$. A quartet is satisfied if, in the tree, the unique path between $a$ and $b$ is vertex-disjoint from the unique path between $c$ and $d$. \textbf{Right:} A rooted binary tree (the same as above) which satisfies the first and the second quartet, but does not satisfy the third ($\textsf{tiger bear} \mid \textsf{lion wolf}$).}
    \label{fig:phylogenetic_tree_quartet}
\end{figure}

\end{appendices}

\end{document}